\documentclass[acmsmall,screen,
  nonacm,
  appendix, %
]{acmart}
\pdfoutput=1
\usepackage[T1]{fontenc}
\usepackage[utf8]{inputenc}

\setcopyright{cc}
\setcctype{by}

\usepackage{style}
\title{Programming with Quantum-Controlled Quantum Channels}

\author{Kengo Hirata}
\orcid{0009-0005-4416-2655}
\affiliation{%
  \institution{University of Edinburgh}
  \city{Edinburgh}
  \country{UK}
}
\affiliation{%
  \institution{Kyoto University}
  \city{Kyoto}
  \country{Japan}
}
\email{k.hirata@sms.ed.ac.uk}

\author{Takeshi Tsukada}
\orcid{0000-0002-2824-8708}
\affiliation{
  \institution{Chiba University}
  \city{Chiba}
  \country{Japan}
}
\email{t.tsukada@acm.org}

\begin{document}
\begin{abstract}
In contrast to a classical bit, which can only take the value $0$ or $1$, its quantum counterpart---a qubit---can exist in a superposition of $0$ and $1$. This is a superposition of data values, naturally raising the question of whether one can superpose not only data but also programs. For example, a particular superposition of programs, known as the quantum SWITCH, has attracted much attention, and its implementations and computational advantages have been studied extensively within the physics community.

A naive way to control a program by a qubit is by means of a controlled operation.
Given an operation $F$, this amounts to considering an operation that behaves as $F$ when the control qubit is $|1\rangle$,
and as the identity operation when the control qubit is $|0\rangle$.
This idea works well when $F$ is a unitary operations, but it is not well-defined for a general quantum channel.
By contrast, the quantum SWITCH is free from the well-definedness issue.
This contrast leads to the key insight of this paper: controlled operations and the quantum SWITCH should be regarded as different kinds of quantum control mechanisms.

Building on this insight, we develop a novel quantum programming language with quantum control and measurement that can express the quantum SWITCH over quantum channels. Using a semantic analysis based on program transformations, we identify the source of the ill-behavedness of controlled operations as the \emph{correspondence problem}: a lack of coordination between the measurements performed in the then- and else-branches of quantum conditional branching. We address this problem with a linear type system that enforces alignment of the quantum operations used in the two branches, yielding a well-behaved language capable of expressing the quantum SWITCH.

\end{abstract}
 \maketitle

\section{Introduction}\label{sec:intro}

\emph{Quantum computation} is a paradigm that makes use of the principles of quantum mechanics to perform computations and is expected to solve problems that are intractable for classical computers.
While a classical bit is a binary data type that takes the value \( 0 \) or \( 1 \),
a \emph{quantum bit} or \emph{qubit} is represented by a unit vector in the space spanned by \(\ket{0}\) and \(\ket{1}\).
Thus, its general form is \( \alpha \ket{0} + \beta \ket{1} \), \( |\alpha|^2 + |\beta|^2 = 1 \), a superposition of \( \ket{0} \) and \( \ket{1} \).
Applying an operation \( U \) (more precisely, a unitary operator) to this state results in \( \alpha (U\!\ket{0}) + \beta (U\!\ket{1}) \), and the post-operation state, in a sense, encodes the effect of applying \( U \) to both \( \ket{0} \) and \( \ket{1} \).
This is in contrast to classical function calls, where a single invocation yields information about either \( f(0) \) or \( f(1) \).

Superpositions of \( \ket{0} \) and \( \ket{1} \) are superpositions of data, and a natural question is whether superpositions of programs are also possible.
In recent years, affirmative answers to this question have been obtained and actively investigated, mainly by the physics community.
A typical example is the \emph{quantum SWITCH}\footnote{
  Note that the quantum SWITCH is unrelated to the switch statement in C.
  The term quantum SWITCH refers to a specific operator defined by \citet{Chiribella2013}.
}~\cite{Chiribella2013}.
It takes a control qubit \( x \), two quantum operations \( F \) and \( G \), and an input \( y \).
When \( x = \ket{0} \), the SWITCH behaves as \( F\,(G\,y) \); when \( x = \ket{1} \), it behaves as \( G\,(F\,y) \); and when \( x = \alpha_0\ket{0} + \alpha_1\ket{1} \) (with \( \alpha_0 \neq 0 \neq \alpha_1 \)), it behaves as a certain ``superposition'' of \( F\,(G\,y) \) and \( G\,(F\,y) \).
Roughly speaking, \( x \) controls the order in which \( F \) and \( G \) are applied.
Notably, when the control qubit \( x \) is in a superposition of \( \ket{0} \) and \( \ket{1} \), the quantum SWITCH exhibits peculiar behaviour: there is no well-defined answer to whether $F$ is performed before $G$ or vice versa.
This phenomenon is often described as the absence of a definite causal order, or more technically as \emph{indefinite causal order}, and has been actively studied as an important topic in quantum computation since the work of \citet{Oreshkov2012}.

The quantum SWITCH cannot be implemented within a quantum circuit paradigm, \ie,~it cannot be implemented as a circuit with holes for \( F \) and \( G \)\footnote{
  Here the impossibility result concerns a black-box, single-use setting: the circuit is required to work uniformly with opaque slots for the input operations $F$ and $G$, without access to their internal implementations. The implementation we discuss in \cref{sec:op-sem} is of a different kind: it is white-box, in that it takes the definitions of $F$ and $G$ as programs and constructs an implementation by inspecting and decomposing them. Thus it is outside the assumptions of this no-go result.
}~\cite{Chiribella2013}.
Still, it is physically realisable~\cite{Friis2014,switch-experiment2},
and is also known to bring computational advantages~\cite{Araujo2014,Colnaghi2012,switch-advantage2,Kristjansson2024,Kristjansson2020,Ebler2018,Chiribella2021}.
For example, even if both \( F \) and \( G \) are completely depolarising channels, which are unable to transmit any quantum information, their ``superposition'' via the quantum SWITCH
has a non-zero capacity~\cite{Ebler2018}.

A programming language capable of expressing operations such as the quantum SWITCH is important not only for describing and analysing procedures involving the quantum SWITCH, but also for exploring as-yet-unknown but potentially interesting operations of a similar kind.
This paper aims to provide such a quantum programming language.

A natural idea would be to introduce a \emph{quantum conditional statement},
written \( \qif \), as a quantum analogue of the classical \( \cif \) statement.
We expect the following behaviour for \( \qifx{x}{P}{Q} \).
When \( x = \ket{1} \), it behaves like \( P \); when \( x = \ket{0} \), it behaves like \( Q \); and when \( x \) is in a superposition of \( \ket{0} \) and \( \ket{1} \), it behaves like a ``superposition'' of \( P \) and \( Q \).
If only such a construct were available, one might express the quantum SWITCH as
\begin{equation}
    \mathit{SWITCH}(x,y,F,G)
    \qquad=\qquad
    \qifx{x}{F\,(G\,y)}{G\,(F\,y)}.
    \label{eq:into:switch}
\end{equation}
While such a construct and the above ``definition'' might appear natural,
no existing language has justified this equation in a fully satisfactory manner.
As we shall see shortly,
existing languages with quantum conditional branching either impose restrictions on \(F\) and \(G\) (typically, by requiring them to be unitary),
or, while allowing general quantum operations, exhibit behaviour on the right-hand side above that differs from the quantum SWITCH.

This paper develops a new quantum programming language, enabling the quantum SWITCH to be described in the aforementioned natural and intuitive way.

\paragraph{Current Status and Technical Challenges}

To accurately describe the current state of quantum conditional branching, some background knowledge %
is required.
An operation called \emph{measurement} plays a central role in our discussion.
It takes a qubit \( \alpha \ket{0} + \beta \ket{1} \) as input and returns \( \ket{0} \) with probability \( |\alpha|^2 \) and \( \ket{1} \) with probability \( |\beta|^2 \).
Operations on quantum systems can be classified into two categories: roughly speaking, those that do not involve measurement, and those that do (or may) involve measurement.
We refer to the former as \emph{unitary operations}, and the latter as \emph{quantum operations} or \emph{quantum channels}.

Quantum conditional branching, \( \qifx{x}{P}{Q} \), where \( P \) and \( Q \) involve only unitary operations, is quite common in quantum information theory, and this feature is supported by many quantum programming languages~\cite{QML,Qsharp,Silq,SabValiViz18-FOSSACS-q-recusion}.
The point here is that, given two unitary operations \( U \) and \( V \), there exists a unique unitary operation that behaves as \( U \) when \( x = \ket{1} \) and as \( V \) when \( x = \ket{0} \).
In particular, the case where \( V \) is the identity is known as an \emph{\( x \)-controlled \( U \)} operation, which is a fundamental and frequently used concept in quantum computation.
Quantum conditional branching of unitary operations can be expressed as a sequential composition of controlled unitary operations, so \( \qif \) can be added to a quantum programming language describing unitary operations.

\begin{mathfig}
  \newcommand{\controlledOp}{F}
  \begin{align*}
    \begin{aligned}
      \mathrm C \controlledOp (\ketbra00 \!\otimes\! \rho) &= \ketbra00 \!\otimes\! \rho
      \\
      \mathrm C \controlledOp (\ketbra11 \!\otimes\! \rho) &= \ketbra11 \!\otimes\! \controlledOp(\rho)
    \end{aligned}
    \quad\!\iff\!\quad
    (\exists U,V)
    \left[\;
    \controlledOp =
    \begin{quantikz}[baseline={(0,0)},row sep=0.3em,column sep=0.5em]
        & \gate[2]{U} & \\
        \ket0 & & \ground{} \\
    \end{quantikz}
    \quad\mbox{and}\quad
    \mathrm C \controlledOp =
    \begin{quantikz}[baseline={(0,0.1)},row sep=0.3em,column sep=0.5em]
        & \octrl{2} & \ctrl{1} & \\
        & & \gate[2]{U} & \\
        \ket0 & \gate{V} & & \ground{} \\
    \end{quantikz}
    \;\right]
  \end{align*}
  \caption{Definition and implementation of a controlled \(\controlledOp\) operation.  The equations on the left define \(\mathrm C\controlledOp\), and they characterise exactly when it can be implemented by a circuit of the form shown on the right, for some unitaries \(U\) and \(V\).}
  \label{fig:intro:controlled-general-channel}
  \Description{}
\end{mathfig}

By contrast,
for a general quantum operation \( F \),
there is no well-defined notion of an ``\( x \)-controlled \( F \)'' operation.
There exists a quantum operation that behaves as \( F \) when \( x = \ket{1} \) and as the identity when \( x = \ket{0} \), but this specification does not uniquely determine a quantum operation~\cite[Section~II.D]{Dong2019} (see~\cref{fig:intro:controlled-general-channel}).
Therefore, if one adopts the approach of describing quantum branching as a sequence of quantum-controlled quantum operations, one must choose an appropriate ``$x$-controlled $F$'' operation from among the non-unique candidates.
Existing work on quantum control~\cite{Ying16-the-book,Badescu2015,Barsse2026,Abbott2020,BCMP21-MFCS-PBS-coherent}
falls under this general pattern: each approach can be understood as fixing the non-canonical choice required to define the controlled operation;
see~\cref{rem:syntactic-dilation-for-murao-style,thm:soundness-of-translation-murao-style}.

However, with such approaches, it is difficult to choose the ``$x$-controlled $F$'' in a way that depends only on the input-output behaviour of the quantum operation $F$;
the above-mentioned proposals instead depend on the implementation details of $F$.
This is in contrast to the semantics of the quantum SWITCH $\mathit{SWITCH}(x,y,F,G)$, which is determined solely by the input-output behaviour of $F$ and $G$, independent of their implementation details.
Because of this difference, their languages have been unable to define the quantum SWITCH on quantum operations, at least in the form of Equation~\eqref{eq:into:switch}.
For these reasons, the design of a quantum programming language with quantum control and measurement is counted as an important open problem in a recent survey~\cite{Valiron2022}.

\citet{DaveDZ25} is closest to our goal. In their language, the right-hand side of \eqref{eq:into:switch} defines the quantum SWITCH, and \(F\) and \(G\) can be non-unitary quantum operations. Nevertheless, programs that branch depending on the classical outcome of a measurement cannot be used as \(F\) or \(G\).\footnote{As the details are highly technical, we defer their discussion to the related work section (\cref{sec:related}).}

\paragraph{Contributions of this Paper}

These semantic observations led us to propose the following view:
\begin{quote}
    Quantum-controlled quantum operations and the quantum SWITCH should be regarded as different kinds of quantum control.
\end{quote}
Of course, the two notions are closely related.
Our point, however, is that they should not be identified: a definition of one does not automatically give a definition of the other.
This paper develops this view.

Our approach avoids the aforementioned semantic problem by appropriately constraining the control targets \( P \) and \( Q \) of the quantum conditional \( \qifx{x}{P}{Q} \).
This contrasts with the languages of \citet{Badescu2015}, \citet{YingYF12-first-Ying,YingYF14-alternation}, \citet{Ying16-the-book},
and \citet{Barsse2026}, which allow quantum control over arbitrary programs \(P\) and \(Q\).
Even under this constraint, our semantics is novel and differs from theirs.
To the best of our knowledge, this is the first language in which the quantum SWITCH~\cite{Chiribella2013}
for general quantum operations is definable for arbitrary programs involving measurement.

To examine the appropriate conditions, we analyse the aforementioned semantic issue from both a mathematical (or denotational-semantic) perspective and a programming-language perspective, giving a self-contained explanation from each.
From the mathematical perspective, the essential distinction between quantum-controlled quantum operations and the quantum SWITCH lies in whether the Kraus decompositions of the then- and else-branches have matching index sets.
From the programming-language perspective, we consider a series of seemingly semantics-preserving translations, taking a program with quantum-controlled quantum operations and yielding a program with quantum-controlled unitary operations, which is known to be physically realisable.
There is exactly one point in this transformation where arbitrariness arises, and this is the source of the semantic ill-definedness pointed out in the literature~\cite{Abbott2020,Badescu2015,Ying16-the-book}.
Here again, the relationship between the then- and else-branches becomes an issue:
to define a semantics of quantum conditional branching, we need to establish a correspondence between the measurements occurring in each branch (moreover, in some cases, it may be necessary to introduce dummy measurements).
We call the need for such a correspondence the \emph{correspondence problem}.
A condition to avoiding this issue is the \emph{linearity} (in the sense of Girard's \emph{linear logic}~\cite{Girard1987}): observe that, in the right-hand side of \cref{eq:into:switch}, both the then- and else-branches make exactly one use of the quantum operations \( F \) and \( G \), respectively.

We design a programming language with quantum conditional branching under the linearity constraint.
Our programming language consists of two parts, which we shall, for convenience, call the quantum sublanguage and the classical sublanguage.
In the quantum sublanguage, quantum branching is expressible, but the classical boolean type, classical conditional branching, and measurement cannot be used directly.
In the classical sublanguage, both the classical boolean type and the qubit type are available, and one can perform classical conditional branching and measurement, though quantum branching is not permitted.
The two sublanguages can interoperate in the following ways:
(a) any expression in the quantum sublanguage can be regarded as an expression of the same type in the classical sublanguage; and
(b) functions defined in the classical sublanguage can be invoked within the quantum sublanguage, provided that they are used linearly.

We give a denotational semantics and an operational semantics.
The denotational model consists of three components: the category \( \Hilb \) of Hilbert spaces and linear maps for modelling the quantum sublanguage, the category \( \CPM \) of completely positive maps for modelling the classical sublanguage, and a functor \( \EmbeddingFunctor \colon \Hilb \longrightarrow \CPM \) embedding the former into the latter.
Remarkably, the denotational model is particularly simple.
This denotational semantics treats every input function as \emph{black-box} functions, thus does give a unique semantics regardless of their implementation.
We also give an operational semantics based on the above-mentioned program transformation.
The transformation is correct in the sense that it does not alter the denotational semantics.
By contrast, the operational semantics treats the inputs as \emph{white-box} functions. Therefore, it gives a concrete small-step procedure for circuit synthesis that depends on the implementation of the input functions.

In fact, the above-mentioned language of \citet{DaveDZ25} also has a semantics based on this functor, which allows us to explain some of the differences from a categorical perspective:
they treat the functor as a \emph{lax} SMCC functor, whereas we treat it as \emph{strong}.
At the level of the programming languages, the difference is whether there is a term denoting the morphism \( (\iota(A) \multimap \iota(B)) \multimap \iota(A \multimap B) \).\footnote{Our language does not distinguish \( A \) from \( \iota(A) \), so the identity \( \lambda x.x \) is the required term in our language.
}
The existence of such a term is essential for applying a higher-order operation with quantum control (\ie, a morphism in \( \Hilb \)) to an operation involving measurement (\ie, a morphism in \( \CPM \)).
At the same time, accommodating this term in the operational semantics requires a nontrivial idea.
Furthermore, their operational semantics does not tell how to implement a program in a quantum circuit.

The contributions of this paper can be summarised as follows:
\begin{itemize}
    \item
        We propose and develop the view that the form of quantum control exhibited by the quantum SWITCH should be regarded as different in kind from the control provided by quantum-controlled quantum operations.
        From this perspective, we prove that the proposals by \citet{Badescu2015}, \citet{Ying16-the-book} and \citet{Barsse2026} fall under the controlled-quantum-operation account.
    \item
        We show that the form of quantum control exhibited by the quantum SWITCH is governed by linearity (in the sense of linear logic~\cite{Girard1987}).
    \item
        We design the first programming language that permits a definition of the quantum SWITCH~\cite{Chiribella2013} in the form of \cref{eq:into:switch} for quantum operations \( F \) and \( G \).
        \Cref{eq:into:switch} describes a situation where the causal order of quantum operations is indefinite, and such a situation has recently attracted significant attention.
        Although languages of this kind have long been anticipated and studied for over a decade, their feasibility has been questioned due to semantic difficulties.
    \item
        We develop a denotational semantics for the proposed language, which is remarkably concise.
        As a corollary, we prove that the semantics does not depend on the choice of Kraus decompositions,
        suggesting that no ill-definedness issue remains in our language.
    \item
        We provide an operational semantics based on a program transformation.
        It proves that every program in our language can be compiled down to a quantum circuit in the white-box setting, showing its physical realisablity.
        We prove soundness, adequacy, and a full-abstraction theorem for the language.
        In particular, the soundness theorem proves that, even though each step of the operational semantics may depend on implementation, the final outcome does not.
\end{itemize}

\paragraph{Organisation}
\Cref{sec:pre} gives the preliminaries of this paper and introduces some prior knowledge on quantum computation.
\Cref{sec:overview} is a technical overview of this paper, providing a more detailed discussion on the technical ideas.
We focus primarily on conveying the underlying intuition rather than providing rigorous formalization.
The subsequent sections provide a formal development of the ideas introduced in \cref{sec:overview}.
\Cref{sec:proc} formalizes the proposed language \Qif{}, and
\cref{sec:cat-sem} defines its categorical semantics.
In \cref{sec:op-sem}, we discuss the program transformation and the operational semantics. We also state many properties, including normalisation, soundness, adequacy, and full abstraction.
\Cref{sec:related} discusses related work,
and we conclude in \cref{sec:conc}.

\section{Preliminaries}\label{sec:pre}
This section briefly reviews notions related to quantum computation.
A quantum system is modeled as a Hilbert space, and there are two formulations of the states of a quantum system: one represents a state as a vector in the Hilbert space, and the other as a density operator, a certain kind of Hermitian operator on the Hilbert space.
We shall switch between these two formalisms as needed, so we introduce both.

\paragraph{Hilbert Spaces}
Let \( \mathcal{H} \) be a \emph{Hilbert space}, which is a \( \CC \)-vector space with an inner product such that \( \mathcal{H} \) is complete with respect to the metric induced by the inner product.
We write \( \langle x \,|\, y \rangle \) for the inner product, which is linear on the second argument \( y \).
Following physics convention, we write \( \ket{x} \) to denote a vector.
A vector \( x \) in a Hilbert space \( \mathcal{H} \) induces a \( \CC \)-linear map \( \mathcal{H} \longrightarrow \CC \) defined by \( \ket{y} \mapsto \langle x \,|\, y \rangle \), which we write as \( \bra{x} \).

This paper only deals with finite-dimensional Hilbert spaces.
A finite-dimensional Hilbert space is isomorphic to \( \CC^n \), \( n \in \Nat \), with the inner product defined by \( \langle x\,|\,y \rangle = \sum_{i=1}^n x_i^\dagger y_i \) for \( x = (x_1,\dots,x_n) \) and \( y = (y_1,\dots,y_n) \) (where \( ({-})^\dagger \) is the complex conjugate).
So we assume \( \mathcal{H} = \CC^n \) for some \( n \).
Note that \( \CC^n \) has a canonical orthonormal basis, consisting of \( e_i = (\delta_{i1}, \dots, \delta_{ij},\dots,\delta_{in}) \) where \( \delta_{ij} \) is the Kronecker delta (\ie,~\( \delta_{ii} = 1 \) and \( \delta_{ij} = 0 \) if \( i \neq j \)).
Since \( \CC^n \) and \( \CC^m \) have chosen bases, a linear map \( f \colon \CC^n \longrightarrow \CC^m \) is associated with a matrix representation \( A = (a_{i,j})_{i\in\{1,\dots,m\}, j\in\{1,\dots,n\}} \) where \( a_{i,j} = \langle e_i \,|\, f \,|\,e_j \rangle \).
So we shall identify a linear map with its matrix representation.

Let \( \Mat_{n,m}(\CC) \) be the set of \( (n \times m) \)-matrices over \( \CC \) and \( \Mat_n(\CC) = \Mat_{n,n}(\CC) \).
A matrix \( A = (a_{i,j})_{i,j \in \{ 1,\dots,n\}}\in \Mat_n(\CC) \) is \emph{self-adjoint} or \emph{Hermitian} if \( A^\dagger = A \), where \( A^\dagger \defeq (a_{j,i}^\dagger)_{i,j \in \{1,\dots,n\}} \) is the conjugate transpose of \( A \).
A self-adjoint \( A \in \Mat_n(\CC) \) is \emph{positive} if \( \bra{x} A \ket{x} \in \Real_{\ge 0} \) for every vector \( x \in \CC^n \).
We write \( 0 \le A \) to mean that \( A \) is a positive self-adjoint matrix.
The \emph{trace} \( \trace A \) of \( A \in \Mat_n(\CC) \) is the sum of the diagonal elements of \( A \) (\ie~\( \trace A = \sum_{i=1}^n a_{i,i} \) for \( A = (a_{i,j})_{i,j \in \{1,\dots,n\}} \)).
For positive \( A \), \( \trace A \in \Real_{\ge 0} \).
A matrix \( A = (a_{i,j})_{i,j \in \{ 1,\dots,n\}}\in \Mat_n(\CC) \) is \emph{unitary} if \( A \) has an inverse \( A^{-1} \) and satisfies \( A^\dagger = A^{-1} \).
A matrix \( U \) is unitary if and only if the corresponding linear map preserves the inner product, \ie~\( \langle x \,|\, y \rangle = \langle U\,x \,|\, U\,y \rangle \).

\paragraph{States as Unit Vectors}
A \emph{state} of a quantum system represented by a Hilbert space \( \mathcal{H} \) is a unit vector \( \ket{\varphi} \in \mathcal{H} \) (\ie~\( \|\varphi\| = \sqrt{\langle \varphi \,|\, \varphi \rangle} = 1 \)), and the dynamics of the system is represented by a unitary operator \( U \colon \mathcal{H} \longrightarrow \mathcal{H} \).
\begin{example}
    Let \( \mathcal{H} = \CC^2 \) (as we shall see, this is the denotation of the qubit type \( \qbit \)).
    We write \( \ket{0} \) and \( \ket{1} \) for the orthonormal basis of \( \mathcal{H} \).
    A state \( \ket{\varphi} \) in \( \mathcal{H} \) is of the form \( \alpha \ket{0} + \beta \ket{1} \) such that \( |\alpha|^2 + |\beta|^2 = 1 \).
    A state with non-zero \( \alpha \) and \( \beta \) is a superposition of \( \ket{0} \) and \( \ket{1} \), which has no counterpart in classical computation.
    Examples of unitary operators on \( \CC^2 \) are as follows: the operator \( X \) given by \( X (\alpha \ket{0} + \beta \ket{1}) := \beta \ket{0} + \alpha \ket{1} \) is known as the Pauli \(X\) operator, which corresponds to the negation of a classical bit (note that \( X \ket{0} = \ket{1} \) and \( X \ket{1} = \ket{0} \)); \( Z (\alpha \ket{0} + \beta \ket{1}) := \alpha \ket{0} - \beta \ket{1} \) is known as the Pauli \( Z \) operator; \( H \) defined by \( H \ket{0} := (\ket{0} + \ket{1})/\sqrt{2} \) and \( H \ket{1} := (\ket{0} - \ket{1})/\sqrt{2} \) is known as the Hadamard operator.
    We write \( \ket{+} \) and \( \ket{-} \) for \( H \ket{0} \) and \( H \ket{1} \) as usual.
    \thmend
\end{example}

An important non-unitary operation is \emph{measurement}.
Let \( \ket{\varphi} \) be a quantum state in a Hilbert space \( \mathcal{H} = \mathcal{H}' \otimes \CC^2 \).
Then \( \ket{\varphi} \) is canonically written as \( \ket{\varphi} = \ket{\varphi_0} \otimes \ket{0} + \ket{\varphi_1} \otimes \ket{1} \) for some vectors \( \ket{\varphi_0} \) and \( \ket{\varphi_1} \) in \( \mathcal{H}' \).
The measurement on the \( \CC^2 \) component yields the outcome \( 0 \) with probability \( \|\varphi_0\|^2 \) and the outcome \( 1 \) with probability \( \|\varphi_1\|^2 \).
Furthermore, when the outcome is \( i \), the state becomes \( \varphi_i/\|\varphi_i\| \) (a unit vector in \( \mathcal{H}' \)).

\paragraph{States as Density Operators}
A probabilistic mixture of quantum states is called a \emph{mixed state}.
Formally, a mixed state of a quantum system \( \mathcal{H} = \CC^n \) is a \emph{density operator}, a positive self-adjoint \( (n \times n) \)-matrix \( D \in \Mat_n(\CC) \) of trace \( 1 \).
The dynamics of the system is represented by a \emph{superoperator}, a \( \CC \)-linear function \( f \colon \Mat_n(\CC) \longrightarrow \Mat_m(\CC) \) that maps a mixed state to a mixed state and satisfies an additional condition.
The preservation of mixed states means that \( f \) is \emph{positive} (\ie~\( 0 \le x \) implies \( 0 \le f(x) \)) and \emph{trace-preserving} (\ie~\( \trace x = \trace f(x) \) for every \( x \ge 0 \)).
The additional condition strengthens the positivity, and is called \emph{complete positivity}.
\begin{definition}
    A linear map \( f \colon \Mat_n(\CC) \longrightarrow \Mat_m(\CC) \) is \emph{completely positive} if
    \begin{equation*}
        \textstyle
        f(D)
        \quad=\quad
        \sum_{i = 1}^k A_i D A_i^*  
    \end{equation*}
    for some \( k \) and \( A_i \in \Mat_{n,m}(\CC) \).
    The right-hand side of the above equation is called a \emph{Kraus decomposition} of the completely positive map \( f \).
    We write the above situation as \( f = \{ A_i \}_{i = 1,\dots,k} \).
    The decomposition is not unique.
    \thmend
\end{definition}

\section{Overview}\label{sec:overview}
This section presents the core ideas of the paper.
After reviewing the quantum SWITCH~\cite{Chiribella2013} in \cref{sec:overview:switch}, \cref{sec:overview:past} shows that \cref{eq:into:switch} fails under the semantics proposed in the literature.
\Cref{sec:overview:correspondence-problem-mathematically} studies the ill-definedness issue from a mathematical point of view and identifies it as the \emph{correspondence problem}.
\Cref{sec:overview:transformation} and \Cref{sec:overview:dummy-free} also discuss and analyse the ill-definedness issue, but in terms of program transformations.
\Cref{sec:overview:language-design} proposes our solution,
which will be formalised from \cref{sec:proc} onward.

\subsection{Quantum SWITCH}\label{sec:overview:switch}
The \emph{quantum SWITCH}~\cite{Chiribella2013} is an operation \( \mathit{SWITCH}(x,y,F,G) \) that takes a \emph{control qubit} \( x \), an operand qubit \( y \), and two quantum operations \( F \) and \( G \), and returns a pair \( (x', y') \) consisting of the control and operand qubits after the operation.
When \( F \) and \( G \) have Kraus decompositions \( \{ H_i \}_{i \in I} \) and \( \{ K_j \}_{j \in J} \), respectively, the quantum operation \( \mathit{SWITCH}({-},{-},F,G) \) on two qubits is defined by the following Kraus decomposition:
\begin{equation}
  \mathit{SWITCH}({-},{-},F,G) = \big\{\, \ketbra{0}{0} \otimes H_i K_j + \ketbra{1}{1} \otimes K_j H_i \,\big\}_{(i,j) \in I \times J}.
  \label{eq:switch-Kraus}
\end{equation}
At first glance, this definition appears to depend on the choice of Kraus decompositions.
Interestingly, however, it is in fact independent of that choice.
That is, for different Kraus decompositions \( \{ H'_{i'} \}_{i' \in I'} \) and \( \{K'_{j'}\}_{j' \in J'} \) for \(F\) and \(G\), even when \( \{ \ketbra{0}{0} \otimes H'_{i'} K'_{j'} + \ketbra{1}{1} \otimes K'_{j'} H'_{i'} \}_{(i',j') \in I' \times J'} \) differs from the above as Kraus decompositions, these Kraus decompositions define the same quantum operation.
Hence, the quantum SWITCH is well-defined on quantum operations.

It is known that the quantum SWITCH enables a variety of interesting phenomena~\cite{Araujo2014,Colnaghi2012,switch-advantage2,Kristjansson2024,Kristjansson2020,Ebler2018,Chiribella2021}.
Here, we introduce one such phenomenon~\cite{Chiribella2021}, which will be used in the discussion in the next subsection.

Let \( D \) be a quantum operation given by \( D(\varrho) = \ketbra{1}{0}\varrho\ketbra{0}{1} + \ketbra{0}{1}\varrho\ketbra{1}{0} \).
This is expressed as an operation involving measurement, \( \lambda x. \cifx{\meas(x)}{\ket{0}}{\ket{1}} \), which measures the input qubit \( x \) and returns \( \ket{0} \) (resp.~\( \ket{1} \)) when the measurement outcome is \( 1 \) (resp.~\( 0 \)).
Since \( D \) measures the input qubit, one cannot reconstruct \( y \) from \( D\,y \).
Interestingly, \citet{Chiribella2021} shows that one can reconstruct \( y \) from \( \mathit{SWITCH}(\ket{+}, y, D, D) \).

The peculiarity of this phenomenon becomes apparent in the following consideration.
Let \( \mathit{SWITCH}^{?} \) be the following operation, whose definition resembles the quantum SWITCH:
\begin{align*}
  \mathit{SWITCH}^{?}({-},{-},F,G) = \big\{\, \ketbra{0}{0} \otimes H_i K_j + \ketbra{1}{1} \otimes H_i K_j \,\big\}_{(i,j) \in I \times J}.
\end{align*}
In a language that admits the definition \eqref{eq:into:switch} of the quantum SWITCH, we should have
\begin{align*}
  \mathit{SWITCH}^{?}(x,y,F,G) \quad=\quad \qifx{x}{F(G(y))}{F(G(y))}.
\end{align*}
Obviously, \( \mathit{SWITCH}^{?}(x,y,F,G) \) differs from \( \mathit{SWITCH}(x,y,F,G) \) for general \( F \) and \( G \), but one may expect them to coincide when \( F = G \), since 
\begin{equation}
  \mathit{SWITCH}^{?}(x,y,F,F) \quad\stackrel{?}{=}\quad \qifx{x}{F(F(y))}{F(F(y))} \quad\stackrel{?}{=}\quad \mathit{SWITCH}(x,y,F,F).
  \label{eq:switch-vs-switchq}
\end{equation}
However, this equation fails.
Since \( \mathit{SWITCH}^{?}({-},{-},F,G) = \ident \otimes (F \circ G) \), one cannot reconstruct \( y \) from \( \mathit{SWITCH}^{?}(\ket{+},y,D,D) = \ket{+} \otimes (D(D(y))) \).
Hence \( \mathit{SWITCH}^{?}(\ket{+},y,D,D) \neq \mathit{SWITCH}(\ket{+},y,D,D) \).

\subsection{Existing Quantum Conditional Branching and Quantum SWITCH}\label{sec:overview:past}
\citet{Badescu2015}, \citet{Ying16-the-book}, and \citet{Barsse2026} proposed languages with quantum conditional branching for general quantum operations.
This subsection reviews their proposals and shows that \cref{eq:into:switch} does not define the quantum SWITCH~\cite{Chiribella2013} in their languages.

In their languages, the semantics of a program \( P \) uses a Kraus decomposition \( \sem{P} = \{ H_i \}_{i \in I} \).\footnote{Precisely speaking, the semantics of \citet{Ying16-the-book} is a function \( \Psi \) from a finite set \( I \) to operators.  Here we identify such an operator-valued function with the Kraus decomposition \( \{ \Psi(i) \}_{i \in I} \).}\footnote{\citet{Barsse2026} claims that their semantics does not use a Kraus decomposition. However, their semantics can be reformulated in a way that does depend on a specific Kraus decomposition. See \cref{app:sec:vacuum} for details.}
The semantics of the quantum conditional branching
by \citet{Badescu2015} is given by
\begin{equation}
  \sem{\lambda xy. \qifx{x}{P(y)}{Q(y)}}
  \quad\defeq\quad
  \left\{\,
    \ketbra{0}{0} \otimes \dfrac{H_i}{\sqrt{|J|}} + \ketbra{1}{1} \otimes \dfrac{K_j}{\sqrt{|I|}}
  \,\right\}_{(i,j) \in I \times J},
  \label{eq:qif-old-semantics}
\end{equation}
where \( \sem{\lambda y. P(y)} = \{ H_i \}_{i \in I} \), \( \sem{\lambda y.Q(y)} = \{ K_j \}_{j \in J} \), and \( |I| \) is the number of elements in \( I \).
The coefficient \( 1/\sqrt{|J|} \) is needed for normalisation because \( H_i \) appears \( |J| \) times in the above decomposition (as \( \ketbra{0}{0} \otimes H_i \)).
The semantics proposed by \citet{Ying16-the-book} is similar but uses coefficients different from \( 1/\sqrt{|J|} \) and \( 1/\sqrt{|I|} \).
Also, the semantics proposed by \citet{Barsse2026} can be understood as
\begin{align*}
  \sem{\lambda xy. \qifx{x}{P(y)}{Q(y)}}
  \quad&\defeq\quad
  \left\{\,
    \ketbra{0}{0} \otimes H_0 + \ketbra{1}{1} \otimes K_0
  \,\right\}
  \\
  &
  \qquad\
  \cup
  \left\{\,
    \ketbra{0}{0} \otimes H_i
  \,\right\}_{i \in I\setminus\{0\}}
  \cup
  \left\{\,
    \ketbra{1}{1} \otimes K_j
  \,\right\}_{j \in J\setminus\{0\}}
  .
\end{align*}
This definition may look different, but it can still be regarded as a variant of the others, as we will see below.
For the moment, we focus on the semantics in \cref{eq:qif-old-semantics}.

However, the ``defining equation'' of the quantum SWITCH \eqref{eq:into:switch} is not valid in their semantics.
Assume that \( \sem{\lambda y.F(y)} = \{ H_i \}_{i \in I} \) and \( \sem{\lambda y.G(y)} = \{ K_j \}_{j \in J} \).
Then \( \sem{F \circ G} = \{ H_i K_j \}_{(i,j) \in I \times J} \) and \( \sem{G \circ F} = \{ K_j H_i \}_{(i,j) \in I \times J} \), so
\begin{equation*}
  \sem{\lambda xy. \qifx{x}{F(G(y))}{G(F(y))}} = \left\{ \ketbra{0}{0} \otimes \dfrac{H_i K_j}{\sqrt{|I \times J|}} + \ketbra{1}{1} \otimes \dfrac{K_{j'}H_{i'}}{\sqrt{|I \times J|}} \right\}_{(i,j,i',j') \in I \times J \times I \times J}
\end{equation*}
whereas
\(
  \mathit{SWITCH}({-},{-},F,G) = \left\{ \ketbra{0}{0} \otimes H_i K_j + \ketbra{1}{1} \otimes K_{j} H_{i} \right\}_{(i,j) \in I \times J}
\).

To prove that \( \sem{\lambda xy. \qifx{x}{F(G(y))}{G(F(y))}} \) differs from \( \mathit{SWITCH}({-},{-},F,G) \), we can appeal to the subtle difference between \( \mathit{SWITCH} \) and \( \mathit{SWITCH}^{?} \): even when \( F \) and \( G \) have the same Kraus decomposition, \( \mathit{SWITCH}({-}, {-}, F, G) \neq \mathit{SWITCH}^{?}({-}, {-}, F, G) \).
However, in the semantics of the three languages, when \( F \) and \( G \) have the same Kraus decomposition,
\begin{equation*}
  \sem{\lambda xy. \qifx{x}{F(G(y))}{G(F(y))}}
  \quad=\quad
  \sem{\lambda xy. \qifx{x}{F(G(y))}{F(G(y))}}.
\end{equation*}
A calculation shows that, in \citet{Ying16-the-book} and \citet{Badescu2015}, this is equivalent to \( \ident \otimes \sem{F \circ G} \), which differs from the quantum SWITCH. In \citet{Barsse2026}, the resulting operation is in general different again.

Another issue is that the semantics is ill-defined on completely positive maps.
A completely positive map has many different Kraus decompositions: for example, \( F = \{ I_2 \} \) and \( F' = \{ -I_2 \} \) (where \( I_2 \in \Mat_{2}(\CC) \) is the unit matrix) define the same completely positive map \( F(\varrho) = F'(\varrho) = \varrho \).
The semantics of a quantum conditional in these languages depends on the choice of Kraus decomposition, as the authors observed.
For example,
\begin{equation*}
  \sem{\lambda xy. \qifx{x}{F(y)}{y}} = \{ I_2 \otimes I_2 \}
  \quad\mbox{and}\quad
  \sem{\lambda xy. \qifx{x}{F'(y)}{y}} = \{ Z \otimes I_2 \}.
\end{equation*}

\subsection{Correspondence Problem, Mathematically}
\label{sec:overview:correspondence-problem-mathematically}

The key difference between the quantum SWITCH and
the semantics proposed in the literature \cite{Badescu2015,Ying16-the-book,Barsse2026} is
the existence of a \emph{correspondence of indices} between the then- and else-branches.
In the quantum SWITCH \eqref{eq:switch-Kraus}, common indices \( (i,j) \) are used in both the then- and else-cases, whereas in the semantics of quantum conditional branching \eqref{eq:qif-old-semantics}, the indices \( (i,j) \) and \( (i', j') \) for the then- and else-branches are selected independently.

The independent choice of indices for the then- and else-branches is, in our view, a characteristic feature of the approach based on quantum-controlled quantum operations.
Under this approach,
the index set of the then-branch may differ from that of the else-branch.
This is already visible in the typical quantum-controlled quantum operation, \( \qifx{x}{F(y)}{y} \): if \(F\) is represented by a Kraus decomposition indexed by \(I\), then the then-branch has index set \(I\), whereas the else-branch has a singleton index set.
Since the index sets \(I\) and \(I'\) of the then- and else-branches need not coincide,
a natural choice is to choose the indices of the then- and else-branches independently, resulting in a Kraus decomposition indexed by \( I \times I' \).

Conversely, this discussion suggests a characteristic feature of the quantum control exhibited by the quantum SWITCH: the indices of the Kraus decompositions of the then- and else-branches must match. This is what we call the \emph{correspondence problem}.

It is not enough for the then- and else-branches to have the same number of Kraus operators.
Even when the two Kraus decompositions have the same number of elements, changing the correspondence between their elements changes the resulting quantum operation.
For example, consider two Kraus decompositions \( F = \{ H_1, H_2 \} \) and \( G = \{ K_1, K_2 \} \) where \( H_1 = K_1 = \frac{X}{\sqrt{2}} \) and \( H_2 = K_2 = \frac{\ident}{\sqrt{2}} \).
Then, matching \( (H_1,K_1) \) and \( (H_2,K_2) \) gives a different quantum operation from matching \( (H_1,K_2) \) and \( (H_2,K_1) \):
\begin{align*}
  \left\{
    \frac{\ketbra{0}{0} \!\otimes\! X \!+\! \ketbra{1}{1} \!\otimes\! X}{\sqrt{2}},\:
    \frac{\ketbra{0}{0} \!\otimes\! \ident \!+\! \ketbra{1}{1} \!\otimes\! \ident}{\sqrt{2}}
  \right\}
  \not\simeq
  \left\{
    \frac{\ketbra{0}{0} \!\otimes\! X \!+\! \ketbra{1}{1} \!\otimes\! \ident}{\sqrt{2}},\:
    \frac{\ketbra{0}{0} \!\otimes\! \ident \!+\! \ketbra{1}{1} \!\otimes\! X}{\sqrt{2}}
  \right\}.
\end{align*}
Therefore, the correspondence problem is not only about matching the sizes of the index sets: it requires a canonical correspondence between the elements of the two index sets.

The example of the quantum SWITCH gives an intuitive understanding of what such a canonical correspondence looks like.
The Kraus decomposition of the then-branch is \( \{ H_i K_j \}_{(i,j) \in I\times J} \), whereas that of the else-branch is \( \{ K_j H_i \}_{(i,j)\in I\times J} \).
The canonical correspondence matches \(H_iK_j\) with \(K_jH_i\), for each pair \((i,j)\in I\times J\).

We have now arrived at one self-contained explanation of the motivation behind our language design: the language should be designed so as to guarantee a canonical correspondence between the indices of the Kraus decompositions of the then- and else-branches.
We could now proceed directly to the solution (cf.~\cref{sec:overview:language-design}), but we first examine the same problem from another perspective, both to clarify the issue conceptually and to prepare for implementation.

\subsection{Analysis of Semantics and Ill-Definedness via Program Transformation}\label{sec:overview:transformation}

We analyse this problem from the perspective of programming languages,
and consider its operational meaning.
We propose a program transformation
\begin{align*}
   & \mbox{(a program with quantum branching involving \emph{quantum operations})}
  \\
   & \qquad\longmapsto\quad \mbox{(a program with quantum branching over \emph{unitary operations})}.
\end{align*}
Since the latter is known to be implementable, this transformation, together with an implementation of the target language, yields an implementation of the source language.

\paragraph{Source Programming Language}
Before proceeding to the discussion, we briefly describe the programming language used here.

The language has two types, \( \keyword{bool} \) and \( \keyword{qubit} \).
The \( \keyword{qubit} \) type is subject to the linearity constraint due to physical considerations.
For example, in the let-binding \( \mathtt{let}\,q = \mathtt{X}(p) \) (where \( \mathtt{X} \) is the X gate and \( q, p \colon \keyword{qubit} \)), the variable \( p \) is consumed here, so it can no longer be used thereafter.

The language provides two control constructs: the conventional \( \mathtt{if} \), based on Boolean values, and \( \qif \), a conditional branching construct based on \( \keyword{qubit} \).
They are expressions, so both forms of conditional branching return some values.
A conventional branching \( (\cifx{b}{P}{Q}) \) returns only the evaluation result of \( P/Q \).
A quantum branching \( (\qifx{x}{P}{Q}) \) returns the condition qubit \( x \) in addition to the evaluation result of \( P/Q \).
This corresponds to the behaviour of a controlled unitary gate in quantum circuits, which also does not consume the control qubit.
A standard but important constraint for both forms of conditional branching is that the types of \( P \) and \( Q \) must coincide.

In addition to these control constructs, the language includes built-in operators such as arbitrary unitary operations, the creation of a new qubit \( \ket{0} \), and quantum measurement \( \keyword{meas} \).
Among these, \( \keyword{meas} \) is particularly important.
It consumes a value of type \( \keyword{qubit} \) and produces a value of type \( \keyword{bool} \).

\paragraph{Program Transformation}
We propose a series of seemingly semantics-preserving translations, taking a program with quantum-controlled quantum operations and yielding a program with quantum-controlled unitary operations.

Our translation is inspired by \emph{Stinespring's dilation theorem}.
By Stinespring's dilation theorem, given a quantum operation \( f \colon  \Mat_{2^n}(\CC) \longrightarrow  \Mat_{2^m}(\CC) \) from the \( n \)-qubit system to the \( m \)-qubit system, there exists \( k \ge n, m \) and a unitary \( U \colon \CC^{2^k} \longrightarrow \CC^{2^k} \) such that \( f(\varrho) = \mathrm{tr}_{k-m}(U (\varrho \otimes \ketbra{\smash[t]{\vec{0}}}{\smash[t]{\vec{0}}}) U^\dagger) \), where \( \vec{0} \) is the sequence of \( 0 \)'s of length \( k-n \) and \( \mathrm{tr}_{k-m} \) is the partial trace over the \( \Mat_{2^{k-m}}(\CC) \) component in \( \Mat_{2^k}(\CC) \cong \Mat_{2^m}(\CC) \otimes \Mat_{2^{k-m}}(\CC) \).
In the programming language terminology,
it states that a given program \( F(x) \) can be transformed into
\begin{quote}
  \centering
  \begin{qifnoframe}
    let a = |0...0>;  let (y, z) = U(x, a);  let _ = meas z;  y
  \end{qifnoframe}
\end{quote}
where \( x \colon \keyword{qubit}^n \), \( y \colon \keyword{qubit}^m \), \( a \colon \keyword{qubit}^{k-n} \), \( z \colon \keyword{qubit}^{k-m} \) and \( U \) is a unitary, \ie,~a program without measurement.
Our goal is to implement a program transformation of this kind.\footnote{Note that the existence of such a transformation for our language is \textbf{NOT} a consequence of Stinespring's theorem.  The theorem is applicable only to semantic elements such as quantum operations and completely positive maps, whereas the ``semantics'' of our language has not yet been given.  In particular, the semantics of measurements inside \(\qif\) is questionable.  Instead, we aim to define the ``semantics'' via the translation.}

It is not difficult to transform programs that contain no \( \keyword{qif} \).
We can delay a measurement using the \emph{principle of deferred measurement} (see~\cite{QCQI})
\begin{quote}
  \centering
  \begin{minipage}{.27\linewidth}
    \begin{qifnoframe}
      let b = meas q;
      let r = if b { P } else { Q }
    \end{qifnoframe}
  \end{minipage}
  \qquad$=$\qquad\qquad
  \begin{minipage}{.33\linewidth}
    \begin{qifnoframe}
      let (q1, r) = qif q { P } else { Q };
      let b = meas q1
    \end{qifnoframe}
  \end{minipage}
\end{quote}
and move all qubit creations to the beginning (see \cref{fig:dilation}).
\begin{figure}[t]
  \centering
  \begin{subfigure}{0.25\textwidth}
    \begin{lstlisting}[language=qif]
      let b = meas q;
      let p = if b { |1>,, } else { |0>,, };
      p
    \end{lstlisting}
    \raggedleft
    \vspace{-1ex}
    \caption{}
    \label{fig:dilation:a}
  \end{subfigure}
  \qquad
  \begin{subfigure}{0.25\textwidth}
    \begin{lstlisting}[language=qif]
      let (q1, p) =
          qif q { |1>,, } else { |0>,, };
      let _ = meas q1;
      p
    \end{lstlisting}
    \vspace{-1ex}
    \caption{}
    \label{fig:dilation:b}
  \end{subfigure}
  \qquad
  \begin{subfigure}{0.2\textwidth}
    \begin{lstlisting}[language=qif]
      let p0 = |0>;
      let (q1, p) = CX(q, p0);
      let _ = meas q1;
      p
    \end{lstlisting}
    \vspace{-1ex}
    \caption{}
    \label{fig:dilation:c}
  \end{subfigure}
  \caption{Dilation of a program with no \( \keyword{qif} \).
    (a) Source program.
    (b) Deferring the measurement.
    (c) Moving the qubit creation operations \( \ket{0} \) and \( \ket{1} \) to the beginning.
    Here \qifinline{CX} is the controlled X gate.}
  \label{fig:dilation}
  \Description{}
\end{figure}

The transformation of quantum conditional branching is challenging and indeed problematic.
We explain the idea using the source program in \cref{fig:translation:a}.
Our goal is to \emph{hoist} the measurement \( \keyword{meas}\,q_1 \) on line~6 out of \( \keyword{qif} \).
A naive idea would be to add \( q_1 \) to the return value of the then-branch and perform the measurement on the return value outside \( \keyword{qif} \) (\cref{fig:translation:b}).
However, this leads to a type mismatch: the then-branch returns a pair of qubits, whereas the else-branch returns only a single qubit.
The simplest way to resolve this type inconsistency is to introduce a dummy qubit value \( \ket{\varphi} \) and transform the else-branch so that it additionally returns the dummy value (\cref{fig:translation:c}).
Although this may seem like a rather naive solution, the semantics discussed in previous work~\cite{Badescu2015,Ying16-the-book,Abbott2020,Barsse2026} can be understood as corresponding to this transformation.
Once the measurement is moved outside \( \keyword{qif} \), we can then move the qubit creations \( \ket{0} \) and \( \ket{\varphi} \) to the position preceding \( \keyword{qif} \), thereby obtaining the desired normal form.
\begin{figure}[t]
  \centering
  \begin{subfigure}{0.25\textwidth}
    \begin{lstlisting}[language=qif]
      // x : qbit, q : qbit
      let (x1, r) =
          qif x then {
              let p0 = |0>;
              let (q1, p) = CX(q, p0);
              let _ = meas q1;
              p
          } else {
              q
          };
      (x1, r)
    \end{lstlisting}
    \vspace{-1ex}
    \caption{}
    \label{fig:translation:a}
  \end{subfigure}
  \qquad
  \begin{subfigure}{0.25\textwidth}
    \begin{lstlisting}[language=qif]
      // x : qbit, q : qbit
      let (x1, r, q2) =
          qif x then {
              let p0 = |0>;
              let (q1, p) = CX(q, p0);
              (p, q1)
          } else {
              (q, ??)
          };
      let _ = meas q2;
      (x1, r)
    \end{lstlisting}
    \vspace{-1ex}
    \caption{}
    \label{fig:translation:b}
  \end{subfigure}
  \qquad
  \begin{subfigure}{0.25\columnwidth}
    \begin{lstlisting}[language=qif]
      // x : qbit, q : qbit
      let (x1, r, q2) =
          qif x then {
              let p0 = |0>;
              let (q1, p) = CX(q, p0);
              (p, q1)
          } else {
              (q, |φ>)
          };
      let _ = meas q2;
      (x1, r)
    \end{lstlisting}
    \vspace{-1ex}
    \caption{}
    \label{fig:translation:c}
  \end{subfigure}
  \hphantom{\quad\begin{subfigure}{0.25\columnwidth}a\end{subfigure}}
  \vspace{-2ex}
  \caption{An example of the program transformation.
    (a) Source program.
    (b) Hoisting the measurement outside qif results in an ill-typed program.
    (c) Adding a dummy value \( \ket{\varphi} \) makes the program well-typed.
    Then, moving the qubit creations \( \mathtt{let}\,p_0 = \ket{0} \) before \( \keyword{qif} \) and replacing \( \ket{\varphi} \) with \( U\,p_0 \) yields a dilation.
    Here \( U \) is a unitary satisfying \( U \ket{0} = \ket{\varphi} \).
  }
  \label{fig:prog-delay}
  \Description{}
\end{figure}

The above discussion clarifies the source of the ill-definedness issue: it lies in the arbitrariness of the dummy value, which acts as the counterpart of a measurement in the other branch.
Perhaps surprisingly, we can show that the ill-definedness discussed in the literature~\cite{Abbott2020,Ying16-the-book,Badescu2015,Barsse2026} can be understood precisely as this freedom to choose dummy values.
For this program, \citet{Barsse2026} choose $\ket\varphi$ to be $\ket0$,
\citet{Badescu2015} and the canonical semantics of \citet{Ying16-the-book} choose $\ket+$,
and \citet{Abbott2020}, as well as the generalised semantics of \citet{Ying16-the-book}, both allow an arbitrary choice of $\ket\varphi$.
We will discuss this phenomenon in detail later in \cref{rem:syntactic-dilation-for-murao-style}.

\subsection{A Transformation Without Dummy Values and the Correspondence Problem}\label{sec:overview:dummy-free}
We now propose another program transformation without dummy values, which gives a semantics for \( \keyword{qif} \) different from those discussed above.
According to our analysis, their semantics introduces an else-branch dummy value for each then-branch measurement and vice versa (\Cref{fig:intro:after-moving-individual-meas}).
The new transformation hoists a pair of measurements, one from each branch (\Cref{fig:intro:after-moving-matched-meas}).
This transformation applies only to quantum conditional branches in which the then- and else-branches contain the same number of measurements, but it requires no dummy qubit values.
\begin{figure}[t]
  \centering
  \begin{subfigure}{0.2\columnwidth}
    \begin{lstlisting}[language=qif]
      let (y, v) = qif x {
          ...;
          let _ = meas q₁;
          V₁
      } else {
          ...;
          let _ = meas q₂;
          V₂
      }
    \end{lstlisting}
    \caption{Source program.}
    \label{fig:intro:match-before-moving}
  \end{subfigure}
  \hspace{1cm}
  \begin{subfigure}{0.27\columnwidth}
    \begin{lstlisting}[language=qif]
      let (y, v, p₁, p₂) = qif x {
          ...;
          (V₁, q₁, |φ₂>)
      } else {
          ...;
          (V₂, |φ₁>, q₂)
      };
      let _ = meas p₁;
      let _ = meas p₂
    \end{lstlisting}
    \caption{Previous transformation.}
    \label{fig:intro:after-moving-individual-meas}
  \end{subfigure}
  \hspace{1cm}
  \begin{subfigure}{0.22\columnwidth}
    \begin{lstlisting}[language=qif]
      let (y, v, p) = qif x {
          ...;
          (V₁, q₁)
      } else {
          ...;
          (V₂, q₂)
      };
      let _ = meas p
    \end{lstlisting}
    \caption{New transformation.}
    \label{fig:intro:after-moving-matched-meas}
  \end{subfigure}
  \caption{Comparison between the previous and new transformations.
    The previous transformation introduces the else-branch dummy value \( \ket{\varphi_1} \) corresponding to \( \keyword{meas}\,q_1 \) in the then-branch and the then-branch dummy value \( \ket{\varphi_2} \) corresponding to \( \keyword{meas}\,q_2 \) in the else-branch.
    The new transformation moves a pair of measurements from the then- and else-branches, so it does not require any dummy qubit values.}
  \label{fig:intro:matched-moving}
  \Description{}
\end{figure}

However, there is an important and subtle issue: the program transformation depends on the particular pairing of the measurements in the then- and else-branches, and different choices of pairing give rise to different semantics.

We illustrate this point by means of an example (\cref{fig:correspondence-is-needed}).
The program in \cref{fig:correspondence:a} has two measurements in the then- and else-branches.
In this example, we use subscripts to distinguish between different occurrences of the measurement.
There are two possible ways to establish a correspondence between the measurements in the branches: one matching \( \keyword{meas}_1 \) with \( \keyword{meas}_3 \), and another matching \( \keyword{meas}_1 \) with \( \keyword{meas}_4 \). These choices yield the programs shown in \cref{fig:correspondence:b} and \cref{fig:correspondence:c}, respectively.
\begin{figure}[tp]
  \centering
  \begin{subfigure}{0.42\columnwidth}
    \begin{lstlisting}[language=qif]
      let (z, r) = qif x {
              let b₁ = meas₁,, y;
              let q = if b₁ { |0>,, } else { |1>,, };
              let b₂ = meas₂,, q;
              let r = if b₂ { |0>,, } else { |1>,, };
              r
      } else {
              let b₃ = meas₃,, y;
              let q = if b₃ { |0>,, } else { |1>,, };
              let b₄ = meas₄,, q;
              let r = if b₄ { |0>,, } else { |1>,, };
              r
      }
    \end{lstlisting}
    \caption{A program with the same number of measurements in the then- and else-branches.}
    \label{fig:correspondence:a}
  \end{subfigure}
  \qquad\
  \begin{subfigure}{0.36\columnwidth}
    \begin{lstlisting}[language=qif]
      let (z, r) = qif x {
              let (d₁, q) = qif y { |0>,, } else { |1>,, };
              let _ = meas₁,, d₁;
              let (d₂, r) = qif q { |0>,, } else { |1>,, };
              let _ = meas₂,, d₂;
              r
      } else {
              let (d₃, q) = qif y { |0>,, } else { |1>,, };
              let _ = meas₃,, d₃;
              let (d₄, r) = qif q { |0>,, } else { |1>,, };
              let _ = meas₄,, d₄;
              r
      }
    \end{lstlisting}
    \caption{A program obtained by locally deferring measurements.}
    \label{fig:correspondence:a2}
  \end{subfigure}
  \\[7pt]
  \begin{subfigure}{0.37\columnwidth}
    \begin{lstlisting}[language=qif]
      let (z, r, m₁, m₂) = qif x {
          let (d₁, q) = qif y { |0>,, } else { |1>,, };
          let (d₂, r) = qif q { |0>,, } else { |1>,, };
          (r, d₁, d₂)
      } else {
          let (d₃, q) = qif y { |0>,, } else { |1>,, };
          let (d₄, r) = qif q { |0>,, } else { |1>,, };
          (r, d₃, d₄)
      }
      let _ = meas m₁    // $$meas$$₁ and $$meas$$₃
      let _ = meas m₂    // $$meas$$₂ and $$meas$$₄
    \end{lstlisting}
    \caption{The program obtained from \\
      \( \keyword{meas}_1 \leftrightsquigarrow \keyword{meas}_3 \) and \( \keyword{meas}_2 \leftrightsquigarrow \keyword{meas}_4 \).}
    \label{fig:correspondence:b}
  \end{subfigure}
  \qquad\
  \begin{subfigure}{0.37\columnwidth}
    \begin{lstlisting}[language=qif]
      let (z, r, m₁, m₂) = qif x {
          let (d₁, q) = qif y { |0>,, } else { |1>,, };
          let (d₂, r) = qif q { |0>,, } else { |1>,, };
          (r, d₁, d₂)
      } else {
          let (d₃, q) = qif y { |0>,, } else { |1>,, };
          let (d₄, r) = qif q { |0>,, } else { |1>,, };
          (r, d₄, d₃)
      }
      let _ = meas m₁    // $$meas$$₁ and $$meas$$₄
      let _ = meas m₂    // $$meas$$₂ and $$meas$$₃
    \end{lstlisting}
    \caption{The program obtained from \\
      \( \keyword{meas}_1 \leftrightsquigarrow \keyword{meas}_4 \) and \( \keyword{meas}_2 \leftrightsquigarrow \keyword{meas}_3 \).}
    \label{fig:correspondence:c}
  \end{subfigure}
  \vspace{-2ex}
  \caption{Correspondence of measurements affects the translation.}
  \label{fig:correspondence-is-needed}
  \Description{}
\end{figure}

The resulting programs differ in their semantics.
A direct calculation shows that \cref{fig:correspondence:b} maps \( (x,y) = (\ket{+}, \alpha\ket{0} + \beta{\ket{1}}) \) to \( (z,r) = (\ket{+}, \ket{0}) \) with probability \( |\alpha|^2 \) and to \( (\ket{+}, \ket{1}) \) with probability \( |\beta|^2 \) (\ie, \cref{fig:correspondence:b} measures \( y \)). By contrast, \cref{fig:correspondence:c} maps the same input to \( (z,r) = \alpha \ket{00} + \beta \ket{11} \) or \( \alpha \ket{10} + \beta \ket{01} \), each with probability \( 1/2 \).
The difference between \cref{fig:correspondence:b} and \cref{fig:correspondence:c} is, in fact, exactly the subtlety that we have already examined, namely the distinction between \( \mathit{SWITCH}^?(x,y,D,D) \) and \( \mathit{SWITCH}(x,y,D,D) \) in \cref{sec:overview:switch}.
Notice that the mapping \( y \mapsto q \) on lines 2--3 in \cref{fig:correspondence:a} is \( D \) in \cref{sec:overview:switch}; the same is true of the mappings on lines 4--5, 8--9, and 10--11.
Matching \( D \) on lines 2--3 with lines 8--9 yields \( \mathit{SWITCH}^?(x,y,D,D) \), whereas matching \( D \) on lines 2--3 with lines 10--11 yields \( \mathit{SWITCH}(x,y,D,D) \).

Since the choice of matching affects the semantics, a canonical correspondence between the measurements in the then- and else-branches is necessary in order to obtain semantics free of arbitrariness.
We refer to this as the \emph{correspondence problem}.

\begin{remark}
  Roughly speaking, the correspondence problem for measurements discussed here is the operational counterpart of the correspondence problem for the indices of Kraus decompositions discussed in \cref{sec:overview:correspondence-problem-mathematically}.
  Indeed, once a Kraus decomposition is chosen, it can be implemented by a unitary dilation in which the discarded environment is measured in the computational basis; under such an implementation, the possible measurement outcomes are indexed by the Kraus indices.
  Thus, when a quantum operation is described by a program, these indices correspond to the possible histories of outcomes of the $\meas$ operation appearing in the program.

  From this perspective, the absence of a correspondence between indices in the semantics of a general quantum controlled operation $\qifx{x}{F(y)}{G(y)}$, discussed in \cref{sec:overview:past}, is the same phenomenon as the absence of a correspondence between measurements in the transformation discussed in \cref{sec:overview:transformation}.
  Conversely, the situation in \cref{sec:overview:correspondence-problem-mathematically}, where the indices in the then- and else-branches have a canonical correspondence, corresponds operationally to the situation in \cref{sec:overview:dummy-free}, where the $\meas$ operations themselves are put in one-to-one correspondence.
  \thmend
\end{remark}

\subsection{Designing a Well-Behaved Language}\label{sec:overview:language-design}
Recall the ``definition'' of \( \mathit{SWITCH}(x,y,F,G) \) in \cref{eq:into:switch} (adjusted to the syntax here):
\begin{quote}
  \centering
  \begin{qifnoframe}
    qif x { let z₁ = G(y); let r₁ = F(z₁); r₁ } else { let z₂ = F(y); let r₂ = G(z₂); r₂ }
  \end{qifnoframe}
\end{quote}
The measurements are hidden within the definitions of the quantum operations \( F \) and \( G \), so we do not know in advance how many measurements each contains.
However, there exists a canonical correspondence between the measurements in the then- and else-branches.
Letting \( n \) and \( m \) denote the number of measurements hidden within \( F \) and \( G \), respectively, we can see that the then-branch contains \( n + m \) measurements, and the else-branch contains \( m + n \), confirming that the counts match.
Furthermore, there exists a canonical correspondence: each measurement hidden in \( F \) in the then-branch corresponds to the ``same'' measurement in \( F \) in the else-branch.
This is our explanation for why the semantics of the quantum SWITCH is well-defined.
Generalising this argument, the key to ensuring the well-definedness of the semantics is \emph{linearity} in the sense of linear logic~\cite{Girard1987}:
each quantum operation must appear exactly once in both branches.

This observation leads to the following language design.
We divide the language into two sublanguages: a \emph{classical sublanguage} in which measurements are available, and a \emph{quantum sublanguage} in which quantum conditional branching is available.
In the classical sublanguage, quantum branching is not directly permitted; in the quantum sublanguage, quantum measurement is not directly permitted.
However, through the following form of interoperability, each sublanguage can indirectly make use of them.
First, every program \( P \) in the quantum sublanguage can be used directly within the classical sublanguage.
To make the boundary between the quantum and classical components explicit, we write its embedding as \( [P]_Q \) (i.e., the contents within \( [\;]_Q \) belong to the quantum sublanguage).
There is no embedding in the opposite direction.
Nevertheless, we allow a variable defined in the classical sublanguage to be used within the quantum sublanguage.
That is, an expression of the form \( \Let x = (\cdots \keyword{meas} \cdots); [ \cdots x \cdots ]_Q \) is permitted, enabling us to handle measurements indirectly within the quantum sublanguage.
For the reasons discussed above, every variable appearing in the quantum sublanguage must be used linearly.
This linearity constraint also applies to variables defined in the classical sublanguage but used inside the quantum sublanguage.

Our design resolves the mystery of the incorrect equation~\eqref{eq:switch-vs-switchq}: the middle term is prohibited because it violates the linearity constraint in the quantum sublanguage.
To describe \( \mathit{SWITCH}(x,y,D,D) \) in our language, we should first bind \( D \) to variables \( f \) and \( g \) in the classical sublanguage and use them in the quantum sublanguage, as in the third expression below:
\begin{align*}
  \Let f = D;\; \Let g = D;\; \big[\,\qif\, x \; \{\, f\,(g\,y) \,\} \mbox{ $\keyword{else}$ } \{\, f\,(g\,y)\,\}\,\big]_Q
  &\quad=\quad
  \mathit{SWITCH}^{?}(x,y,D,D)
  \\
  \hphantom{\Let f = D;\; \Let g = D;\;} \,\big[\!\qif\, x \; \{\, D\,(D\,y) \,\} \mbox{ $\!\keyword{else}\!$ } \{\, D\,(D\,y)\,\}\big]_Q
  &\quad
  \mbox{ violates the linearity constraint}
  \\
  \Let f = D;\; \Let g = D;\; \big[\,\qif\, x \; \{\, f\,(g\,y) \,\} \mbox{ $\keyword{else}$ } \{\, g\,(f\,y)\,\}\,\big]_Q
  &\quad=\quad
  \mathit{SWITCH}(x,y,D,D)
\end{align*}
\( \mathit{SWITCH}^?(x,y,D,D) \) is similar, but it uses variables \( f \) and \( g \) differently in the quantum sublanguage.
In this way, our language appropriately distinguishes \( \mathit{SWITCH}(x,y,D,D) \) from \( \mathit{SWITCH}^?(x,y,D,D) \).

The final challenge is to prove the well-definedness of the semantics and the validity of \eqref{eq:into:switch} in our language.
We prove them using a categorical semantics in \cref{sec:cat-sem}.

\section{A Quantum Procedural Language with Well-Behaved Semantics}\label{sec:proc}

\subsection{Language}
This subsection defines the syntax of the language \( \Qif \).
It is a functional language with a linear type system, and its design follows the principles outlined in \cref{sec:overview:language-design}.

The syntax of \( \Qif \) appears in \cref{fig:procedural:syntax}.

\begin{figure}
  \begin{align*}
    \begin{array}{llrl}
      \textit{Types}
       & A,B          & \Coloneqq &
      \qbit \sor \bool \sor \unit \sor A \otimes B \sor A \li B
      \\[5pt]
      \textit{Terms}
       & \term,\termB & \Coloneqq &
      x \sor \unitval \sor \term; \termB \sor \lambda x. \term \sor \term\,\termB
      \sor \term \otimes \termB
      \\ &       & \sor &
      \letx{x}{\term}{\termB} \sor \letx{x\otimes y}{\term}{\termB}
      \sor \letx{x}{\F}{\term}
      \\ &       & \sor &
      \qifx{\term}{\termB_1}{\termB_2} \sor \cifx{\term}{\termB_1}{\termB_2}
      \\ &       & \sor &
      \ket{1} \sor U \sor \meas \sor \btrue \sor \bnot
      \\[5pt]
      \textit{Global Definitions}
       & \Defs        & \Coloneqq &
      \cdot \mid \Defs, \globaldef \F = \term
    \end{array}
  \end{align*}
  \vspace{-10pt}
  \caption{Syntax of \Qif{}.
    Here, $U$ is a n-qubit unitary operator,
    and $\F$ is a name for globally defined term.}
  \label{fig:procedural:syntax}
  \Description{}
\end{figure}

The language provides three base types:
the qubit type $\qbit$, the boolean type $\bool$, and the unit type $\unit$.
It also includes two type constructors,
namely the tensor type $A \otimes B$
and the (linear) function type $A \li B$.
A type \(A\) is \emph{first order}, written \( \FirstOrderType{A} \), when it has no function type \(\li\).
A type is \emph{purely quantum} if it does not involve \(\bool\), and we
write $\Quantumness{A}$ to indicate that $A$ is purely quantum.
The boolean type \(\bool\) is not available in the quantum sublanguage.

Most term constructors are from the standard linear lambda calculus or from quantum \( \lambda \)-calculi as in \citet{Selinger2008}.
The first and second lines are standard constructs in the linear lambda calculus:
        variables $x$, the unit value $\unitval$,
        sequential composition $\term;\termB$,
        lambda abstraction $\lambda x. \term$,
        application $\term\,\termB$,
        the tensor product $\term\otimes\termB$,
        the standard binding $\letx{x}{\term}{\termB}$, and
        the tensor-destructing binding $\letx{x\otimes y}{\term}{\termB}$.
An exception is the binding $\letx{x}{\F}{\term}$, which assigns a globally defined constant $\F$ to a local variable $x$.
We distinguish between a variable \( x \) and a globally defined constant \(\F\) as the latter is duplicable and discardable.\footnote{A constant can be seen as a value of type \( {!}A \).
  The distinction between variables and globally defined constants can be avoided by introducing the \( ! \)-type into the classical sublanguage.
  In this paper, however, we do not permit general \(!\)-types, simplifying the exposition by restricting duplication to globally defined constants.}
The third line contains two conditionals, \(\keyword{qif}\) and \(\keyword{if}\).
The final line enumerates the primitive constants:
        qubit initialization $\ket1$,
        all unitary operators $U$,
        measurement $\meas$,
        and the boolean primitives $\btrue$ and $\bnot$.

A \emph{typing context} \( \Delta \) (resp.~\(\Gamma\)) is a finite sequence of type bindings of the form \( x \colon A \) (resp.~\( \F \colon A \)).

\begin{mathfig}

\begin{proofrules}[5pt]
  \infer{
    \termJQ{\Delta}{\term \colon A}
  }{
    \termJC{\Delta}{\term \colon A}
  }

  \infer{
    \Quantumness{A}
  }{
    \termJQ{x \colon A}{x \colon A}
  }

  \infer{}{
    \termJC{x \colon A}{x \colon A}
  }

  \infer{
    \termJG{\Delta, x \colon A}{\term \colon B}
  }{
    \termJG{\Delta}{\lambda x. \term \colon A \li B}
  }

  \infer{
    \termJG{\Delta}{\term \colon \unit}
    \\
    \termJG{\Delta'}{\termB \colon A}
  }{
    \termJG{\Delta, \Delta'}{(\term; \termB) \colon A}
  }

  \infer{
    \termJG{\Delta}{\term \colon A \li B}
    \\
    \termJG{\Delta'}{\termB \colon A}
  }{
    \termJG{\Delta, \Delta'}{\term \, \termB \colon B}
  }

  \infer{
    \termJG{\Delta}{\term \colon A}
    \\
    \termJG{\Delta'}{\termB \colon B}
  }{
    \termJG{\Delta, \Delta'}{\term \otimes \termB \colon A \otimes B}
  }

  \infer{
    \termJG{\Delta}{\term \colon A}
    \\
    \termJG{\Delta', x \colon A}{\termB \colon B}
  }{
    \termJG{\Delta, \Delta'}{\letx{x}{\term}{\termB} \colon B}
  }

  \infer{
    \termJG{\Delta}{\term \colon A \otimes B}
    \\
    \termJG{\Delta', x \colon A, y \colon B}{\termB \colon C}
  }{
    \termJG{\Delta, \Delta'}{\letx{x \otimes y}{\term}{\termB} \colon C}
  }

  \infer{
    U \colon \CC^{2^n} \longrightarrow \CC^{2^n}, \mbox{unitary}
  }{
    \termJQ{\EmpEnv}{U \colon \qbit^{\otimes n} \li \qbit^{\otimes n}}
  }

  \infer{} %
  {
    \termJQ{\EmpEnv}{\unitval \colon \unit}
  }

  \infer{} %
  {
    \termJC{\EmpEnv}{\ket1 \colon \qbit}
  }

  \infer{} %
  {
    \termJC{\EmpEnv}{\btrue \colon \bool}
  }

  \infer{} %
  {
    \termJC{\EmpEnv}{\meas \colon \qbit \li \bool}
  }

  \infer{} %
  {
    \termJC{\EmpEnv}{\bnot \colon \bool \li \bool}
  }

  \infer{
    \termJQ{\Delta}{\term \colon \qbit}
    \\
    \termJQ{\Delta'}{\termB_1 \colon A}
    \\
    \termJQ{\Delta'}{\termB_2 \colon A}
    \\
    \FirstOrderType{A}
  }{
    \termJQ{\Delta, \Delta'}{\qifx{\term}{\termB_1}{\termB_2} \colon \qbit \otimes A}
  }

  \infer{
    \termJC{\Delta}{\term \colon \bool}
    \\
    \termJC{\Delta'}{\termB_1 \colon A}
    \\
    \termJC{\Delta'}{\termB_2 \colon A}
    \\
    \FirstOrderType{A}
  }{
    \termJC{\Delta, \Delta'}{\cifx{\term}{\termB_1}{\termB_2} \colon A}
  }

  \infer{
    (\F \colon A) \in \Gamma
    \\
    \termJC{\Delta, x \colon A}{\term \colon B}
  }{
    \termJC{\Delta}{\letx{x}{\F}{\term} \colon B}
  }

  \infer{
    \vdash \Defs \colon \Gamma
    \qquad
    \termJC{\EmpEnv}{\term \colon A}
  }{
    \vdash (\Defs, \globaldef \F = \term) \colon (\Gamma, \F \colon A)
  }
\end{proofrules}   \caption{Typing rules for \Qif{}.\
    The rules labelled $\vdash_\JG$ are shared between $\vdash_\JQ$ and $\vdash_\JC$.
    We omit the exchange rule, which permutes the order of elements in \(\Delta\).}
  \label{fig:procedural:typing-rules}
  \Description{}
\end{mathfig}
\Qif{} features a linear type system with two derivation modes,
$\vdash_\JQ$ and $\vdash_\JC$.
The judgment $\termJQ{\Delta}{\term\colon A}$ is for the quantum sublanguage.
Both the context $\Delta$ and the type $A$ must be purely quantum,
preventing copying or discarding without any exception.
The quantum conditional branching $\qif$ is available only in this fragment.
The boolean type \(\bool\), the classical conditional branching \(\cif\) and the measurement \(\meas\) is not available.
The judgment $\termJC{\Delta}{\term\colon A}$ is for the classical sublanguage.
It augments the linear context \(\Delta\) with a \emph{non-linear} context $\Gamma$
for globally defined constants that may be duplicated.
Booleans, classical conditional branching and measurement are available in this fragment, but \(\qif\) is not.

The typing rules appear in \cref{fig:procedural:typing-rules}.
Most rules are inherited from the standard linear lambda calculus or the quantum lambda calculus; some constructs are available in both judgments, whereas others are available only in \( \JQ \) or in \( \JC \), following the policy described above.

The most important aspect is the mechanism enabling interoperability between \(\JQ\) and \(\JC\).
The \emph{embedding rule} at the top left derives a $\CPM$-judgment from a $\Hilb$-judgment,
bridging the two derivation modes.
In the language of \cref{sec:overview:language-design}, embeddings were written explicitly as \( [\;]_Q \), whereas in \(\Qif\) we choose to treat embeddings implicitly.
Conversely, the way in which \(\JC\) constructs (such as quantum measurement) become available inside \(\JQ\) is subtly hidden and thus requires attention.
The underlying trick is that the type environment \(\Delta\) does not record whether a variable \( x \) has been bound in \(\JC\) or in \(\JQ\).
Consequently, variables defined in \(\JC\) can be used in \(\JQ\), which indirectly enables the use of quantum measurement and other \(\JC\) constructs within \(\JQ\).
For example, this allows us to write a term of the form
\begin{equation*}
  \letx{d}{\bigl(\lambda q. \cifx{\meas\,q}{\keyword{X}\ket{1}}{\ket{1}}\bigr)}{\bigl(\qifx{x}{(\cdots d \cdots)}{(\cdots d \cdots)}\bigr)}.
\end{equation*}

We discuss other aspects.
Variable rules in $\vdash_\JQ$ and $\vdash_\JC$ are distinct for restricting global definitions and booleans from the $\JQ$ fragment.
Each conditional is confined to its respective fragment:
$\qif$ is available only in $\JQ$, where strict linearity is enforced,
while $\cif$ is available only in $\JC$.
The result types also differ:
$\qif$ returns the conditional qubit (as a qubit is not discardable),
whereas $\cif$ does not return the controlling boolean (following the convention).
We forbid function types to occur in the conditional results by the technical reason
discussed by \citet{HirataT26}.
Global constants $\F$ in the non-linear context $\Gamma$ can be referenced only in $\CPM$.
We can bind $\F$ to a local variable $x$,
and once bound, $x$ becomes linear even though $\F$ remains reusable.
Any closed term $\term$ can be promoted to a global definition.

\begin{remark}
  The judgments \( \termJC{}{\unitval\colon \unit} \) and \( \termJC{}{U\colon \qbit^{\otimes n} \multimap \qbit^{\otimes n}} \) are derivable via the embedding rule, so the unit value and unitaries are also available in \( \JC \)
  Although the boolean values must be used linearly in \( \Qif \), they are duplicable and discardable through \(\cif\).
  We can define \( \bdiscard \colon \bool \multimap \unit \) and \( \bcopy \colon \bool \multimap \bool \otimes \bool \) by
  $
    \bdiscard \defeq \lambda x. \cifx{x}{\unitval}{\unitval}
  $ and $
    \bcopy \defeq \lambda x. \cifx{x}{(\btrue \otimes \btrue)}{(\bfalse \otimes \bfalse)}
  $, where \( \bfalse \defeq \bnot\,\btrue \).
  \thmend
\end{remark}

\begin{example}\label{eg:procedural:switch}
  Let \( \mathit{SWITCH}(x,y,f,g) \) be the term given by
  \begin{equation*}
    \SWITCH(x,y,f,g)
    \quad=\quad
    \qifx{x}{f(g(y))}{g(f(y))}.
  \end{equation*}
  Then \( x \colon \qbit, y \colon A, f \colon A \multimap A, g \colon A \multimap A \vdash_{\Hilb} \SWITCH(x,y,f,g) \colon \qbit \otimes A \)
  for any $A$ satisfying $\Quantumness{A}$ and $\FirstOrderType{A}$.
  Note that the function inputs \( f \) and \( g \) of \( \mathit{SWITCH}(x,y,f,g) \) can be \( \JC \)-terms, which may involve measurement, by promoting the above judgment to the corresponding \( \vdash_{\JC} \)-judgment by the embedding rule as follows:
  \begin{align*}
    \letx{f}{(\lambda z.\bdiscard(\meas(z)); \ket1)}{\,\letx{g}{(\lambda z.\,z)}{\,\SWITCH(x,y,f,g)}}.
    \tag*{\thmend}
  \end{align*}
\end{example}

\begin{example}
  A SWITCH-like operation in \citet{Liu2024} can be written as follows:
  \begin{align*}
    &
    M
    \quad=\quad
    \qifx{x}{f_1(f_2(\dots(f_n(y))\dots))}{f_n(f_{n-1}(\dots(f_1(y))\dots))}
    \\
    &\qquad
    \big[\,
      \termJQ{
        x \colon \qbit, y \colon A, f_1 \colon A \multimap A,\dots, f_n \colon A \multimap A
      }{
        M \colon \qbit\otimes A
      }
    \,\big].
  \end{align*}
  For another example, let \( M' \) be the term given by
  \begin{align*}
    &
    M'
    \quad=\quad
    \qifx{y}{
      \letx{a \otimes b}{f y}{g (ka \otimes hb)}
    }{
      h( k (g (f y)))
    }
    \\
    &\qquad
    \big[\,
    \termJQ{
      x \colon \qbit,
      y \colon A,
      f \colon A \li A \otimes A,
      g \colon A \otimes A \li A,
      k, h \colon A \li A
    }{
      M' \colon \qbit\otimes A
    }
    \,\big].
  \end{align*}
  This term represents the superposition of the following two circuits:
  \begin{align*}
\begin{tikzpicture}[yscale=0.7]
	\begin{pgfonlayer}{nodelayer}
		\node [style=none] (0) at (-2, 0.75) {};
		\node [style=none] (1) at (-2, -0.75) {};
		\node [style=none] (2) at (-1.25, 0.75) {};
		\node [style=none] (3) at (-1.25, -0.75) {};
		\node [style=none] (4) at (-0.75, 0.75) {};
		\node [style=none] (5) at (-0.75, -0.75) {};
		\node [style=none] (6) at (0, 0.75) {};
		\node [style=none] (7) at (0, -0.75) {};
		\node [style=none] (8) at (0.5, 0.25) {};
		\node [style=none] (9) at (1, 0.25) {};
		\node [style=none] (10) at (1, -0.25) {};
		\node [style=none] (11) at (0.5, -0.25) {};
		\node [style=none] (12) at (1.5, 0.25) {};
		\node [style=none] (13) at (2, 0.25) {};
		\node [style=none] (14) at (2, -0.25) {};
		\node [style=none] (15) at (1.5, -0.25) {};
		\node [style=none] (16) at (-2.5, 0) {};
		\node [style=none] (17) at (-2, 0) {};
		\node [style=none] (18) at (-1.25, 0.5) {};
		\node [style=none] (19) at (-0.75, 0.5) {};
		\node [style=none] (20) at (-1.25, -0.5) {};
		\node [style=none] (21) at (-0.75, -0.5) {};
		\node [style=none] (22) at (0, 0) {};
		\node [style=none] (23) at (0.5, 0) {};
		\node [style=none] (24) at (1, 0) {};
		\node [style=none] (25) at (1.5, 0) {};
		\node [style=none] (26) at (2, 0) {};
		\node [style=none] (27) at (2.5, 0) {};
		\node [style=none] (28) at (0.75, 0) {$k$};
		\node [style=none] (29) at (1.75, 0) {$h$};
		\node [style=none] (30) at (-1.625, 0) {$f$};
		\node [style=none] (31) at (-0.375, 0) {$g$};
	\end{pgfonlayer}
	\begin{pgfonlayer}{edgelayer}
		\draw (0.center) to (2.center);
		\draw (2.center) to (3.center);
		\draw (3.center) to (1.center);
		\draw (1.center) to (0.center);
		\draw (4.center) to (6.center);
		\draw (6.center) to (7.center);
		\draw (7.center) to (5.center);
		\draw (5.center) to (4.center);
		\draw (8.center) to (9.center);
		\draw (9.center) to (10.center);
		\draw (10.center) to (11.center);
		\draw (11.center) to (8.center);
		\draw (12.center) to (13.center);
		\draw (13.center) to (14.center);
		\draw (13.center) to (13.center);
		\draw (14.center) to (15.center);
		\draw (15.center) to (12.center);
		\draw (16.center) to (17.center);
		\draw (18.center) to (19.center);
		\draw (20.center) to (21.center);
		\draw (22.center) to (23.center);
		\draw (24.center) to (25.center);
		\draw (26.center) to (27.center);
		\draw (23.center) to (23.center);
	\end{pgfonlayer}
\end{tikzpicture}
     \qquad\quad
\begin{tikzpicture}[yscale=0.7]
	\begin{pgfonlayer}{nodelayer}
		\node [style=none] (0) at (-2, 0.75) {};
		\node [style=none] (1) at (-1.25, 0.75) {};
		\node [style=none] (2) at (-1.25, -0.75) {};
		\node [style=none] (3) at (-2, -0.75) {};
		\node [style=none] (4) at (-0.75, 0.75) {};
		\node [style=none] (5) at (-0.25, 0.75) {};
		\node [style=none] (6) at (-0.25, 0.25) {};
		\node [style=none] (7) at (-0.75, 0.25) {};
		\node [style=none] (8) at (-0.75, -0.25) {};
		\node [style=none] (9) at (-0.25, -0.25) {};
		\node [style=none] (10) at (-0.25, -0.75) {};
		\node [style=none] (11) at (-0.75, -0.75) {};
		\node [style=none] (12) at (0.25, 0.75) {};
		\node [style=none] (13) at (1, 0.75) {};
		\node [style=none] (14) at (1, -0.75) {};
		\node [style=none] (15) at (0.25, -0.75) {};
		\node [style=none] (16) at (-2.5, 0) {};
		\node [style=none] (17) at (-2, 0) {};
		\node [style=none] (18) at (-1.25, 0.5) {};
		\node [style=none] (19) at (-0.75, 0.5) {};
		\node [style=none] (20) at (-0.25, 0.5) {};
		\node [style=none] (21) at (0.25, 0.5) {};
		\node [style=none] (22) at (-1.25, -0.5) {};
		\node [style=none] (23) at (-0.75, -0.5) {};
		\node [style=none] (24) at (-0.25, -0.5) {};
		\node [style=none] (25) at (0.25, -0.5) {};
		\node [style=none] (26) at (1, 0) {};
		\node [style=none] (27) at (1.5, 0) {};
		\node [style=none] (28) at (-1.625, 0) {$f$};
		\node [style=none] (29) at (-0.5, 0.5) {$k$};
		\node [style=none] (30) at (-0.5, -0.5) {$h$};
		\node [style=none] (31) at (0.625, 0) {$g$};
	\end{pgfonlayer}
	\begin{pgfonlayer}{edgelayer}
		\draw (0.center) to (1.center);
		\draw (1.center) to (2.center);
		\draw (2.center) to (3.center);
		\draw (3.center) to (0.center);
		\draw (4.center) to (5.center);
		\draw (5.center) to (6.center);
		\draw (6.center) to (7.center);
		\draw (7.center) to (4.center);
		\draw (8.center) to (9.center);
		\draw (9.center) to (10.center);
		\draw (10.center) to (11.center);
		\draw (11.center) to (8.center);
		\draw (12.center) to (13.center);
		\draw (13.center) to (14.center);
		\draw (14.center) to (15.center);
		\draw (15.center) to (12.center);
		\draw (16.center) to (17.center);
		\draw (18.center) to (19.center);
		\draw (20.center) to (21.center);
		\draw (8.center) to (8.center);
		\draw (22.center) to (23.center);
		\draw (15.center) to (15.center);
		\draw (9.center) to (9.center);
		\draw (24.center) to (25.center);
		\draw (26.center) to (27.center);
	\end{pgfonlayer}
\end{tikzpicture}
     .
  \end{align*}
  Although various SWITCH-like operators have been proposed~\cite{Procopio2019,Procopio2020,Das2022,Liu2024}, all of them focus on controlling the order of multiple operations of the same type \( A \multimap A \).
  A SWITCH-like operator involving tensor products, such as the example above, have not been explored,   to the best of our knowledge.
  \thmend
\end{example}

\begin{example}
  Also, $\qif$ can be nested.
  Let $f, g$
  be variables with the function type $\qbit \li \qbit$.
  In the term
  $
    \qifx{x}
    {\SWITCH(y, z, f, g)}
    {f y \otimes g z}
  $,
  the qubit $y$
  is used as a control in the then-branch
  and as an input of $f$ in the else-branch.
  Therefore, this function cannot be represented
  in a language that separates
  ``control'' qubits and ``target'' qubits
  as in \citet{ClmPerd20-LIPIcs-PBS}.
  \thmend
\end{example}

\section{Categorical Semantics}\label{sec:cat-sem}
In this section, we define a categorical semantics of \Qif{} and see how the correspondence problem is resolved in \Qif{}.

\subsection{Preliminaries: the Categories of Linear Maps and Completely Positive Maps}
We briefly review the definitions and properties of two specific compact closed categories, $\Hilb$ and $\CPM$:
for details, see, e.g., \cite{Selinger2004,Selinger2004a,Selinger2008}.

The category $\Hilb$ is the category of finite-dimensional $\CC$-vector spaces and linear maps.
More concretely, the objects of the category are natural numbers,
each object $n$ representing the $n$-dimensional $\CC$-vector space $\CC^n$
spanned by $\ket{0},\dots,\ket{n-1}$,
and the morphisms are $\CC$-linear maps $f \colon \CC^n \longrightarrow \CC^m$.
The category $\Hilb$ has a standard monoidal product
corresponding to the tensor product of the vector spaces,
\ie, $n \otimes m \defeq n \times m$.
The basis of $n \otimes m$ is given by the product basis
$\ket{00},\ket{01},\dots,\ket{0(m-1)},\ket{10},\dots,\ket{(n-1)(m-1)}$.
The tensor unit is $1$.
The dual object \(n^*\) of $n$ is given by the dual space, which is again $n$.
The unit and counit morphism are defined by
\begin{equation*}
  \textstyle
  \eta_n \colon 1 \longrightarrow n \otimes n^*;\
  \alpha \longmapsto {\alpha} \sum_{i = 0}^{n-1} \ket{i} \otimes \ket{i},
  \qquad
  \varepsilon_n \colon n^* \otimes n \longrightarrow 1;\
  \sum_{i,j = 0}^{n-1} \alpha_{ij} \ket{i} \otimes \ket{j}
  \longmapsto \sum_{i = 0}^{n-1} \alpha_{ii}.
\end{equation*}
This category $\Hilb$ also admits biproducts defined by $n \oplus_{\Hilb} m \defeq n + m$.
In particular, $1 \oplus_{\Hilb} 1 = 2$ corresponds to $\CC^2$,
in which we interpret the qubit type.

The other category $\CPM$ of completely positive maps is defined as follows.
An object of $\CPM$ is a finite sequence $\vec{n} = (n_0,\dots,n_{k-1})$ of natural numbers.
A morphism $f \in \CPM(\vec{n}, \vec{m})$ is a matrix of completely positive maps
$(f_{ij} \colon \Mat_{n_i}(\CC) \longrightarrow \Mat_{m_j}(\CC))$.
The monoidal product in $\CPM$ is defined by
$
  (n_0,\dots,n_{k-1}) \otimes (m_0,\dots,m_{\ell-1}) := (n_0m_0, \dots,n_0m_{\ell-1}, n_1m_0,\dots, n_{k-1}m_{\ell-1})
$.
The monoidal unit is $(1)$, and the dual \(\vec{n}^*\) of $\vec{n}$ is again $\vec{n}$.
The unit and counit morphisms are defined by
\begin{align*}
  (\eta_{\vec n})_{1,ij}        &
  = \begin{cases}
      1 \longmapsto \sum_{k,\ell = 0}^{ n_i - 1 }
      \ketbra{k}{\ell} \otimes \ketbra{k}{\ell}
                      & \text{if } i = j
      \\
      1 \longmapsto 0 & \text{otherwise},
    \end{cases}
  \\
  (\varepsilon_{\vec n})_{ij,1} &
  = \begin{cases}
      \ketbra{k}{\ell} \otimes \ketbra{k'}{\ell'}
      \longmapsto
      \delta_{k,k'} \delta_{\ell\ell'}
        & \text{if } i = j  \\
      \ketbra{k}{\ell} \otimes \ketbra{k'}{\ell'}
      \longmapsto
      0 & \text{otherwise}.
    \end{cases}
\end{align*}
This category $\CPM$ also admits biproducts defined by the concatenation of vectors
$\vec{n} \oplus_{\CPM} \vec{m} = \vec{n}\vec{m}$.
The biproduct of the monoidal unit $(1) \oplus_{\CPM} (1) = (1,1)$
will be the semantics of booleans.

There are several important maps in this category.
\begin{itemize}
  \item %
        For each object $\vec{n}$ in $\CPM$,
        there is a \emph{discarding map} $\discardDiag_{\vec{n}} \colon \vec{n} \longrightarrow (1)$
        defined by the componentwise trace operator
        $(\discardDiag_{\vec{n}})_{\ell,1}
          \colon \sum_{ij=0}^{n_\ell-1} \alpha_{ij} \ketbra{i}{j} \longmapsto \sum_{i = 0}^{n_\ell-1}\alpha_{ii}$.
  \item %
        Each object $\vec{n} = (n_0,\dots,n_{k-1})$ in $\CPM$ is a retraction of $(N)$ where $N \defeq \sum_{\ell=0}^{k-1} n_\ell$.
        The \emph{section} $s_{\vec n} \colon \vec{n} \longrightarrow (N)$ is the componentwise embedding of $n_\ell \times n_\ell$ matrices
        into $N \times N$ matrices defined by $\ketbra{i}{j} \mapsto \ketbra{N_\ell + i}{N_\ell + j}$
        where $N_\ell \defeq n_0 + \cdots + n_{\ell - 1}$.
        The corresponding \emph{retraction} map
        $r_{\vec{n}} \colon (N) \longrightarrow \vec{n}$ is the projection
        to the corresponding block of the matrix.
        For the case of $\vec{n} = (1,1)$,
        we define $m \defeq r_{(1,1)}$ and call it the \emph{measurement map}.
\end{itemize}

There is a canonical functor in one direction
$\EmbeddingFunctor \colon \Hilb \longrightarrow \CPM$, which we call the \emph{embedding}.
This functor is defined by $\EmbeddingFunctor(n) \defeq (n)$
and $\EmbeddingFunctor(f)(\rho) \defeq f \rho f^\dagger$.
This functor preserves the compact closed structure on the nose.
Equally important is that the biproduct is not preserved by this embedding.

\subsection{Hilb-CPM model}
Our model of \Qif{} consists of two categories: \(\Hilb\) for modelling the quantum sublanguage, and \( \CPM \) for the classical sublanguage.
We define $\sem{-}_\Hilb$ and $\sem{-}_\CPM$ in the categories $\Hilb$ and $\CPM$, respectively, which are often identified through the embedding functor.

The categorical semantics of types $A$ is given by $\sem{A}$ as follows.
\begin{gather*}
  \sem{\unit} = 1,
  \ \ \
  \sem{\qbit} = 2,
  \ \ \
  \sem{A\otimes B} = \sem{A} \otimes \sem{B},
  \ \ \
  {\sem{A\multimap B} = \sem{A}^* \otimes \sem{B}},
  \ \ \
  \sem{\bool} = (1, 1) \in \CPM.
\end{gather*}
Since booleans are considered only in $\CPM$-context,
the semantics is defined only in $\CPM$.

We define the categorical semantics of $\termJQ{\Delta}{\term\colon A}$
as a map $\sem{\term}_\Hilb \colon \sem{\Delta}_\Hilb \longrightarrow \sem{A}_\Hilb$, where $\sem{\Delta}$ is the tensor product of the semantics of types in $\Delta$,
following the convention of linear lambda calculus.
For $\CPM$-terms, since they may have a non-linear context $\Gamma$,
we define the categorical semantics of a derivation
$\termJC{\Delta}{\term\colon A}$ parameterised by a valuation $\eval \in \prod_{(\F \colon B) \in \Gamma} \sem{B}_\CPM$ of the non-linear context.
\ie, a map $\sem{\term}_{\CPM,\eval} \colon \sem{\Delta}_\CPM \longrightarrow \sem{A}_\CPM$.

\begin{mathfig}
  \begin{gather*}
    \sem{\termJQ{\Delta}{\term \colon A}}_\JC
    = \EmbeddingFunctor \sem{\termJQ{\Delta}{\term \colon A}}_\JQ
    \hspace*{1.8em}
    \sem{x\colon A} = \ident_{\sem{A}}
    \hspace*{1.8em}
    \sem{\term;\termB} = \sem{\term}\otimes\sem{\termB}
    \\[-0.1em]
    \sem{\lambda x^A. \term} = \Lambda_A \sem{\term}
    \hspace*{2.5em}
    \sem{\term\,\termB} = \ev (\sem{\term} \otimes \sem{\termB})
    \hspace*{2.5em}
    \sem{\term \otimes \termB} = \sem{\term} \otimes \sem{\termB}
    \\[-0.1em]
    \sem{\letx{x}{\term}{\termB}} = \sem{\termB} \sigma_{A,\Delta'} (\sem{\term} \otimes \ident_{\Delta'})
    \hspace*{2.5em}
    \sem{\letx{x\otimes y}{\term}{\termB}} = \sem{\termB} \sigma_{AB,\Delta'} (\sem{\term} \otimes \ident_{\Delta'})
    \\[-0.1em]
    \sem{\unitval} = \ident_{1}
    \hspace*{2.em}
    \sem{U} = U
    \hspace*{2.em}
    \sem{\ket1}_\CPM = \ketbra{1}{1}
    \hspace*{2.em}
    \sem{\meas}_\CPM = m
    \hspace*{2.em}
    \sem{\btrue}_\CPM = \mathrm{inr}_{I,I}
    \\[-0.1em]
    \sem{\bnot} = \mathrm{swap}_{\oplus_\CPM}
    \hspace*{2.em}
    \sem{\qifx{\term}{\termB_1}{\termB_2}}_\Hilb
    = d_{\oplus_\Hilb}^{-1} (\sem{\termB_2} \oplus_\JQ \sem{\termB_1}) d_{\oplus_\Hilb} (\sem{\term}\otimes \ident_{\Delta'})
    \\[-0.1em]
    \sem{\cifx{\term}{\termB_1}{\termB_2}}_\CPM
    = (\discardDiag_{(1,1)}\otimes\ident_A)d_{\oplus_\CPM}^{-1}(\sem{\termB_2} \oplus_\JC \sem{\termB_1})d_{\oplus_\CPM}(\sem{\term}\otimes \ident_{\Delta'})
    \\[-0.1em]
    \sem{\letx{f}{\F}{\term}}_{\CPM,\varrho}
    = \sem{\term}_{\CPM,\varrho} \circ (\ident_{\Delta} \otimes \varrho(\F))
    \hspace*{1.5em}
    \sem{\Defs, \globaldef \F(x^A) = \term} = \sem{\Defs} \cup (\F \mapsto \sem{\term})
  \end{gather*}
  \vspace{-3ex}
  \caption{Categorical semantics for \Qif{}.
    Here,
    $\ev \colon (A \li B) \otimes A \rightarrow B$ is the evaluation map,
    $\sigma_{A,B} \colon A \otimes B \cong B \otimes A$ is the braiding,
    $\mathrm{swap} \colon (1,1) \rightarrow (1,1)$ is the swapping morphism,
    $d_{\oplus_\JG}\colon A \otimes (1\oplus_\JG 1) \cong A\oplus_{\JG} A$ is the distribution isomorphism,
    $m\colon (2)\rightarrow (1,1)$ is the measurement map,
    ${\protect\discardDiag}\colon I\oplus I \rightarrow I$ is the discard map,
    and
    $\Lambda_Y(-)$ maps a morphism \( f \colon X \otimes Y \to Z \) to a morphism
    $X \to Y\li Z$.}
  \label{fig:procedural:categorical-semantics}
  \Description{}
\end{mathfig}
The semantics of derivations is defined in \cref{fig:procedural:categorical-semantics}.
The rule that promotes a $\Hilb$-term into a $\CPM$-term is interpreted via
the embedding functor $\EmbeddingFunctor$.
Note that $\EmbeddingFunctor\bigl(\sem{\Delta}_\Hilb\bigr) = \sem{\Delta}_\CPM$
uses the fact that $\EmbeddingFunctor$ preserves the compact closed structures on the nose.
The two conditional terms $\qif$ and $\cif$ utilise the biproducts in each category.
It is important that these terms are interpreted in different categories
as the biproducts differ in $\Hilb$ and $\CPM$.
The semantics of $\letx{x}{\F}{\term}$ in $\CPM$ is the only rule that
accesses $\eval$.

\begin{remark}
  A judgement for a $\CPM$-term may have several different derivations because
  the place to use the promotion rule from $\Hilb$-term is not unique.
  However, we can prove the uniqueness of the categorical semantics for the terms.
  The proof is immediate from the fact that
  the embedding functor commutes with every structure of the compact closed category.
  \thmend
\end{remark}

We can prove that the SWITCH term does actually define the quantum SWITCH.

\begin{theorem}\label{thm:cat:switch-semantics}
  The categorical semantics of $\SWITCH$ coincides with the quantum SWITCH.
  \thmend
\end{theorem}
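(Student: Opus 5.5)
The plan is to compare the two sides as morphisms of $\CPM$ after fixing arbitrary closed $\JC$-terms (or valuations) for $f$ and $g$. Concretely, suppose $f,g$ are instantiated by completely positive maps $F,G\colon (n)\to(n)$, where $n=\sem{A}$. We then show that
\[
\sem{\letx{f}{F}{\letx{g}{G}{\SWITCH(x,y,f,g)}}}_\CPM
\]
equals the Kraus-defined operation in \eqref{eq:switch-Kraus}. The semantics of $\SWITCH$ itself is a $\Hilb$-morphism
\[
S\colon 2\otimes n\otimes (n^*\otimes n)\otimes(n^*\otimes n)\to 2\otimes n,
\]
and the composite is $\EmbeddingFunctor(S)\circ(\ident\otimes\ident\otimes \hat F\otimes \hat G)$. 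Here $\hat F\colon (1)\to (n^*\otimes n)$ is the name (Choi state) of $F$ obtained by currying via $\Lambda$ in $\CPM$.

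\textbf{Step 1: compute $S$ in $\Hilb$.} Using the biproduct clause for $\qif$, $S$ sends $\ket{b}\otimes v\otimes \phi\otimes\psi$ to $\ket1\otimes \ev(\phi\otimes\ev(\psi\otimes v))$ when $b=1$, and to $\ket0\otimes\ev(\psi\otimes\ev(\phi\otimes v))$ when $b=0$. Writing $\phi=\sum_{k}\bra{a_k}\otimes\ket{c_k}$ as an element of $n^*\otimes n$, i.e.\ as the linear map $H_\phi=\sum_k\ketbra{c_k}{a_k}$, we get the more transparent form
\[
S(\ket{b}\otimes v\otimes\phi\otimes\psi)=\ketbra{0}{0}\otimes K_\psi H_\phi v+\ketbra11\otimes H_\phi K_\psi v,
\]
which is linear in $\phi$ and in $\psi$. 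This is just the standard fact that the compact closed $\ev$ applied to a name $\ulcorner H\urcorner$ gives $H$. The branch orientation must match the convention of \eqref{eq:switch-Kraus}, where $\ket0$ selects $F\circ G$. The paper's $\qif$ sends $\ket1$ to the then-branch $f(g\,y)$. I will check the indexing carefully and, if needed, note the relabelling of the control.

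\textbf{Step 2: express the Choi states via Kraus decompositions.} If $F=\{H_i\}_{i\in I}$, then the currying of $F$ in $\CPM$ is
\[
\hat F=\sum_i \EmbeddingFunctor(\ulcorner H_i\urcorner)=\sum_i \ketbra{\ulcorner H_i\urcorner}{\ulcorner H_i\urcorner}.
\]
This follows because $\Lambda$ in $\CPM$ is given by precomposition with $\eta$, and $\EmbeddingFunctor$ preserves $\eta$ on the nose. Similarly $\hat G=\sum_j \ketbra{\ulcorner K_j\urcorner}{\ulcorner K_j\urcorner}$.

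\textbf{Step 3: conclude by linearity.} Since $\EmbeddingFunctor(S)(\rho)=S\rho S^\dagger$ is linear in $\rho$, feeding $\rho\otimes\hat F\otimes\hat G$ gives
\[
\sum_{i,j}S_{ij}\,\rho\, S_{ij}^\dagger,\qquad S_{ij}=S(-\otimes\ulcorner H_i\urcorner\otimes\ulcorner K_j\urcorner).
\]
By Step 1, $S_{ij}=\ketbra00\otimes K_jH_i+\ketbra11\otimes H_iK_j$, which is exactly the Kraus family \eqref{eq:switch-Kraus}, up to the control-labelling convention. As a by-product, the result is independent of the Kraus decompositions, because the left side depends only on $\hat F,\hat G$.

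The main obstacle is Step 2: rigorously identifying the $\CPM$-currying of a completely positive map with the sum of embedded Hilb-names of its Kraus operators. This needs the explicit formulas for $\eta$ in $\CPM$ and the on-the-nose preservation of compact closure by $\EmbeddingFunctor$. Step 1 also requires some care to track the braidings $\sigma$ in the let-clauses.
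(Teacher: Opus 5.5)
Your proposal is correct and follows essentially the paper's route: your Steps~2--3 are the paper's auxiliary proposition $\sem{\letx{f}{F}{M}} = \sum_i \EmbeddingFunctor\bigl(\sem{M}\circ(\ident\otimes\Lambda(V_i))\bigr)$, which it proves from strictness of $\EmbeddingFunctor$ and additivity of composition in $\CPM$, and your Step~1 is the paper's computation of $\sem{\SWITCH}\circ(\ident\otimes\Lambda(V)\otimes\Lambda(W))$, which you do directly with names rather than through a $\beta$-rule in a language extended by $\Hilb$-constants. The control-label flip you flagged also appears in the paper's own final formula: the then-branch sits at $\ket{1}$, giving $\ketbra{1}{1}\otimes H_iK_j+\ketbra{0}{0}\otimes K_jH_i$ rather than \eqref{eq:switch-Kraus}, and the paper silently absorbs it by the same relabelling of the control.
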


The fact that we can define a natural semantics without making any arbitrary choices demonstrates that there is no remaining ambiguity and that the correspondence problem has been resolved in our language. %
This is because the language is designed to follow the categorical structure of the model: quantum branching lives in the \(\Hilb\)-fragment, measurement and classical control live in the \(\CPM\)-fragment, and the two are connected only through the embedding functor \(\EmbeddingFunctor \colon \Hilb \to \CPM\).
Since this embedding preserves the compact closed structure, any pure higher-order construction defined in \(\Hilb\), such as the quantum SWITCH, lift canonically to \(\CPM\).
Together with the strict linearity of \(\Hilb\)-terms, this means that the required correspondence is built into the syntax and its categorical interpretation, rather than imposed by an arbitrary choice.
 
\section{Operational Semantics}\label{sec:op-sem}
In this section, we discuss how a program in our language can be compiled down to a quantum circuit. We do this by using the transformation technique we discussed in \cref{sec:overview:dummy-free}. We first formalise the transformation from \Qif{} to the target language \QifUnitary{} in \cref{sec:op-sem:syndil}, define an operational semantics of \QifUnitary{} in \cref{sec:op-sem:op-sem}, and discuss their properties in \cref{sec:op-sem:properties}.

\subsection{Syntactic Dilation}
\label{sec:op-sem:syndil}

We now formalise the program transformation from \Qif{} to a measurement-free language, whose core idea was discussed in \cref{sec:overview}.
Inspired by Stinespring's dilation theorem, we aim to transform a term of \Qif{} into the form \( \letx{\vec{a}}{\ket{1\ldots1}}{\letx{y \otimes \vec{z}}{U(x \otimes \vec{a})}{(\overrightarrow{\bdiscard(z)}; y)}} \), where \( U \) is a \(\JQ\)-term (a generalisation of a unitary).
However, this transformation is inconvenient because the transformed program \(U\) has a different type from the original program.
We therefore introduce a language \QifUnitary{} that treats the auxiliary input qubits \( \vec{a} \) and output qubits \( \vec{z} \) in a special manner, enabling us to view \( U \) as a mapping \( x \mapsto y \) in the type system.

\begin{mathfig}
  \begin{align*}
    \begin{array}{llrl}
      \addnew{\textit{Input channels}}
       &
      \addnew{\inch, \inchB, \inchC, \dots}
      \hspace{5em}
      \addnew{\textit{Output channels}}
      \quad
      \addnew{\outch, \outchB, \outchC, \dots \hspace{-50em}}
      \\[5pt]
      \textit{Types}
       & A,B              & \Coloneqq &
      \qbit \sor \unit \sor \cancel{\bool} \sor A \otimes B \sor A \li B
      \\[5pt]
      \textit{Terms}
       & \term,\termB     & \Coloneqq &
      x \sor \unitval \sor \term; \termB \sor \lambda x. \term
      \sor \term\,\termB \sor \term \otimes \termB
      \\ &       & \sor &
      \letx{x}{\term}{\termB} \sor \letx{x \otimes y}{\term}{\termB}
      \sor \cancel{\letx{x}{\F}{\term}}
      \sor \addnew{\F\langle \vinch, \voutch \rangle}
      \\ &       & \sor &
      \qifx{\term}{\termB_1}{\termB_2}
      \sor \cancel{\cifx{\term}{\termB_1}{\termB_2}}
      \\ &       & \sor &
      \cancel{\ket{1}} \sor U \sor \cancel{\meas} \sor \cancel{\btrue} \sor \cancel{\bnot}
      \sor
      \addnew{\inch}
      \sor \addnew{(\outch \assign \term); \termB}
      \\[5pt]
      \textit{Glob.~Def.}
       & \widehat\Defs    & \Coloneqq &
      \cdot \sor
      \widehat\Defs, \globaldef \F\addnew{\langle \vinch, \voutch \rangle} = \term
      \\[5pt]
      \textit{Non-lin.~Ctxt.}
       & \widehat{\Gamma} & \Coloneqq &
      \cdot \sor \widehat{\Gamma}, \F \colon \addnew{(A, n, m)} \quad \text{where}\quad |A| = n - m.
    \end{array}
  \end{align*}
  \caption{Syntax of \QifUnitary{}. The table shows the differences from \Qif{}
    (red: deleted, blue: added).}
  \label{fig:procedural:qif-unitary:syntax}
  \Description{}
\end{mathfig}
\Cref{fig:procedural:qif-unitary:syntax} summarizes the syntax of \QifUnitary{}.
It is essentially the \(\JQ\)-terms of \Qif{} with two differences: (1) it has \emph{input channels} \( \alpha \) and \emph{output channels} \( \zeta \), corresponding to \( \vec{a} \) and \( \vec{z} \) in the above dilation; and (2) each use of \( \F \) is annotated with channels as in \( \F\langle \vec{\alpha}, \vec{\zeta}\rangle \), which intuitively means that a dilation of \( \F \) is of the form \( \letx{y \otimes \vec{\zeta}}{U_\F(x \otimes \vec{\alpha})}{(\bdiscard(\vec{\zeta}); y)} \).
An input channel \( \alpha \) and an output channel \( \zeta \) can be used as terms of types \( \qbit \) and \( \qbit \multimap \unit \), respectively; we write \( \zeta \assign \term \) instead of \( \zeta\,\term \), and its occurrence is restricted to \( ({-}; N) \) for a technical reason.

\begin{mathfig}
  \begin{proofrules}[5pt]
    \infer{}{
      \auxJQ{\EmpEnv}{\inch \colon \qbit}{(\inch,\cdot)}
    }

    \infer{
      \auxJQ{\Delta}{\term \colon \qbit}{(\vinch, \voutchB)}
      \\
      \auxJQ{\Delta'}{\termB \colon A}{(\vinchB, \voutchC)}
    }{
      \auxJQ{\Delta, \Delta'}{(\outch \assign \term); \termB \colon A}
      {(\vinch\vinchB, \outch\voutchB\voutchC)}
    }

    \infer{
      \auxJQ{\!\Delta}{\term \colon \qbit}{(\vinch, \voutch)}
      \\
      \auxJQ{\!\Delta'}{\termB_1 \colon A}{(\vinchB, \voutchB)}
      \\
      \auxJQ{\!\Delta'}{\termB_2 \colon A}{(\vinchB, \voutchB)}
      \\
      \!\!\FirstOrderType{A}
    }{
      \auxJQ{\Delta, \Delta'}
      {(\qifx{\term}{\termB_1}{\termB_2}) \colon \qbit \otimes A}
      {(\vinch\vinchB, \voutch\voutchB)}
    }

    \infer{
      (\F \colon (A, n, m)) \in \widehat\Gamma
      \\\!\!\!
      |\vinch| = n
      \\\!\!\!
      |\voutch| = m
    }{
      \auxJQ{\EmpEnv}
      {\F\langle \vinch, \voutch \rangle \colon A}
      { (\vinch, \voutch) }
    }

    \infer{
      \vdash \Defs \colon \widehat\Gamma
      \\
      \auxJQ{\EmpEnv}{\term \colon A}{(\vinch, \voutch)}
    }{
      \!\vdash (\Defs, \globaldef \F\langle \vinch, \voutch \rangle = \term)
      \colon (\widehat\Gamma, \F \colon (A, |\vinch|, |\voutch|))
    }

    \dots
  \end{proofrules}
  \caption{Selected typing rules for \QifUnitary{}.}
  \label{fig:procedural:qif-unitary:typing-rules}
  \Description{}
\end{mathfig}
The judgment \( \auxJQ{\Delta}{\term \colon A}{(\vinch, \voutch)} \) means that \( \term \) is a term of type \( A \) that uses auxiliary input channels \( \vinch \) and output channels \( \voutch \).
Both $\vinch$ and $\voutch$ are regarded as sets, so permutation does not change the meaning.
Here $\widehat{\Gamma}$ is a non-linear context equipped with the number of auxiliary qubits,
defined in \cref{fig:procedural:qif-unitary:syntax},
where $|A| \in \Integer$ is the number of qubits involved in type \(A\) (\ie, the number of positive occurrences minus the number of negative occurrences of \(\qbit\) in \(A\); here \(|\bool|=1\) since \(\bool\) will be represented as a qubit after the transformation). It is defined by \( |\unit|=0 \), \( |\qbit|=|\bool|=1 \), \( |A \otimes B| = |A|+|B| \) and \( |A \li B| = |B|-|A| \).
Representative rules are displayed in \cref{fig:procedural:qif-unitary:typing-rules}.
Input channels have the qubit type,
and qubits can be discarded by assigning them to output channels.
For $\qif$, both branches must use the same channel names (cf.~the correspondence problem in \cref{sec:overview:dummy-free}).
Every other rule, such as $\term;\termB$ or $\term\,\termB$,
requires that the two premises use disjoint channel sets.
Global constants from $\Gamma$ may be invoked when they are supplied with
the right numbers of channels.

\begin{mathfig}
  \begin{proofrules}[5pt]
    \meas \SynDil
    \lambda x^\qbit. \letx{y \otimes z}{\gCX(x \otimes \inch)}{(\outch \assign z; y)}

    \ket1 \SynDil \inch

    \btrue \SynDil \inch

    \bnot \SynDil \gX

    \infer{
      \term \SynDil \term' \withaug (\vinch, \voutch)
      \\
      \termB_i \SynDil \termB'_i \withaug (\vinchB_i, \voutchB_i)
      \\
      |\vinchB_1| - |\voutchB_1|
      =
      |\vinchB_2| - |\voutchB_2|
      =
      |\voutchC| - |\vinchC|
    }{
      \cifx{\term}{\termB_1}{\termB_2} \SynDil
      \letx{x \otimes y}{\bigl(
        \qifx{\term'}
        {(\voutchB_2\voutchC\assign\vinchB_2\vinchC; \termB'_1)}
        {(\voutchB_1\voutchC\assign\vinchB_1\vinchC; \termB'_2)}
        \bigr) }{(\outchD \assign x); y}
    }

    \infer{
      \term \SynDil \term'
    }{
      \letx{x}{\F}{\term} \SynDil
      \letx{x}{\F\langle\vinch, \voutch\rangle}{\term'}\!\!
    }

    \hspace{-.4em}A \SynDil A[\qbit/\bool]\hspace{-.4em}

    \infer{
      \Gamma \SynDil \widehat\Gamma
      \hspace*{-1em}\\
      A \SynDil A'
      \hspace*{-1em}\\
      |A| = n\! -\! m
    }{
      \Gamma, \F \colon A \SynDil \widehat\Gamma, \F \colon (A', n, m)
    }
  \end{proofrules}
  \caption{Selected rules for syntactic dilation for terms, types and non-linear contexts.
  $\term \SynDil \term' \withaug (\vinch, \voutch)$ means $\term \SynDil \term'$
  and $(\vinch,\voutch)$ are the i/o channels that appear in $\term'$}
  \label{fig:procedural:syntactic-dilation}
  \Description{}
\end{mathfig}
We call the program transformation \emph{syntactic dilation} because it is inspired by Stinespring's dilation theorem.
Some selected rules for the syntactic dilation are presented in \cref{fig:procedural:syntactic-dilation}.
Measurements expand to a $\gCX$ gate that interacts with a freshly introduced auxiliary qubit,
capturing the minimal dilation of the measurement map.
The new qubit is introduced as $\inch$ and discarded as $\outch$.
Both $\ket1$ and $\btrue$ translate to the identity on an auxiliary qubit,
and hence become $\inch$.
Classical conditionals $\cif$ are rewritten as quantum branches $\qif$.
Because the branches of $\cif$ may allocate different numbers of ancillary qubits,
we extend each branch with auxiliary inputs $\vinchB_i$
that are immediately discarded as $\voutchB_i$.
The rule requires that the total difference
$|\vinchB_i| - |\voutchB_i|$ is balanced between the branches,
which follows automatically from the type-balance constraint $|A|$ on well-typed terms.

\begin{theorem}[Type preservation for Syntactic Dilation]
  Assume
  $\Gamma \SynDil \widehat\Gamma$,
  $\Delta \SynDil \Delta'$ and $A \SynDil A'$.
  Then,
  for any term $\termJC{\Delta}{\term \colon A}$ in \Qif{},
  there exists a \QifUnitary{} term
  $\term'$ and $(\vinch, \voutch)$ such that
  $\term \SynDil \term'$ and
  $\auxJQ[\widehat\Gamma]{\Delta'}{\term' \colon A'}{(\vinch, \voutch)}$.
  \thmend
\end{theorem}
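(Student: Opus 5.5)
We argue by induction on the derivation of $\termJC{\Delta}{\term \colon A}$. Since the statement only concerns $\vdash_\JC$-judgments, but the embedding rule lets a $\vdash_\JQ$-derivation sit inside a $\vdash_\JC$-derivation, we first strengthen the statement to cover both modes simultaneously. For $\termJG{\Delta}{\term\colon A}$ with either $\JG\in\{\JQ,\JC\}$, there is $\term'$ with $\term\SynDil\term'$ and $\auxJQ[\widehat\Gamma]{\Delta'}{\term'\colon A'}{(\vinch,\voutch)}$. For $\JQ$-derivations we additionally claim that $\term'$ is (up to renaming) $\term$ itself with $(\vinch,\voutch)=(\cdot,\cdot)$. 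This holds because $\JQ$-terms contain no $\meas$, $\ket1$, booleans, $\cif$ or global constants, so the dilation acts homomorphically and introduces no channels. The embedding case is then immediate.

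Alongside the induction we keep two invariants. First, $A\SynDil A[\qbit/\bool]$ commutes with $\otimes$ and $\li$, so $\Delta'$ is obtained pointwise. Second, channel names can always be chosen fresh: we $\alpha$-rename channels so that the two premises of every binary rule ($\term;\termB$, application, $\otimes$, both $\mathtt{let}$ forms) use disjoint channel sets. This makes the disjointness side condition of these rules hold trivially, and each such case follows directly from the induction hypotheses by concatenating channel sets.

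The base cases are direct checks against the rules of \cref{fig:procedural:qif-unitary:typing-rules}. The terms $\ket1$ and $\btrue$ become $\inch\colon\qbit$ with channels $(\inch,\cdot)$, and $\bnot$ becomes $\gX$. For $\meas$, the body of $\lambda x.\letx{y\otimes z}{\gCX(x\otimes\inch)}{(\outch\assign z;y)}$ has type $\qbit$ with channels $(\inch,\outch)$, matching $\qbit\li\qbit$. For $\letx{x}{\F}{\term}$, we use $\Gamma\SynDil\widehat\Gamma$ to obtain $\F\colon(A',n,m)$ and choose fresh channels of lengths $n$ and $m$.

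The main obstacle is the $\cif$ case. Both branches of the generated $\qif$ must carry the same channel set, and $\term'$ must be consumed correctly. We pick fresh padding channels so that both branches use exactly $\vinchB_1\vinchB_2\vinchC$ and $\voutchB_1\voutchB_2\voutchC$. In the then-branch the extra channels $\vinchB_2\vinchC$ are routed to $\voutchB_2\voutchC$, and in the else-branch $\vinchB_1\vinchC$ are routed to $\voutchB_1\voutchC$. The assignments typecheck only when the lengths agree, which is exactly the balance condition $|\vinchB_1|-|\voutchB_1|=|\vinchB_2|-|\voutchB_2|=|\voutchC|-|\vinchC|$. We therefore need a lemma: if $\auxJQ{\Delta'}{\term'\colon A'}{(\vinch,\voutch)}$ is a dilation of $\termJC{\Delta}{\term\colon A}$, then $|\vinch|-|\voutch|=|A|-|\Delta|$. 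This is provable in the same induction as a further conjunct, using $|A\li B|=|B|-|A|$ and $|\bool|=1$. Since both branches have context $\Delta'$ and type $A$, the lemma gives equal differences for them, and we choose $\voutchC,\vinchC$ to realise the common value. Any leftover length mismatch within a single branch is absorbed by allowing $\vinchC$ or $\voutchC$ to be empty. Finally, the control qubit is returned as $x$, and $(\outchD\assign x);y$ discards it into a fresh output channel, which yields type $A'$.
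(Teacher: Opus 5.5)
Your proposal is correct and follows essentially the same route as the paper. $\Hilb$-terms are left unchanged, so only the $\CPM$-rules need checking; the cases for $\meas$, $\ket1$, $\btrue$, $\bnot$ and global constants are direct; and the classical conditional is handled by choosing fresh padding channels, whose existence follows from the balancing identity $|\Delta|+|\vinch|=|A|+|\voutch|$. The only difference is cosmetic: the paper proves this identity as a standalone lemma for arbitrary \QifUnitary{} derivations, rather than as an extra conjunct of the dilation induction.
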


\begin{remark}\label{rem:syntactic-dilation-for-murao-style}
  We can define translations into \QifUnitary{} from the languages studied by
  \citet{Ying16-the-book}, \citet{Badescu2015}, and \citet{Barsse2026}
  by modifying the rule for $\qif$ as follows, for suitable choices of $U_i$:
  \begin{proofrule}
    \infer{
      \term_i \SynDil \term'_i \withaug (\vinch_i, \voutch_i)
      \\
      |\vinch_1| - |\voutch_1|
      =
      |\vinch_2| - |\voutch_2|
      =
      |\voutchB| - |\vinchB|
    }{
      \qifx{x}{\term_1}{\term_2} \SynDil
      \qifx{\term'}
      {(\voutch_2\voutchB \assign \gU_{2}(\vinch_2\vinchB); \term'_1)}
      {(\voutch_1\voutchB \assign \gU_{1}(\vinch_1\vinchB); \term'_2)}
      .
    }
  \end{proofrule}
  One can see that the auxiliary qubits $\vinch_1$ used in the then-branch
  are, in the else-branch, initialized to $U_1 \ket1$ and then immediately discarded.
  This state $U_1 \ket1$ is what \citet{Abbott2020} calls the \emph{initial state of the environment}.
  Roughly speaking,
  in \citet{Barsse2026}, $U_i$ is $I$ (the identity);
  in \citet{Badescu2015}, $U_i$ is a unitary that prepares a uniform superposition
  over the computational basis;
  and in \citet{Ying16-the-book},
  the degree of freedom in the generalised semantics corresponds
  to the freedom to choose $U_i$.
  See \appendixref{app:sec:vacuum,appx:sec:ying} for details.
  \thmend
\end{remark}

\subsection{Operational Semantics for QifUnitary}
\label{sec:op-sem:op-sem}

We next sketch the small-step operational semantics of \QifUnitary{}.
It is based on the operational semantics of the quantum lambda calculus~\cite{Selinger2008}, with two major differences.
The first stems from the presence of \(\qif\), and the second pertains to the output channel \(\outch\).
\begin{definition}
  A \emph{configuration} $(\vinch \mapsto \ket\psi, \term)$
  consists of a quantum state $\ket\psi \in (\CC^2)^{\otimes |\vinch|}$
  together with a term $\term$ equipped with a derivation
  $\auxJQ{\EmpEnv}{\term\colon A}{(\vinchB, \voutch)}$
  such that $\vinchB \subseteq \vinch$.
  A term is in \emph{normal form} if it has the shape
  $(\outch_1\assign\inch_1); \dots; (\outch_n\assign\inch_n); \val$,
  which we abbreviate as $(\voutch\assign\vinch); \val$.
  \thmend
\end{definition}

\begin{mathfig}
  \begin{align*}
    \begin{array}{llrl}
      \textit{Value}
       & \val, \valB                       & \Coloneqq &
      \inch
      \sor \unitval
      \sor \val \otimes \valB
      \sor \lambda x. \term
      \sor U
      \\[5pt]
      \textit{Ev.~Ctxt.}
       & \ctxt[][\emptyset]                & \Coloneqq &
      [\cdot]
      \\
       & \ctxt[][\qctrl]                   & \Coloneqq &
      \ctxt; \term
      \sor \term\,\ctxt
      \sor \ctxt\,\val
      \sor \ctxt \otimes \term
      \sor \val \otimes \ctxt
      \\ &       & \sor &
      \letx{x}{\ctxt}{\term} \sor \letx{x \otimes y}{\ctxt}{\term}
      \\ &       & \sor &
      \qifx{\ctxt}{\termB_1}{\termB_2}
      \sor (\outch \assign \ctxt); \term
      \sor (\outch \assign \inch); \ctxt
      \\
       & \ctxt[][\qctrl, \alpha \mapsto 1] & \Coloneqq &
      \qifx{\inch}{\ctxt[][\qctrl]}{\term}
      \\
       & \ctxt[][\qctrl, \alpha \mapsto 0] & \Coloneqq &
      \qifx{\inch}{((\voutch\assign\vinch);\val)}{\ctxt[][\qctrl]}
    \end{array}
  \end{align*}
  \caption{Evaluation contexts of \QifUnitary{}. Here, $c$ is an assignment of $0$ or $1$ to a finite set of input channels $\vinch$.}
  \label{fig:procedural:qif-unitary:ev-ctxt}
  \Description{}
\end{mathfig}
The evaluation context is defined in \cref{fig:procedural:qif-unitary:ev-ctxt}, and some selected reduction rules are shown in \cref{fig:procedural:qif-unitary:operational-semantics}.
We evaluate the then- and else-branches under \(\qif\), and an evaluation context \(\ctxt\) is annotated by the information \( c \) of the branch at the hole.
The key rules are about the evaluation of $\qif$.
For $\qif$, we first normalize both branches.
Performing a unitary \( U \) under an evaluation context \( \ctxt \) is interpreted as the \( \qctrl \)-controlled \( U \) operator \( C_\qctrl U \).
Since both the then- and else-branches are of first-order type, the evaluation of both branches results essentially in the form $\qifx{\gamma}{((\voutch \assign \vinch); \vinchB)}{((\voutch \assign \vinch'); \vinchB')}$, where \( \vinch'\vinchB' \) is a permutation of \( \vinch\vinchB \).
Consequently, an application of an appropriate controlled permutation makes the then- and else-branches identical, thereby allowing the \(\qif\) to be eliminated.
This strategy is expressed in the last two rules in \cref{fig:procedural:qif-unitary:operational-semantics}.
\begin{mathfig}
  \begin{proofrules}[5pt]
    \letx{x}{\val}{\term}
    \step
    \term[\val/x]

    \ctxt[\term\,((\outch \assign \alpha); \termB)]
    \step
    \ctxt[(\outch \assign \inch); (\term\,\termB)]

    \ctxt[\unitval; \term]
    \!\!\step\!
    \ctxt[\term]
    \hspace*{-.5em}

    \ctxt[(\lambda x. \term) \val]
    \!\!\step\!
    \ctxt[\term[\val/x]]
    \hspace*{-.5em}

    \infer{
      \globaldef \F\langle\vinch, \voutch\rangle = \term_\F
      \in \widehat\Defs
    }{
      \ctxt[\F\langle\vinchB, \voutchB\rangle]
      \step_{\widehat\Defs}
      \ctxt[\term_\F[\vinchB/\vinch, \voutchB/\voutch]]\!\!
    }

    (\vinchB \mapsto \ket\psi,\,
    \ctxt[U (\inch_1 \otimes \cdots \otimes \inch_n)][\qctrl])
    \step
    (\vinchB \mapsto \bigl(C_\qctrl U_{\inch_1,\dots,\inch_n}\bigr) \ket\psi,\,
    \ctxt[\inch_1 \otimes \cdots \otimes \inch_n][\qctrl])

    \left(\;
    \vinchC \mapsto \ket\psi,\,
    \ctxt \left[\begin{matrix*}[l]
        \qif\ {\inch}\ \ifthen
        \\ \quad
        (\outch_1 \assign \inchB_{\sigma(1)}); \dots; (\outch_n \assign \inchB_{\sigma(n)});
        \\ \quad
        \val[\inchB_{\sigma(n+1)}, \dots, \inchB_{\sigma(m)}]
        \\ \ifelse
        \\ \quad
        (\outch_1 \assign \inchB_1); \dots; (\outch_n \assign \inchB_n);
        \\ \quad
        \val[\inchB_{n+1}, \dots, \inchB_{m}]
      \end{matrix*}\right]
    \;\right)
    \ \ \step\ \
    \left(\;
    \begin{aligned}
         &
        \vinchC \mapsto
        \bigl(C_{\qctrl, \inch \mapsto 1}{\lceil\sigma\rceil}_{\vinchB}\bigr)
        \ket\psi,
        \\[-3pt] &
        \ctxt
        \left[\begin{matrix*}[l]
                  (\outch_1 \assign \inchB_1); \cdots; (\outch_n \assign \inchB_n);
                  \\
                  \inch \otimes \val[\inchB_{n+1} \otimes \cdots \otimes \inchB_{m}]
                \end{matrix*}\right]
      \end{aligned}
    \;\right)
  \end{proofrules}
  \caption{Selected reduction rules for \QifUnitary{}.
    We implicitly assume a well-typed definition \( \Defs \).
    The state $\vinch \mapsto \ket\phi$
    is omitted for rules that does not change it.
    Here
    $C_\qctrl U_{\vinch}$ denotes the unitary $U$ applied to qubits $\vinch$
    controlled by $\qctrl$.
    For a permutation $\sigma\colon \{1,\dots,m\} \to \{1,\dots,m\}$,
    $\lceil\sigma\rceil$ denotes the permutation unitary, \ie,
    it maps $\ket{i_1 \cdots i_m}$
    to $\ket{i_{\sigma(1)} \cdots i_{\sigma(m)}}$.
    In the last rule, we identify $(\outch \assign \inch); (\outchB \assign \inchB); \term$
    with $(\outchB \assign \inchB); (\outch \assign \inch); \term$.
  }
  \label{fig:procedural:qif-unitary:operational-semantics}
  \Description{}
\end{mathfig}

The other rules are analogous to those of the quantum lambda calculus~\cite{Selinger2008}, except for the following points:
(1) Global calls substitute $\F$ with its defining term after $\alpha$-renaming channels; and (2) there are many commuting conversions that pull assignments $(\outch \assign \inch)$ out of evaluation contexts.
The complete list of rules is found in \appendixref{app:sec:definitions}.
\begin{theorem}[Type safety, progress and termination]
  \label{thm:op-sem-theorem}
  The operational semantics of \QifUnitary{}
  preserves types and always normalises in finitely many steps.
  \thmend
\end{theorem}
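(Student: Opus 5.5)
I plan to split the statement into three parts: subject reduction, progress, and strong normalisation. Each is proved by a standard technique for linear $\lambda$-calculi, adapted to channels and to reduction under $\qif$.

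\textbf{Subject reduction.} I will proceed by induction on the evaluation context $\ctxt$ and case analysis on the redex, showing that if $\auxJQ{\EmpEnv}{\term\colon A}{(\vinchB,\voutch)}$ and $(\vinch\mapsto\ket\psi,\term)\step(\vinch\mapsto\ket{\psi'},\term')$, then $\term'$ has type $A$ with the same channel sets.

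The preliminary lemma is a linear substitution lemma for values. Values carry channels too: an input channel $\inch$ is a value, so the lemma must account for the channel sets of the substituted value. Since all premises except $\qif$ require disjoint channel sets, substitution simply unions them. In the $\qif$ case both branches share the same channels and contain the variable linearly, so substituting the same value into each branch keeps the branch channel sets equal. For global calls, channel $\alpha$-renaming preserves typing because the declared arities $(A,n,m)$ match $|\vinch|,|\voutch|$.

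The commuting conversions that pull $(\outch\assign\inch)$ outward only move channels between premises. The channel multiset is preserved, so the judgement is preserved. For the final $\qif$-elimination rule, first-order typing of the branches together with equal channel sets lets me read off the required $\sigma$, and the result is typed as $\qbit\otimes A$.

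\textbf{Progress.} I will show by induction on typing that every closed typed term is either in normal form or decomposes as $\ctxt[r]$ with $r$ a redex. The key case is $\qif$ with normalised branches. By first-order typing and linearity, the value $\val$ in each branch is built only from input channels, $\unitval$ and $\otimes$; unitaries and $\lambda$'s are excluded because $A$ has no $\li$. Since both branches have the same input and output channel sets, one branch's value channels together with its discarded channels form a permutation of the other's. This is exactly the shape the last rule requires, after using the identification that commutes assignments.

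\textbf{Termination.} This is the main obstacle, because reduction happens inside both branches of $\qif$ and global definitions are unfolded. Recursion is absent, since each $\F$ refers only to earlier definitions. I would therefore first inline all globals, ordered by definition index, into a finite term, which makes global-call steps trivially bounded. I will then define a measure on the inlined term: the lexicographic pair (multiset of sizes of $\beta$/let redexes, number of $\qif$ and unitary nodes, total depth of assignment nodes). Linear $\beta$ strictly decreases term size, because each variable is used exactly once. Unitary application steps and $\qif$-elimination each remove a node. Commuting conversions decrease the depth measure of the assignments. Since the calculus is linear and substitution never duplicates, plain size plus the assignment-depth potential should already give a well-founded strictly decreasing measure. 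If size fails for some commuting rule, I would fall back to a lexicographic combination with the count of assignments not yet at top level.
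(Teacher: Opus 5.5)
Your subject-reduction and progress arguments follow the paper's route: a substitution lemma plus induction on evaluation contexts, and for progress, induction on the term with a value-shape lemma for the normalised $\qif$ case. Both are fine.

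The termination part has a genuine gap, because it rests on a false premise: substitution in \QifUnitary{} \emph{does} duplicate. The $\qif$ rule types both branches in the same context $\Delta'$. So a variable bound outside a $\qif$ occurs once in each branch, and a $\beta$- or let-step copies the substituted value into both. You use exactly this fact in your own subject-reduction paragraph, and the paper's substitution lemma flags it as the only interesting case.

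As a result, none of the components of your measure is guaranteed to decrease. Take a closed value with a large body, e.g.\ $V=\lambda z.\,U\,(U\,(U\,z))$, and the closed term $(\lambda f.\,(\lambda y.\,\qifx{\inch}{f\,y}{f\,y})\,\inchB)\,V$.
\begin{itemize}
\item One top-level $\beta$-step gives $(\lambda y.\,\qifx{\inch}{V\,y}{V\,y})\,\inchB$. This is itself a $\beta$-redex, strictly larger than the redex just contracted.
\item One step later you get a $\qif$ containing two redexes $V\,\inchB$, and the number of unitary nodes has doubled.
\end{itemize}
So total size, the multiset of redex sizes, and the count of $\qif$/unitary nodes can all increase. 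Your fallback (counting assignments not yet at top level) only addresses the commuting conversions, not this.

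What is missing is a type-directed argument that does not care how many copies of a value exist. The paper replicates Tait-style strong normalisation:
\begin{itemize}
\item It defines reducibility sets $\sn{A}$ via test contexts: apply to a reducible value at $A\li B$, split at $A\otimes B$, and nothing at base types.
\item It shows reducibility is closed under reduction and under prefixing with assignments $(\outch\assign\inch)$.
\item It then proves, by induction on the derivation and on the length of $\Defs$, that every term is reducible under substitution of reducible values.
\end{itemize}
The $\qif$ case only needs plain termination, because its branches are first-order. Duplication is harmless there, since reducibility of $\term[\val/x]$ depends only on $\val$ being reducible, not on how many copies are made.

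A degree-based multiset measure (Turing/Gandy style) would also be delicate under this fixed call-by-value strategy. The copied $\lambda$-value may carry redexes of arbitrarily high type in its unevaluated body. Your inlining of globals is a legitimate substitute for the paper's induction on $\Defs$, but it does not remove the need for such an argument.
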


The evaluation of a \QifUnitary{} program is defined by first reducing the program to its normal form and subsequently measuring the qubits on the output channels.

\subsection{Relation to Denotational Semantics}
\label{sec:op-sem:properties}

The categorical semantics of \QifUnitary{} is defined similarly.
A judgement $\auxJQ{\Delta}{\term\colon A}{(\vinch, \voutch)}$
is interpreted as a map
$\sem{\term}_\eval \colon
  \sem{\Delta} \otimes 2^{\otimes |\vinch|} \longrightarrow \sem{A} \otimes 2^{\otimes |\voutch|}$
in $\Hilb$
for each $\eval \in \prod_{(\F\colon B) \in \Gamma} \sem{B}_\Hilb$.
Let $\csem{\term}$ denote the semantics in $\CPM$ in which qubit initialisation
and discarding are supplied to the i/o channels, \ie,
$\csem{\term} \defeq (\ident_A \otimes \discardDiag_{\voutch})
\circ \EmbeddingFunctor\sem{\term}
\circ (\ident_\Gamma \otimes \ketbra11_{\vinch})$.
All the details can be found in \appendixref{app:sec:definitions,app:sec:theorems}.

We state the soundness theorem of the translation.

\begin{theorem}[Soundness of the syntactic dilation]
  \label{thm:soundness-syntactic-dilation}
  For each $\CPM$-term $\term$ whose
  types of free variables and the term itself are first order,
  if $\term \SynDil \term'$,
  $
    \sem{\term} = r_{A} \circ \csem{\term'} \circ s_{\Gamma} 
  $.
  \thmend
\end{theorem}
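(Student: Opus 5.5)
The plan is to prove a stronger, compositional statement by induction on the derivation of $\termJC{\Delta}{\term\colon A}$, since Theorem~\ref{thm:soundness-syntactic-dilation} itself is not directly inductive: subterms may have higher-order types or open contexts, and the retraction $r_A$ only makes sense at first-order types. For every type I will fix a relation between $\CPM$-morphisms and their dilated counterparts. At first order, $\bool$ is represented by $\qbit$ via $s_{(1,1)}$ and $r_{(1,1)}=m$. At higher types, I will relate the morphisms through the compact closed structure, which $\EmbeddingFunctor$ preserves on the nose. The invariant to establish is that for every derivation and every $\term\SynDil\term'$,
\[
  \sem{\term} \;=\; r_{A}\circ \csem{\term'}\circ s_{\Delta},
\]
where $r_A$ and $s_\Delta$ are extended to all types by decomposing $\bool$ occurrences with the section/retraction and dualising at negative positions. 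The needed coherence is $r\circ s=\ident$. I also need the dual fact that $s\circ r$ is the ``dephasing'' map $s\circ m$, which is idempotent and absorbed by anything that later decoheres. The stated theorem then follows by specialising to first-order $\Delta$ and $A$ and using $\Gamma$ to supply the valuations of global constants, whose dilations are handled by the same induction on $\Defs$.

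I will verify the base cases by direct calculation in $\CPM$. For $\meas$: $\discardDiag$ on the ancilla after $\gCX(x\otimes\ket{1})$ (or the appropriate initial state) followed by $m$ gives $m$, since measuring the copy and tracing the ancilla equals the completely dephased measurement. For $\ket1$ and $\btrue$: the ancilla initialised to $\ketbra11$ is exactly $\sem{\ket1}$, and after $r_{(1,1)}$ it is $\mathrm{inr}$. For $\bnot$: $m\circ X\cdot X = \mathrm{swap}\circ m$. The constructs shared by both fragments (abstraction, application, tensor, lets) go through by functoriality of $\EmbeddingFunctor$ and naturality of the partial trace $\discardDiag$ over disjoint channel sets. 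Here the disjointness condition in the typing rules of \QifUnitary{} lets me factor $\csem{\term'\,\termB'}$ as a composite of $\csem{\term'}$ and $\csem{\termB'}$. For $\letx{x}{\F}{\term}$, I unfold the valuation and apply the induction hypothesis on the global definition.

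The two conditionals are the main obstacle. For $\qif$ inside $\JQ$, the dilation introduces no channels, so soundness is just $\EmbeddingFunctor$ of the $\Hilb$ semantics. The one subtlety there is that global names used linearly inside $\qif$ must carry identical channel annotations in both branches, and I will check that the shared channel names make the two Kraus families indexed by the same ancilla basis. For $\cif$ I must show that the controlled construction, after tracing the output channels $\voutchB_i\voutchC$ and measuring the control (the assigned $\outchD$), equals the $\CPM$ biproduct semantics. The key lemma is that once the control qubit is discarded (i.e.\ decohered by $\outchD$), the coherence between branches vanishes:
\[
  \discardDiag\circ\EmbeddingFunctor\bigl(\ketbra00\otimes U_0+\ketbra11\otimes U_1\bigr)
  \;=\;
  \EmbeddingFunctor(U_0)\circ\pi_0 + \EmbeddingFunctor(U_1)\circ\pi_1 .
\]
This reduces the case to the induction hypotheses for $\termB_1,\termB_2$, plus the observation that the padding channels $\voutchB_2\voutchC\assign\vinchB_2\vinchC$ contribute only $\discardDiag\circ\ketbra11=1$. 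Using the section $s$ on the boolean guard together with the dephasing idempotence finishes the argument. The bookkeeping of channel permutations (sets of channels, the sign/count balance $|\vinchB_i|-|\voutchB_i|$) is routine but tedious, and I expect most of the effort to go into setting up the higher-type invariant correctly.
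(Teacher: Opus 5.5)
\textbf{There is a genuine gap: the inductive invariant.} You propose to carry $\sem{\term} = r_A\circ\csem{\term'}\circ s_\Delta$ through the induction, with $r$ and $s$ extended to all types by dualising. This hypothesis is too weak to get through the constructs you call routine (application, $\mathtt{let}$, $\otimes$-elimination). From $g = r_A\circ g'\circ s$ and $f = r_B\circ f'\circ s_A$ you only obtain
\[
  f\circ g = r_B\circ f'\circ(s_A\circ r_A)\circ g'\circ s ,
\]
and $s_A\circ r_A$ (the dephasing $s\circ m$ at $\bool$) is not the identity. Functoriality of $\EmbeddingFunctor$ and naturality of $\discardDiag$ let you write $\csem{\term'\,\termB'}$ as a composite of $\csem{\term'}$ and $\csem{\termB'}$. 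They do not cancel that middle $s\circ r$.

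Your remedy, that dephasing is ``absorbed by anything that later decoheres'', is exactly what has to be proved, and an $r/s$-conjugation invariant cannot express it. At first order, the natural repair is the strictly stronger statement $\csem{\termB'}\circ s_\Delta = s_A\circ\sem{\termB}$: dilations must output exactly the diagonal encoding, not merely something whose diagonal is right. At higher types even that strengthening fails. The dilation of $\lambda x.x\colon\bool\li\bool$ is the identity on $\qbit$, and this is not fixed by $s_A\circ r_A$, i.e.\ by pre- and post-dephasing. What actually holds is relational: dilated functions map diagonal inputs to diagonal outputs, and dilated contexts only ever feed them such inputs. So you need a logical relation between $\sem{\term}$ and $\csem{\term'}$ with two closure properties:
\begin{itemize}
\item closure under biorthogonality, so that $\otimes$ with entangled states, $\mathtt{let}\ x\otimes y$, and $\li$ are handled;
\item closure under nonnegative linear combinations, for $\cif$.
\end{itemize}

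\textbf{This is exactly the paper's route.} It takes the category $\mathcal{A}$ of pairs $(f,f')$ of $\CPM$-maps, with $f\colon\vec n\to\vec m$ and $f'\colon(\sum\vec n)\to(\sum\vec m)$. It then forms the Hyland--Schalk tight orthogonality category $T(\mathcal{A})$ with focus $\{(r,r)\mid r\ge 0\}$. This category is $*$-autonomous, so the linear $\lambda$-calculus part is interpreted for free. The paper then:
\begin{itemize}
\item sets $c_\bool=\{(\rho,s\rho)\}$ and $c_\qbit=\{(\rho,\rho)\}$;
\item proves $c_A=\{(x,s_A x)\}$ for first-order $A$, and that every $c_A$ is closed under nonnegative combinations;
\item defines the second component with $\meas\mapsto s\circ m$ and a guard-dephased $\cif$, shown to be a morphism via that module property;
\item shows by a plain induction that the second component equals $\csem{\term'}$.
\end{itemize}
The theorem follows from the first-order description of $c_A$ together with $r_A\circ s_A=\ident$.

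Your base cases, and your lemma that discarding the control kills inter-branch coherence, match the paper's $\meas$ and $\cif$ computations. What is missing is the relational invariant. A minor point: terms of the quantum sublanguage contain no global constants and are left unchanged by $\SynDil$, so there are no channel annotations to align inside $\qif$.
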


\begin{mathfig}
  \begin{align*}
\begin{tikzpicture}[yscale=0.55,xscale=0.55]
	\begin{pgfonlayer}{nodelayer}
		\node [style=none] (0) at (-1.5, 2) {};
		\node [style=none] (1) at (1.5, 2) {};
		\node [style=none] (2) at (-1.5, 1.25) {};
		\node [style=none] (3) at (-0.5, 1.25) {};
		\node [style=none] (4) at (-0.5, -0.25) {};
		\node [style=none] (5) at (-1.5, -0.25) {};
		\node [style=none] (6) at (-1.5, -1) {};
		\node [style=none] (7) at (1.5, -1) {};
		\node [style=none] (8) at (1.5, -0.25) {};
		\node [style=none] (9) at (0.5, -0.25) {};
		\node [style=none] (10) at (0.5, 1.25) {};
		\node [style=none] (11) at (1.5, 1.25) {};
		\node [style=none] (12) at (-2.25, 2.75) {};
		\node [style=none] (13) at (2.25, 2.75) {};
		\node [style=none] (14) at (2.25, -1.75) {};
		\node [style=none] (15) at (-2.25, -1.75) {};
		\node [style=none] (16) at (-0.75, 3.25) {};
		\node [style=none] (17) at (-0.75, 2) {};
		\node [style=none] (18) at (-0.75, -2.25) {};
		\node [style=none] (19) at (-0.75, -1) {};
		\node [style=none] (20) at (0.75, -2.25) {};
		\node [style=none] (21) at (0.75, -1) {};
		\node [style=none] (22) at (0.75, 2) {};
		\node [style=none] (23) at (0.75, 3.25) {};
		\node [style=none] (24) at (0, 0.5) {$\mathcal{S}$};
		\node [style=none] (25) at (-1, 1.25) {};
		\node [style=none] (26) at (-3.25, 1.5) {};
		\node [style=none] (27) at (-1, -0.25) {};
		\node [style=none] (28) at (-3.25, -0.5) {};
		\node [style=none] (31) at (1, -0.25) {};
		\node [style=none] (32) at (1, 1.25) {};
		\node [style=none] (33) at (-3.75, 1) {};
		\node [style=none] (34) at (-3.75, 0) {};
		\node [style=none] (35) at (-2.75, 1) {};
		\node [style=none] (36) at (-2.75, 0) {};
		\node [style=none] (37) at (3.25, 1.5) {};
		\node [style=none] (38) at (3.25, -0.5) {};
		\node [style=none] (39) at (2.75, 1) {};
		\node [style=none] (40) at (2.75, 0) {};
		\node [style=none] (41) at (3.75, 1) {};
		\node [style=none] (42) at (3.75, 0) {};
		\node [style=none] (43) at (-3.25, 0.5) {$f$};
		\node [style=none] (44) at (3.25, 0.5) {$g$};
		\node [style=none] (45) at (1.6, 2.4) {$\Hilb$};
		\node [style=none] (46) at (-0.75, -2.5) {$\qbit$};
		\node [style=none] (47) at (0.75, -2.5) {$A$};
		\node [style=none] (48) at (-0.75, 3.5) {$\qbit$};
		\node [style=none] (49) at (0.75, 3.5) {$A$};
		\node [style=none] (50) at (-2.75, -1) {$A$};
		\node [style=none] (51) at (2.75, -1) {$A$};
		\node [style=none] (52) at (-2.75, 2) {$A$};
		\node [style=none] (53) at (2.75, 2) {$A$};
		\node [style=none] (54) at (2, 3) {$\CPM$};
	\end{pgfonlayer}
	\begin{pgfonlayer}{edgelayer}
		\draw [style=hilb] (12.center) to (13.center) to (14.center) to (15.center) to cycle;
		\draw [style=none] (0.center) to (1.center) to (11.center) to (10.center) to (9.center) to (8.center) to (7.center) to (6.center) to (5.center) to (4.center) to (3.center) to (2.center) to cycle;
		\draw [style=hilb] (16.center) to (17.center);
		\draw [style=hilb] (19.center) to (18.center);
		\draw [style=hilb] (21.center) to (20.center);
		\draw [style=hilb] (23.center) to (22.center);
		\draw [out=90, in=-90, looseness=1.5] (26.center) to (25.center);
		\draw [out=-90, in=90, looseness=1.5] (28.center) to (27.center);
		\draw [looseness=1.5, out=90, in=-90] (31.center) to (38.center);
		\draw [out=-90, in=90, looseness=1.5] (32.center) to (37.center);
		\draw [style=unitary] (26.center) to (28.center);
		\draw [style=unitary] (33.center) to (35.center) to (36.center) to (34.center) to cycle;
		\draw [style=unitary] (37.center) to (38.center);
		\draw [style=unitary] (39.center) to (41.center) to (42.center) to (40.center) to cycle;
		\draw [style=unitary] (39.center) to (41.center) to (42.center) to (40.center);
		\draw [style=unitary] (37.center) to (37.center);
		\draw [style=unitary] (37.center) to (37.center);
		\draw [style=unitary] (33.center) to (35.center) to (36.center) to (34.center);
		\draw (18.center) to (18.center);
		\draw (20.center) to (20.center);
	\end{pgfonlayer}
\end{tikzpicture}
     \quad\ \
\begin{tikzpicture}[yscale=0.55,xscale=0.55]
	\begin{pgfonlayer}{nodelayer}
		\node [style=none] (0) at (-1.5, 2) {};
		\node [style=none] (1) at (1.5, 2) {};
		\node [style=none] (2) at (-1.5, 1.25) {};
		\node [style=none] (3) at (-0.5, 1.25) {};
		\node [style=none] (4) at (-0.5, -0.25) {};
		\node [style=none] (5) at (-1.5, -0.25) {};
		\node [style=none] (6) at (-1.5, -1) {};
		\node [style=none] (7) at (1.5, -1) {};
		\node [style=none] (8) at (1.5, -0.25) {};
		\node [style=none] (9) at (0.5, -0.25) {};
		\node [style=none] (10) at (0.5, 1.25) {};
		\node [style=none] (11) at (1.5, 1.25) {};
		\node [style=none] (12) at (-2.25, 2.75) {};
		\node [style=none] (13) at (2.25, 2.75) {};
		\node [style=none] (14) at (2.25, -1.75) {};
		\node [style=none] (15) at (-2.25, -1.75) {};
		\node [style=none] (16) at (-0.75, 3.25) {};
		\node [style=none] (17) at (-0.75, 2) {};
		\node [style=none] (18) at (-0.75, -2.25) {};
		\node [style=none] (19) at (-0.75, -1) {};
		\node [style=none] (20) at (0.75, -2.25) {};
		\node [style=none] (21) at (0.75, -1) {};
		\node [style=none] (22) at (0.75, 2) {};
		\node [style=none] (23) at (0.75, 3.25) {};
		\node [style=none] (24) at (0, 0.5) {$\mathcal{S}$};
		\node [style=none] (25) at (-1, 1.25) {};
		\node [style=none] (26) at (-3.25, 1.5) {};
		\node [style=none] (27) at (-1, -0.25) {};
		\node [style=none] (28) at (-3.25, -0.5) {};
		\node [style=none] (31) at (1, -0.25) {};
		\node [style=none] (32) at (1, 1.25) {};
		\node [style=none] (33) at (-4.25, 1) {};
		\node [style=none] (34) at (-4.25, 0) {};
		\node [style=none] (35) at (-2.75, 1) {};
		\node [style=none] (36) at (-2.75, 0) {};
		\node [style=none] (37) at (3.25, 1.5) {};
		\node [style=none] (38) at (3.25, -0.5) {};
		\node [style=none] (39) at (2.75, 1) {};
		\node [style=none] (40) at (2.75, 0) {};
		\node [style=none] (41) at (4.25, 1) {};
		\node [style=none] (42) at (4.25, 0) {};
		\node [style=none] (43) at (-3.5, 0.5) {$U_f$};
		\node [style=none] (44) at (3.5, 0.5) {$U_g$};
		\node [style=none] (45) at (1.6, 2.4) {$\Hilb$};
		\node [style=none] (46) at (-0.75, -2.5) {$\qbit$};
		\node [style=none] (47) at (0.75, -2.5) {$A$};
		\node [style=none] (48) at (-0.75, 3.5) {$\qbit$};
		\node [style=none] (49) at (0.75, 3.5) {$A$};
		\node [style=none] (50) at (-2.75, -1) {$A$};
		\node [style=none] (51) at (2.75, -1) {$A$};
		\node [style=none] (52) at (-2.75, 2) {$A$};
		\node [style=none] (53) at (2.75, 2) {$A$};
		\node [style=discard] (54) at (-3.75, 1.5) {};
		\node [style=state] (55) at (-3.75, -0.5) {};
		\node (100) at (-3.75, -0.5) {$1$};
		\node [style=discard] (56) at (3.75, 1.5) {};
		\node [style=state] (57) at (3.75, -0.5) {};
		\node (101) at (3.75, -0.5) {$1$};
		\node (102) at (2, 3) {$\CPM$};
	\end{pgfonlayer}
	\begin{pgfonlayer}{edgelayer}
		\draw [style=hilb] (12.center) to (13.center) to (14.center) to (15.center) to cycle;
		\draw [style=none] (0.center) to (1.center) to (11.center) to (10.center) to (9.center) to (8.center) to (7.center) to (6.center) to (5.center) to (4.center) to (3.center) to (2.center) to cycle;
		\draw (16.center) to (17.center);
		\draw (19.center) to (18.center);
		\draw (21.center) to (20.center);
		\draw (23.center) to (22.center);
		\draw [out=90, in=-90, looseness=1.5] (26.center) to (25.center);
		\draw [out=-90, in=90, looseness=1.5] (28.center) to (27.center);
		\draw [looseness=1.5, out=90, in=-90] (31.center) to (38.center);
		\draw [out=-90, in=90, looseness=1.5] (32.center) to (37.center);
		\draw (26.center) to (28.center);
		\draw (37.center) to (38.center);
		\draw (54.center) to (55.center);
		\draw (57.center) to (56.center);
		\draw [style=unitary] (33.center) to (35.center) to (36.center) to (34.center) to cycle;
		\draw [style=unitary] (39.center) to (41.center) to (42.center) to (40.center) to cycle;
	\end{pgfonlayer}
\end{tikzpicture}
     \quad\ \
\begin{tikzpicture}[yscale=0.55,xscale=0.55]
	\begin{pgfonlayer}{nodelayer}
		\node [style=none] (0) at (-1.5, 2) {};
		\node [style=none] (1) at (1.5, 2) {};
		\node [style=none] (2) at (-1.5, 1.25) {};
		\node [style=none] (3) at (-0.5, 1.25) {};
		\node [style=none] (4) at (-0.5, -0.25) {};
		\node [style=none] (5) at (-1.5, -0.25) {};
		\node [style=none] (6) at (-1.5, -1) {};
		\node [style=none] (7) at (1.5, -1) {};
		\node [style=none] (8) at (1.5, -0.25) {};
		\node [style=none] (9) at (0.5, -0.25) {};
		\node [style=none] (10) at (0.5, 1.25) {};
		\node [style=none] (11) at (1.5, 1.25) {};
		\node [style=none] (12) at (-2.25, 2.75) {};
		\node [style=none] (13) at (2.25, 2.75) {};
		\node [style=none] (14) at (2.25, -1.75) {};
		\node [style=none] (15) at (-2.25, -1.75) {};
		\node [style=none] (16) at (-0.75, 3.25) {};
		\node [style=none] (17) at (-0.75, 2) {};
		\node [style=none] (18) at (-0.75, -2.25) {};
		\node [style=none] (19) at (-0.75, -1) {};
		\node [style=none] (20) at (0.75, -2.25) {};
		\node [style=none] (21) at (0.75, -1) {};
		\node [style=none] (22) at (0.75, 2) {};
		\node [style=none] (23) at (0.75, 3.25) {};
		\node [style=none] (24) at (0, 0.5) {$\mathcal{S}$};
		\node [style=none] (25) at (-1, 1.25) {};
		\node [style=none] (27) at (-1, -0.25) {};
		\node [style=none] (31) at (1, -0.25) {};
		\node [style=none] (32) at (1, 1.25) {};
		\node [style=none] (33) at (-2, 1) {};
		\node [style=none] (34) at (-2, 0) {};
		\node [style=none] (35) at (-1, 1) {};
		\node [style=none] (36) at (-1, 0) {};
		\node [style=none] (39) at (1, 1) {};
		\node [style=none] (40) at (1, 0) {};
		\node [style=none] (41) at (2, 1) {};
		\node [style=none] (42) at (2, 0) {};
		\node [style=none] (43) at (-1.5, 0.5) {$U_f$};
		\node [style=none] (44) at (1.5, 0.5) {$U_g$};
		\node [style=none] (45) at (1.6, 2.4) {$\Hilb$};
		\node [style=none] (46) at (-0.75, -2.5) {$\qbit$};
		\node [style=none] (47) at (0.75, -2.5) {$A$};
		\node [style=none] (48) at (-0.75, 3.5) {$\qbit$};
		\node [style=none] (49) at (0.75, 3.5) {$A$};
		\node [style=discard] (54) at (-3, 2.25) {};
		\node [style=state] (55) at (-3, -1.25) {};
		\node (100) at (-3, -1.25) {$1$};
		\node [style=discard] (56) at (3, 2.25) {};
		\node [style=state] (57) at (3, -1.25) {};
		\node (101) at (3, -1.25) {$1$};
		\node (102) at (2, 3) {$\CPM$};
		\node [style=none] (103) at (-1.25, 1) {};
		\node [style=none] (104) at (-1.25, 0) {};
		\node [style=none] (105) at (-1.75, 1) {};
		\node [style=none] (106) at (1.25, 1) {};
		\node [style=none] (107) at (1.25, 0) {};
		\node [style=none] (108) at (1.75, 1) {};
		\node [style=none] (109) at (1.75, 0) {};
		\node [style=none] (110) at (-1.75, 0) {};
	\end{pgfonlayer}
	\begin{pgfonlayer}{edgelayer}
		\draw [style=hilb] (12.center) to (13.center) to (14.center) to (15.center) to cycle;
		\draw [style=none] (0.center) to (1.center) to (11.center) to (10.center) to (9.center) to (8.center) to (7.center) to (6.center) to (5.center) to (4.center) to (3.center) to (2.center) to cycle;
		\draw (16.center) to (17.center);
		\draw (19.center) to (18.center);
		\draw (21.center) to (20.center);
		\draw (23.center) to (22.center);
		\draw [style=unitary] (33.center) to (35.center) to (36.center) to (34.center) to cycle;
		\draw [style=unitary] (39.center) to (41.center) to (42.center) to (40.center) to cycle;
		\draw (103.center) to (25.center);
		\draw (104.center) to (27.center);
		\draw [out=90, in=-90, looseness=1.5] (105.center) to (54);
		\draw (106.center) to (32.center);
		\draw (107.center) to (31.center);
		\draw [out=90, in=-90, looseness=1.5] (108.center) to (56);
		\draw [looseness=1.5, out=-90, in=90] (109.center) to (57);
		\draw [out=-90, in=90, looseness=1.5] (110.center) to (55);
	\end{pgfonlayer}
\end{tikzpicture}
   \end{align*}
  \caption{Visualisation of syntactic dilation in a string diagram}
  \label{fig:string-diag-of-syn-dilation}
  \Description{}
\end{mathfig}
\Cref{fig:string-diag-of-syn-dilation} gives a string-diagrammatic presentation of the syntactic dilation of $\SWITCH$, instantiated with channels $f$ and $g$.
The left-hand diagram depicts the categorical semantics of the original term.
The blue box denotes the $\Hilb$-derivation embedded in the surrounding $\CPM$-derivation.
Inside this box, the diagram $\mathcal S$ represents the $\SWITCH$ term, while the channels $f$ and $g$ are supplied from outside the box.
The middle diagram replaces $f$ and $g$ by their dilations, separating out their unitary components $U_f$ and $U_g$.
In the right-hand diagram, these unitaries are moved into the blue box, leaving only the qubit initialization and discarding outside.
This transformation is valid because the embedding functor $\EmbeddingFunctor$ preserves the compact closed structure: bending a string inside the blue box or outside it does not change the semantics.
The resulting diagram inside the blue box represents the translated term in \QifUnitary{}, and the remaining interfaces to the $\ket1$ state and the discarding map correspond to the i/o-channels.

We also state that the translations from the languages in the literature
\cite{Ying16-the-book,Badescu2015,Barsse2026} are sound.

\begin{theorem}[Soundness of the translation from other languages]
  \label{thm:soundness-of-translation-murao-style}
  For each language for which we defined a translation to \QifUnitary{} in
  \cref{rem:syntactic-dilation-for-murao-style},
  the translation preserves the semantics.
  \thmend
\end{theorem}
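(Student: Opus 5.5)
Fix one of the languages of \citet{Ying16-the-book}, \citet{Badescu2015} or \citet{Barsse2026}, together with its Kraus-decomposition semantics $\sem{\term} = \{H_i\}_{i\in I}$. The claim is then: whenever $\term \SynDil \term'$ via the modified $\qif$ rule of \cref{rem:syntactic-dilation-for-murao-style}, the completely positive map determined by $\{H_i\}_{i\in I}$ equals $\csem{\term'}$. We prove this by induction on the derivation of $\term \SynDil \term'$, and we strengthen the invariant so that it says more than equality of channels. The invariant is that the $\Hilb$-semantics $\sem{\term'}$, with the input channels fed $\ket{1\ldots1}$, is an isometry $V$ with
\[
V\ket{\psi} = \textstyle\sum_{i\in I} H_i\ket{\psi}\otimes\ket{e_i},
\]
for some orthonormal family $\{\ket{e_i}\}_{i\in I}$ in the output-channel space $2^{\otimes|\voutch|}$. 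The strengthening is needed because the $\qif$ case depends on the Kraus operators themselves, not merely on the channel they define. Once the invariant holds, the theorem follows at once: discarding the output channels in $\csem{\term'}$ yields $\sum_i H_i\varrho H_i^\dagger$.

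The cases without $\qif$ are routine. Measurement translates to $\gCX$ with an ancilla, which produces the Kraus operators $\ketbra00$ and $\ketbra11$ tagged by orthonormal ancilla states. Qubit initialisation is pure and uses no output channel. For sequential composition, application and tensor, the channel sets of the premises are disjoint, so the orthonormal tags multiply and we obtain the product index set $I\times J$; this matches how each of the three semantics composes Kraus families. The classical $\cif$ is handled the way measurement-free deferred measurement handles it, and it reduces to the $\qif$ case.

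The main step is the $\qif$ case. Suppose the branches satisfy the invariant with $V_1\ket{\psi} = \sum_i H_i\ket\psi\ket{e_i}$ on channels $\vinch_1,\voutch_1$, and $V_2\ket{\psi} = \sum_j K_j\ket\psi\ket{f_j}$ on $\vinch_2,\voutch_2$. In the translated then-branch, the channels $\vinch_2\vinchB$ are fed $\ket{1\ldots1}$, passed through $\gU_2$, and emitted on $\voutch_2\voutchB$. This prepares a fixed environment state $\ket{\phi_2} = \gU_2\ket{1\ldots1}$ on the else-branch's output register, and $\ket{\phi_1}$ is prepared symmetrically. The $\Hilb$-semantics of $\qif$ is the biproduct, so the whole term denotes
\[
\ket{0}\ket\psi \mapsto \ket0 \textstyle\sum_j K_j\ket\psi\ket{\phi_1}\ket{f_j},\qquad
\ket1\ket\psi \mapsto \ket1 \textstyle\sum_i H_i\ket\psi\ket{e_i}\ket{\phi_2}.
\]
We then expand $\ket{\phi_1}$ in the basis $\{\ket{e_i}\}$, completed to a full basis, and $\ket{\phi_2}$ in $\{\ket{f_j}\}$. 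With respect to the joint orthonormal basis $\ket{e_i}\ket{f_j}$, this reads off a Kraus family
\[
\bigl\{\ketbra00\otimes \langle f_j|\phi_2\rangle^{*}\text{-weighted } K\ \ +\ \ketbra11\otimes \langle e_i|\phi_1\rangle^{*}\text{-weighted } H\bigr\}_{(i,j)},
\]
that is, $\ketbra11\otimes \overline{c_j}H_i + \ketbra00\otimes \overline{d_i}K_j$ up to the paper's then/else convention.

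The remaining work is to check, language by language, that this family is the defining one. For $\gU=I$ the coefficients are the vacuum choice $c_0=d_0=1$ and all other coefficients vanish, which reproduces exactly the three-part union in \citet{Barsse2026}. For a uniform superposition the coefficients are $1/\sqrt{|J|}$ and $1/\sqrt{|I|}$, which gives \eqref{eq:qif-old-semantics} of \citet{Badescu2015}. For \citet{Ying16-the-book}, the coefficient freedom in the generalised semantics corresponds to the choice of $\gU_i$. I expect the main obstacle to be the bookkeeping that identifies the output-channel registers so that the tags $\ket{e_i}$ are literally computational basis states. Without that identification, the resulting family agrees with the defining one only up to a unitary mixing of indices, which suffices for equality of channels but not for the invariant needed at enclosing $\qif$s. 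To handle this, I would choose the $\gU_i$ and the $\gCX$-dilations so that each index is a measurement-outcome history, which makes the tags basis states by construction. The details are deferred to \appendixref{app:sec:vacuum,appx:sec:ying}.
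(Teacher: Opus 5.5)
Your skeleton matches the paper's. You induct on the translation with an invariant that reads off the source Kraus operators as the components of $\sem{\term'}$ (inputs fed $\ket{1\ldots1}$) along orthonormal tags on the output channels. In the $\qif$ case you expand the dummies $\gU_i\ket{1\ldots1}$ against the other branch's tags. As written, however, the proposal does not cover QuGCL. That language contains $\abort$ with $\ysem{\abort}_\emptyset=0$, so your invariant ``$\sem{\term'}$ on $\ket{1\ldots1}$ is an isometry'' is false there. Moreover, $\abort$ cannot be translated into plain \QifUnitary{} at all: its closed first-order terms denote isometries, so $\csem{-}$ is trace-preserving. The paper instead works in \QifUnitary{} extended with output test channels:
\begin{itemize}
\item $\abort\SynDil(\testch\assign\inch)$, and $\csem{-}$ applies $\bra{0}$ on $\vtestch$;
\item in the $\qif$ rule, the other branch's test channels are padded with $\gX\vinchC_i$, not through $\gU_i$;
\item the invariant is $\ysem{S}_f=(\ident\otimes\bra{b_f}_{\voutch}\otimes\bra{0}_{\vtestch})\circ\sem{\term}\circ(\ident\otimes\ket{1}_{\vinch})$, for an injective $f\mapsto b_f$ into computational-basis tags.
\end{itemize}

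For \citet{Barsse2026}, your claim that $\gU=I$ gives ``$c_0=d_0=1$ and all other coefficients vanish'' holds only if the distinguished $0$-th Kraus operator is tagged by exactly $\ket{1\ldots1}$. An invariant with arbitrary orthonormal tags does not record this. Also, $\bdiscard$ is missing from your base cases, and it is the primitive that produces the non-trivial family $\{\bra{0},\bra{1}\}$ with $\bra{0}$ distinguished. The paper secures the identification by dilating $\bdiscard p$ as $(\outch\assign\gX p)$ and qubit creation as $\gX\inch$. It then proves, for $\sem{S}=(F,K)$ in $\CQC$, both $\csem{\term}=F$ and $K=(\ident\otimes\bra{1}_{\voutch})\circ\sem{\term}\circ(\ident\otimes\ket{1}_{\vinch})$, via the Kraus semantics $\ksem{S}$ with a specified $0$-th element. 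Consequently, in the $\qif$ case only the all-ones outcome on both branches yields the coherent operator $\ketbra{0}{0}\otimes L_0+\ketbra{1}{1}\otimes K_0$.

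Your worry about ``unitary mixing'' is unnecessary, because $\gU_i$ may depend on the branch, so arbitrary orthonormal tags suffice. What you do need is that $\gU_2\ket{1\ldots1}$ lies in the span of the else-branch tags, tensored with a state on $\voutchB$ that both dummies share. Any orthogonal component contributes extra incoherent Kraus operators. So for \citet{Badescu2015} the uniform superposition must range over the tags, not over the whole register; the paper's choice is $U_i\ket{1\cdots1}=\sum_f\alpha_{S_i,f}\ket{b_f}$.
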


We prove the soundness for the operational samantics.
Together with the properties we prove in \cref{thm:op-sem-theorem}, the adequacy theorem follows.

\begin{theorem}[Soundness]
  \label{thm:soundness-opsem}
  If $(\ket\psi, \term) \step_\Defs (\ket\phi, \termB)$,
  then
  $\sem{\term}_{\eval}\ket\psi = \sem{\termB}_{\eval}\ket\phi$.
  \thmend
\end{theorem}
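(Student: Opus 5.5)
The proof goes by case analysis on the reduction rule used to derive $(\ket\psi, \term) \step_\Defs (\ket\phi, \termB)$. Before the cases, I would establish three auxiliary facts about the $\Hilb$-semantics of \QifUnitary{}.

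First, a \emph{substitution lemma}: $\sem{\term[\val/x]}_\eval = \sem{\term}_\eval \circ (\sem{\val}_\eval \otimes \ident)$ up to the evident braidings. Here $\val$ is a value, whose semantics uses only input channels, and the channel sets are disjoint by typing.

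Second, a \emph{channel renaming lemma}: renaming channels $[\vinchB/\vinch, \voutchB/\voutch]$ only conjugates the semantics by the corresponding permutation isomorphisms on $2^{\otimes|\vinch|}$ and $2^{\otimes|\voutch|}$.

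Third, and most important, a \emph{context decomposition lemma}. For an evaluation context $\ctxt[\cdot][\qctrl]$ whose hole has no free variables, I would show
\begin{equation*}
  \sem{\ctxt[\term][\qctrl]}_\eval \ket\psi \;=\; \sem{\ctxt[\termB][\qctrl]}_\eval \ket\phi
\end{equation*}
whenever the local rewrite in the hole satisfies the following: on the subspace where the control channels take the values prescribed by $\qctrl$, replacing $\term$ by $\termB$ together with the state change $\ket\psi \mapsto \ket\phi$ preserves the semantics, and on the orthogonal subspace the state is unchanged. I would prove this by induction on $\ctxt$.

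The ordinary context formers, such as $\ctxt;\term$, application, tensor, let and $(\outch\assign\ctxt);\term$, are functorial: the semantics is a composite with the hole's semantics tensored with identities. The two $\qif$ context formers are the key cases. By the definition $\sem{\qif} = d^{-1}(\sem{\termB_2}\oplus\sem{\termB_1})d$, the semantics decomposes along $\ket{0}\!\bra{0}$ and $\ket{1}\!\bra{1}$ of the control channel $\inch$. The hole sits in one summand, so the controlled unitary $C_{\qctrl, \inch\mapsto 1}U$ acts as $U$ on that summand and as the identity on the other. In the else-branch context, the then-branch is already a normal form $(\voutch\assign\vinch);\val$, so its semantics commutes with the ancilla action. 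The typing rule forcing both branches to use the same channels guarantees that the two summands share the channel tensor factors, which makes this decomposition meaningful.

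With the lemmas in hand, the cases are as follows. The $\beta$-rules and the let rule follow from the substitution lemma and the compact closed equations $\ev\circ(\Lambda f\otimes\ident)=f$. The unit rule uses $\sem{\unitval}=\ident_1$. The global-call rule follows because $\eval(\F)$ is by definition $\sem{\term_\F}$, combined with the renaming lemma. The commuting conversions pulling $(\outch\assign\inch)$ out of contexts follow from naturality of the symmetric monoidal structure, since an output assignment is just routing a wire to an output factor. The unitary rule is the context decomposition lemma instantiated with $\sem{U}=U$: the term $U(\inch_1\otimes\cdots\otimes\inch_n)$ equals $U$ applied to those channel wires, which equals the state update $C_\qctrl U$ followed by the identity term.

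The main obstacle is the final $\qif$-elimination rule. Here I would compute both branches explicitly. The then-branch denotes the wire permutation $\lceil\sigma\rceil$ on the $\vinchB$ factors, followed by the fixed routing into $\voutch$ and $\val$. The else-branch denotes the same routing without the permutation. The $\qif$ semantics is therefore $\ket{0}\!\bra{0}\otimes R + \ket{1}\!\bra{1}\otimes R\lceil\sigma\rceil$, which equals $(\ident_\inch\otimes R)\circ C_{\inch\mapsto1}\lceil\sigma\rceil$. The reduct $\inch\otimes\val[\ldots]$ with the permuted state has semantics $\ident_\inch\otimes R$ applied to $C_{\qctrl,\inch\mapsto1}\lceil\sigma\rceil\ket\psi$. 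The outer context $\ctxt[\cdot][\qctrl]$ is handled by the decomposition lemma. The delicate bookkeeping is matching the permutation convention for $\lceil\sigma\rceil$ with the order of $\vinchB$ and with the identification of reordered assignments. I would fix the convention first and check it on the two-channel swap before writing the general argument.
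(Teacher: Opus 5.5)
Your proposal is correct and follows the paper's proof essentially step for step: a substitution lemma for the $\beta$ and let rules, routine checks for the commuting conversions and global calls, and an induction on the evaluation context. In that induction the key cases are the two $\qif$ formers, which split the state along $\ket{0}$ and $\ket{1}$ of the control channel, and the base case computes $\ketbra{0}{0}\otimes R + \ketbra{1}{1}\otimes R\lceil\sigma\rceil = (\ident\otimes R)\,C_{\inch\mapsto 1}\lceil\sigma\rceil$. One small correction: a value such as a $\lambda$-abstraction can carry output channels (e.g.\ the dilation of $\meas$), so your substitution lemma must also tensor $\sem{\term}$ with the identity on $\val$'s output channels, as the paper's version $\sem{\termB[\term/x]} = (\sem{\termB}\otimes\ident_{\voutch})\,\sigma\,(\ident\otimes\sem{\term}\otimes\ident)$ does.
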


\begin{theorem}[Adequacy]
  \label{thm:adequacy}
  For each closed $\CPM$-term $\term$ of first-order type $A$,
  there exists a \QifUnitary{} term $\term'$ such that
  $\term \SynDil \term'$ and there is a reduction sequence
  $(\vinch \mapsto \ket{1\dots1}, \term') \step_\Defs^*
  (\vinch \mapsto \ket{\varphi}, (\voutch\assign\vinch'); \val)$
  satisfying
  $
    \trace_{\vinch'}(
    {\sigma}\ketbra\varphi\varphi{\sigma^{-1}}
    )
    =
    s_A \circ \sem{\term}_{\sem{\Defs}}
  $
  where $\sigma$ is a canonical permutation unitary that derives from $V$,
  and $\trace_{\vinch'}$ is the partial trace operator that corresponds to $\vinch'$.
  \thmend
\end{theorem}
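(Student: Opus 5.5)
The plan is to chain the three results already established: type preservation for the syntactic dilation, termination (\cref{thm:op-sem-theorem}), and soundness of both the dilation (\cref{thm:soundness-syntactic-dilation}) and the reduction relation (\cref{thm:soundness-opsem}).

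First, given the closed $\CPM$-term $\term$ of first-order type $A$, we invoke the type-preservation theorem for syntactic dilation. It yields a \QifUnitary{} term $\term'$ with $\term \SynDil \term'$ and $\auxJQ[\widehat\Gamma]{\EmpEnv}{\term' \colon A'}{(\vinch, \voutch)}$, where $A' = A[\qbit/\bool]$. Then $(\vinch \mapsto \ket{1\dots1}, \term')$ is a configuration. By \cref{thm:op-sem-theorem}, reduction preserves typing and terminates, so there is a finite sequence
$(\vinch \mapsto \ket{1\dots1}, \term') \step_\Defs^* (\vinch \mapsto \ket\varphi, N)$
with $N$ irreducible. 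I would add a small lemma, essentially the progress half of \cref{thm:op-sem-theorem}, stating that an irreducible well-typed closed term of first-order type has the form $(\voutch \assign \vinch'); \val$. Here $\val$ is a tensor of input channels and units, because first-order closed values contain no $\lambda$ or $U$. Consequently $\val$ is the identity wiring of some channels $\vinchB$ into $\sem{A'}$, and $\vinch$ is partitioned into $\vinch'$ (discarded) and $\vinchB$ (output), up to a permutation. That permutation is the canonical $\sigma$ read off from the value $\val$.

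Second, we compute semantics. Iterating \cref{thm:soundness-opsem} along the sequence gives
$\sem{\term'}_{\eval}\ket{1\dots1} = \sem{(\voutch\assign\vinch');\val}_{\eval}\ket\varphi$, where $\eval = \sem{\Defs}$ as translated. We apply $\EmbeddingFunctor$ and compose with the discarding maps on $\voutch$. The left side then becomes $\csem{\term'}$, since the initialisation $\ketbra{1}{1}$ on $\vinch$ is exactly what $\csem{-}$ supplies. The right side becomes the partial trace over $\vinch'$ of $\sigma\ketbra\varphi\varphi\sigma^{-1}$, because $\outch \assign \inch$ followed by discarding denotes tracing out $\inch$, and $\val$ denotes $\sigma$. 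Finally, \cref{thm:soundness-syntactic-dilation} with an empty linear context gives $\sem{\term} = r_A \circ \csem{\term'}$. Composing with $s_A$, and noting that $\csem{\term'}$ lands in the image of $s_A$ (booleans are encoded as qubits and come out diagonal), yields the claimed equation.

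I expect the main obstacle to be the last point: justifying $s_A \circ r_A \circ \csem{\term'} = \csem{\term'}$, that is, that off-diagonal boolean blocks vanish. I would try first to prove, by induction on the dilation, that each boolean-typed output wire produced by the dilation is decohered. Every $\btrue$, $\meas$ and $\cif$ produces its boolean together with a $\gCX$ copy sent to an output channel, so tracing that copy kills the coherences. A secondary difficulty is matching the global-definition valuation: I would need to check that $\sem{\Defs}$ in $\CPM$ agrees with the translated valuation after the same $r/s$ adjustment.
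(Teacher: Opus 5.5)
Your overall chain (type preservation of the dilation, progress and termination to reach $(\voutch\assign\vinch');\val$, iterated soundness of reduction, then soundness of the dilation) is exactly how the paper obtains adequacy. You are also right that the displayed form $\sem{\term} = r_A\circ\csem{\term'}\circ s_\Gamma$ of \cref{thm:soundness-syntactic-dilation} is too weak, since adequacy needs $\csem{\term'} = s_A\circ\sem{\term}$.

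The gap is in your plan for that step. An induction on the dilation must pass through open and higher-order subterms, so ``every boolean output wire is decohered'' is not an inductive invariant. Take $(\lambda f.\,f\,\btrue)(\lambda b.\,b)$, a closed term of type $\bool$. Its subterm $\lambda b.\,b$ dilates to itself and decoheres nothing; it only preserves decoherence. What you need is a type-indexed logical relation, with suitable closure at $\otimes$ (entangled states are not convex combinations of product states) and at $\multimap$. Your sketch does not supply one.

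Your stated mechanism is also inaccurate. $\btrue\SynDil\inch$ involves no $\gCX$; its output is diagonal only because $\ket1$ is a basis state. $\cif$ dilates to a $\qif$ whose control qubit is sent to an output channel. Its result is block-diagonal because it is a convex combination of the branch results, not because of a copy.

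The missing idea is the one the paper uses in proving \cref{thm:syn-dil:sem-pres}. It interprets \Qif{} in the tight orthogonality category $T(\mathcal{A})$, built by Hyland--Schalk focussed orthogonality on pairs of $\CPM$ maps. Its objects $(\sem{A}_\CPM, c_A)$ carry exactly the binary relation you need between the original and the dilated semantics:
\begin{itemize}
\item $c_A = \{(x, s_A\circ x)\}$ at first-order types;
\item $c_A = \{(x,x)\}$ at purely quantum types;
\item at $\multimap$, the set of maps sending related inputs to related outputs.
\end{itemize}
One checks that each construct is a morphism there; for $\cif$ this uses \cref{app:lem:orthogonality-vec-space}, which is precisely the convex-combination argument. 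One then proves by induction, at every type, that the second projection of $\sem{\term}$ equals $\csem{\term'}$. For a closed first-order $\term$, membership in $c_A$ gives $\csem{\term'} = s_A\circ\sem{\term}$ directly, which is the form adequacy uses; the $r_A/s_\Gamma$ statement is only a corollary.

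The same device settles your secondary worry about global definitions. Constants are valued as pairs $(\sem{\term_\F}_\CPM, \csem{\term'_\F})$, and this works at higher-order types too. At $\bool\multimap\qbit$, for instance, the predicate only asks the dilated map to agree with the original after precomposition with $s$. That predicate is not of the form $\{(x, s\circ x)\}$, so no $r/s$ adjustment of the valuation would capture it.
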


When the term $\term$ has a first-order type $A$,
$s_A$ and $r_A$ can be realized by a term consisting of a sequence of $\ket{1}$ and $\meas$.
Therefore, we can conclude that the combination of
the syntactic dilation and operational semantics of \QifUnitary{}
correctly implements the \Qif{} program semantics.

\begin{corollary}
  The operational semantics of $\SWITCH$ gives an execution of
  the Quantum SWITCH.\footnote{
    Note that the fact that $\SWITCH$ program can be executed
    does not contradict the fact that
    the Quantum SWITCH is unimplementable as a circuit with two input holes of inputs $f$ and $g$.
    This is because the syntactic dilation is not a black-box operation: it relies on the choices of dilations of the actual arguments for \(f\) and \(g\).
  }
\end{corollary}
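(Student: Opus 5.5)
The corollary follows by combining \cref{thm:cat:switch-semantics} with the adequacy theorem (\cref{thm:adequacy}) and the type-preservation and normalisation results for syntactic dilation and \QifUnitary{}.

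Concretely, fix channels $F,G$ given as closed $\CPM$-terms of type $A \li A$ with $A$ first order and purely quantum. Consider the closed $\CPM$-term
\[
  \term \defeq \letx{f}{F}{\letx{g}{G}{\lambda x.\lambda y.\,\SWITCH(x,y,f,g)}},
\]
applied to state preparations of the inputs (or, equivalently, consider the open term with first-order free variables $x\colon\qbit$, $y\colon A$). By \cref{eg:procedural:switch} this is well typed via the embedding rule. By \cref{thm:cat:switch-semantics}, its denotation equals the quantum SWITCH of $\sem{F}$ and $\sem{G}$, and this denotation is independent of the Kraus decompositions chosen.

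First, apply type preservation for syntactic dilation to obtain $\term'$ with $\term \SynDil \term'$, well typed in \QifUnitary{}. Second, by \cref{thm:op-sem-theorem}, the configuration $(\vinch\mapsto\ket{1\cdots1},\term')$ reduces in finitely many steps to a normal form $(\voutch\assign\vinch');\val$. Third, \cref{thm:adequacy} (together with \cref{thm:soundness-syntactic-dilation} for the open first-order case) states that tracing out the output channels of the resulting state yields $s_A\circ\sem{\term}$. Since $s_A$ and $r_A$ are realised by sequences of $\ket1$ and $\meas$, and $r_A\circ s_A=\ident$, the procedure "dilate, reduce, then measure or discard the output channels" implements exactly $\sem{\term}$, which is the quantum SWITCH.

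The main obstacle is matching the interfaces. Adequacy is stated for closed terms of first-order type, whereas SWITCH is naturally an open term, or a higher-order closed term. I would handle this by closing the term over arbitrary first-order input states: pass $x$ and $y$ as additional global definitions, or precompose with the section $s_\Gamma$ as in \cref{thm:soundness-syntactic-dilation}. Since quantum operations are determined by their action on density matrices, agreement on all inputs then yields equality of channels. The footnote's point follows directly: the reduction depends on the bodies of $F$ and $G$ through the global-definition rule, so the construction is white-box and does not conflict with the black-box no-go result.
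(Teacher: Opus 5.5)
Your proposal is correct and follows the paper's argument. The corollary comes from combining Theorem~\ref{thm:cat:switch-semantics} (the denotation of $\SWITCH$ is the quantum SWITCH) with the adequacy theorem, Theorem~\ref{thm:adequacy}, together with type preservation and normalisation; your explicit handling of the interface mismatch, namely closing the open or higher-order $\SWITCH$ term over first-order input preparations so that adequacy for closed first-order terms applies, spells out a step the paper leaves implicit.
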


We can add divergence \( \abort : \unit \) as a constant to the language \Qif{},
and define its type derivation and semantics in a canonical way:
\begin{proofrules}
  \textit{Terms}\hspace{1.5em}
  \term,\termB\ \Coloneqq\ \cdots \sor \abort

  \termJC[\EmpEnv]{\EmpEnv}{\abort\colon\unit}

  \sem{\,\abort\,}\ =\ 0.
\end{proofrules}
We can prove that this language is fully abstract.

\begin{theorem}[Full abstraction]
  In \Qif{} extended with divergence,
  for each $\JG = \JQ$ or $\JC$ and
  terms
  $\termJG{\Delta}{\term \colon A}$
  and
  $\termJG{\Delta}{\termB \colon A}$,
  $\sem{C[\term]}_\CPM = \sem{C[\termB]}_\CPM$ for any context $C[\cdot]$
  of type $\unit$,
  if and only if
  $\sem{\term} = \sem{\termB}$.
  \thmend
\end{theorem}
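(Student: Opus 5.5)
The ``if'' direction is compositionality of $\sem{-}$. Every term constructor is interpreted in \cref{fig:procedural:categorical-semantics} as a fixed categorical operation applied to the denotations of its immediate subterms. An induction on the context $C[\cdot]$ then shows that $\sem{C[\term]}$ depends on $\term$ only through $\sem{\term}$. If the hole sits inside a $\JQ$-fragment, we use that $\EmbeddingFunctor$ is a functor and preserves the compact closed structure on the nose, so $\sem{\term}_\JQ=\sem{\termB}_\JQ$ implies $\EmbeddingFunctor\sem{\term}=\EmbeddingFunctor\sem{\termB}$. The one side condition is the remark on uniqueness of the semantics of $\CPM$-judgements, which makes the induction independent of where the embedding rule is used. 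For a $\JQ$-term the statement compares $\sem{\term}_\Hilb$; I would note that $\EmbeddingFunctor$ identifies $f$ and $e^{i\theta}f$. So the ``only if'' direction must be read, or proved, up to the global phase that $\EmbeddingFunctor$ quotients, and I would state it for $\EmbeddingFunctor\sem{\term}$.

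For the ``only if'' direction I would reduce to closed first-order terms and then build explicit tests.

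\textbf{Step 1 (closing).} Replace $\term$ by $\lambda\vec x.\term$, which is closed of type $\Delta\li A$. The context $C[\cdot]=D[(\cdot)\,\vec y]$ is built from contexts for the abstraction, so it suffices to treat closed terms of arbitrary type $A$.

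\textbf{Step 2 (flattening higher-order types).} For each type $A$, by induction on $A$, I would define a first-order type $A^\flat$ obtained by replacing $\bool$ by $\bool$, $\li$ by $\otimes$ and negative qubits by qubits. Together with it I would define two $\JC$-terms $\mathrm{enc}_A\colon A\li A^\flat$ and $\mathrm{dec}_A\colon A^\flat\li A$ whose denotations realise, up to a strictly positive scalar, the compact-closed isomorphism $\sem{A}\cong\sem{A^\flat}$. The cup $\eta$ is definable: prepare $\ket{1}\otimes\ket{1}$ and apply a unitary producing a Bell state. The cap $\varepsilon$ is definable up to the factor $\tfrac12$ (per qubit): apply the inverse Bell unitary, then $\meas$ both qubits and continue only on the right outcome, using $\cifx{b}{\unitval}{\abort}$. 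For example, for $A\li B$ the encoder is $\lambda f.\,\letx{a\otimes a'}{\mathrm{Bell}}{\mathrm{enc}(a)\otimes\mathrm{enc}_B(f\,(\mathrm{dec}_A\,a'))}$ with the appropriate orientation. $\bool$ needs no flattening, and the block structure of $\sem{\bool}=(1,1)$ is handled by $\meas$ and $\ket1$, which realise $r$ and $s$.

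\textbf{Step 3 (separating states).} Since $\mathrm{dec}\circ\mathrm{enc}$ is a positive multiple of the identity, $\sem{\term}\neq\sem{\termB}$ implies $\sem{\mathrm{enc}\,\term}\neq\sem{\mathrm{enc}\,\termB}$. These are two distinct positive (block-)matrices $\rho\neq\sigma$ on $\sem{A^\flat}$. Pure states $\ketbra{\varphi}{\varphi}$ span all matrices, so some $\varphi$ has $\bra\varphi\rho\ket\varphi\neq\bra\varphi\sigma\ket\varphi$. I would realise this test as a context of type $\unit$: apply a unitary $U^\dagger$ with $U\ket{1\cdots1}=\ket\varphi$, measure every qubit, and branch so that exactly the all-$1$ outcome yields $\unitval$ and every other outcome yields $\abort$. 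The resulting scalars are the two distinct probabilities. For the boolean components, one measurement-free test per block suffices.

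The main obstacle I expect is Step 2. The work is checking that the postselected caps and $\abort$ compose into something of the form $\lambda\,\mathrm{id}$ with $\lambda>0$ at every type, including nested $\li$ and mixed $\bool$/$\qbit$ blocks. A secondary concern is that this must all respect linearity and the first-order restriction on $\cif$ results; the encoders only branch at type $\unit$ or first-order types, so I believe this is satisfied.
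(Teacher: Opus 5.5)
Your $\JC$ case is essentially the paper's argument. The paper also closes the terms, post-composes with a definable term whose denotation is monic (its $\termB_A$ into $\qbit^{\otimes n}$, built from cups and from caps postselected with $\meas$ and $\abort$, each correct up to a positive scalar), and then tests against a pure state $U\ket{1\cdots1}$. Your encode/decode isomorphism and postselected measurement are a harmless variant of this.

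The gap is the $\JQ$ case. You conclude that for $\JQ$-terms the ``only if'' direction holds only modulo global phase, and you propose to state it for $\EmbeddingFunctor\sem{\term}$. That weakening is unnecessary: the theorem holds with equality in $\Hilb$, and proving this is the second half of the paper's proof.

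The missing idea is that a context may place the hole under a $\qif$ inside the quantum fragment, before anything is embedded, and $\qif$ turns a global phase into a relative phase. Suppose your $\JC$ tests fail to separate closed $\JQ$-terms $\term,\termB\colon A$ with $\sem{\term}_\Hilb\neq\sem{\termB}_\Hilb$. Then $\EmbeddingFunctor\sem{\term}=\EmbeddingFunctor\sem{\termB}$, so $\sem{\termB}_\Hilb=e^{i\theta}\sem{\term}_\Hilb$ with $e^{i\theta}\neq 1$ and $\sem{\term}_\Hilb\neq 0$. Now compare $x\colon\qbit,\ f\colon A\li\unit\vdash_\JQ \qifx{x}{f\,\term}{f\,\term}$ with $\qifx{x}{f\,\termB}{f\,\term}$. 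Both are well typed: the branches share the context $f$ and have the first-order type $\unit$. Their denotations are $\ident\otimes\Phi$ and $(\ketbra{0}{0}+e^{i\theta}\ketbra{1}{1})\otimes\Phi$, where $\Phi=(f\mapsto f\circ\sem{\term})\neq 0$. These are not scalar multiples of each other, so their images under $\EmbeddingFunctor$ differ. Your $\JC$ argument applied to this pair then yields a context of type $\unit$ that separates $\term$ and $\termB$.

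Your reformulation also fails to be a full-abstraction statement. With $\EmbeddingFunctor$-equality on the right-hand side, the ``if'' direction is false. For example, the one-qubit unitaries $I$ and $-I$ have the same image under $\EmbeddingFunctor$, yet the context $\qifx{x}{[\cdot]\,y}{y}$ turns them into $I\otimes I$ and $Z\otimes I$. With $\Hilb$-equality in one direction and $\EmbeddingFunctor$-equality in the other, you only sandwich contextual equivalence between two relations rather than characterise it.
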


In \appendixref{app:sec:theorems}, we also extend the language \QifUnitary{} so that 
a syntactic dilation of \Qif{} with divergence can be defined,
and prove the normalisation and adequacy theorem.
 
\section{Related Work}\label{sec:related}
\paragraph{Quantum Conditional on Quantum Channels.}

To the best of our knowledge, \citet{DaveDZ25} is the only existing language in which the quantum SWITCH can be expressed while also supporting quantum measurement.
Their language also has a two-layer structure. 
Indeed, their categorical model is essentially the same as ours: it is based on a functor \( \iota \colon \Hilb \longrightarrow \CPM \).
Nevertheless, there are several differences between their language and ours.
The most important for our purposes is that not every program can be used as an input of the SWITCH expressible in their language.
In their type system, an object \( A \) in \( \Hilb \) is distinguished from its image \( \iota(A) \in \CPM \), and \( \iota \) is only assumed to be lax monoidal: their language has a term constructor for \( \iota(A \multimap B) \longrightarrow \iota(A) \multimap \iota(B) \)
but not the other way round.
Their quantum SWITCH has type \( \iota\big((A \multimap A) \multimap (A \multimap A) \multimap (\qbit \otimes A) \multimap (\qbit \otimes A)\big) \),
which can be converted to \( \iota(A \multimap A) \multimap \iota(A \multimap A) \multimap \iota(\qbit \otimes A) \multimap \iota(\qbit \otimes A) \) and thus
take \( \iota(A \multimap A) \) as an input,
but not \( \iota(A) \multimap \iota(A) \).
Therefore, a general program representing a quantum channel of type \( \iota(A) \multimap \iota(A) \)
cannot be an input of the SWITCH.
It is worth noting that a process relying on measurement outcome, such as the quantum teleportation, is of this type.

This restriction is not merely a missing coercion.
In our understanding, their operational semantics, based on their term rewriting, relies essentially on the fact that the structural map \(\iota(A) \otimes \iota(B) \to \iota(A \otimes B)\) is available only in this direction, but not inverse map is provided.
Adding such inverse maps would make the language closer in expressivity to ours, but it would also require an operational account of how general \(\CPM\)-terms, which are not mere linear combinations of \(\Hilb\)-programs but that can use classical data (\eg, quantum teleportation process), can be decomposed and used under quantum control.
Closing this gap therefore appears to be highly nontrivial.
Furthermore, their language freely allows sums of terms, interpreted as sums in \(\Hilb\) or \(\CPM\) depending on the context.
This gives their language the full definability of the morphisms in \(\Hilb\), but also admits many terms that are not physically realisable.

In contrast, our language is designed so that every well-typed program is implementable as a quantum circuit.
Its semantics does not admit trace-increasing maps, thereby excluding the physically unrealizable terms.
Moreover, our operational semantics provides a concrete circuit-synthesis procedure via syntactic dilation, rather than only an abstract reduction semantics.
This last point is important because extracting a circuit implementation from an algebraic lambda calculus is, in general, a nontrivial problem.

Note that, our work also critically differs in that it is the first to identify linearity as the key structural feature behind the distinction between quantum control applicable to general quantum channels and the quantum SWITCH.

A few more languages where one can have a quantum control on non-unitary quantum channels has been proposed~\cite{Barsse2026,Badescu2015,Ying16-the-book}.
In each language, one can write a term
$(\qifx{x}{\term}{\ident})$
that controls a term $\term$ whose semantics is
a quantum channel $\mathcal E \in \CPM(n, n)$,
and the semantics of the term
$\mathrm{C}\mathcal{E} = \sem{\qifx{x}{\term}{\ident}}$
satisfies the following equations:
$\mathrm{C}\mathcal{E}(\ketbra{0}{0}_x \otimes \rho)
  = \ketbra{0}{0}_x \otimes \rho$
and
$\mathrm{C}\mathcal{E}(\ketbra{0}{0}_x \otimes \rho)
  = \ketbra{1}{1}_x \otimes \mathcal{E}\rho$.
In \cref{rem:syntactic-dilation-for-murao-style},
we defined a translation from each language to \QifUnitary{},
demonstrating that its definition of controlled operations
corresponds with a suitable choice of $V$ in \cref{fig:intro:controlled-general-channel}.
One can consider that the semantics using a \emph{vacuum extension} in \citet{Barsse2026} is the case when we choose $V$ to be $\ident$.
That means, only when all discarded qubits are measured to $\ket0$, the controlled-$\mathcal E$ generates entanglement between the control and targets;
otherwise, collapses the state of the control qubit to a classical state; see \cref{app:sec:vacuum}.
On the other hand, the semantics discussed in \citet{Badescu2015} can be thought as the case when $V$ creates a uniformly entangled state of the possible measurement outcome of the discarded qubits.
In this case, the control qubit remains in some superposition of $\ket0$ and $\ket1$ in general.
The canonical semantics presented in \citet{Ying16-the-book}, originally proposed in \citet{YingYF12-first-Ying}, also employs a unitary operator $V$ that generates an entangled state, whilst respecting the norm of the Kraus operator.
In \citet{YingYF14-alternation}, they introduced generalised semantics with arbitral choice of coefficients, whose degree of freedom corresponds to the freedom of the choice of $V$; see \cref{appx:sec:ying}.

\paragraph{Quantum Conditional on Pure Quantum Computation.}

As a secondary observation, even if we restrict our language to the pure fragment of quantum computation, \ie, isometries, it still has novel features compared with existing languages.
In particular, it simultaneously supports (1) lambda abstraction inside a \(\qif\), (2) quantum control of functions with free variables, and (3) programming without any special type dedicated to quantum circuits or unitary circuits.

To illustrate the strength of the features in \Qif{},
we show a toy example in \cref{fig:related:features-in-Qif}.
In this example, line 2 defines a second-order function $h$
that takes an isometry as its input.
We want to use this $h$ in the following qif statement.
In the then- and else-branches, we use input isometries that map a qubit \(z\)
to \((x,z)\) and \((y,z)\), respectively.
Thus, we need to prepare functions representing different isometries in the then- and else-branches, which requires local functions, namely (1) lambda abstraction inside \(\qif\), as in lines 4 and 6.
Moreover, since $x$ and $y$ are qubits supplied from outside the \(\qif\), these lambda abstractions inside \(\qif\) treat them as (2) free variables.
If we substitute the definition of $h$ for the variable $h$, then semantically this is just quantum control of an isometry.
However, to write such flexible programs, it is natural to use a language based on a typed linear lambda calculus, as \Qif{} is, rather than a (3) circuit type that represents closed circuits.
\begin{mathfig}
  \begin{minipage}{0.6\textwidth}
    \begin{lstlisting}[language=qif]
      // x, y, p : qbit
      let h : (qbit -o,, qbit OX,, qbit) -o,, qbit OX,, qbit = $\lambda$f. CX (f (H |0>));
      let (p', d) = qif p {
                (h ($\lambda$z. (x, H z)), y)
            } else {
                (h ($\lambda$z. CZ (H y, z)), x)
            }
    \end{lstlisting}
  \end{minipage}
  \caption{A toy example in \Qif{}.}
  \label{fig:related:features-in-Qif}
  \Description{}
\end{mathfig}

Various syntactic approaches to quantum control have been proposed.
Besides QML~\cite{QML}, one of the earliest proposals, recent lines of work include symmetric pattern matching~\cite{SabValiViz18-FOSSACS-q-recusion} and its successors~\cite{ChardonnetSV21,ChardonnetSV23,DaveLPZ25}, the work of \citet{HeunenK22} and its successors~\cite{Carette2023,CaretteHKS24}, and extensions of Quipper~\cite{GreenLRSV13-quipper,FuKRS25}.

In the languages with \emph{symmetric pattern matching}~\cite{SabValiViz18-FOSSACS-q-recusion}, quantum control is provided as pattern matching, which is more general than \(\qif\).
Instead, to ensure that a term defines a unitary, one must separately prove an orthogonality condition.
Because their language requires special iso types for unitaries \(A \leftrightarrow B\), it lacks the flexibility of our language.
In particular, the language of \citet{DaveLPZ25}, one of its extensions, can handle both quantum and classical control.
But its pure fragment only supports first-order pure quantum computation and is clearly separated from the classical fragment at the type level.
Consequently, it satisfies none of the features (1)--(3) of our language.
It is also nontrivial to extract circuit implementations from their language.

The line of work starting from \citet{HeunenK22} defines languages in which one can write unitaries, where classical control is not discussed.
These languages likewise do not have the features (1)--(3).
It is also nontrivial to extract circuit implementations from them.

\citet{FuKRS25} study a language with quantum control and reversing.
It is based on linear logic and has some flexibility, such as lambda abstraction.
Moreover, its operational semantics is defined by circuit synthesis, so it is implementable.
However, their controlled construction considers, for a program \(M\), a command of the form ``\(\qif\ x\) then \(M\)'', namely one without an else-branch.
Thus, it cannot express, for example, control of isometries or control of functions with free variables.

\citet{HirataT26} defined a language with $\qif$ in which lambda abstraction is permitted without restriction. However, providing an operational semantics for this language remains an open problem.
In relation to our work, instead of defining the categorical semantics via the functor $\iota \colon \Hilb \longrightarrow \CPM$, we can define our semantics using the functor $\iota \colon \CausHilb \longrightarrow \CausCPM$ that they introduced.
From this semantics, it immediately follows that the semantics of a $\Hilb$-term $\termJQ{x \colon \qbit^{\otimes n}}{\term \colon \qbit^{\otimes n}}$ is a unitary map, and that of a $\CPM$-term $\termJC{\qbit^{\otimes n}}{\term \colon \qbit^{\otimes m}}$ is a CPTP map.

\paragraph{Other Work.}

Quantum programming languages with only classical branching are well-studied.
For example, in theoretical studies, fully abstract models based on \( \CPM \) are known for several different settings~\cite{Selinger2004,Pagani2014,Clairambault2020,Tsukada2024}.
Quantum branching over unitary operations has also been studied well.
Such branching can be implemented as circuits, and various languages have been proposed to express such mechanisms as programs~\cite{QML,Qsharp,Silq,HirataPOPL25,SabValiViz18-FOSSACS-q-recusion}.

Several proposals for descriptive methods have also emerged from communities closer to physics.
For example, \citet{Wechs2021} have proposed a class of circuits called \emph{quantum circuits with quantum control} (QCQC).
Interestingly, while QCQC can express the quantum SWITCH, it has been shown that they cannot implement some of the earliest proposed protocols with indefinite causal order~\cite{Oreshkov2012}.
We believe that this protocol is also not implementable in our language \Qif{}.
It remains an open question whether this limitation stems from a lack of expressive power or from the physical impossibility of realising the protocol.

\section{Conclusion and Future Work}\label{sec:conc}
We have presented a programming language for describing quantum-controlled computational processes, such as the quantum SWITCH~\cite{Chiribella2013}.
It has been known that quantum control poses semantic challenges~\cite{Badescu2015}, and a recent survey has suggested that its realization has a fundamental difficulty~\cite{Valiron2022}.
By analyzing existing semantics through program transformation, we identified the source of the difficulty and demonstrated that a simple linear type system can resolve the semantic issues.
We also considered an extension to a functional language and clarified the necessity of a type system for causality control.

There are several important directions for future research.
The first direction is the exploration of possible extensions to the expressive power of the language.
In particular, we are interested in what extensions would allow us to describe the quantum protocol in \citet{Oreshkov2012}.
Another direction is to integrate with the language defined by \citet{HirataT26}, which allows to have higher-order functions as the body of $\qif$.
Finally, we would like to describe and analyse new quantum procedures using the proposed language.
 
\begin{acks}
  This work was supported by JST CREST, Japan, Grant Number JPMJCR25I5.
\end{acks}

\bibliographystyle{ACM-Reference-Format}
\bibliography{references}

\appendix

\section{Correspondence Problem via Kraus Decompositions}
\label{app:sec:correspondence-problem-from-Kraus-perspective}

The key difference between the quantum SWITCH and general controlled operations
lies in the \emph{correspondence of indices} between the then- and else-branches.
In the Kraus decomposition of the quantum SWITCH,
\begin{equation*}
  \mathit{SWITCH}({-},{-},F,G) = \big\{\, \ketbra{0}{0} \otimes H_i K_j + \ketbra{1}{1} \otimes K_j H_i \,\big\}_{(i,j) \in I \times J},
\end{equation*}
the same indices $(i,j)$ are used in the then- and else-branches,
whereas in the semantics of general quantum control $\qifx{x}{F(y)}{G(y)}$,
the indices $i$ and $j$ are selected independently.
In contrast, for the statement $\qifx{x}{F(y)}{G(y)}$,
if the Kraus decompositions of $F$ and $G$ happen to share an index set,
say $\{K_i\}_{i \in I}$ and $\{H_i\}_{i \in I}$,
then we can define its semantics as
$\{ \ketbra00 \otimes H_i + \ketbra11 \otimes K_i \}_{i \in I}$
in the style of the quantum SWITCH,
without introducing coefficients for normalisation.

In order to define a semantics for quantum conditional statements independently of a
specific choice of Kraus decomposition, merely sharing a common index set does
not suffice.
In general, if $\{ K_i \}_{i \in I}$ is a Kraus decomposition of $F$,
then $\{ \sum_{i} u_{ji} K_i \}_{j \in J}$ defines another Kraus decomposition of $F$
if and only if $U = (u_{ji}) \colon \CC^{|I|} \longrightarrow \CC^{|J|}$ is a partial isometry,
\ie, a linear map $U$ such that $UU^\dagger$ and $U^\dagger U$ are projections.
Let us say that $\{K_i\}_{i \in I}$ and $\{H_i\}_{i \in I}$ are Kraus decompositions of
$F$ and $G$, respectively.
For each unitary $U$ on $\CC^{|I|}$, replacing $\{K_i\}$ with
$\{ K'_j \defeq \sum_{i \in I} u_{ji} K_i \}_{j \in I}$,
which is also a Kraus decomposition of $F$,
generally yields an inequivalent Kraus decomposition of the semantics of
$\qifx{x}{F(y)}{G(y)}$:
\begin{align*}
  \textstyle
  \left\{ \ketbra{0}{0} \otimes H_i
  + \ketbra{1}{1} \otimes K_i \right\}_{i\in I}
  \quad\not\simeq\quad
  \left\{ \ketbra{0}{0} \otimes H_j
  + \ketbra{1}{1} \otimes {\sum_{i \in I} u_{ji} K_i} \right\}_{j\in I}
  .
\end{align*}

However, instead of changing only the Kraus decomposition of $F$,
suppose that we \emph{simultaneously reindex} the Kraus decompositions of $F$ and $G$.
That is, suppose that we replace $K_i$ and $H_i$ with
$\{ K'_j \defeq \sum_i u_{ji} K_i \}$
and
$\{ H'_j \defeq \sum_i u_{ji} H_i \}$,
respectively.  Then we obtain equivalent completely positive maps:
\begin{align*}
  \textstyle
  \left\{
  \ketbra{0}{0} \otimes H'_j
  + \ketbra{1}{1} \otimes K'_j
  \right\}_{j \in J}
   & = \textstyle
  \left\{
  \ketbra{0}{0} \otimes \sum_{i \in I} u_{ji} H_i
  + \ketbra{1}{1} \otimes \sum_{i \in I} u_{ji} K_i
  \right\}_{j \in J}
  \\ &= \textstyle
  \left\{
  \sum_{i \in I} u_{ji} (
  \ketbra{0}{0} \otimes H_i
  + \ketbra{1}{1} \otimes K_i
  )
  \right\}_{j \in J}
  \\ &\simeq \textstyle
  \left\{
  \ketbra{0}{0} \otimes H_i
  + \ketbra{1}{1} \otimes K_i
  \right\}_{i \in I}.
\end{align*}
Returning to the definition of the quantum SWITCH, this observation explains why
the quantum SWITCH admits a canonical semantics.
For example, if we take another Kraus decomposition
$\{ K'_\ell \defeq \sum_{j} u_{\ell j} K_j \}$,
we can calculate as follows:
\begin{align*}
  \textstyle
  \left\{
  \ketbra{0}{0} \otimes H_i K'_\ell
  + \ketbra{1}{1} \otimes K'_\ell H_i
  \right\}_{(i, \ell)\in I \times L}
  &= \textstyle
  \left\{
  \sum_\ell u_{\ell j} \bigl(
  \ketbra{0}{0} \otimes H_i K_j
  + \ketbra{1}{1} \otimes K_j H_i
  \bigr) \right\}_{(i, \ell)\in I \times L}
  \\
  &\simeq \textstyle
  \left\{
  \ketbra{0}{0} \otimes H_i K_j
  + \ketbra{1}{1} \otimes K_j H_i
  \right\}_{(i, j)\in I \times J}.
\end{align*}

As in the quantum SWITCH, if each quantum channel appears exactly once in each branch, then the simultaneous-reindexing property holds.
Therefore, enforcing the linear use of channels, as in our language,
yields a unique canonical semantics for each term.
\section{Definition of Language}
\label{app:sec:definitions}

\subsection{Exhaustive definitions}

\begin{mathfig}
  \begin{proofrules}
    \infer{}{
      \auxJQ{\EmpEnv}{\inch \colon \qbit}{(\inch,\cdot)}
    }

    \infer{
      \auxJQ{\Delta}{\term \colon \qbit}{(\vinch, \voutchB)}
      \\
      \auxJQ{\Delta'}{\termB \colon A}{(\vinchB, \voutchC)}
    }{
      \auxJQ{\Delta, \Delta'}{(\outch \assign \term); \termB \colon A}
      {(\vinch\vinchB, \outch\voutchB\voutchC)}
    }

    \infer{}{
      \auxJQ{x \colon A}{x\colon A}{(\cdot,\cdot)}
    }

    \infer{
      \auxJQ{\Delta, x \colon A}{\term \colon B}{(\vinch, \voutch)}
    }{
      \auxJQ{\Delta}{\lambda x^A. \term \colon A \li B}{(\vinch, \voutch)}
    }

    \infer{
      \auxJQ{\Delta}{\term \colon \unit}{(\vinch, \voutch)}
      \\
      \auxJQ{\Delta'}{\termB \colon A}{(\vinchB, \voutchB)}
    }{
      \auxJQ{\Delta,\Delta'}{\term;\termB \colon A}{(\vinch\vinchB, \voutch\voutchB)}
    }

    \infer{
      \auxJQ{\Delta}{\term \colon A \li B}{(\vinch, \voutch)}
      \\
      \auxJQ{\Delta'}{\termB \colon A}{(\vinchB, \voutchB)}
    }{
      \auxJQ{\Delta,\Delta'}{\term\,\termB \colon B}{(\vinch\vinchB, \voutch\voutchB)}
    }

    \infer{
      \auxJQ{\Delta}{\term \colon A}{(\vinch, \voutch)}
      \\
      \auxJQ{\Delta'}{\termB \colon B}{(\vinchB, \voutchB)}
    }{
      \auxJQ{\Delta,\Delta'}{\term\otimes\termB \colon A \otimes B}{(\vinch\vinchB, \voutch\voutchB)}
    }

    \infer{
      \auxJQ{\Delta}{\term \colon A}{(\vinch, \voutch)}
      \\
      \auxJQ{\Delta', x\colon A}{\termB \colon B}{(\vinchB, \voutchB)}
    }{
      \auxJQ{\Delta,\Delta'}{\letx{x}{\term}{\termB} \colon B}{(\vinch\vinchB, \voutch\voutchB)}
    }

    \infer{
      \auxJQ{\Delta}{\term \colon A \otimes B}{(\vinch, \voutch)}
      \\
      \auxJQ{\Delta', x\colon A, y \colon B}{\termB \colon C}{(\vinchB, \voutchB)}
    }{
      \auxJQ{\Delta,\Delta'}{\letx{x\otimes y}{\term}{\termB} \colon C}{(\vinch\vinchB, \voutch\voutchB)}
    }

    \infer{
      U \colon \CC^{2^n} \longrightarrow \CC^{2^n}, \mbox{unitary}
    }{
      \auxJQ{\EmpEnv}{U \colon \qbit^{\otimes n} \li \qbit^{\otimes n}}{(\cdot,\cdot)}
    }

    \infer{} %
    {
      \auxJQ{\EmpEnv}{\unitval \colon \unit}{(\cdot,\cdot)}
    }

    \infer{
      \auxJQ{\!\Delta}{\term \colon \qbit}{(\vinch, \voutch)}
      \\
      \auxJQ{\!\Delta'}{\termB_1 \colon A}{(\vinchB, \voutchB)}
      \\
      \auxJQ{\!\Delta'}{\termB_2 \colon A}{(\vinchB, \voutchB)}
      \\
      \!\!\FirstOrderType{A}
    }{
      \auxJQ{\Delta, \Delta'}
      {(\qifx{\term}{\termB_1}{\termB_2}) \colon \qbit \otimes A}
      {(\vinch\vinchB, \voutch\voutchB)}
    }

    \infer{
      (\F \colon (A, n, m)) \in \widehat\Gamma
      \\
      |\vinch| = n
      \\
      |\voutch| = m
    }{
      \auxJQ{\EmpEnv}
      {\F\langle \vinch, \voutch \rangle \colon A}
      { (\vinch, \voutch) }
    }

    \infer{
      \vdash \Defs \colon \widehat\Gamma
      \\
      \auxJQ{\EmpEnv}{\term \colon A}{(\vinch, \voutch)}
    }{
      \vdash (\Defs, \globaldef \F\langle \vinch, \voutch \rangle = \term)
      \colon (\Gamma, \F \colon (A, |\vinch|, |\voutch|))
    }
  \end{proofrules}
  \caption{Exhaustive typing rules for \QifUnitary{}.
  The set of input channels $\vinch$ and $\vinchB$,
  or the output channels $\voutch$ and $\voutchB$ are disjoint.}
  \label{fig:procedural:ver2-ext:typing-rules}
  \Description{}
\end{mathfig}
The complete collection of typing rules for \QifUnitary{} appears in \cref{fig:procedural:ver2-ext:typing-rules}.
Apart from the record of i/o channels, the rules mirror those of \Qif{}.
The quantum conditional $\qif$ enforces that both branches use exactly the same
input and output channel names, whereas every other rule requires the premises
to mention pairwise disjoint channel sets.

\begin{mathfig}
  \begin{proofrules}
    \meas \SynDil
    \lambda x^\qbit. \letx{y \otimes z}{\gCX(x \otimes \inch)}{(\outch \assign z; y)}

    \ket1 \SynDil \inch

    \btrue \SynDil \inch

    \bnot \SynDil \gX

    \infer{
      \term \SynDil \term'
      \\
      \termB_1 \SynDil \termB'_1
      \\
      \termB_2 \SynDil \termB'_2
      \\
      (\vinch_i, \voutch_i) = \text{input/output channels occur in}\ \termB'_i
      \\
      |\vinchB_i| = |\voutchB_i|
      \\
      \vinch_1\vinchB_1 = \vinch_2\vinchB_2
      \\
      \voutch_1\voutchB_1 = \voutch_2\voutchB_2
    }{
      \cifx{\term}{\termB_1}{\termB_2} \SynDil
      \letx{x \otimes y}{\bigl(
        \qifx{\term'}
        {(\voutchB_1\assign\vinchB_1; \termB'_1)}
        {(\voutchB_2\assign\vinchB_2; \termB'_2)}
        \bigr) }{(\outchC \assign x); y}
    }

    \infer{
      \term \SynDil \term'
    }{
      \letx{x}{\F}{\term} \SynDil
      \letx{x}{\F\langle\vinch, \voutch\rangle}{\term'}
    }

    \infer{}{x \SynDil x}

    \infer{
      \term \SynDil \term'
    }{
      \lambda x.\term \SynDil \lambda x.\term'
    }

    \infer{
      \term \SynDil \term'
      \\
      \termB \SynDil \termB'
    }{
      \term;\termB \SynDil \term';\termB'
    }

    \infer{
      \term \SynDil \term'
      \\
      \termB \SynDil \termB'
    }{
      \term\,\termB \SynDil \term'\,\termB'
    }

    \infer{
      \term \SynDil \term'
      \\
      \termB \SynDil \termB'
    }{
      \term\otimes\termB \SynDil \term'\otimes\termB'
    }

    \infer{
      \term \SynDil \term'
      \\
      \termB \SynDil \termB'
    }{
      \letx{x}{\term}{\termB} \SynDil \letx{x}{\term'}{\termB'}
    }

    \infer{
      \term \SynDil \term'
      \\
      \termB \SynDil \termB'
    }{
      \letx{x\otimes y}{\term}{\termB} \SynDil \letx{x\otimes y}{\term'}{\termB'}
    }

    \infer{}{U\SynDil U}

    \infer{}{\unitval\SynDil\unitval}

    \infer{}{\qifx{\term}{\termB_1}{\termB_2}\SynDil\qifx{\term}{\termB_1}{\termB_2}}

    \infer{
      \Defs \SynDil \widehat\Defs
      \\
      \term\SynDil\term'
    }{
      (\Defs, \globaldef \F = \term) \SynDil 
      (\widehat\Defs,
      \globaldef \F \langle\vinch,\voutch\rangle
      = \term')
    }
  \end{proofrules}
  \caption{Exhaustive rules for syntactic dilation.}
  \label{app:fig:procedural:ver2:syntactic-dilation}
  \Description{}
\end{mathfig}
\Cref{app:fig:procedural:ver2:syntactic-dilation} lists the full syntactic dilation rules,
spelling out the program transformation from \Qif{} to \QifUnitary{}.
Every $\Hilb$-term is left unchanged by dilation, \ie, $\term \SynDil \term$.
All remaining rules that were not presented in \cref{fig:procedural:syntactic-dilation}
simply recurse structurally on the syntax.

\begin{mathfig}
  \begin{proofrules}
    \sem{\inch} = \sigma_{\unit,\qbit} %

    \sem{(\outch \assign \term); \termB}
    = (\sigma_{\qbit,\unit}\sem{\term}) \otimes \sem{\termB}

    \sem{\F\langle \vinch, \voutch \rangle}_\varrho
    = \varrho(\F)

    \dots
  \end{proofrules}
  \caption{Categorical semantics of \QifUnitary{}.}
  \label{fig:procedural:ver2-ext:categorical-semantics}
  \Description{}
\end{mathfig}
The categorical semantics of
$\auxJQ{\Delta}{\term \colon A}{(\vinch, \voutch)}$
is defined by a map
\[
  \textstyle
  \prod_{(\F\colon (A, n, m)) \in \widehat\Gamma}\,
  \Hilb( \CC^{2^n}, \sem{A}_\Hilb \otimes \CC^{2^m} )
  \longrightarrow
  \Hilb(\sem{\Delta} \otimes \CC^{2^{|\vinch|}}, \sem{A} \otimes \CC^{2^{|\voutch|}});
  \quad
  \varrho \longmapsto \sem{\term}_\varrho.
\]
Non-trivial ones are defined
in \cref{fig:procedural:ver2-ext:categorical-semantics}.
For function definitions $\widehat\Defs$, its categorical semantics is defined by
a map that maps
$(\globaldef \F \langle \vinch, \voutch \rangle = M_\F) \in \widehat\Defs$
to its semantics of its body $\sem{M_\F}$.
All the ommited rules are similarly defined as in \Qif{}
(though many swap operator are inserted to move channels behind).

\begin{mathfig}
  \raggedright Commuting conversions: \\
  \begin{proofrules}
    \ctxt[\term\,((\outch \assign \alpha); \termB)]
    \step
    \ctxt[(\outch \assign \inch); (\term\,\termB)]

    \ctxt[((\outch \assign \inch); \term)\, \termB]
    \step
    \ctxt[(\outch \assign \inch); (\term\,\termB)]

    \ctxt[((\outch \assign \inch); \term); \termB]
    \step
    \ctxt[(\outch \assign \inch); (\term; \termB)]

    \ctxt[((\outch \assign \inch); \term) \otimes \termB]
    \step
    \ctxt[(\outch \assign \inch); (\term\otimes\termB)]

    \ctxt[\term \otimes ((\outch \assign \inch); \termB)]
    \step
    \ctxt[(\outch \assign \inch); (\term\otimes\termB)]

    \ctxt[\letx{x}{((\outch \assign \inch); \term)}{\termB}]
    \step
    \ctxt[(\outch \assign \inch); \letx{x}{\term}{\termB}]

    \ctxt[\letx{x \otimes y}{((\outch \assign \inch); \term)}{\termB}]
    \step
    \ctxt[(\outch \assign \inch); \letx{x \otimes y}{\term}{\termB}]

    \ctxt[\qifx{((\outch \assign \inch); \term)}{\termB_1}{\termB_2}]
    \step
    \ctxt[(\outch \assign \inch); \qifx{\term}{\termB_1}{\termB_2}]

    \ctxt[\outch \assign ((\outchB \assign \inch); \term)]
    \step
    \ctxt[(\outchB \assign \inch); (\outch \assign \term)]
  \end{proofrules}

  \raggedright Non-trivial rules:\\
  \begin{proofrules}
    \ctxt[\unitval; \term]
    \step
    \ctxt[\term]

    \ctxt[(\lambda x. \term) \val]
    \step
    \ctxt[\term[\val/x]]

    \ctxt[
    \letx{x}{\val}{\term}
    ]
    \step
    \ctxt[
    \term[\val/x]
    ]

    \ctxt[
    \letx{x \otimes y}{\val \otimes \valB}{\term}
    ]
    \step
    \ctxt[
    \term[\val/x, \valB/y]
    ]

    \infer{
      \bigl(\globaldef \F\langle\vinch, \voutch\rangle = \term_\F\bigr)
      \in \Defs
    }{
      \ctxt[\F\langle\vinchB, \voutchB\rangle]
      \step_\Defs
      \ctxt[\term_\F[\vinchB/\vinch, \voutchB/\voutch]]
    }

    \vinchB \mapsto \ket\psi,\,
    \ctxt[U (\inch_1 \otimes \cdots \otimes \inch_n)][\qctrl]
    \step
    \vinchB \mapsto \bigl(C_\qctrl U_{\inch_1,\dots,\inch_n}\bigr) \ket\psi,\,
    \ctxt[\inch_1 \otimes \cdots \otimes \inch_n][\qctrl]

    \vinchC \mapsto \ket\psi,\,
    \ctxt \left[\begin{matrix*}[l]
        \qif\ {\inch}\ \ifthen
        \\ \quad
        (\outch_1 \assign \inchB_{\sigma(1)}); \dots; (\outch_n \assign \inchB_{\sigma(n)});
        \\ \quad
        \val[\inchB_{\sigma(n+1)}, \dots, \inchB_{\sigma(m)}]
        \\ \ifelse
        \\ \quad
        (\outch_1 \assign \inchB_1); \dots; (\outch_n \assign \inchB_n);
        \\ \quad
        \val[\inchB_{n+1}, \dots, \inchB_{m}]
      \end{matrix*}\right]
      \\
    \step
    \vinchC \mapsto
    \bigl(C_{\qctrl, \inch \mapsto 1}{\lceil\sigma\rceil}_{\vinchB}\bigr)
    \ket\psi,\,
    \ctxt
    \left[\begin{matrix*}[l]
        (\outch_1 \assign \inchB_1); \cdots; (\outch_n \assign \inchB_n);
        \\
        \inch \otimes \val[\inchB_{n+1} \otimes \cdots \otimes \inchB_{m}]
      \end{matrix*}\right]
  \end{proofrules}
  \caption{Exhaustive rules for small step semantics for \QifUnitary{}.
  }
  \label{fig:procedural:ver2-ext:operational-semantics}
  \Description{}
\end{mathfig}
The full definition of operational semantics is presented in
\cref{fig:procedural:ver2-ext:operational-semantics}.
We have nine trivial rules of commuting conversions regarding $(\outch\assign\inch);\term$.
Note that, we do not allow to pull out $(\outch\assign\inch)$
from the branches of $\qif$ for free;
they need to be resolved in the last rule of $\qif$.
The first four non-trivial rules are for standard lambda calculus,
and the rest are alredy explained in \cref{sec:proc}.

\section{Theorems and Proofs}
\label{app:sec:theorems}

\begin{definition}
  We define the $\CPM$-semantics of a \QifUnitary{} term
  $\csem{\auxJQ{\Delta}{\term \colon A}{(\vinch, \voutch)}}$
  as
  \[
  \csem{\term} \ = \
    (\ident_{A} \otimes \discardDiag_{\voutch}) \circ
    \EmbeddingFunctor\bigl(\sem{\term}\bigr) \circ
    (\ident_{\Gamma} \otimes \ketbra{1}{1}_{\vinch})
    .
    \tag*{\thmend}
  \]
\end{definition}

\subsection{Theorems}
\label{app:sec:theorems-list}

\begin{theorem}[type safety of syntactic dilation]
  \label{thm:syn-dil:type-safety}
  Assume
  $\Gamma \SynDil \widehat\Gamma$,
  $\Delta \SynDil \Delta'$ and $A \SynDil A'$.
  Then,
  for any term $\termJG{\Delta}{\term \colon A}$ in \Qif{},
  there exists a \QifUnitary{} term
  $\term'$ and $(\vinch, \voutch)$ such that
  $\term \SynDil \term'$ and
  $\auxJQ{\Delta'}{\term' \colon A'}{(\vinch, \voutch)}$.
  \thmend
\end{theorem}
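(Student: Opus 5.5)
We argue by induction on the typing derivation of $\termJG{\Delta}{\term \colon A}$, and we prove a slightly stronger statement. The translated term $\term'$ may use any channel names we like, provided they are fresh, and the counts $(|\vinch|,|\voutch|)$ satisfy $|\vinch| - |\voutch| = |A| - |\Delta|$, where $|\Delta|$ is the sum of $|B|$ over the bindings $x\colon B$ in $\Delta$. This invariant is what lets the $\cif$ and global-constant rules of the dilation be applied. It holds trivially for $\JQ$-terms, whose dilation is the identity and uses no channels. For the base case $\JG = \JQ$, the derivation is embedded unchanged. We observe that every $\JQ$ rule has a $\QifUnitary$ counterpart with empty channel sets, and that $\Delta, A$ are purely quantum, so $A \SynDil A$ and the typing transfers directly. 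The embedding rule is then immediate.

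For the $\JC$ cases we follow the dilation rules one by one. For the constants, we take $\ket1 \SynDil \inch$ and $\btrue \SynDil \inch$, which are typed $\qbit$ with channels $(\inch,\cdot)$, so the difference is $1 = |\qbit| = |\bool|$. We take $\bnot \SynDil \gX$. For $\meas$, the dilation $\lambda x.\letx{y\otimes z}{\gCX(x\otimes\inch)}{(\outch\assign z; y)}$ checks as $\qbit\li\qbit$ with channels $(\inch,\outch)$, since $\bool \SynDil \qbit$. Variables, $\lambda$, application, $\otimes$, the two $\mathtt{let}$ forms, and sequencing are handled by the induction hypotheses on the premises. We choose fresh, pairwise-disjoint channel names for each premise, which the disjointness side condition requires. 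The invariant is additive in all of these rules. For $\letx{x}{\F}{\term}$, we use $\Gamma\SynDil\widehat\Gamma$, which records $\F\colon(A',n,m)$ with $|A| = n-m$. We instantiate $\F\langle\vinch,\voutch\rangle$ with fresh channels of the correct lengths and then apply the induction hypothesis to the body, whose context is extended by $x\colon A'$.

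The main obstacle is the $\cif$ case, because $\qif$ demands identical channel sets in both branches. By the induction hypothesis we obtain $\termB'_i$ with channels $(\vinch_i,\voutch_i)$, and the invariant gives $|\vinch_1|-|\voutch_1| = |\vinch_2|-|\voutch_2|$, both equal to $|A|-|\Delta'|$. We then pad each branch. For branch $1$, we take fresh $\vinchB_1$ to be copies of the names $\vinch_2$, and fresh $\voutchB_1$ to be copies of $\voutch_2$. We insert $(\voutchB_1 \assign \vinchB_1)$, which is a sequence of discards of new inputs, and we add extra pairs so that $|\vinchB_1| = |\voutchB_1|$. Concretely, we pad the shorter side with fresh input/output pairs, using the balance equation to see that the counts match. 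We pad branch $2$ symmetrically. Both branches then carry the union of channel sets, with the equalities $\vinch_1\vinchB_1 = \vinch_2\vinchB_2$ and $\voutch_1\voutchB_1 = \voutch_2\voutchB_2$, and the $\qif$ typing rule applies with result $\qbit\otimes A'$. The outer $\letx{x\otimes y}{\cdots}{(\outchC\assign x); y}$ then has type $A'$. The invariant holds because the guard has type $\bool$, which counts as $1$, and $\outchC$ discards exactly that one qubit.

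The only remaining point is the side condition $\FirstOrderType{A}$ for the branches. It is inherited from $\cif$ and preserved by $A \SynDil A[\qbit/\bool]$. Global definitions are handled by the same induction, applied to each closed body.
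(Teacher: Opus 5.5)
Your argument is correct and is essentially the paper's: the paper isolates your invariant as a standalone Balancing Lemma, $|\Delta| + |\vinch| = |A| + |\voutch|$ for every \QifUnitary{} derivation (proved by induction on \QifUnitary{} typing, using $|A| = n - m$ for global constants). It then uses the lemma exactly as you do, to pad the two branches of $\cif$ with shared discarded channels so that the $\qif$ rule applies. Folding the invariant into your induction hypothesis works just as well, but for \(\JQ\)-terms the equality $|A| = |\Delta|$ is not literally trivial: it needs the same small induction, including the $\qif$ and $\lambda$ cases, which the paper's lemma supplies uniformly.
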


\begin{theorem}[semantic preservation of syntactic dilation]
  \label{thm:syn-dil:sem-pres}
  For each $\CPM$-term $\term$ whose types of free-variables and whose own type are first-order,
  if $\term \SynDil \term'$ and $(\vinch, \voutch)$
  are the free i/o channels in $\term'$,
  \[
    \sem{\term} =
    r_A \circ \csem{\term'} \circ s_{\Gamma}
    \tag*{\thmend}
  \]
\end{theorem}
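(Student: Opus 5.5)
The plan is to prove a \emph{generalised} statement by induction on the typing derivation of $\termJC{\Delta}{\term \colon A}$, because intermediate subterms need not have first-order types.

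\textbf{Extending $s$ and $r$ to all types.} I would first lift the section/retraction pair to every type, compositionally:
\begin{itemize}
  \item $s_\bool = s_{(1,1)}$ and $r_\bool = m$;
  \item $s_X = r_X = \ident$ for $X \in \{\qbit, \unit\}$;
  \item $s_{A\otimes B} = s_A \otimes s_B$ and $r_{A \otimes B} = r_A \otimes r_B$;
  \item $s_{A \li B} = r_A^{*} \otimes s_B$ and $r_{A \li B} = s_A^{*} \otimes r_B$, with duals taken in the compact closed category $\CPM$.
\end{itemize}
Contexts are handled tensorwise. I would then prove the invariant
\[
\sem{\term}_{\eval} = r_A \circ \csem{\term'}_{\hat\eval} \circ s_\Delta ,
\]
where the valuations satisfy $\eval(\F) = r_B \circ \csem{\hat\eval(\F)} $ whenever $\F \colon B$. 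The theorem is the first-order instance: for first-order types the lifted maps coincide with the given $s$ and $r$, and $\Gamma$ is handled through the valuation clause.

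\textbf{Basic identities and routine cases.} The needed facts are:
\begin{itemize}
  \item $r_A \circ s_A = \ident$, which holds for every type by induction using $m \circ s_{(1,1)} = \ident$;
  \item $\EmbeddingFunctor$ strictly preserves the compact closed structure.
\end{itemize}
The second fact handles every $\Hilb$-term: such a term is left unchanged by dilation and has purely quantum types, where $s$ and $r$ are identities. Variables, $\lambda$, application, $\otimes$ and the two $\Let$ forms then follow by functoriality of $\otimes$ and naturality of the compact structure. I would do the diagram chase in string diagrams, exactly as in \cref{fig:string-diag-of-syn-dilation}. The one extra fact used is that $s_A \circ r_A$ can be inserted between composed subterms, since \emph{both} sides of every boundary carry the same lifted maps.

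For $\letx{x}{\F}{\term}$, the valuation clause gives the case directly. Channel bookkeeping is a matter of rearranging tensor factors and using that disjoint channel sets contribute independent $\ketbra11$ preparations and discards.

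\textbf{Constants.} These reduce to small matrix checks:
\begin{itemize}
  \item $\ket1 \mapsto \inch$ and $\btrue \mapsto \inch$: preparing $\ketbra11$ on the input channel gives $\ketbra11$, and $r_\bool(\ketbra11) = \mathrm{inr}$.
  \item $\bnot \mapsto X$: $m \circ \EmbeddingFunctor(X) \circ s_{(1,1)} = \mathrm{swap}$.
  \item $\meas$: preparing $\ket1$, applying $\gCX$ and discarding the ancilla yields the dephasing channel, up to the bit flip from the ancilla being $\ket1$. Composing with $m$ gives the measurement map. Here I would check carefully that the boolean encoding of $\btrue/\bfalse$ matches the $\mathrm{inl}/\mathrm{inr}$ conventions, possibly absorbing an $X$.
\end{itemize}

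\textbf{Main obstacle: the $\cif$ case.} Classical branching becomes a $\qif$ on the boolean-qubit, and $\EmbeddingFunctor$ does \emph{not} preserve biproducts. I would argue as follows.
\begin{enumerate}
  \item The control qubit arrives through $s_\bool$, or as the output of a dilated term whose $\CPM$-meaning is block-diagonal after the final discard.
  \item The control qubit is output and assigned to an output channel, so it is traced out. This kills the off-diagonal blocks of the controlled operation, and the result is the convex sum of the two branch maps, i.e.\ $d_{\oplus_\CPM}^{-1}(\sem{\termB_2} \oplus \sem{\termB_1})d_{\oplus_\CPM}$ followed by discarding, matching the semantics of $\cif$.
  \item The padding assignments $\voutchB_i \assign \vinchB_i$ each contribute $\trace\ketbra11 = 1$, so they do not rescale either branch.
  \item The requirement that both branches share the same channel set ensures that each branch's unused channels are exactly the padded ones.
\end{enumerate}
The delicate point is showing that, after tracing out the control and composing with $r_A$, only the diagonal blocks survive even when branch outputs contain booleans encoded as qubits. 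I would first try to establish this as a lemma about $\EmbeddingFunctor$ of a controlled map followed by a discard on the control.
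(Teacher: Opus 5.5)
\textbf{There is a genuine gap.} Your induction hypothesis $\sem{M}=r_A\circ\csem{M'}\circ s_\Delta$ is not preserved by the cases you treat as routine.

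For let-binding, application and tensor elimination, the hypotheses only give $r_B\circ\csem{N'}\circ(s_A\circ r_A)\circ\csem{M'}\circ s_\Delta$. To reach $r_B\circ\csem{N'}\circ\csem{M'}\circ s_\Delta$ you must delete the middle $s_A\circ r_A$.
\begin{itemize}
\item Only $r_A\circ s_A=\ident$ holds. The other composite $s_A\circ r_A$ is not the identity: for $A=\bool$ it is $s_{(1,1)}\circ m$, the dephasing channel.
\item The hypothesis on $N$ only constrains $\csem{N'}$ precomposed with $s_A$, and the hypothesis on $M$ only constrains $r_A\circ\csem{M'}$. So the deletion does not follow.
\end{itemize}
Your remark that ``both sides of every boundary carry the same lifted maps'' is exactly the unproved claim.

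What would license the deletion is the stronger statement that dilated terms send dephased inputs to dephased outputs, $\csem{M'}\circ s_\Delta=s_A\circ\sem{M}$. This is the block-diagonality you appeal to in item~1 of your $\cif$ case, but it is not part of your hypothesis. Moreover, it is false at higher types with your choice $s_{A\li B}=r_A^*\otimes s_B$. The term $\lambda x.x\colon\bool\li\bool$ dilates to the identity channel on a qubit, whereas $s_{\bool\li\bool}$ applied to $\sem{\lambda x.x}$ is $s_{(1,1)}\circ m$. So no single section/retraction pair per type yields an inductive invariant: the weak form is not compositional, and the strong form fails at function types.

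\textbf{The missing idea} is a logical relation rather than a function. At first-order types, relate $x$ to $s_A\circ x$. At $A\li B$, relate $(g,g')$ when $g'$ sends related arguments to related results; for first-order $A,B$ this is $g'\circ s_A=s_B\circ g$, which does not determine $g'$ from $g$.

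The paper packages exactly this as the Hyland--Schalk tight orthogonality category $T(\mathcal{A})$ over pairs of $\CPM$-maps, with focus $\{(r,r)\mid r\in\Real_{\ge0}\}$.
\begin{itemize}
\item It is $*$-autonomous, so all the linear $\lambda$-calculus cases come for free.
\item At first-order types its closed sets are precisely $\{(x,s_A\circ x)\}$.
\item \Qif{} is interpreted there, with second component $s_{(1,1)}\circ m$ for $\meas$. One shows by induction that the second component equals $\csem{M'}$.
\item The theorem then follows from $r_A\circ s_A=\ident$.
\end{itemize}

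Your designated main obstacle, the $\cif$ case, is comparatively mild. The control qubit of the $\qif$ produced from $\cif$ is assigned to an output channel, so the discard removes the off-diagonal blocks whatever the input. The two branches then combine by linearity.
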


\begin{theorem}[Type safety]
  \label{thm:opsem:type-sefety}
  If $C$ is a configuration and $C \step_\Defs C'$,
  then $C'$ is also a configuration of the same type.
  \thmend
\end{theorem}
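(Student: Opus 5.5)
We prove type preservation (subject reduction) by case analysis on the reduction rule applied to a configuration $(\vinch \mapsto \ket\psi, \term)$ with $\auxJQ{\EmpEnv}{\term\colon A}{(\vinchB,\voutch)}$ and $\vinchB\subseteq\vinch$. First we establish the usual auxiliary lemmas for \QifUnitary{}, each by induction on typing derivations.
\begin{itemize}
  \item \emph{Substitution.} If $\auxJQ{\Delta,x\colon B}{\term\colon A}{(\vinch_1,\voutch_1)}$ and $\auxJQ{\EmpEnv}{\val\colon B}{(\vinch_2,\voutch_2)}$ with disjoint channel sets, then $\auxJQ{\Delta}{\term[\val/x]\colon A}{(\vinch_1\vinch_2,\voutch_1\voutch_2)}$.
  \item \emph{Channel renaming.} A derivation is stable under injective renaming of input and output channels.
  \item \emph{Context decomposition.} If $\ctxt[\termB]$ is well typed, then $\termB$ is typed with some channel sets $(\vinch',\voutch')$. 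Moreover, any $\termB'$ of the same type with the same channel sets (up to a permutation within them) can be plugged back, giving the same overall type and channels.
\end{itemize}
The decomposition lemma is proved by induction on $\ctxt$. The only non-standard cases are the two $\qif$-contexts $\ctxt[][\qctrl,\alpha\mapsto b]$, where we use that both branches carry identical channel sets.

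With these lemmas, most cases are direct. The $\beta$-rule and the two let-rules follow from substitution. For $\unitval;\term$, the unit value uses no channels. For the global-call rule we use renaming: the definition $\globaldef \F\langle\vinch,\voutch\rangle=\term_\F$ is closed with channels exactly $(\vinch,\voutch)$ and type $A$. After renaming to the fresh $(\vinchB',\voutchB')$ demanded by the call site it has the same type and channels as $\F\langle\vinchB',\voutchB'\rangle$. For the unitary rule, $U(\inch_1\otimes\dots\otimes\inch_n)$ and $\inch_1\otimes\dots\otimes\inch_n$ both have type $\qbit^{\otimes n}$ and channels $(\inch_1\dots\inch_n,\cdot)$. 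The state is updated by a unitary acting on qubits in $\vinch$, so $\vinchB\subseteq\vinch$ still holds.

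The commuting conversions only reassociate $(\outch\assign\inch)$ outward. Each preserves the multiset of channels, and the rule for $(\outch\assign{-});{-}$ fixes the type. The restriction of $\outch\assign{-}$ to sequencing positions is respected, because we always produce $(\outch\assign\inch);N$.

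The main obstacle is the $\qif$-elimination rule. Both branches are normal forms with the same channel sets: output channels $\outch_1,\dots,\outch_n$ and inputs $\inchB_1,\dots,\inchB_m$, permuted by $\sigma$ in the then-branch. The value $\val[\dots]$ has first-order type $A$. The result $(\vec\outch\assign\vec\inchB_{1..n});\inch\otimes\val[\inchB_{n+1}\otimes\dots]$ must have type $\qbit\otimes A$ with channels $(\inch\vec\inchB,\vec\outch)$. This matches the conclusion of the $\qif$ rule, since the control term $\inch$ contributes channels $(\inch,\cdot)$.

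The subtle point is that the typing of $\val[\inchB_{\sigma(n+1)},\dots]$ versus $\val[\inchB_{n+1},\dots]$ must coincide. We argue that the same value skeleton $\val$ appears in both branches. We justify this by a small lemma: closed first-order values of a given type $A$ are determined up to channel names by $A$, since they are built from $\inch$, $\unitval$ and $\otimes$ only. Hence both branches consist of the same skeleton over permuted channels, and retyping with the unpermuted channels is an instance of renaming.

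The state is transformed by the controlled permutation on $\vinchB\subseteq\vinchC$, so the configuration invariant persists. Finally, since \emph{the same type} includes the channel sets, we note that every rule preserves the pair $(\vinchB,\voutch)$ as sets.
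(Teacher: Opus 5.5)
Your proposal is correct and follows the paper's route. You reduce to the empty evaluation context via the typing-in-evaluation-context and plug-back lemmas, and you get $\beta$ and the let rules from substitution. The paper notes that the only interesting case of substitution is a variable occurring in both $\qif$ branches, which is harmless because the substituted value's channels then appear identically in both. As in the paper, you treat commuting conversions as trivial and check the global-call and $\qif$-elimination rules directly. Your value-skeleton lemma is more than preservation needs: the reduct of the $\qif$ rule reuses the else-branch value verbatim and inherits its typing, so that lemma is really a progress fact.
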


\begin{theorem}[Progress]
  \label{thm:opsem:progress}
  For each configuration $C \defeq (\vinch \mapsto \ket\psi, \term)$
  such that $\auxJQ{\EmpEnv}{\term\colon A}{(\vinchB, \voutch)}$
  and definitions $\Defs \colon \Gamma$,
  there exists a configuration $C'$ such that
  $C \step_\Defs C'$,
  or $\term$ is of the form
  $(\voutch \assign \vinch); \val$.
  \thmend
\end{theorem}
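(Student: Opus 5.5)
The proof is by induction on the typing derivation $\auxJQ{\EmpEnv}{\term\colon A}{(\vinchB, \voutch)}$, strengthened to open terms under evaluation contexts. We first prove an auxiliary characterisation of values: if $\auxJQ{\EmpEnv}{\val\colon A}{(\vinchB,\voutch)}$ with $\val$ a value, then $\voutch$ is empty. If moreover $A$ is first order, then $\val$ is built only from input channels, $\unitval$ and $\otimes$; at type $\qbit$ this forces $\val=\inch$. This is a routine inversion on the value grammar. We then prove the following statement for every closed well-typed term $\term$: either $\term$ has the normal form $(\voutch\assign\vinch);\val$, or $\term = \ctxt[R]$ for some evaluation context $\ctxt$ and some redex $R$ matching the left-hand side of a rule in \cref{fig:procedural:ver2-ext:operational-semantics}. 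Progress follows, since every rule applies to any state $\ket\psi$ once the channels of $R$ lie in $\vinch$, which holds because $\vinchB\subseteq\vinch$.

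The inductive cases follow the term constructors. Channels $\inch$, $\unitval$, $\lambda x.\term$ and $U$ are values, hence normal forms with empty $\voutch$. A reference $\F\langle\vinchB,\voutchB\rangle$ is always a redex, since $\Defs\colon\Gamma$ supplies the definition. For $\term\,\termB$ we apply the induction hypothesis to $\term$ under context $\ctxt\,\termB$. If $\term$ is not normal we are done. If it is normal with a nonempty assignment prefix, a commuting conversion fires. Otherwise $\term$ is a value, and we move to $\termB$ under context $\val\,\ctxt$ and argue the same way. When both are values, canonical forms at arrow type give either $\lambda x.\term'$, which yields a $\beta$-redex, or a unitary $U$. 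In the latter case the argument has type $\qbit^{\otimes n}$, so by the value lemma it is $\inch_1\otimes\cdots\otimes\inch_n$ and the unitary rule applies. Sequencing, tensor, both $\mathsf{let}$ forms and $(\outch\assign\term);\termB$ are handled identically. For the last of these, once $\term$ is a value of type $\qbit$ it equals some $\inch$, and we continue into $\termB$ through the context $(\outch\assign\inch);\ctxt$. If $\termB$ is also normal, the whole term is normal.

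The main obstacle is the $\qif$ case, $\qifx{\term}{\termB_1}{\termB_2}$. First we reduce the condition; any prefix assignment is pulled out by the commuting conversion, and otherwise the condition becomes a value $\inch$. Next we apply the induction hypothesis to $\termB_1$ under $\ctxt[][\qctrl,\inch\mapsto1]$, and, once $\termB_1$ is normal, to $\termB_2$ under $\ctxt[][\qctrl,\inch\mapsto0]$. The context grammar allows exactly this order. The delicate step is showing that when both branches are normal, the last rule of \cref{fig:procedural:ver2-ext:operational-semantics} really applies. Both branches are typed with the same channel sets $(\vinchB,\voutchB)$, and each has the form $(\voutchB\assign\cdots);\val_i$ with $\val_i$ of first-order type. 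By the value lemma, each $\val_i$ consists only of input channels, $\unitval$ and $\otimes$. Since $\val_1$ and $\val_2$ have the same type $A$, they share the same tensor shape; only the channel leaves may differ. Linearity of channels means each input channel of $\vinchB$ occurs exactly once in each branch, either in an assignment or as a leaf. Hence the two branches differ by a permutation $\sigma$ of $\vinchB$. After reordering the assignments, which the rule permits by its identification convention, the configuration matches the rule's left-hand side, and the step exists.

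To make the shape argument rigorous I would state a small lemma: first-order values of the same type are instances of a common template $\val[\cdot]$ with channel holes. A second lemma gives linear use of channels, namely that every typed term mentions each recorded channel exactly once. Both follow by induction on the typing rules, using the disjointness side conditions of \cref{fig:procedural:ver2-ext:typing-rules}.
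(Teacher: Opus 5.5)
Your plan matches the paper's proof. You do a case analysis on the (unique) derivation, use canonical forms at first-order types, and float assignment prefixes outward with commuting conversions. For $\qif$, you observe that both normalised branches carry the same channel sets and first-order values of the same shape, so they differ by a permutation that the last rule absorbs. Your $\qif$ case follows the context grammar exactly: condition first, then the then-branch under the $\inch\mapsto 1$ context, then the else-branch under the $\inch\mapsto 0$ context. You also cover $\F\langle\vinchB,\voutchB\rangle$, which the paper leaves implicit.

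One step fails as written. In the application case you evaluate the function first, under $\ctxt\,\termB$ with $\termB$ arbitrary. The grammar in \cref{fig:procedural:qif-unitary:ev-ctxt} only has $\term\,\ctxt$ and $\ctxt\,\val$, so evaluation is argument-first: the hole may sit in function position only once the argument is a value. Hence when $\term$ needs a genuine step and $\termB$ is not yet a value, your decomposition of $\term\,\termB$ does not use an evaluation context. The fix is to swap the order, as the paper does:
\begin{itemize}
\item apply the hypothesis to $\termB$ under $\term\,\ctxt$;
\item if $\termB$ is normal with a nonempty prefix, fire $\ctxt[\term\,((\outch \assign \inch); \termB)] \step \ctxt[(\outch \assign \inch); (\term\,\termB)]$;
\item once $\termB$ is a value $\valB$, treat $\term$ under $\ctxt\,\valB$, using the conversion for $((\outch \assign \inch); \term)\,\termB$ if needed, and finish with $\beta$ or the unitary rule as you do.
\end{itemize}

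Separately, the first clause of your value lemma (values have empty output channels) is false. The $\lambda$ rule passes the body's channels through; for example, the dilation of $\meas$ is a closed value typed with $(\inch,\outch)$. This is harmless, since you only use the lemma at first-order types, where it holds. It does mean the $\voutch$ in the conclusion $(\voutch\assign\vinch);\val$ must be read as an arbitrary assignment prefix, not as the output channels of the typing judgement.
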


\begin{theorem}[Termination]
  \label{thm:opsem:termination}
  There is no infinite sequence of $\step_\Defs$ reductions,
  and it terminates with the normal form
  $(\voutch \assign \vinch); \val$.
  \thmend
\end{theorem}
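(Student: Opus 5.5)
We prove termination by a well-founded measure that strictly decreases along every reduction step; we expect the bulk of the work to be finding the right measure. Our first step is to separate out global calls. Since definitions \(\widehat\Defs\) are well-typed in sequence, \(\vdash (\Defs, \globaldef \F\langle\vinch,\voutch\rangle = \term)\), each body \(\term_\F\) mentions only constants defined earlier. We therefore assign each constant a level (its position in \(\widehat\Defs\)). The first component of the measure is the multiset of levels of occurrences of constants \(\F\langle\cdot,\cdot\rangle\) in the term, under the multiset ordering. Unfolding a call replaces one occurrence of level \(k\) by finitely many occurrences of strictly smaller levels, so this component strictly decreases. No other rule creates constant occurrences: substitution \(\term[\val/x]\) copies nothing, because the type system is linear and each variable is used exactly once, or, in the two branches of \(\qif\), once per branch with matching contexts.

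For the remaining rules we use a lexicographic tuple \((\mu_1,\mu_2,\mu_3)\) after the constant component. Here \(\mu_1\) is the standard linear-lambda size: the number of term constructors other than channel assignments, counting the context branches of \(\qif\) appropriately. \(\mu_2\) is the number of \(\qif\) nodes. \(\mu_3\) measures how deep assignments \((\outch\assign\inch)\) sit: the sum over assignments of their depth, i.e.\ the number of evaluation-context frames above them.

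We then check each rule.
\begin{itemize}
\item \(\beta\)-reduction, \(\letx{}{}{}\)-elimination, \(\unitval;\term\), and the tensor-pattern rule each strictly decrease \(\mu_1\). This uses linearity: substituting a value for a variable used exactly once removes the binder and adds no size (the value replaces the single variable occurrence). Under \(\qif\) a variable may occur once in each branch, so we count the size of a \(\qif\) as the maximum of its branch sizes plus the guard. The value is then duplicated textually but counted once.
\item The unitary rule \(\ctxt[U(\vinch)] \step \ctxt[\vinch]\) decreases \(\mu_1\) by removing the application and \(U\).
\item The final \(\qif\)-elimination rule removes a \(\qif\) node and at most one copy of the branch content, so \(\mu_1\) or \(\mu_2\) decreases.
\item Commuting conversions leave \(\mu_1\) and \(\mu_2\) unchanged and strictly decrease \(\mu_3\), since each moves an assignment one frame outward.
\end{itemize}
Since all components are natural numbers or finite multisets over a well-order, the lexicographic order is well-founded, and no infinite reduction sequence exists.

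Finally, when no step applies, Progress (\cref{thm:opsem:progress}) says the term has the form \((\voutch\assign\vinch);\val\). Type safety (\cref{thm:opsem:type-sefety}) guarantees that each intermediate configuration is well typed, so Progress applies at every step. Hence every maximal sequence ends in normal form.

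The main obstacle is making \(\mu_1\) behave correctly under \(\qif\). Evaluation proceeds inside both branches, and the last rule needs both branches to be normalised into channel-permutation form. We must check that the max-based size of \(\qif\) does not increase when a step occurs in one branch only: the step in that branch lowers its size while the other branch is unchanged, so the maximum weakly decreases. In the case where the maximum stays fixed, the decrease has to be picked up elsewhere. To handle this we would refine \(\mu_1\) to a multiset of branch sizes rather than a maximum, which decreases strictly whenever either branch reduces.
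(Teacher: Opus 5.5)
There is a genuine gap, and it sits in the most significant component of your measure. The claim that substitution $\term[\val/x]$ ``copies nothing'' is false in \QifUnitary{}. The $\qif$ rule types both branches in the \emph{same} context $\Delta'$, so a variable bound outside a $\qif$ occurs once in each branch. A $\beta$-step therefore places two textual copies of $\val$ in the term. You concede this yourself for $\mu_1$ (``duplicated textually but counted once''), and the paper's substitution lemma singles it out. Because evaluation never goes under $\lambda$, a value can carry unexpanded calls, for example
\[
(\lambda f.\,\qifx{\inch_1}{f\,\inch_2}{f\,\inch_2})\,\val \;\step\; \qifx{\inch_1}{\val\,\inch_2}{\val\,\inch_2}, \qquad \val = \lambda y.\,\F\langle\inch_3,\outch\rangle\,y .
\]
This term is well typed: both branches use the channels $\inch_2\inch_3$ and $\outch$. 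The step turns one occurrence of $\F$ into two, so your multiset of constant levels strictly increases under $\beta$. The lexicographic argument therefore fails at its first coordinate, and nothing placed after that coordinate can compensate. Your second coordinate is also unresolved. With the max-based size, a step confined to one branch need not decrease $\mu_1$. The proposed ``multiset of branch sizes'' is not defined precisely enough to check, given nested $\qif$s and values containing $\qif$s that get duplicated into both branches.

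Your idea can be repaired by measuring \emph{slices}: resolve every $\qif$ to one of its branches.
\begin{itemize}
\item By linearity, each variable occurs exactly once per slice.
\item A $\beta$/let step then induces a bijection between slices (a slice of the body paired with a slice of $\val$) that shrinks each one.
\item Unfolding $\F$ replaces each slice through it by finitely many slices with a strictly smaller constant-level multiset.
\item $\qif$-elimination replaces two slices by one smaller one.
\end{itemize}
The measure is the multiset, over slices, of per-slice tuples (constant levels, size with $\qif$ weighted above $\otimes$, assignment measure), ordered by the multiset extension of the lexicographic order.

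Your $\mu_3$ also needs adjusting. In $E[\term\,((\outch\assign\inch);\termB)] \step E[(\outch\assign\inch);(\term\,\termB)]$, the function part $\term$ is unevaluated and may contain several assignments, each pushed one level deeper. A plain sum of depths can therefore grow. Count only enclosing constructors other than the continuation position of an assignment.

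The paper avoids all this bookkeeping with a Tait-style logical relation $\sn{A}$, proved by induction on $\Defs$ and the typing derivation. There, duplicating a substituted value across branches is harmless, because the induction hypothesis quantifies over all substitutions of reducible values rather than tracking term size.
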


\begin{theorem}[Semantic preservation]
  \label{thm:opsem:sem-pres}
  If $(\ket\psi, \term) \step_\Defs (\ket\phi, \termB)$,
  then
  $\sem{\term}_{\eval}\ket\psi = \sem{\termB}_{\eval}\ket\phi$
  where $\eval(\F) \defeq \sem{\term_\F}$
  for each
  $(\globaldef \F\langle\vinch,\voutch\rangle \colon \term_\F) \in \Defs$.
  \thmend
\end{theorem}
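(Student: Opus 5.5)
We would prove the statement by case analysis on the reduction rule applied, after setting up three auxiliary lemmas that make the evaluation-context structure explicit. Throughout, we read a configuration $(\vinchC \mapsto \ket\psi, \term)$ semantically as the vector $\sem{\term}_\eval\ket\psi$. This means feeding the input channels of $\term$ from the corresponding tensor factors of $\ket\psi$, and passing the unused channels of $\vinchC$ through as identities, reordered by the appropriate symmetry.

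The first lemma is a \emph{substitution lemma}: $\sem{\term[\val/x]}_\eval = \sem{\term}_\eval \circ (\ident \otimes \sem{\val}_\eval)$, suitably braided. It is proved by induction on $\term$ in the usual way for a linear $\lambda$-calculus in a compact closed category. It immediately disposes of the $\beta$-, $\mathtt{let}$- and tensor-$\mathtt{let}$ rules, and also of $\unitval;\term \step \term$. The second is a \emph{renaming lemma} for channels: $\sem{\term[\vinchB/\vinch,\voutchB/\voutch]}_\eval$ differs from $\sem{\term}_\eval$ only by reindexing tensor factors. Combined with the hypothesis $\eval(\F) = \sem{\term_\F}$, it handles the global-call rule. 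The nine commuting conversions are handled by direct calculation. Each one moves an assignment $(\outch \assign \inch)$ outward, and its two sides differ only by a symmetry isomorphism moving the $\outch$ wire past other wires, so they agree by coherence of the symmetric monoidal structure.

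The third and central lemma is a \emph{context decomposition lemma}. For an evaluation context $\ctxt[][\qctrl]$ and a redex $\term$ whose semantics changes from $f$ to $g\circ(\text{state update})$, we would show
\[
  \sem{\ctxt[\term][\qctrl]}_\eval \ket\psi \;=\; \sem{\ctxt[\term'][\qctrl]}_\eval \, (C_\qctrl V)\ket\psi
\]
whenever $\sem{\term}_\eval = \sem{\term'}_\eval \circ V$ on the relevant channels. The proof is by induction on $\ctxt$. For the non-branching contexts ($\ctxt;\term$, $\term\,\ctxt$, $\mathtt{let}$, \dots) this is functoriality of $\otimes$ and composition, with $\qctrl$ unchanged. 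For $\qifx{\inch}{\ctxt}{\termB}$ with $\inch\mapsto 1$, we unfold the semantics of $\qif$ via the biproduct and distributivity in $\Hilb$: the whole term equals $\ketbra00_\inch\otimes\sem{\termB} + \ketbra11_\inch\otimes\sem{\ctxt[\term]}$. Since $\inch$ is itself an input channel fed from $\ket\psi$, replacing $\sem{\ctxt[\term]}$ by $\sem{\ctxt[\term']}\circ V$ in the $\ket1$ summand is the same as precomposing the whole map with $\ketbra00\otimes\ident + \ketbra11\otimes V$, which is the controlled operator $C_{\inch\mapsto1}V$. The $\inch \mapsto 0$ case is symmetric, and nesting composes the controls, yielding $C_\qctrl$. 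This is exactly where the biproduct living in $\Hilb$, and not in $\CPM$, is used. The unitary rule then follows with $V = U$ and $\term' = \inch_1\otimes\cdots\otimes\inch_n$, using $\sem{U(\inch_1\otimes\cdots)} = U$ on those channels.

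The main obstacle is the $\qif$-elimination rule. Here the two branches are normal forms $(\vec\outch\assign\inchB_{\sigma(\cdot)});\val[\dots]$ and $(\vec\outch\assign\inchB_{(\cdot)});\val[\dots]$, which have the same shape up to the permutation $\sigma$ of the input channels $\vinchB$. We would compute directly that the then-branch denotes $\sem{\text{else-branch}}\circ\lceil\sigma\rceil_{\vinchB}$, by coherence for symmetries, since both are built only from wiring. The $\qif$ then denotes $(\ident\otimes\sem{\text{else}})\circ C_{\inch\mapsto1}\lceil\sigma\rceil$. This coincides with $\sem{(\vec\outch\assign\vinchB);\inch\otimes\val[\dots]}$ precomposed with that controlled permutation, because a $\qif$ whose branches are equal denotes $\ident_{\qbit}\otimes(\text{branch})$ by the biproduct identity $d^{-1}(h\oplus h)d = \ident\otimes h$. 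Applying the context decomposition lemma then lifts this identity through the surrounding $\ctxt[][\qctrl]$ and yields the composite control $C_{\qctrl,\inch\mapsto1}$. The delicate bookkeeping is keeping track of which tensor factors of $\ket\psi$ correspond to $\vinchB$ versus the remaining channels in $\vinchC$, so we would fix a canonical ordering convention at the outset and state every lemma relative to it.
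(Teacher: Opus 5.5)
Your proposal is correct and follows essentially the same route as the paper: a substitution lemma for the $\beta$/let rules, and coherence and renaming arguments for the commuting conversions and global calls. Beyond that, you use an induction on evaluation contexts in which the quantum-conditional case unfolds its semantics as the sum of the two projected branches, so that the state update becomes a controlled one. The base conditional-elimination case is then computed as a controlled permutation. The only difference is organisational. You package the context induction once as a lemma parameterised by an arbitrary state update $V$, whereas the paper repeats the induction for the unitary and the conditional rules and does the base case by an explicit computation on $\ket{0}\ket{\psi_0}+\ket{1}\ket{\psi_1}$.
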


\begin{corollary}[Adequacy]
  \label{cor:adequacy}
  For any \Qif{} function definitions $\Defs\colon\Gamma$ and any term
  $\termJC{\EmpEnv}{\term:A}$ such that
  $\FirstOrderType{A}$,
  the term $\term$ evaluates to $\sem{\term}_{\sem{\Defs}}$.
  That is,
  \begin{itemize}
    \item there exist \QifUnitary{}
          definitions $\widehat\Defs$,
          non-linear context $\widehat\Gamma$,
          and a term $\term'$
          such that
          $\Defs \SynDil \widehat\Defs$,
          $\Gamma \SynDil \widehat\Gamma$, and
          $\term \SynDil \term'$,
    \item they admit type derivations
          $\widehat\Defs \colon \widehat\Gamma$
          and
          $\auxJQ{\EmpEnv}{\term'\colon A}{(\vinch, \voutch)}$,
    \item the term $\term'$ evaluates to $\ket\varphi$ as
          $(\vinch \mapsto \ket{1 \dots 1}, \term) \step^*_{\widehat\Defs}
            \bigl(
            \vinch \mapsto \ket\varphi
            ,
            (\outch_1\assign\inch_{\sigma(1)});
            \dots
            (\outch_m\assign\inch_{\sigma(m)});
            \val[\inch_{\sigma(m+1)}, \dots, \inch_{\sigma(n)}]
            \bigr)
          $, and
    \item by reordering the qubits and by tracing out the auxiliary qubits,
          we obtain the resulting state:
          \begin{align*}
            \trace_{{(\CC^2)}^{\otimes m}}(
            \ceil{\sigma}\ketbra\varphi\varphi\ceil{\sigma^{-1}}
            )
            \quad=\quad
            s_A \circ \sem{\term}_{\sem{\Defs}}.
            \tag*{\thmend}
          \end{align*}
  \end{itemize}
\end{corollary}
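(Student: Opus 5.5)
The adequacy statement follows by composing results already stated: syntactic dilation, type safety, progress, termination, and semantic preservation of the operational semantics. The plan has three steps.

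\emph{Step 1: build the dilated program.} Build \( \widehat\Defs \) and \( \widehat\Gamma \) by induction on \( \Defs \colon \Gamma \), using the dilation rules for global definitions and non-linear contexts (\cref{app:fig:procedural:ver2:syntactic-dilation}). Obtain \( \term' \) together with its channels \( (\vinch,\voutch) \) from \cref{thm:syn-dil:type-safety}. Since \( \term \) is closed with \( \FirstOrderType{A} \), \cref{thm:syn-dil:sem-pres} applies with an empty linear context. It gives \( \sem{\term}_{\sem{\Defs}} = r_A \circ \csem{\term'}_{\sem{\widehat\Defs}} \), where each \( \sem{\widehat\Defs}(\F) \) is the Stinespring-style isometry whose compression \( \csem{-} \) is \( \sem{\Defs}(\F) \). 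Checking that these valuations match is an inner induction over the definitions, using \cref{thm:syn-dil:sem-pres} for each body.

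\emph{Step 2: run the configuration.} Start from \( (\vinch \mapsto \ket{1\dots1}, \term') \). By \cref{thm:opsem:type-sefety,thm:opsem:progress,thm:opsem:termination} we reach a normal form \( (\voutch\assign\vinch'); \val \) of type \( A \) after finitely many steps. Here \( \val \) is a first-order value, so it is a tensor of input channels \( \inch_{\sigma(m+1)},\dots,\inch_{\sigma(n)} \) and units. Iterating \cref{thm:opsem:sem-pres} along the reduction sequence gives
\[ \sem{\term'}\ket{1\dots1} = \sem{(\voutch\assign\vinch');\val}\ket{\varphi}. \]

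\emph{Step 3: compute the semantics of the normal form.} By the semantics of input channels and assignments (\cref{fig:procedural:ver2-ext:categorical-semantics}), \( \sem{(\voutch\assign\vinch');\val} \) is exactly the permutation \( \ceil{\sigma} \). It routes the qubits \( \inch_{\sigma(1)},\dots,\inch_{\sigma(m)} \) to the output channels and the rest to the result type. Applying \( \EmbeddingFunctor \) and discarding the output channels, \( \csem{\term'} \) therefore equals \( \trace_{(\CC^2)^{\otimes m}}(\ceil{\sigma}\ketbra\varphi\varphi\ceil{\sigma^{-1}}) \), viewed as a state on \( \sem{A[\qbit/\bool]} \).

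\emph{Main obstacle.} The remaining point is to trade \( r_A \) for \( s_A \): the claim reads \( \csem{\term'} = s_A \circ \sem{\term} \), whereas Step 1 gives \( \sem{\term} = r_A\circ\csem{\term'} \). Since \( r_A s_A = \ident \), this direction does not follow formally. We need that \( \csem{\term'} \) already lies in the image of \( s_A \), i.e.\ that it is block-diagonal (decohered) in the qubits coming from \( \bool \). This is the step I expect to be hardest.

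To prove it, strengthen the induction in \cref{thm:syn-dil:sem-pres} to \( s_A\circ\sem{\term}\circ r_\Gamma = \csem{\term'}\circ s_\Gamma \), or at least to the invariant that every boolean-typed output of a dilated term is copied into a discarded output channel. The invariant holds because of the dilation rules. \( \meas \) dilates to a \( \gCX \) onto a fresh ancilla that is sent to an output channel. \( \btrue \) and \( \bnot \) act classically. \( \cif \) discards its control into \( \outchD \) after using it as a control. Discarding such a copy decoheres the corresponding qubit in the computational basis, which is exactly the image of \( s \). So I would first check this invariant case by case over the dilation rules, and then conclude \( \csem{\term'} = s_A\circ r_A\circ\csem{\term'} = s_A\circ\sem{\term} \).
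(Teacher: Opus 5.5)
Your skeleton matches the paper's. The corollary is obtained by composing the dilation theorem with type safety, progress, termination and \cref{thm:opsem:sem-pres}, and then reading the normal form $(\voutch\assign\vinch');\val$ as a permutation followed by a partial trace. You are also right that the stated form $\sem{\term}=r_A\circ\csem{\term'}\circ s_\Gamma$ is too weak, and that the real content needed is $\csem{\term'}=s_A\circ\sem{\term}$.

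The gap is in how you propose to get this. As written, $s_A\circ\sem{\term}\circ r_\Gamma$ and $\csem{\term'}\circ s_\Gamma$ do not have the same domain; presumably you mean $s_A\circ\sem{\term}=\csem{\term'}\circ s_\Gamma$. That equation holds at first-order types but fails at types with $\bool$ in negative position. So it cannot be carried as an induction hypothesis through $\lambda$-abstraction, application, let-bound functions or global definitions.

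Concretely, $\lambda b.\,b\colon\bool\li\bool$ dilates to $\lambda b.\,b\colon\qbit\li\qbit$. Then $\csem{-}$ of the latter is the maximally entangled name of the qubit identity, whereas $s_A\circ\sem{\lambda b.\,b}$ is the block-diagonal name of the completely dephasing channel. Their $r_A$-images agree, which is why the weak theorem is fine. A closed first-order term such as $(\lambda b.\,b)\,\btrue$ contains this subterm, so restricting the invariant to first-order types does not rescue the induction.

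The same problem affects your Step 1. For a definition of $\F$ as $\lambda b.\,b$, the compression of the dilated body and $\sem{\Defs}(\F)$ do not even live in the same object, and \cref{thm:syn-dil:sem-pres} does not cover such bodies.

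Your syntactic invariant also fails as stated. $\btrue\SynDil\inch$ and $x\SynDil x$ produce boolean outputs with no discarded copy. Their block-diagonality comes from the $\ket{1}$ initialisation or from the inputs, so it is an input-dependent semantic property, not a syntactic one.

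What is missing is a logical relation over all types, and this is exactly what the paper's proof of \cref{thm:syn-dil:sem-pres} uses. It is the tight focussed-orthogonality category $T(\mathcal{A})$ of Hyland and Schalk. Each type carries a closed set $c_A$ of pairs (CPM denotation, dilated denotation):
\begin{itemize}
\item $c_A=\{(x,s_A\circ x)\}$ for first-order $A$;
\item $c_A=\{(x,x)\}$ for purely quantum $A$;
\item at function types, $c_A$ is determined by orthogonality: related arguments are sent to related results.
\end{itemize}
Since $T(\mathcal{A})$ is $\ast$-autonomous and its forgetful functors to $\CPM$ preserve this structure, three things are handled uniformly: the linear $\lambda$-calculus layer, the embedded $\Hilb$-terms, and definitions of arbitrary type (via valuations in $T(\mathcal{A})(I,\sem{A})$). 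One only has to check that the primitives ($\meas$ as $(m,s_{(1,1)}\circ m)$, $\ket{1}$, $\btrue$, $\bnot$) and $\cif$ (via the $\Real_{\ge0}$-module lemma) are morphisms there; that is essentially your case list.

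The proof then shows that $\csem{\term'}$ equals the second component of the $T(\mathcal{A})$-denotation. For closed first-order $\term$ that pair lies in $c_A$, which gives $\csem{\term'}=s_A\circ\sem{\term}$ directly and closes your Step 3.
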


\begin{mathfig}
  \begin{proofrules}
    \textit{Terms}\hspace{1.5em}
    \term,\termB\ \Coloneqq\ \cdots \sor \abort

    \EmpEnv\vdash_\JC \abort \colon \unit

    \sem{\,\abort\,}\ =\ 0
  \end{proofrules}
  \caption{
    Additional rules for \Qif{} with divergence.
    (Left): new syntax,
    (Middle): new typing rule,
    (Right): its categorical semantics.
  }
  \label{app:fig:rules:qif-bra}
  \Description{}
\end{mathfig}
\begin{mathfig}
  \begin{proofrules}
    {\textit{Output test channels}}
    \quad
    {\testch, \testchB, \testchC, \dots}

    \textit{Terms}\hspace{1.5em}
    \term,\termB\ \Coloneqq\ \cdots
    \sor (\testch \assign \term); \termB

    \textit{Type Judgements}\hspace{1.5em}
    \auxJQ{\Delta}{\term \colon A}{(\vinch, \voutch, \vtestch)}

    \infer*[]{
      \auxJQ{\Delta}{\term \colon \qbit}{(\vinch, \voutch, \vtestchB)}
      \\
      \auxJQ{\Delta'}{\termB \colon A}{(\vinchB, \voutchB, \vtestchC)}
    }{
      \auxJQ{\Delta,\Delta'}{(\testch \assign \term); \termB \colon A}{(\vinch\vinchB, \voutch\voutchB, \testch\vtestchB\vtestchC)}
    }

    \abort \SynDil (\kappa \assign \alpha); \unitval

    \csem{\,\auxJQ{\Delta}{\term \colon A}{\vinch, \voutch, \vtestch}\,}
    \ =\
    \bigl(
    \ident_{A} \otimes \discardDiag_{\voutch} \otimes
    \EmbeddingFunctor\bra{0}_\vtestch
    \bigr) \circ
    \EmbeddingFunctor\sem{\term} \circ
    \bigl(\ident_{\Gamma} \otimes \EmbeddingFunctor\ket{1}_{\vinch}\bigr)
  \end{proofrules}
  \caption{
    Additional rules for \QifUnitary{} extended with output test channels.
  }
  \label{app:fig:rules:qifunitary-bra}
  \Description{}
\end{mathfig}
\begin{theorem}[Language with Divergence]
  \label{thm:extended-lang}
  In the language \Qif{}, we add $\abort \colon \unit$,
  which causes execution to diverge,
  as in \cref{app:fig:rules:qif-bra}.
  Also, in the language \QifUnitary{}, we add another distinct kind of
  output channel,
  which we call an \emph{output test channel},
  as in \cref{app:fig:rules:qifunitary-bra}.
  Its syntactic dilation is defined there as well.
  Their denotational and operational semantics are defined in the same way as
  those of ordinary output channels.
  The $\CPM$-semantics $\csem{-}$ is also defined in
  \cref{app:fig:rules:qifunitary-bra},
  where we apply $\bra{0}$ to the qubit outcomes
  from output test channels.
  Every theorem stated above still holds for these extended languages.
  \thmend
\end{theorem}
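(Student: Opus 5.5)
The statement to prove is \cref{thm:extended-lang}: every earlier result still holds once $\abort$ and output test channels are added. The plan is to re-examine each earlier theorem only at the new cases and to show that nothing else in the arguments changes. The extension is conservative. It adds one $\JC$ constant $\abort$, one \QifUnitary{} term former $(\testch \assign \term);\termB$, and a third component $\vtestch$ of channel annotations, which the typing rules treat exactly like $\voutch$: disjoint in all rules, identical in both branches of $\qif$. Every structural induction therefore gains one case, and the proofs split into a syntactic part (typing, reductions) and a semantic part (the $\CPM$-semantics with $\EmbeddingFunctor\bra{0}$ on test channels).

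\textbf{Syntactic part.} For type safety of syntactic dilation (\cref{thm:syn-dil:type-safety}), the new case is $\abort \SynDil (\kappa\assign\alpha);\unitval$. This is typed with annotation $(\alpha,\cdot,\kappa)$ and type $\unit$, and $|\unit| = 0 = 1-1$. For the $\cif$ rule, I will check that the channel-balancing side conditions count test channels together with output channels. I would generalise the padding rule so that a branch can pad with $\vtestch$ as well. This works because test channels are never padded: the branches must agree on their test channels, and $\abort$ in one branch is matched by padding the other branch with a fresh $\alpha$ assigned to a fresh $\kappa$. Here I must be careful, because padding with an ordinary output channel would give the wrong semantics. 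The cleanest fix is to let the padding in the other branch be $(\kappa\assign X\alpha)$, so that it contributes $\bra{0}X\ket{1}=1$. This is the one genuinely new design point.

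For \cref{thm:opsem:type-sefety,thm:opsem:progress,thm:opsem:termination}, I will add commuting conversions for $(\testch\assign\inch)$ mirroring those for $\outch$. The normal form becomes $(\voutch\assign\vinch);(\vtestch\assign\vinch');\val$, and the final $\qif$ rule treats test assignments as part of the permuted prefix. The termination measure is unchanged: the new assignments only move outward, so the same lexicographic measure still decreases.

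\textbf{Semantic part.} \Cref{thm:opsem:sem-pres} is checked in $\Hilb$ before any test projection is applied. Because $\sem{(\testch\assign\term);\termB}$ is defined exactly like $\sem{(\outch\assign\term);\termB}$, every reduction rule is verified by the same equalities. The $\qif$ permutation rule is the only place where test channels are permuted, and this is a controlled permutation acting uniformly on output and test wires. For \cref{thm:syn-dil:sem-pres}, the new base case is $\csem{(\kappa\assign\alpha);\unitval} = \EmbeddingFunctor(\bra{0}\ket{1}) = 0 = \sem{\abort}$. The inductive cases go through because $\EmbeddingFunctor\bra{0}_{\vtestch}$ is a morphism in the image of the strong compact-closed functor $\EmbeddingFunctor$. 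It therefore commutes past every structural map, just as $\discardDiag_{\voutch}$ did. For $\cif$, one needs $\discardDiag\circ\EmbeddingFunctor(X\ket{1}) = 1$ on the padded wire, which holds.

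Adequacy (\cref{cor:adequacy}) and full abstraction then follow formally from soundness and termination as before. The only change is that the final state is projected by $\bra{0}$ on test qubits, so the output is now subnormalised, matching $\sem{\abort}=0$. For full abstraction I would separate distinct first-order $\sem{\term}\ne\sem{\termB}$ by a context built from state preparations, measurements and $\abort$ in one $\cif$ branch, which realises the effect $\bra{b}$. Full completeness of such tests on $\CPM$ then reduces to the standard tomography argument.

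I expect the main obstacle to be the $\cif$ padding with test channels: arranging that $\abort$ occurring in only one branch is correctly balanced without the dilation accidentally zeroing or preserving the wrong branch's amplitude.
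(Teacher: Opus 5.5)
Your proposal is correct and takes essentially the paper's route. The paper's proof checks only the new base case $\abort \SynDil (\testch\assign\inch);\unitval$, typed with $(\inch,\cdot,\testch)$, where $\csem{(\testch\assign\inch);\unitval} = \EmbeddingFunctor\langle 0|1\rangle = 0 = \sem{\abort}$; it then says type safety, progress, termination, semantic preservation and adequacy carry over because test channels behave operationally exactly like output channels. Your requirement that the $\cif$ dilation pad the other branch's test channels as $(\testch\assign\gX\inch)$, so the padded wire contributes $\bra{0}\gX\ket{1}=1$ (the identity actually needed there, not the discard identity you also wrote), is genuinely necessary: the paper leaves it implicit in this proof, though it uses exactly this padding in its QuGCL translation. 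Full abstraction, by contrast, is not among the theorems this statement covers, and it does not follow merely from soundness and termination.
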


\begin{theorem}[Full abstraction of \Qif{} with divergence]
  \label{thm:CPM-full-abst}
  In \Qif{} extended with divergence,
  the denotational semantics is fully abstract,
  \ie,
  for each $\JG = \JQ$ or $\JC$,
  let
  $\termJG{\Delta}{\term \colon A}$
  and
  $\termJG{\Delta}{\termB \colon A}$.
  Then,
  $\sem{C[\term]} = \sem{C[\termB]}$ for any context $C[\cdot]$
  of $\unit$ type,
  if and only if
  $\sem{\term} = \sem{\termB}$.
  \thmend
\end{theorem}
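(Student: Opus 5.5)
The ``if'' direction is immediate from compositionality. The semantics of \cref{fig:procedural:categorical-semantics} is defined by structural induction, and every term constructor is interpreted as a composite of categorical structure applied to the interpretations of its subterms: tensor, $\Lambda$, $\ev$, the distributivity isomorphisms with biproducts, and the embedding $\EmbeddingFunctor$. So $\sem{C[\term]}$ depends on $\term$ only through $\sem{\term}$. This also covers contexts whose hole is in $\JQ$-mode and which are later promoted, because the embedding is functorial.

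For the ``only if'' direction, suppose $\sem{\term}\neq\sem{\termB}$. We reduce to closed first-order terms and then build a separating context.

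\textbf{Closing the terms.} Abstract all free variables, $\lambda\vec{x}.\term$. By compact closure, $\sem{\term}\neq\sem{\termB}$ iff $\sem{\lambda\vec x.\term}\neq\sem{\lambda\vec x.\termB}$. Both sides are now points $I\to\sem{\vec A\multimap A}$, and since $\sem{A\multimap B}=\sem{A}^*\otimes\sem{B}$, this is a state of a first-order object.

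\textbf{Case $\JQ$.} If the terms are $\JQ$-terms, the two semantics are linear maps $v\neq w$ in $\Hilb$. Promote them to $\CPM$, where they become $vv^\dagger$ and $ww^\dagger$. These may coincide if $w=e^{i\theta}v$. To separate a global phase, place the $\JQ$-term inside a $\qif$ in $\JQ$ against a fixed reference term of the same type. Then $\qifx{x}{\term}{N_0}$ with $x=\ket{+}$ gives interference $\ket0\otimes u+\ket1\otimes v$, which distinguishes $v$ from $e^{i\theta}v$ whenever $u$ is chosen with $\langle u|v\rangle\neq0$. Since first-order types are needed in $\qif$ branches, first apply the closed term to closed arguments to reach first-order results. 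Alternatively, and more simply, use $\eta$ to turn the point of $\sem{A}^*\otimes\sem{B}$ into a first-order state of qubit type. The reference state $u$ has to be implemented as a $\JQ$-term, which is possible with unitaries and, in $\JQ$, the unit $\unitval$. Qubit preparation $\ket1$ is only in $\JC$, but we may bind it outside via a $\JC$ let: $\letx{q}{\ket1}{[\ldots]}$.

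\textbf{Reduction to states in $\CPM$.} We are now reduced to two distinct states $\rho\neq\sigma$ in $\CPM$ of some object $\vec n$, the semantics of a first-order type which may include $\bool$. Apply the retraction-like term realizing a sequence of $\meas$ and basis changes (a tomographically complete family of unitaries $U$ followed by $\meas$ and $\cif$) to obtain a term of type $\bool$ whose probabilities differ for some choice of $U$. This uses that Pauli-basis measurements determine a density matrix, and that off-diagonal blocks between distinct $\bool$-summands vanish in $\CPM$, so componentwise tomography suffices.

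\textbf{Reaching type $\unit$.} Finally turn a $\bool$ with differing probabilities $p\neq q$ into type $\unit$ via $\cifx{b}{\unitval}{\abort}$. Its semantics is the scalar $p$ versus $q$, both in $\CPM(I,I)=\Real_{\ge0}$, so they differ. This is exactly where divergence is needed.

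The main obstacle is the $\JQ$ case: the $\CPM$-embedding forgets global phase, so the separating context has to use quantum control. The reference branch must have exactly the same type and be closed, and the control qubit must be prepared from $\JC$ and fed linearly into $\JQ$.
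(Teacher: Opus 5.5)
There is a genuine gap at ``Closing the terms''. After abstraction, $\lambda\vec x.\term$ has type $\vec A\multimap A$, which is in general higher-order. That its denotation is a point of an object which also interprets a first-order type is only a semantic remark. Your tomography context (basis changes, $\meas$, $\cif$, $\abort$) can only be applied to a term of first-order \emph{type}. So you need, for every type $A$, a term from $A$ to a first-order type whose denotation is monic. Neither of your suggestions supplies one once some argument type itself contains $\li$:
\begin{itemize}
\item Feeding half of a maximally entangled pair ($\eta$) only covers first-order argument types.
\item ``Applying to closed arguments'' needs a family of definable arguments whose denotations span the semantic object of the argument type, and that is exactly what has to be proved.
\end{itemize}
For instance, $\lambda f.\,f\,\ket1$ and $\lambda f.\,f\,(\gX\ket1)$ of type $(\qbit\li\unit)\li\unit$ have different denotations. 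Yet every closed $\abort$-free term of type $\qbit\li\unit$ denotes the trace, so no closed argument built without divergence separates them. Divergence is therefore needed here as well, not only in your final $\cifx{b}{\unitval}{\abort}$.

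The paper fills this gap by a mutual induction on types. It builds a term $y_A\colon A\vdash_{\JC}\termB_A\colon\qbit^{\otimes n_A}$ with monic denotation together with a term $x\colon\qbit^{\otimes m_A}\vdash_{\JC}\term_A\colon A$ with epic denotation. The function-type cases use terms $H_n$ and $E_n$ denoting positive multiples of the compact-closed unit and counit. $E_n$ is a postselected Bell measurement built from $\bra0\defeq\lambda x.\,\cifx{(\meas x)}{\abort}{\unitval}$. The construction also uses the fact that monos and epis in $\CPM$ are closed under $\otimes$. Once $\termB_A$ is available, testing with $E_n$ against $U\ket1^{\otimes n}$ is essentially your tomography step, which is otherwise fine.

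Your $\JQ$ case inherits this gap and leaves further work undone. You perform the reduction to a first-order state inside the quantum sublanguage, where $\ket1$, $\meas$ and $\abort$ are unavailable. Any counit would then have to be bound from $\JC$, and you would have to show the $\CPM$ denotation still keeps the relative phase. Also, both branches of a $\qif$ must consume the same linear context, so a ``closed'' reference does not fit once the reduced term mentions fresh qubits bound outside.

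The paper sidesteps all of this by using $\term$ itself as the reference. With $x\colon\qbit$ and $f\colon A\li\unit$ free, it compares $\qifx{x}{f\,\termB}{f\,\term}$ with $\qifx{x}{f\,\term}{f\,\term}$, whose branch type is $\unit$. If $\sem{\termB}_\Hilb=e^{i\theta}\sem{\term}_\Hilb$ with $\theta\neq0$, their $\Hilb$-denotations differ by the non-scalar factor $\mathrm{diag}(1,e^{i\theta})$ on $x$. Hence their $\EmbeddingFunctor$-images differ, and the already established $\JC$ case supplies the separating context.
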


\begin{theorem}[Full abstraction of \QifUnitary{} with output test channels]
  \label{thm:Hilb-full-abst}
  In \QifUnitary{} extended with output test channels,
  the denotational semantics is fully abstract,
  \ie,
  let
  $\auxJQ{\Delta}{\term \colon A}{(\vinch, \voutch, \vtestch)}$
  and
  $\auxJQ{\Delta}{\termB \colon A}{(\vinch, \voutch, \vtestch)}$
  Then,
  $\csem{C[\term]} = \csem{C[\termB]}$ for any context $C[\cdot]$ of
  $\unit$ type,
  if and only if
  $\csem{\term} = \csem{\termB}$.
  \thmend
\end{theorem}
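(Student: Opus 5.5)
The ``if'' direction is immediate: the semantics is compositional, so $\sem{\term}=\sem{\termB}$ gives $\sem{C[\term]}=\sem{C[\termB]}$ for every context $C$. This holds by induction on $C$, since each clause of \cref{fig:procedural:categorical-semantics} is a function of the denotations of the immediate subterms. For the ``only if'' direction I argue by contraposition. Assume $\sem{\term}\neq\sem{\termB}$ as morphisms $\sem{\Delta}\to\sem{A}$. I will build a context $C$ of type $\unit$ with $\sem{C[\term]}\neq\sem{C[\termB]}$, noting that a morphism $(1)\to(1)$ in $\CPM$ is a nonnegative real scalar.

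First I reduce to closed terms of type $\unit$. Currying $\Delta$ by $\lambda$-abstractions (available in both $\JQ$ and $\JC$) turns $\term,\termB$ into closed terms of type $\Delta\li A$. Their denotations are still distinct, because $\Lambda$ is a bijection. Using compact closure, a closed term of type $B$ is a state $I\to\sem{B}$, and it suffices to find a closed term $E\colon B\li\unit$ whose denotation separates the two states. The context is then $E\,[\cdot]$, placed inside the embedding rule if the hole is a $\JQ$-term.

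The core step is a definability lemma: the $\CPM$-semantics of closed $\JC$-terms of type $B\li\unit$ together with $\abort$ must separate the states of $\sem{B}$. For $\CPM$ the states of $(n_0,\dots,n_{k-1})$ are tuples of positive matrices, and effects separate them by the trace pairing, so the job is to show that enough effects are definable. My plan is to show that every effect of the form ``apply a unitary, then measure all qubits, then accept a fixed outcome'' is definable. Such an effect is the term $\lambda q.\,\letx{\vec b}{\overrightarrow{\meas}(U q)}{\cifx{\vec b}{\unitval}{\abort}}$ built with nested $\cif$. This is exactly where $\abort$ is needed: without it every definable effect is trace-preserving, so only the trace could be observed. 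The measure-then-accept effects $\ketbra{j}{j}U(-)U^\dagger$ span all Hermitian effects, so they separate positive matrices.

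For the higher-order parts of $B$, compact closure is the key tool, since $\sem{A\li A'}=\sem{A}^*\otimes\sem{A'}$. Testing a function-type state amounts to applying it to a prepared input (states $U\ket{1\ldots1}$ are definable, and so are classical booleans) and then testing the output. This tests the Choi-type data up to the entanglement issue, which I resolve by using the term's own arguments: I apply it to half of a maximally entangled pair prepared by a unitary and keep the other half to measure jointly. Booleans are handled by $\cif$ and $\bcopy$/$\bdiscard$. For a $\JQ$-judgment the relevant comparison is between $\EmbeddingFunctor\sem{\term}$ and $\EmbeddingFunctor\sem{\termB}$, and these can agree when the $\Hilb$ maps differ by a global phase. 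I therefore expect $\sem{\term}=\sem{\termB}$ for $\JQ$ must be read as equality after embedding, or else separation must be obtained by placing the hole under a $\qif$ whose other branch is a fixed reference term, so that the relative phase becomes observable. I would try the $\qif$ route: $\qifx{\ket{+}}{\term}{\term_0}$ turns a phase difference into a population difference after a Hadamard and measurement.

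The main obstacle is the definability step at higher types. There I must show that the finitely many separating tests can be expressed under the strict linearity of $\JQ$ and the first-order restriction on $\qif$ branches, and that the phase-sensitivity argument goes through for open $\JQ$-terms with function-typed free variables.
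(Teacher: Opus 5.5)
Your proposal proves the wrong theorem. Every ingredient of your separating contexts belongs to \Qif{}, i.e.\ to \cref{thm:CPM-full-abst}: the two modes $\JQ$/$\JC$ with the embedding rule, $\meas$, $\cif$, booleans, $\abort$, and the global-phase discrepancy between $\EmbeddingFunctor\sem{\term}$ and $\EmbeddingFunctor\sem{\termB}$. The statement here is about \QifUnitary{}. It has a single, measurement-free typing judgment whose terms carry input, output and test channels, and terms are observed through $\csem{-}$: input channels are fed $\ket1$, output channels are traced out, and test channels are projected onto $\bra0$. None of your test terms is a \QifUnitary{} context. The $\qif$/phase detour is also moot, since $\csem{\term}$ is already a CP map and a difference of CP maps is all you must detect. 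What is missing is a construction of the separating gadgets \emph{inside} \QifUnitary{}. The paper obtains them by syntactic dilation. It dilates the \Qif{} terms $H_n$ and $E_n$ (unit and counit up to positive scalars, built with $\abort$; since $\abort \SynDil (\kappa\assign\alpha);\unitval$, post-selection becomes a test channel). It also dilates the monic embedding $\termB_A$ of \cref{app:lem:type-inj-surj}. Semantic preservation of the dilation then gives $\csem{H'_n}=a_n\eta$, $\csem{E'_n}=b_n\varepsilon$ and $\csem{\termB'_A}$ monic, and the separating context is $\letx{y}{[\cdot]}{\letx{x}{\termB'_A\otimes U\vec\beta}{E'_n}}$. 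You could instead build the gadgets natively: $(\vec\kappa\assign U\,q);\unitval$ already realises $\rho\mapsto\bra{0\cdots0}U\rho U^\dagger\ket{0\cdots0}$, which is exactly your measure-then-accept effect. After that, your reduction (curry, embed into qubits, test with $\bra{\varphi}\rho\ket{\varphi}$) is essentially the paper's. Either way, keep the hole out of $\qif$ branches in the separating context, because under a $\qif$ the value $\csem{C[\term]}$ is no longer determined by $\csem{\term}$.

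That same fact breaks your ``if'' direction. Compositionality holds for the $\Hilb$-semantics $\sem{-}$ of \QifUnitary{}, but the statement uses $\csem{-}$. That semantics forgets the state left on the output channels, and a $\qif$ can observe it. Take $\term=(\zeta\assign\alpha);\unitval$ and $\termB=(\zeta\assign\gX\alpha);\unitval$, both typed with channels $(\alpha,\zeta)$, so $\csem{\term}=\csem{\termB}=1$. Use the context $\letx{x\otimes u}{\qifx{\gH\alpha'}{[\cdot]}{\term}}{(\kappa\assign\gH x);u}$, whose control is $\ket{-}$. With $\term$ in the hole the $\qif$ acts as the identity and the test channel yields $0$. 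With $\termB$ it is a controlled-$\gX$ onto the wire $\alpha\to\zeta$ prepared in $\ket1$; this entangles the control with the discarded qubit and yields $1/2$. So the ``if'' direction cannot be obtained from compositionality. As stated, with $\csem{-}$ on the right, it is false, and the paper's argument also only supplies the separating direction.
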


\begin{theorem}
  [Restatement of \cref{thm:cat:switch-semantics}]
  \label{app:thm:cat:switch-semantics}
  The categorical semantics of $\SWITCH$ coincides with the quantum SWITCH.
  \thmend
\end{theorem}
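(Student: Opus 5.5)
We must show that the $\CPM$-semantics of the term $\letx{f}{F}{\letx{g}{G}{\SWITCH(x,y,f,g)}}$, for arbitrary channels $F,G\in\CPM(\sem{A},\sem{A})$ given as $\CPM$-terms or global constants, equals the map with Kraus decomposition \eqref{eq:switch-Kraus}. The plan is to first compute the $\Hilb$-semantics of the open term $\SWITCH(x,y,f,g)$ and then apply $\EmbeddingFunctor$ and compose with the states $\sem{F},\sem{G}$ living in $\CPM(1,\sem{A}^*\otimes\sem{A})$.

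\textbf{Step 1: the $\Hilb$-semantics.} Fix $f\colon A\li A$, $g\colon A\li A$, interpreted as vectors in $\sem{A}^*\otimes\sem{A}$. Via the compact closed structure, any linear map $H\colon\sem{A}\to\sem{A}$ corresponds to its name $\ulcorner H\urcorner\colon 1\to\sem{A}^*\otimes\sem{A}$. Unfolding \cref{fig:procedural:categorical-semantics}, the application $f(g\,y)$ denotes $(\ulcorner H\urcorner,\ulcorner K\urcorner,v)\mapsto HKv$, which is trilinear in the names. The $\qif$ clause uses the biproduct of $\Hilb$ and the distributivity isomorphism, so
\[
\sem{\SWITCH}(\ulcorner H\urcorner\otimes\ulcorner K\urcorner\otimes\ket{b}\otimes v)=\ket{b}\otimes\begin{cases}HKv & b=1\\ KHv & b=0.\end{cases}
\]
By linearity this determines the morphism $S\colon \sem{A}^*\sem{A}\otimes\sem{A}^*\sem{A}\otimes 2\otimes\sem{A}\to 2\otimes\sem{A}$. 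Care is needed over which branch corresponds to $\ket0$ and which to $\ket1$; the answer should match \eqref{eq:switch-Kraus} up to this relabelling, which fixes the convention.

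\textbf{Step 2: plugging in channels.} $\sem{F}$ in $\CPM(1,\sem{A}^*\otimes\sem{A})$ is the Choi matrix $\sum_i\ketbra{\ulcorner H_i\urcorner}{\ulcorner H_i\urcorner}=\sum_i\EmbeddingFunctor(\ulcorner H_i\urcorner)$ for any Kraus decomposition $\{H_i\}$; this holds because $\EmbeddingFunctor$ preserves the compact structure, so the transpose of a channel is its Choi state. Similarly $\sem{G}=\sum_j\EmbeddingFunctor(\ulcorner K_j\urcorner)$. The semantics of the let-bindings is precomposition with these states. Since composition in $\CPM$ is additive and $\EmbeddingFunctor$ is functorial and strong monoidal, we get
\[
\sum_{i,j}\EmbeddingFunctor\bigl(S\circ(\ulcorner H_i\urcorner\otimes\ulcorner K_j\urcorner\otimes\ident)\bigr).
\]
By Step 1, the Hilb map inside is exactly $\ketbra00\otimes K_jH_i+\ketbra11\otimes H_iK_j$ (or the convention-adjusted version). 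A sum of $\EmbeddingFunctor$-images is precisely the CP map with those Kraus operators, which is \eqref{eq:switch-Kraus}. Independence of the chosen decompositions is automatic, because the left side depends only on $\sem{F},\sem{G}$.

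\textbf{Main obstacle.} The delicate point is Step 2's identification of $\sem{F}$ with $\sum_i\EmbeddingFunctor\ulcorner H_i\urcorner$ together with additivity of composition under $\EmbeddingFunctor$. We must check that the curried $\CPM$-semantics $\Lambda$ of a channel yields exactly this Choi state, with the correct dual/transpose conventions, matching the $\Hilb$ names used in Step 1. We will verify this directly from the explicit $\eta,\varepsilon$ in $\CPM$, and we note that $\EmbeddingFunctor(\eta_n)$ is the displayed $\CPM$ unit. Once that is checked, the rest is bookkeeping.
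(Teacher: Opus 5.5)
Your proposal is correct and follows essentially the paper's proof. The paper also writes $\sem{F}=\sum_i\EmbeddingFunctor(V_i)$ and pulls the sum outside using that $\CPM$ is commutative-monoid-enriched and $\EmbeddingFunctor$ is a strict compact closed functor (this is \cref{prop:procedural:kraus-semantics}, applied twice); it then evaluates $\sem{\SWITCH}\circ(\ident\otimes\Lambda(V)\otimes\Lambda(W))=\ketbra11\otimes VW+\ketbra00\otimes WV$ in $\Hilb$ by the $\beta$-rule, where you unfold the semantics directly. Your ``main obstacle'' is exactly what that proposition handles, since $\Lambda$ commutes with $\EmbeddingFunctor$ on the nose and is additive in $\CPM$, and your caution about the $\ket0/\ket1$ labelling is warranted because the paper's $\qif$ clause sends $\ket1$ to the then-branch $f(g\,y)$.
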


\subsection{Proof of Type safety of Syntactic Dilation}

\begin{lemma}[Balancing lemma]
  \label{app:lem:syndil:balancing}
  For each \QifUnitary{} type derivation
  $\auxJQ{\Delta}{\term \colon A}{(\vinch, \voutch)}$,
  \[
    |\Delta| + |\vinch| = |A| + |\voutch|
  \]
  where $|\Delta| \defeq \sum_{(x \colon A) \in \Delta} |A|$.
\end{lemma}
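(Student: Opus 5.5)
The plan is to prove the equation by induction on the \QifUnitary{} typing derivation $\auxJQ{\Delta}{\term \colon A}{(\vinch, \voutch)}$, using the exhaustive rules of \cref{fig:procedural:ver2-ext:typing-rules}. The statement is an integer-valued invariant, so each case is a short arithmetic check. Throughout, I use that $|\cdot|$ is additive on contexts (by its definition as a sum) and on disjoint unions of channel sets. I also use the defining clauses $|\unit|=0$, $|\qbit|=1$, $|A\otimes B|=|A|+|B|$ and $|A \li B| = |B|-|A|$.

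The axioms are checked directly.
\begin{itemize}
\item Variable $x\colon A$: both sides are $|A|$.
\item Unit value: both sides are $0$.
\item Unitary $U \colon \qbit^{\otimes n} \li \qbit^{\otimes n}$: the right side is $|\qbit^{\otimes n}\li\qbit^{\otimes n}| = n-n = 0$.
\item Input channel $\inch$: the left side is $0+1$ and the right side is $|\qbit|+0$.
\item Global constant $\F\langle\vinch,\voutch\rangle$ with $\F\colon(A,n,m)$: here $|\vinch| = n$ and $|\voutch| = m$. The side condition $|A| = n-m$, built into the syntax of $\widehat\Gamma$, gives exactly $0+n = |A|+m$. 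This is the only place the invariant is imposed by hand.
\end{itemize}

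For the binary rules (sequencing $\term;\termB$, application, tensor, the two $\mathtt{let}$ forms, and $(\outch\assign\term);\termB$), I add the two induction hypotheses. The channel sets of the premises are disjoint, so the channel counts add.

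\begin{itemize}
\item Application: adding $|\Delta|+|\vinch| = |B|-|A|+|\voutch|$ and $|\Delta'|+|\vinchB| = |A|+|\voutchB|$ cancels $|A|$ and gives the conclusion.
\item $\letx{x}{\term}{\termB}$: the hypothesis for $\termB$ is $|\Delta'|+|A|+|\vinchB| = |B|+|\voutchB|$. Adding $|\Delta|+|\vinch| = |A|+|\voutch|$ again cancels $|A|$. The $\otimes$-let is the same with $|A|+|B|$.
\item Sequencing: the first premise has type $\unit$, which contributes $0$.
\item Assignment $(\outch\assign\term);\termB$: the first premise gives $|\Delta|+|\vinch| = 1+|\voutchB|$. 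The extra $1$ is absorbed by the new output channel $\outch$, which is added to the conclusion's output set.
\item $\lambda$-abstraction: the channels are unchanged, and $|\Delta,x\colon A| = |\Delta|+|A|$ while $|A\li B| = |B|-|A|$, so the hypothesis rearranges into the conclusion.
\end{itemize}

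The only case needing care is $\qif$. The two branches share the same channel sets $(\vinchB,\voutchB)$, so only one branch hypothesis may be counted, say $|\Delta'|+|\vinchB| = |A|+|\voutchB|$. Adding the condition's hypothesis $|\Delta|+|\vinch| = 1+|\voutch|$ gives $|\Delta,\Delta'|+|\vinch\vinchB| = |\qbit\otimes A| + |\voutch\voutchB|$. The $1$ from the control qubit matches the $\qbit$ returned in the result type.

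I expect the main obstacle to be bookkeeping. For the $\qif$ rule, I must check that the conclusion's channel sets are $\vinch\vinchB$ and not a double count. Elsewhere, I need the disjointness side condition so that cardinalities add. The rest is routine.
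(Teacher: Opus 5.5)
Your proposal is correct and follows essentially the same route as the paper: induction on the \QifUnitary{} derivation with an arithmetic check per rule. The substantive cases agree: the channel rules, $\lambda$, application, $\qif$ (where the shared branch channels are counted once), and global constants (where $|A| = n - m$ is built into $\widehat\Gamma$). The only cosmetic difference is that the paper handles the two $\mathtt{let}$ forms by identifying them with application of a $\lambda$-abstraction, whereas you check them directly.
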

\begin{proof}
  By induction on the derivation.
  We prove only the non-trivial cases.
  \begin{description}
    \item[Case input ch.]
          Let the derivation be
          $\auxJQ{\EmpEnv}{\inch \colon \qbit}{(\inch, \cdot)}$.
          Then
          $|\cdot| + |\alpha| = 1 = |\qbit| + |\cdot|$.
    \item[Case output ch.]
          Assume that we derived
          $\auxJQ{\Delta,\Delta'}
            {(\outch \assign \term); \termB}
            {(\vinch\vinchB, \outch\voutchB\voutchC)}$
          from
          $\auxJQ{\Delta}
            {\term \colon \qbit}
            {(\vinch, \voutchB)}$
          and
          $\auxJQ{\Delta'}
            {\termB \colon A}
            {(\vinchB, \voutchC)}$.
          Then
          $|\Delta,\Delta'| + |\vinch\vinchB|
            = (|\Delta| + |\vinch|) + (|\Delta'| + |\vinchB|)
            = (|\qbit| + |\voutchB|) + (|A| + |\voutchC|)
            = |A| + |\outch\voutchB\voutchC|$.
    \item[Case lambda.]
          Assume that we derived
          $\auxJQ{\Delta}{\lambda x. \term \colon A \li B}{(\vinch, \voutch)}$
          from
          $\auxJQ{\Delta, x \colon A}{\term \colon B}{(\vinch, \voutch)}$.
          Then, by induction hypothesis,
          $|\Delta| + |A| + |\vinch| = |B| + |\voutch|$.
          Therefore, $|\Delta| + |\vinch| = |B| + |\voutch| - |A| = |A \li B| + |\voutch|$.
    \item[Case fn. app.]
          Assume that we derived
          $\auxJQ{\Delta, \Delta'}{\term\,\termB \colon B}{(\vinch\vinchB, \voutch\voutchB)}$
          from
          $\auxJQ{\Delta}{\term \colon A \li B}{(\vinch, \voutch)}$
          and
          $\auxJQ{\Delta'}{\termB \colon A}{(\vinchB, \voutchB)}$.
          Then, by induction hypothesis,
          $|\Delta| + |\vinch| = |A \li B| + |\voutch|$
          and
          $|\Delta'| + |\vinchB| = |A| + |\voutchB|$.
          Therefore,
          $|\Delta, \Delta'| + |\vinch\vinchB|
            = |A \li B| + |A| + |\voutch\voutchB|
            = |B| - |A| + |A| + |\voutch\voutchB|
            = |B| + |\voutch\voutchB|
          $.
    \item[Case let.]
          $\letx{x}{\term}{\termB}$ can be identified with
          $(\lambda x. \termB) \term$.
    \item[Case let $\otimes$.]
          This rule is essentially the same as the previous case.
    \item[Case qif.]
          Assume that we derived
          $\auxJQ{\Delta, \Delta'}
            {(\qifx{\term}{\termB_1}{\termB_2}) \colon \qbit \otimes A}
            {(\vinch\vinchB, \voutch\voutchB)}$
          from
          $\auxJQ{\!\Delta}{\term \colon \qbit}{(\vinch, \voutch)}$,
          $\auxJQ{\!\Delta'}{\termB_1 \colon A}{(\vinchB, \voutchB)}$,
          and
          $\auxJQ{\!\Delta'}{\termB_2 \colon A}{(\vinchB, \voutchB)}$.
          Then, by induction hypothesis,
          $|\Delta| + |\vinch| = 1 + |\voutch|$
          and
          $|\Delta'| + |\vinchB| = |A| + |\voutchB|$.
          Therefore,
          $|\Delta\Delta'| + |\vinch\vinchB|
            = 1 + |A| + |\voutch\voutchB|
            = |\qbit \otimes A| + |\voutch\voutchB|$.
    \item[Case $\F$.]
          Assume that
          $(\F \colon (A, n, m)) \in \Gamma$,
          $|\vinch| = n$ and
          $|\voutch| = m$ hold,
          and we derived
          $\auxJQ{\EmpEnv}
            {\F\langle \vinch, \voutch \rangle \colon A}
            { (\vinch, \voutch) }$.
          From the definition of non-linear context,
          we know $|A| = n - m$.
          Therefore,
          $|\vinch| = n = (n - m) + m = |A| + |\voutch|$.
          \qedhere
  \end{description}
\end{proof}

\begin{proof}[Proof of \Cref{thm:syn-dil:type-safety}]
  First, we observe that all $\Hilb$-terms remain unchanged
  after translation.
  Therefore, we only need to check $\CPM$-terms.
  We check the following selected rules.
  The rest are straightforward.
  \begin{description}
    \item[Case meas.]
          The derivation before translation is
          $\termJC{\EmpEnv}{\meas \colon \qbit \li \bool}$,
          and after translation,
          we have the derivation
          $\auxJQ{\EmpEnv}{
              \lambda x^\qbit. \letx{y \otimes z}{\gCX(x \otimes \inch)}{(\outch \assign z; y)} \colon \qbit \li \qbit
            }{(\inch, \outch)}$.
    \item[Case $\ket1$.]
          The term
          $\termJC{\EmpEnv}{\ket1 \colon \qbit}$
          is translated to
          $\auxJQ{\EmpEnv}{\inch \colon \qbit}{(\inch,\cdot)}$.
    \item[Case true.]
          The term
          $\termJC{\EmpEnv}{\btrue \colon \bool}$
          is translated to
          $\auxJQ{\EmpEnv}{\inch \colon \qbit}{(\inch,\cdot)}$.
    \item[Case not.]
          The term
          $\termJC{\EmpEnv}{\bnot \colon \bool \li \bool}$
          is translated to
          $\auxJQ{\EmpEnv}{\gX \colon \qbit \li \qbit}{(\cdot, \cdot)}$.
    \item[Case if.]
          We assume the following.
          \begin{itemize}
            \item $\termJC{\Delta_1}{\term \colon \bool}
                    \SynDil
                    \auxJQ{\Delta'_1}{\term' \colon \qbit}{(\vinch, \voutch)}$
            \item $\termJC{\Delta_2}{\termB_1 \colon A}
                    \SynDil
                    \auxJQ{\Delta'_2}{\termB'_1 \colon A'}{(\vinchB_1, \voutchB_1)}$
            \item $\termJC{\Delta_2}{\termB_2 \colon A}
                    \SynDil
                    \auxJQ{\Delta'_2}{\termB'_2 \colon A'}{(\vinchB_2, \voutchB_2)}$
          \end{itemize}
          From the previous \cref{app:lem:syndil:balancing},
          $|A'| - |\Delta'_2|
            = |\vinchB_1| - |\voutchB_1| = |\vinchB_2| - |\voutchB_2|$.
          Therefore, we can choose fresh $\vinchC$ and $\voutchC$ such that
          $|\voutchC| - |\vinchC|$ is equal to this quantity.
          The term
          $\qifx{\term}
          {(\voutchB_2\voutchC \assign \vinchB_2\vinchC; \termB'_1)}
          {(\voutchB_1\voutchC \assign \vinchB_1\vinchC; \termB'_2)}
          $
          has the type $\qbit \otimes A' \withaug (\vinch\vinchB_1, \voutch\voutch_1)$
          in \QifUnitary.
          \qedhere
  \end{description}
\end{proof}

\subsection{Proof of Semantic Preservation of Syntactic Dilation}
\newcommand{\catA}{\mathcal{A}}
\newcommand*{\Dil}{{T(\mathcal{A})}}

For the proof, we define another category $\Dil$.
We first define a compact closed category $\catA$,
and obtain $\Dil$ using \emph{focussed orthogonality}
by \citet{HYLAND2003183}.

\begin{definition}
  The category $\catA$ consists of the following:
  An object of $\catA$ is a sequence of natural numbers $\vec{n}$,
  and a morphism $\catA(\vec{n},\vec{m})$ is a pair $(f, f')$
  of morphisms in $\CPM$ of type
  $f\colon \vec{n} \longrightarrow \vec{m}$,
  and $f' \colon (\sum\vec{n}) \longrightarrow (\sum\vec{m})$.
  \thmend
\end{definition}

This category inherits the monoidal structure of $\CPM$:
Observe that the sum of the elements of the sequence $\vec{n} \otimes \vec{m}$
is $\sum\vec{n} \times \sum\vec{m}$.
Moreover, $\catA$ is compact closed.

\begin{definition}
  For each set $c \subseteq \catA(I, A)$,
  we define $c^* \subseteq \catA(A, I)$ by
  $\{ m \colon A \to I \mid
    \forall x \in c, \exists r \in \Real_{\geq 0}, m\circ x = (r,r) \}$.
  We call $c$ closed if $c^{**} = c$.

  The category $\Dil$ consists of the following data:
  An object of $\Dil$ is a pair $(\vec{n}, c)$
  such that $c \subseteq \catA(I, \vec{n})$ is closed.
  A morphism $f \in \Dil((\vec{n},c_{\mathbf{n}}), (\vec{m},c_{\mathbf{m}}))$
  is a morphism in $\catA$ such that
  for each $x \in c_\mathbf{n}$, $f \circ x \in c_\mathbf{m}$.
  \thmend
\end{definition}

This category $\Dil$ is the tight category defined by \citet{HYLAND2003183}
obtained from the focussed orthogonality defined by
$F = \{ (r,r) \mid r \in \Real_{\geq 0} \} \subseteq \catA((1), (1))$.
Therefore, $\Dil$ is $\ast$-autonomous,
and two forgetful functors $\Dil \longrightarrow \CPM$
preserve all $\ast$-autonomous structures.
The monoidal products and duals are defined as follows:
\begin{align*}
  (\vec n, c_\mathbf{n})
  \otimes
  (\vec m, c_\mathbf{m})
   &
  =
  (\vec n \otimes \vec m, {\{
      x \otimes y \mid x \in c_\mathbf{n}, y \in c_\mathbf{m}
      \}}^{**})
  \\
  (\vec n, c_\mathbf{n})^*
   &
  =
  (\vec n, \{ x^* \colon \vec n \to I \mid x \in c_\mathbf{n} \}^*)
  .
\end{align*}
In particular, we can interpret linear lambda calculus in $\Dil$.
To interpret \Qif{} in $\Dil$, we give some more definitions.

\begin{definition}
  We define $\sem{\bool}_\Dil$ by $((1,1), c_\bool)$
  and $\sem{\qbit}_\Dil$ by $((2), c_\qbit)$
  where $c_\bool$ and $c_\qbit$ are defined as
  \begin{align*}
    c_\bool
    \defeq
    \{ (\rho, s_{(1,1)}(\rho)) \mid p, q\in \Real_{\geq 0}, \rho = (p,q) \},
    \qquad
    c_\qbit
    \defeq \{ (\rho, \rho) \mid \rho \in \CPM(I, (2)) \}.
  \end{align*}
  These $c_\bool$ and $c_\qbit$ are closed.
  For other types $A$ in \Qif, we recursively define $\sem{A}_\Dil$
  using the monoidal closed structure of $\Dil$.
  Note that, for any type $A$,
  $\sem{A}_{\Dil}$ is given by
  $(\sem{A}_{\CPM}, c_A)$ for some $c_A$.
  \thmend
\end{definition}

\begin{lemma}
  Let $\sem{A}_{\Dil} = (\sem{A}_{\CPM}, c_A)$.
  For any first-order type $A$,
  $c_A = \{ (x,s_A \circ x) \mid x \colon I \to \sem{A}_{\CPM} \}$
  and $c^*_A = \{ (y \circ s_A, y)
    \mid y \colon \sem{A[\qbit/\bool]}_{\CPM} \to I \}$.
\end{lemma}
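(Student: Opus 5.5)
We prove both equalities together by induction on the first-order type $A$, which is built from $\unit$, $\qbit$, $\bool$ and $\otimes$. For each case we first compute $c_A^*$ directly and then read off $c_A = c_A^{**}$. Throughout, write $N_A = \sum \sem{A}_\CPM$ for the total dimension, so that $s_A \colon \sem{A}_\CPM \to (N_A)$. The step that links the two components of a pair is the following. If $m = (m_1, m_2)$ satisfies $m \circ (x, s_A x) = (r,r)$ for all admissible $x$, then $m_1(x) = m_2(s_A x)$ for all such $x$. If these $x$ linearly span the space of (block-diagonal) matrices underlying $\sem{A}_\CPM$, linearity forces $m_1 = m_2 \circ s_A$. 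Dually, if a state $(z, z')$ satisfies $y(z') = y(s_A z)$ for every effect $y \colon (N_A) \to I$, then $z' = s_A z$, because effects on $\Mat_{N_A}(\CC)$ separate points. The requirement $r \in \Real_{\ge 0}$ is automatic, since all the maps involved are completely positive.

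\emph{Base cases.} For $\qbit$ we have $s_{(2)} = \ident$, and $c_\qbit = \{(\rho,\rho)\}$ by definition. An effect pair $(m_1,m_2)$ lies in $c_\qbit^*$ iff $m_1(\rho) = m_2(\rho)$ for all positive $\rho$. Since positive matrices span $\Mat_2(\CC)$, this gives $c_\qbit^* = \{(y,y)\}$, as claimed. For $\bool$, $c_\bool = \{((p,q), s_{(1,1)}(p,q))\}$ is already of the required form. The orthogonality condition reads $m_1(p,q) = m_2(\mathrm{diag}(p,q))$ for all $p,q \ge 0$, which gives $m_1 = m_2 \circ s_{(1,1)}$. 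For $\unit$, the unit object of the tight category is $(I, F^{*})$ (or $F^{**}$). With $s_{(1)} = \ident$, a direct computation gives $\{(r,r)\}$ for both the set and its dual, which matches the statement.

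\emph{Tensor case.} By the induction hypothesis, $c_{A\otimes B} = \{(x,s_Ax)\otimes(y,s_By)\}^{**} = \{(x\otimes y,\,(s_A\otimes s_B)(x\otimes y))\}^{**}$. We identify $(N_A)\otimes(N_B) = (N_A N_B)$ with $(N_{A\otimes B})$; under this identification $s_A \otimes s_B$ coincides with $s_{A\otimes B}$ up to the canonical reindexing of blocks. Then $m \in c_{A\otimes B}^*$ iff $m_1(x\otimes y) = m_2(s_{A\otimes B}(x\otimes y))$ for all block-diagonal states $x$ and $y$. The crucial fact, and the main obstacle, is that product states $x \otimes y$ of block-diagonal positive matrices linearly span the block-diagonal matrices of $\sem{A\otimes B}_\CPM$. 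We prove it by taking matrix units $\ketbra{i}{j}$ inside a single block on each side. Each of these is a linear combination of positive matrices within that block (polarization), and their tensor products exhaust the matrix units of every block $(k,\ell)$ of $\vec n \otimes \vec m$. Granting this, $c_{A\otimes B}^* = \{(y\circ s_{A\otimes B}, y)\}$. Taking the dual once more and using that effects separate points yields $c_{A\otimes B} = \{(z, s_{A\otimes B} z)\}$.

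Finally, we check the reverse inclusions, which hold by construction. Every pair $(y \circ s_A, y)$ is orthogonal to every $(x, s_A x)$, giving the value $(y s_A x, y s_A x)$. Hence the sets computed above are exactly $c_A$ and $c_A^*$, and they are closed.
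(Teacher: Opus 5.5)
Your proof is correct and follows essentially the same route as the paper. Both argue by induction on first-order types. Both compute $c^*$ from the fact that states determine effects: the paper calls this ``$I$ generates $\CPM$'' in the general closedness step and ``$x_A\otimes x_B$ jointly epic'' in the tensor step, and you prove the corresponding spanning claims. Both then recover $c=c^{**}$ because effects separate points; the paper leaves that last step implicit, whereas you make it explicit, handle $\unit$ as well, and flag the reindexing between $s_A\otimes s_B$ and $s_{A\otimes B}$, which the paper glosses over by reading $s_A$ compositionally on types.
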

\begin{proof}
  We first prove such $c_A$ are closed.
  For any $\vec{n} \in \CPM$, let
  $c = \{(x, s_{\vec{n}} \circ x) \mid x \in \CPM(I, \vec{n})\}$.
  If $(m,m') \in c^*$, then $m\circ x = m' \circ s\circ x$ for all $x$.
  Since $I$ generates $\CPM$, $m = m' \circ s$.
  Conversely, any $(m' \circ s, m')$ is in $c^*$.
  Thus $c$ is closed.

  Let
  $c_A = \{(x_A, s_{A} \circ x_A) \mid x_A \in \CPM(I, \vec n)\}$
  and
  $c_B = \{(x_B, s_{A} \circ x_B) \mid x_B \in \CPM(I, \vec m)\}$
  for some objects $A$ and $B$.
  Then $(y, y') \in c_{A\otimes B}^*$
  if and only if
  $y \circ (x_A \otimes x_B)
    = y' \circ s_{A \otimes B} \circ (x_A \otimes x_B)$.
  Since $x_A \otimes x_B$ are jointly epic in $\CPM$,
  $y = y' \circ s$.
  Therefore,
  $c_{A\otimes B}^* = \{ (y' \circ s_A, y')
    \mid y' \colon (\sum \vec n \otimes \sum \vec m) \to I \}
  $.

  Since $\bool$ and $\qbit$ satisfy the claim,
  and since we have shown the claim holds for the tensor type
  $A \otimes B$ if $A$ and $B$ satisfy it,
  this proves that the claim holds for all first-order types.
\end{proof}

\begin{lemma}
  For any purely quantum type $A$,
  $c_A = \{ (x, x) \mid x \colon I \to \sem{A}_{\CPM} \}$.
\end{lemma}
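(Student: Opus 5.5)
Note first that a purely quantum type need not be first order: it may contain $\li$, though never $\bool$. So the previous lemma does not apply directly, and I will argue by induction on the structure of $A$, mirroring that lemma's proof. The claim to carry through the induction is
\[
c_A = \Delta_{\vec n} \defeq \{ (x,x) \mid x \colon I \to \vec n \},
\]
where $\vec n = \sem{A}_\CPM$ is a singleton sequence $(N)$. For such $\vec n$ we have $\sum \vec n = N$, so a pair $(x,x)$ is a legitimate element of $\catA(I,\vec n)$.

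The auxiliary fact I need is
\[
\Delta_{(N)}^* = \{ (m,m) \mid m \colon (N) \to I \}.
\]
For ``$\supseteq$'', any $(m,m)$ sends $(x,x)$ to $(m\circ x, m\circ x) = (r,r)$, since endomorphisms of $I$ in $\CPM$ are nonnegative reals. For ``$\subseteq$'', if $(m,m')$ is orthogonal to every $(x,x)$, then $m\circ x = m'\circ x$ for all states $x$. States of $(N)$ span $\Mat_N(\CC)$, so $m = m'$. The same argument applied on the dual side gives $\Delta^{**} = \Delta$, so each $\Delta_{(N)}$ is closed.

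The base cases are immediate: $c_\qbit$ is $\Delta_{(2)}$ by definition, and $\unit$ is the monoidal unit, whose closed set is $\{(r,r)\}^{**} = \Delta_{(1)}$. For $A\otimes B$, the set $c_{A\otimes B}$ is $\{x\otimes y \mid x \in c_A, y \in c_B\}^{**}$, and by induction the generators are $(x\otimes y, x\otimes y)$. A pair $(m,m')$ lies in the first dual exactly when $m\circ(x\otimes y) = m'\circ(x\otimes y)$ for all product states. Product states are jointly epic in $\CPM$, the same fact used in the previous lemma, so $m = m'$. Hence the first dual is $\Delta^*$, and by the auxiliary fact the double dual is $\Delta_{(NM)}$.

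For $A\li B$, in a $*$-autonomous tight category $A \li B = (A\otimes B^*)^*$. Dualising, $c_{B^*} = c_B^{*}$ up to the transpose, which by the auxiliary fact is again a diagonal set. The tensor step then makes $c_{A\otimes B^*}$ diagonal, and one more application of the auxiliary fact makes its dual diagonal. The main obstacle I anticipate is bookkeeping rather than substance. I must check that the transposes $x \mapsto x^*$ used in defining the dual objects act identically on both components of a pair, and that the compact closed identification $\sem{A\li B}_\CPM = \sem{A}^*\otimes\sem{B}$ matches the $\Dil$ structure on the nose. Both should hold because the two forgetful functors preserve the $*$-autonomous structure, and because on singleton objects the section $s$ is the identity.
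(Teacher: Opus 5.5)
Your proposal is correct and takes essentially the paper's approach: induction on the type, with three ingredients. Diagonal sets are closed; diagonality is preserved by $\otimes$ and by duals; and $A \multimap B$ is handled through $(A\otimes B^*)^*$. You additionally spell out the details the paper calls ``easy'', namely that positive matrices span $\Mat_N(\CC)$, that product states are jointly epic, and the computation of the unit's closed set.
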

\begin{proof}
  It is easy to prove that such $c_A$ are closed
  and that the property holds for $A \otimes B$ and $A^*$
  if this property holds for $A$ and $B$.
  The claim follows by induction on structure of types.
\end{proof}

\begin{lemma}\label{app:lem:orthogonality-vec-space}
  Every closed set $c_A$ is a $\Real_{\geq 0}$-module, \ie,
  if $(x, x'), (y, y') \in c_A$,
  $(rx + sy, rx' + sy') \in c_A$ for any $r, s \in \Real_{\geq 0}$.
\end{lemma}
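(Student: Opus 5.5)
The plan is to exploit that every closed set is an orthogonal of something. Indeed, if $c$ is closed then $c = c^{**} = (c^{*})^{*}$, so it suffices to show that for an arbitrary set $d \subseteq \catA(A, I)$, its orthogonal
\[
  d^{*} = \{ x \colon I \to A \mid \forall m \in d,\ \exists r \in \Real_{\geq 0},\ m \circ x = (r,r) \}
\]
is closed under $\Real_{\geq 0}$-linear combinations. (Here the orthogonal of a set of costates is defined symmetrically to the one given for states, as in the focussed orthogonality of \citet{HYLAND2003183}.) Applying this with $d = c_A^{*}$ yields the lemma.

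First I will record that $\catA$ is enriched in $\Real_{\geq 0}$-modules. Hom-sets of $\CPM$ are closed under addition and nonnegative scalar multiplication, since completely positive maps are. Composition in $\CPM$ is bilinear. A morphism of $\catA$ is a pair, so addition and scaling are defined componentwise: $r(x,x') + s(y,y') \defeq (rx+sy,\ rx'+sy')$, and composition in $\catA$ is again bilinear.

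Next, take $(x,x'),(y,y') \in d^{*}$ and $m = (m_1,m_2) \in d$. By assumption $m \circ (x,x') = (m_1 x, m_2 x') = (p,p)$ and $m \circ (y,y') = (q,q)$ for some $p,q \geq 0$. By bilinearity,
\[
  m \circ (rx+sy,\ rx'+sy') = (rp+sq,\ rp+sq),
\]
and $rp + sq \in \Real_{\geq 0}$. Hence the combination lies in $d^{*}$. The same computation applies when $r = s = 0$: the zero pair gives $(0,0) \in F$.

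The only point needing care, and it is mild, is confirming that the focussed set $F = \{(r,r)\}$ is itself a $\Real_{\geq 0}$-submodule of $\catA(I,I)$, and that the linear structure on $\catA$ is compatible with composition. Both follow directly from the componentwise definition and linearity of composition of CP maps.
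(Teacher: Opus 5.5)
Your proof is correct and follows essentially the same argument as the paper. The paper also tests $(rx+sy,\, rx'+sy')$ against an arbitrary $(m,m') \in c_A^{*}$, uses bilinearity of composition to get $m(rx+sy) = r\,mx + s\,my = r\,m'x' + s\,m'y' = m'(rx'+sy')$, and concludes membership via $c_A = c_A^{**}$. Your factoring through ``every orthogonal $d^{*}$ is closed under nonnegative combinations'' and your explicit check that $rp+sq \geq 0$ are presentational refinements of that same step.
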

\begin{proof}
  If $(x, x')$ and $(y, y')$ are in $c_A$,
  then for each $(m, m')$ in $c_A^*$,
  $mx = m'x'$ and $my = m'y'$.
  Since
  $m \circ (rx + sy) = r \cdot mx + s \cdot my = r \cdot m'x' + s \cdot m'y' = m' \circ (rx' + sy')$,
  it follows $(rx + sy, rx' + sy') \in c_A$.
\end{proof}

\begin{mathfig}
  \begin{proofrules}
    \sem{\termJQ{\Delta}{\term\colon A}} = \sem{\term}_\CPM

    \sem{\ket{1}} = \ketbra{1}{1}

    \sem{\meas} = s_{(1,1)} \circ m

    \sem{\btrue} = \ketbra{1}{1}

    \sem{\bnot} = \gX

    \sem{\cifx{\term}{\termB_1}{\termB_2}}
    = (\discardDiag_{(1,1)}\otimes\ident_A)d_{\oplus_\CPM}^{-1}(\sem{\termB_2} \oplus_\JC \sem{\termB_1})d_{\oplus_\CPM}(m\sem{\term}\otimes \ident_{\Delta'})

    \dots
  \end{proofrules}
  \caption{Second projection to $\CPM$ of the categorical semantics of \Qif{} in $\Dil$.}
  \label{app:fig:dil-cat-sem}
  \Description{}
\end{mathfig}
We define a categorical semantics of \Qif{} in $\Dil$
so that the first projection coincides with the semantics in $\CPM$.
We sketch the second projection of the semantics in $\Dil$ in \cref{app:fig:dil-cat-sem}
and denote it by $\sem{\term}_{\Dil}$.
Thus, the semantics of $\term$ in $\Dil$ is defined by
$(\sem{\term}_\CPM, \sem{\term}_{\Dil})$.
Here, the semantics uses a valuation map
$\rho \colon \F \longmapsto \rho({\F}) \in \Dil(I, \sem{A})$.
The rules for the standard linear lambda calculus are omitted.
For the rule of classical conditional branching,
one can show that this defines a map in $\Dil$
using \cref{app:lem:orthogonality-vec-space}.

\begin{lemma}
  The semantics $\sem{\cifx{\term}{\termB_1}{\termB_2}}$
  defined in \cref{app:fig:dil-cat-sem} defines a map in
  $\Dil$.
\end{lemma}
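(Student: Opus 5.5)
The plan is to verify directly the defining condition of a morphism in $\Dil$. Write $f = \sem{\cifx{\term}{\termB_1}{\termB_2}}_\CPM$ and $f' = \sem{\cifx{\term}{\termB_1}{\termB_2}}_{\Dil}$ as in \cref{app:fig:dil-cat-sem}. We must show that $(f,f')$ sends every element of $c_{\Delta,\Delta'}$ into $c_A$.

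\textbf{Reduction to generators.} The set $c_{\Delta,\Delta'}$ is by definition the biorthogonal $\{x\otimes y \mid x\in c_\Delta,\ y\in c_{\Delta'}\}^{**}$. So the first step is the standard observation for tight orthogonality categories: if a pair $(f,f')$ maps a set $S$ into a closed set $c_A$, then it maps $S^{**}$ into $c_A$. To see this, take $(m,m')\in c_A^*$. Then $(m\circ f, m'\circ f')$ lies in $S^*=S^{***}$, hence is orthogonal to everything in $S^{**}$. Since $c_A=c_A^{**}$, the image of $S^{**}$ lies in $c_A$. It therefore suffices to treat inputs of the form $x\otimes y$ with $x=(x_1,x_2)\in c_\Delta$ and $y=(y_1,y_2)\in c_{\Delta'}$.

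\textbf{Unfolding the conditional.} By the induction hypothesis, $\sem{\term}$ is a $\Dil$-morphism, so $\sem{\term}\circ x\in c_\bool$. By the definition of $c_\bool$, this means $\sem{\term}_\CPM\circ x_1=\rho=(p,q)$ and $\sem{\term}_\Dil\circ x_2 = s_{(1,1)}\circ\rho$. Because $m\circ s_{(1,1)}=\ident$, the second projection feeds exactly the same $\rho=(p,q)$ into the distributivity isomorphism $d_{\oplus_\CPM}$ as the first projection does.

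Next, use naturality of $d_{\oplus_\CPM}$ and the fact that the discard map $\discardDiag_{(1,1)}$ sums the two biproduct components. This rewrites the two outputs as
\[
f\circ(x_1\otimes y_1)=p\,(\sem{\termB_1}_\CPM\circ y_1)+q\,(\sem{\termB_2}_\CPM\circ y_1),
\]
and the analogous identity for $f'$ with $\sem{\termB_i}_\Dil$ and $y_2$. One must keep track of the correct branch-to-component assignment and of the swap convention for $\btrue$; the same convention is used in both projections, so the coefficients $p$ and $q$ line up.

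\textbf{Closing with the module property.} By the induction hypothesis, each $\sem{\termB_i}$ is a $\Dil$-morphism, so $(\sem{\termB_i}_\CPM\circ y_1,\ \sem{\termB_i}_\Dil\circ y_2)\in c_A$. By \cref{app:lem:orthogonality-vec-space}, $c_A$ is closed under nonnegative linear combinations with equal coefficients in both components. Applying this with coefficients $p,q\ge 0$ places the output pair in $c_A$, as required.

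The main obstacle I expect is the middle step. There we must check carefully that the first and second projections decompose with the \emph{same} scalars $p,q$. This is precisely where the extra $m$ in the $\Dil$ rule of \cref{app:fig:dil-cat-sem} is needed, together with the particular shape of $c_\bool$: it ensures the two projections agree on the classical bit before branching.
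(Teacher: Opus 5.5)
Your proof is correct and follows the paper's argument: both projections reduce to the same $p,q$-weighted combination of the branch semantics, and the result follows from \cref{app:lem:orthogonality-vec-space}. The paper treats only the case where the condition is a variable and checks only tensor-product inputs without comment, so your explicit biorthogonality reduction to generators and your use of $m\circ s_{(1,1)}=\ident$ for a general condition term fill in steps the paper leaves implicit.
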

\begin{proof}
  For simplicity, we prove the case
  in which the conditional term is a variable:
  \[
    \infer*[]{
      \Gamma \vdash \term_1 \colon A\\
      \Gamma \vdash \term_2 \colon A
    }{
      x \colon \bool, \Gamma \vdash \cifx{x}{\term_1}{\term_2} \colon A
    }.
  \]
  Let $\left((p,q), \begin{pmatrix}
      p & 0 \\0&q
    \end{pmatrix}\right) \in c_\bool$,
  and $(x,y) \in c_{\sem{\Gamma}}$.
  It suffices to prove that
  the pair
  \begin{align*}
    \left(
    \sem{\cifx{x}{\term_1}{\term_2}}_\CPM
    \circ \left((p,q) \otimes x \right),
    \quad
    \sem{\cifx{x}{\term_1}{\term_2}}_{\Dil}
    \circ \left(\begin{pmatrix}
                    p & 0 \\ 0 & q
                  \end{pmatrix} \otimes y \right)
    \right)
  \end{align*}
  is in $c_A$.
  These elements can be calculated as follows:
  \begin{align*}
     &
    \sem{\cifx{x}{\term_1}{\term_2}}_{\CPM}
    \circ \left((p,q) \otimes x \right)
    \\ & =
    (\discardDiag_{(1,1)}\otimes\ident_A)d_{\oplus_\CPM}^{-1}(\sem{\termB_2}_{\CPM} \oplus_\JC \sem{\termB_1}_{\CPM})d_{\oplus_\CPM}
    \circ \left( (p, q) \otimes x \right)
    \\ & =
    p \cdot
    \left(
    \sem{\termB_1}_{\CPM} \circ x
    \right)
    \ +\
    q \cdot
    \left(
    \sem{\termB_2}_{\CPM} \circ x
    \right)
    ,
    \\\\
     &
    \sem{\cifx{x}{\term_1}{\term_2}}_{\Dil}
    \circ \left(\begin{pmatrix}
                    p & 0 \\ 0 & q
                  \end{pmatrix} \otimes y \right)
    \\ & =
    (\discardDiag_{(1,1)}\otimes\ident_A)d_{\oplus_\CPM}^{-1}(\sem{\termB_2}_{\Dil} \oplus_\JC \sem{\termB_1}_{\Dil})d_{\oplus_\CPM}(m \otimes \ident_{\Delta'})
    \circ \left(\begin{pmatrix}
                    p & 0 \\ 0 & q
                  \end{pmatrix} \otimes y \right)
    \\ & =
    (\discardDiag_{(1,1)}\otimes\ident_A)d_{\oplus_\CPM}^{-1}(\sem{\termB_2}_{\Dil} \oplus_\JC \sem{\termB_1}_{\Dil})d_{\oplus_\CPM}
    \circ \left( (p, q) \otimes y \right)
    \\ & =
    p \cdot
    \left(
    \sem{\termB_1}_{\Dil} \circ y
    \right)
    \ +\
    q \cdot
    \left(
    \sem{\termB_2}_{\Dil} \circ y
    \right)
    .
  \end{align*}
  Since for each $i = 1, 2$, the pair
  $
    \bigl(
    \sem{\termB_i}_{\CPM} \circ x,
    \sem{\termB_i}_{\Dil} \circ y
    \bigr)
  $
  belongs to $c_A$,
  by \cref{app:lem:orthogonality-vec-space},
  the linear combination is also in $c_A$.
\end{proof}

We are now ready to prove the preservation of semantics.

\begin{proof}[Proof of \Cref{thm:syn-dil:sem-pres}]
  For any term $\termJG{\Delta}{\term \colon A}$
  and $\auxJQ[\Gamma']{\Delta'}{\term' \colon A'}{(\vinch, \voutch)}$
  such that $\term \SynDil \term'$,
  we prove the second projection of
  the semantics of $\term$
  in $\Dil$ coincides with $\csem{\term'}$:
  \[
    \sem{\term}_\Dil
    =
    \csem{\term'}.
  \]
  We prove the non-trivial cases.
  \begin{description}
    \item[Case meas.]
          \begin{align*}
             &
            (\ident \otimes \discardDiag_\outch) \circ \EmbeddingFunctor\sem{
              \lambda x^\qbit. \letx{y \otimes z}{\gCX(x \otimes \inch)}{(\outch \assign z; y)}
            } \circ \ketbra{1}{1}_{\inch}
            \\ & =
            \Lambda_{(2)}
            \bigl(
            (\ident_{(2)} \otimes \discardDiag_{(1,1)})
            \circ \gCX \circ
            (\ident_{(2)} \otimes \ketbra{1}{1}_{\inch})
            \bigr)
            \\ & =
            \Lambda (s\circ m).
          \end{align*}
    \item[Case $\ket1$.] $\ketbra{1}{1} = \EmbeddingFunctor\sem{\inch} \circ \ketbra11_\alpha$.
    \item[Case true.] $\ketbra{1}{1} = \EmbeddingFunctor\sem{\inch} \circ \ketbra11_\alpha$.
    \item[Case not.] By definition.
    \item[Case if.] We assume the following:
          \begin{itemize}
            \item $\sem{\term}_\Dil
                    = \csem{\term'}
                    = (\ident \otimes \discardDiag_{\voutch}) \circ \EmbeddingFunctor\sem{\term'}
                    \circ (\ident \otimes \ketbra{1}{1}_{\vinch})$,
            \item $\sem{\termB_1}_\Dil
                    = \csem{\termB_1'}
                    = (\ident \otimes \discardDiag_{\voutchB_1})
                    \circ \EmbeddingFunctor\sem{\termB'_1}
                    \circ (\ident \otimes \ketbra{1}{1}_{\vinchB_1})$,
            \item $\sem{\termB_2}_\Dil
                    = \csem{\termB_2'}
                    = (\ident \otimes \discardDiag_{\voutchB_2})
                    \circ \EmbeddingFunctor\sem{\termB'_2}
                    \circ (\ident \otimes \ketbra{1}{1}_{\vinchB_2})$,
            \item $n_i \defeq |\vinchB_i| + |\vinchC| = |\voutchB_i| + |\voutchC|$.
          \end{itemize}
          Now we compute the semantics.
          \begin{align*}
             &
            (\ident \otimes \discardDiag_{\voutch\voutchB_1\voutchB_2\voutchC\outchD}) \circ
            \EmbeddingFunctor\sem{
              \letx{x \otimes y}{\bigl(
                \qifx{\term'}
                {(\voutchB_2\voutchC \assign \vinchB_2\vinchC; \termB'_1)}
                {(\voutchB_1\voutchC \assign\vinchB_1\vinchC; \termB'_2)}
                \bigr) }{(\outchD \assign x); y}
            }
            \\ & \hspace{10cm}
            \circ
            (\ident \otimes \ketbra{1}{1}_{\vinch\vinchB_1\vinchB_2\vinchC})
            \\ & =
            (\discardDiag_{\outchD} \otimes \ident_A
            \otimes \discardDiag_{\voutchB_1\voutchB_2\voutchC})
            \circ
            \EmbeddingFunctor
            \Bigl(
            \ketbra{0}{0} \otimes \sem{(\voutchB_1\voutchC \assign\vinchB_1\vinchC; \termB'_2)}
            \quad +_\Hilb \quad
            \ketbra{1}{1} \otimes \sem{(\voutchB_2\voutchC \assign \vinchB_2\vinchC; \termB'_1)}
            \Bigr)
            \\ & \hspace{4cm}
            \circ
            \Bigl(
            (\ident_\qbit \otimes \discardDiag_{\voutch})
            \EmbeddingFunctor(
            \sem{\term'}
            )
            (\ident \otimes \ketbra{1}{1}_{\vinch} )
            \quad\otimes\quad (\ident \otimes  \ketbra{1}{1}_{\vinchB_1\vinchB_2\vinchC})
            \Bigr)
            \\ & =
            (\discardDiag_{\bool} \otimes \ident_{A}
            \otimes \discardDiag_{\voutchB_1\voutchB_2\voutchC})
            \circ
            \Bigl(
            \ketbra{\bfalse}{\bfalse} \otimes
            \EmbeddingFunctor\sem{\termB'_2} \otimes \ident_{\qbit^{\otimes n_1}}
            \\ & \hspace{4cm}
            \ \ +_\CPM \ \
            \ketbra{\btrue}{\btrue} \otimes
            \EmbeddingFunctor\sem{\termB'_1} \otimes \ident_{\qbit^{\otimes n_2}}
            \Bigr)
            \\ & \hspace{5cm}
            \circ
            \Bigl(
            (m \otimes \discardDiag_{\voutch})
            \EmbeddingFunctor(
            \sem{\term'}
            )
            (\ident \otimes \ketbra{1}{1}_{\vinch} )
            \quad\otimes\quad (\ident_{\Delta_2} \otimes \ketbra{1}{1}_{\vinchB_1\vinch_2\vinchC})
            \Bigr)
            \\ & =
            (\discardDiag_{\bool} \otimes \ident_{A}) \circ
            \Bigl(
            \ketbra{\bfalse}{\bfalse} \otimes
              (\ident_A \otimes \discardDiag_{\voutchB_2})
            \EmbeddingFunctor\sem{\termB'_2}
              (\ident_{\Delta_2} \otimes \ketbra{1}{1}_{\vinchB_2} )
            \otimes \discardDiag_{\voutchB_2\voutchC}\ketbra{1}{1}_{\vinchB_2\vinchC}
            \\ & \hspace{3cm}
            \ \ +_\CPM \ \
            \ketbra{\btrue}{\btrue} \otimes
              (\ident_A \otimes \discardDiag_{\voutchB_1})
            \EmbeddingFunctor\sem{\termB'_1}
              (\ident_{\Delta_2} \otimes \ketbra{1}{1}_{\vinchB_1} )
            \otimes \discardDiag_{\voutchB_1\voutchC}\ketbra{1}{1}_{\vinchB_1\vinchC}
            \Bigr)
            \\ & \hspace{6cm}
            \circ
            \Bigl(
            (m \otimes \discardDiag_{\voutch})
            \EmbeddingFunctor(
            \sem{\term'}
            )
            (\ident_{\Delta_1} \otimes \ketbra{1}{1}_{\vinch} )
            \quad\otimes\quad \ident_{\Delta_2}
            \Bigr)
            \\ & =
            (\discardDiag_{\bool} \otimes \ident_{A}) \circ
            \bigl(
            \ketbra{\bfalse}{\bfalse} \otimes \sem{\termB_2}_\Dil
            \ +_\CPM \
            \ketbra{\btrue}{\btrue} \otimes \sem{\termB_1}_\Dil
            \bigr)
            (m\sem{\term}_\Dil \otimes \ident_{\Delta_2})
            \\ & =
            \sem{\cifx{\term}{\termB_1}{\termB_2}}_\Dil.
            \tag*{\qedhere}
          \end{align*}
  \end{description}
\end{proof}

\subsection{Proof of Type Safety}

\begin{lemma}[Uniqueness of type derivation]
  \label{app:lem:op-sem:unique-derivation}
  In \QifUnitary{},
  for each $\term$, $\widehat\Gamma$, and $\Delta$,
  a type derivation tree
  $\auxJQ{\Delta}{\term \colon A}{(\vinch, \voutch)}$
  is unique.
  Note that we distinguish between $((\term;\termB);\termC)$ and $(\term;(\termB;\termC))$,
  and between $((\term\otimes\termB)\otimes\termC)$ and $(\term\otimes(\termB\otimes\termC))$.
  Also, note that
  in \cref{fig:procedural:ver2-ext:typing-rules},
  we added a type annotation to the lambda abstraction, writing it as
  $\lambda x^A. \term$.
  \thmend
\end{lemma}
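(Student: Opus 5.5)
We prove the uniqueness of the derivation of $\auxJQ{\Delta}{\term \colon A}{(\vinch, \voutch)}$ by structural induction on $\term$, for arbitrary $\widehat\Gamma$ and $\Delta$. We show a stronger statement: for fixed $\term$, $\widehat\Gamma$ and $\Delta$, at most one triple $(A, \vinch, \voutch)$ admits a derivation, and that derivation is unique. The argument is a syntax-directed inversion. The rules in \cref{fig:procedural:ver2-ext:typing-rules} are syntax-directed because each term constructor (variable, $\unitval$, $\lambda x^A.\term$, application, $\otimes$, both $\keyword{let}$ forms, $U$, $\qif$, $\inch$, $(\outch\assign\term);\termB$, $\F\langle\vinch,\voutch\rangle$) is the conclusion of exactly one rule. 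Exchange is treated by regarding $\Delta$ as a finite map from variables to types, so permuting it yields no new derivation. We also regard $\vinch$ and $\voutch$ as sets, as stated in the paper.

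For the leaf cases the derivation is unique outright. For a variable, $\Delta$ must be exactly $x \colon A$. For $\unitval$, $U$ and $\inch$ the type and channels are fixed by the term. For $\F\langle\vinch,\voutch\rangle$ they are read off from $\widehat\Gamma$ and the annotation. For $\lambda x^A.\term$, the annotation determines the extended context $\Delta, x \colon A$, and the induction hypothesis applies to the body.

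The binary rules, namely $\term;\termB$, $\term\,\termB$, $\term\otimes\termB$, the two $\keyword{let}$ forms and $(\outch\assign\term);\termB$, need more care, because the conclusion context $\Delta,\Delta'$ could in principle be split in several ways. Linearity removes this freedom. Variables in $\Delta$ unused by a subterm cannot be absorbed, since there is no weakening, so a derivable judgement forces its context to equal the free variables of the term. The split is therefore forced: $\Delta$ restricted to $\mathrm{fv}(\term)$, and the rest minus the bound variables. For $\keyword{let}$, the type of the bound variable is the type of $\term$, which the induction hypothesis determines uniquely. The channel sets of the premises are likewise forced: they are exactly the channels occurring in each subterm, by a simple auxiliary lemma proved by the same induction. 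Hence the premises are unique, and the conclusion is determined by them.

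The main obstacle is $\qif$, where both branches share the context $\Delta'$ and the channel sets $(\vinchB,\voutchB)$. This case still goes through: the condition's context is $\mathrm{fv}(\term)$ and $\Delta'$ is the remainder, both forced, and the induction hypothesis gives unique derivations for each branch. Those derivations are then required to agree on type and channels, which is a side condition rather than a source of choice. A second subtlety is that unannotated terms could in principle receive several types; this can only happen through $\lambda$, and the annotation $\lambda x^A$ together with the bracketing conventions noted in the statement exclude it. With these observations the induction closes.
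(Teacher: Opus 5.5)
Your proof is correct. The paper states this lemma without proof, offering only the two remarks: terms are syntax trees, so $((\term;\termB);\termC)$ and $(\term;(\termB;\termC))$ are distinct, and $\lambda$ carries its annotation. Your syntax-directed induction is exactly the routine argument those remarks presuppose: each constructor is matched by exactly one rule, linearity forces the context split, channel sets are forced by occurrence, and $\Delta$ is read as a finite map so that exchange adds no extra trees.

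One small imprecision: you credit uniqueness of types partly to the bracketing convention. Bracketing only rules out parse ambiguity of the term, which would otherwise yield derivations of different shapes. Uniqueness of the type comes from the $\lambda x^A$ annotation together with the fixed types of $U$, $\inch$, and $\F$ in $\widehat\Gamma$.
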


\begin{lemma}[Type in evaluation context]
  \label{app:lem:op-sem:type-in-eval-ctxt}
  If $\auxJQ{\EmpEnv}{\ctxt[\termB] \colon A}{(\vinch, \voutch)}$,
  the term $\termB$ has a type derivation
  $\auxJQ{\EmpEnv}{\termB \colon B}{(\vinchB, \voutchB)}$
  with $\vinchB \subseteq \vinch$ and $\voutchB \subseteq \voutch$.
\end{lemma}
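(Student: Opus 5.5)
The plan is to prove the lemma by structural induction on the evaluation context $\ctxt$, following the grammar in \cref{fig:procedural:qif-unitary:ev-ctxt}, and using the uniqueness of type derivations (\cref{app:lem:op-sem:unique-derivation}). The statement is strengthened slightly so the induction goes through: if $\auxJQ{\Delta}{\ctxt[\termB] \colon A}{(\vinch, \voutch)}$, then the unique derivation contains a subderivation $\auxJQ{\Delta_0}{\termB \colon B}{(\vinchB, \voutchB)}$ with $\vinchB \subseteq \vinch$ and $\voutchB \subseteq \voutch$. Moreover, $\Delta_0$ is empty when $\Delta$ is empty. The last point holds because evaluation contexts never place the hole under a binder: the hole is never inside the body of $\lambda$, nor in the body of a $\mathtt{let}$.

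For the base case $\ctxt = [\cdot]$, take $B = A$ and $(\vinchB, \voutchB) = (\vinch, \voutch)$.

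For the inductive cases, write $\ctxt = \mathcal{F}[\ctxt']$, where $\mathcal{F}$ is a single frame. The last rule of the unique derivation of $\mathcal{F}[\ctxt'[\termB]]$ is determined by the head constructor of the frame: application, tensor, sequencing, $\mathtt{let}$, $\mathtt{let}\otimes$, $\qif$ scrutinee, $(\outch \assign \ctxt'); \term$, or $(\outch \assign \inch); \ctxt'$. In each case, $\ctxt'[\termB]$ occupies a premise position whose linear context is a sub-context of the conclusion's context. Its channel sets $(\vinch', \voutch')$ appear as a disjoint summand of the conclusion's channel sets; for example, $(\vinch\vinchB, \voutch\voutchB)$ or $(\vinch\vinchB, \outch\voutchB\voutchC)$. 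Hence $\vinch' \subseteq \vinch$ and $\voutch' \subseteq \voutch$. Because the hole is not under a binder, the premise's linear context is still empty. The induction hypothesis then applies, and transitivity of $\subseteq$ finishes the case.

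The cases for the controlled contexts $\ctxt[][\qctrl,\alpha\mapsto 1] = \qifx{\inch}{\ctxt'}{\term}$ and $\qifx{\inch}{(\dots)}{\ctxt'}$ need the most care. Here the $\qif$ rule shares the branch channels: both branches are typed with $(\vinchB, \voutchB)$, and the conclusion has $(\vinch\vinchB, \voutch\voutchB)$, where $(\vinch, \voutch) = (\inch, \cdot)$ comes from the scrutinee. The branch's channel sets are therefore still contained in the conclusion's, so the same argument applies. I expect the main obstacle to be only bookkeeping in these $\qif$ cases, together with confirming that the branch context $\Delta'$ is empty. Since the whole term is closed and the scrutinee $\inch$ uses no variables, $\Delta' = \EmpEnv$.
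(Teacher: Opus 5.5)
Your proposal is correct and takes the same route as the paper, which proves the lemma by a straightforward induction on the definition of evaluation contexts. Your frame-by-frame bookkeeping spells out what the paper leaves implicit: the hole's channel sets form a summand of the conclusion's, both $\qif$ branches share their channel sets, and the linear context stays empty because no frame binds a variable over the hole. One small remark: you do not actually need uniqueness of derivations, since syntax-directed inversion of the given derivation already yields the required subderivation.
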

\begin{proof}
  By a straightforward induction on the definition of evaluation contexts.
\end{proof}

\begin{lemma}[Substitution of derivations]
  \label{app:lem:op-sem:sub-derivation}
  Let
  $\term$ and $\termB$ be terms such that
  $\auxJQ{\Delta}{\term \colon A}{(\vinch, \voutch)}$
  and
  $\auxJQ{\Delta}{\termB \colon A}{(\vinch, \voutch)}$.
  For each context $C[\cdot]$,
  $\auxJQ{\Delta'}{C[\term] \colon B}{(\vinch\vinchB, \voutch\voutchB)}$
  if and only if
  $\auxJQ{\Delta'}{C[\termB] \colon B}{(\vinch\vinchB, \voutch\voutchB)}$.
\end{lemma}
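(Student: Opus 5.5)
The plan is to prove the substitution lemma by structural induction on the context $C[\cdot]$, that is, on the position of the hole. The key input is \cref{app:lem:op-sem:unique-derivation}, uniqueness of type derivations. Because derivations are unique and syntax-directed, any derivation of $C[\term]$ decomposes canonically at the root. It consists of a derivation of the subterm occupying the hole, with some context $\Delta_0$, type $A_0$ and channels $(\vinch_0,\voutch_0)$, together with derivations of the remaining immediate subterms. The side conditions on these remaining subterms mention the hole's subderivation only through the data $(\Delta_0, A_0, \vinch_0, \voutch_0)$. I would first make precise that, for a context whose hole is reached at the position typed under $\Delta$ with type $A$, the hole's judgment is exactly $\auxJQ{\Delta}{\term\colon A}{(\vinch,\voutch)}$. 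The same judgment is available for $\termB$ by hypothesis. By symmetry, only the direction from $C[\term]$ to $C[\termB]$ needs proof.

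The base case $C=[\cdot]$ is immediate from the hypothesis. For the inductive step, write $C = D[C']$ with $D$ a one-layer context, for example $D = [\cdot]\,N$, $N\,[\cdot]$, $\lambda x.[\cdot]$, $\qifx{[\cdot]}{N_1}{N_2}$, $\qifx{P}{[\cdot]}{N}$, $(\outch\assign[\cdot]);N$, or a let form. Then do a case analysis on the last typing rule in \cref{fig:procedural:ver2-ext:typing-rules}. By uniqueness, the premise derivation for $C'[\term]$ has some judgment $\auxJQ{\Delta_1}{C'[\term]\colon B_1}{(\vinch_1,\voutch_1)}$. The induction hypothesis then yields the same judgment for $C'[\termB]$, and we reapply the same rule with the other premises unchanged. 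For rules other than $\qif$, the side conditions are disjointness of channel sets between premises and the splitting of $\Delta'$. These depend only on the channel sets and contexts of the premise, which are preserved. The binder cases, $\lambda x$ and the let forms, extend the context by a variable, and this is handled since the induction is on judgments with arbitrary contexts.

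The step I expect to be the main obstacle is the $\qif$ case, where the hole lies in one branch, say $\qifx{P}{C'[\cdot]}{N}$. Here the rule requires both branches to carry exactly the same context $\Delta'$, type and channel pair $(\vinchB,\voutchB)$. The induction hypothesis must therefore deliver not merely some derivation for $C'[\termB]$ but one with literally the same channel sets. For this I would strengthen the induction statement: for every context $C$, the judgments derivable for $C[\term]$ and $C[\termB]$ coincide, including $\Delta'$, the result type and both channel sets. Uniqueness of derivation makes this precise, since it pins down the data to be matched. The strengthened statement is what the other cases already establish, and in the $\qif$ case it gives equality of the branch judgments, so the rule reapplies.

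A minor point is the channel-set bookkeeping in the conclusion. The outer channels $\vinch\vinchB$, $\voutch\voutchB$ decompose as the hole's channels plus those contributed by $C$. This decomposition follows from \cref{app:lem:op-sem:type-in-eval-ctxt}-style reasoning extended to arbitrary contexts, together with the disjointness requirements, so no channel is shared between the hole and the rest, except across the two $\qif$ branches, where they are required equal.
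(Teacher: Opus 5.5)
Your proposal is correct and follows essentially the same route as the paper. The paper's proof consists only of the remark that the claim follows by a straightforward induction on the structure of contexts, and your argument spells that induction out: you use uniqueness of derivations, and in the $\qif$ case the lemma's own statement (fixed $\Delta'$, type and channel sets) already serves as the needed induction hypothesis, so no genuine strengthening is required.

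You are also right to restrict to contexts whose hole sits at a position typed under $\Delta$. Taken literally the lemma fails: take $C = \lambda x^{\unit}.[\cdot]$ with $\term = x$ and $\termB = U\,x$ for a one-qubit unitary $U$; then $C[\term]$ is typable but $C[\termB]$ is not.
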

\begin{proof}
  By a straightforward induction on the structure of contexts.
\end{proof}

\begin{lemma}[Substitution of variables]
  \label{app:lem:op-sem:subst}
  Let $\term$ be a term with a type derivation
  $\auxJQ{\Delta}{\term \colon A}{(\vinch, \voutch)}$,
  and $\termB$ be a term with a free variable $x$
  with sets of i/o-channels $(\vinchB, \voutchB)$ such that
  $\vinch \cap \vinchB = \voutch \cap \voutchB = \emptyset$.
  Then
  $\auxJQ{\Delta, \Delta'}{\termB[\term / x] \colon B}{(\vinch\vinchB, \voutch\voutchB)}$,
  if
  $\auxJQ{x \colon A, \Delta'}{\termB \colon B}{(\vinchB, \voutchB)}$.
\end{lemma}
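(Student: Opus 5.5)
We prove the substitution lemma by induction on the derivation of $\auxJQ{x \colon A, \Delta'}{\termB \colon B}{(\vinchB, \voutchB)}$, generalising over $\Delta'$, $B$ and the channel sets. Linearity drives the case split. In every rule except $\qif$, the context of the conclusion is split disjointly among the premises, so $x$ occurs in exactly one premise, and the i/o channels of the premises are pairwise disjoint. In such a case we apply the induction hypothesis to the premise containing $x$. This yields a derivation of that subterm with $\term$ substituted, with context extended by $\Delta$ and channel sets extended by $(\vinch, \voutch)$. We then re-apply the same rule. The disjointness side conditions of the rule still hold, because $\vinch \cap \vinchB = \voutch \cap \voutchB = \emptyset$ by hypothesis and the channels of the other premises are contained in $\vinchB$ and $\voutchB$. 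The concatenation of channel lists is therefore $(\vinch\vinchB, \voutch\voutchB)$ up to permutation, which the paper treats as sets.

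The base cases are the following. For the variable rule $\auxJQ{x \colon A}{x \colon A}{(\cdot,\cdot)}$, substitution returns $\term$ itself, with exactly the required context and channels. Constants ($U$, $\unitval$, $\inch$, $\F\langle\vinch,\voutch\rangle$) have empty linear context, so $x$ cannot occur in them and these cases are vacuous. For the binders $\lambda y.\termC$, $\letx{y}{\cdot}{\cdot}$ and $\letx{y\otimes z}{\cdot}{\cdot}$, we first $\alpha$-rename the bound variables away from the free variables of $\term$ (equivalently, from $\Delta$) and then use the induction hypothesis on the body, or on whichever premise contains $x$. The output-assignment rule $(\outch \assign \termC); \termC'$ is handled like any other two-premise rule. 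The only extra observation is that $\outch$ is fresh for $\term$, which holds because $\outch \in \voutchB$ and $\voutch \cap \voutchB = \emptyset$.

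The main obstacle is the $\qif$ rule. There $x$ may lie either in the context $\Delta_0$ of the condition, or in the shared context $\Delta''$ of the two branches. If it lies in $\Delta_0$, the argument is the standard one above. If it lies in $\Delta''$, then $x$ occurs in both branches, and both branches carry the same channel set $(\vinchB', \voutchB')$. We apply the induction hypothesis to each branch separately, using the same $\term$ with the same $(\vinch, \voutch)$. This gives $\auxJQ{\Delta, \Delta'''}{\termC_i[\term/x] \colon A'}{(\vinch\vinchB', \voutch\voutchB')}$ for $i=1,2$. The two channel sets are identical, so the $\qif$ rule applies again with the unchanged first-order result type. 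The condition term's channels remain disjoint from $(\vinch, \voutch)$ by the hypothesis, so the resulting channel set is $(\vinch\vinchB, \voutch\voutchB)$ as required. Note that this substitutes the same channel names into both branches, which is precisely what the rule demands.
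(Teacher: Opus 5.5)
Your proof is correct and takes the same approach as the paper: induction on the typing derivation of $\termB$. The only non-routine case is $\qif$ with $x$ in the shared branch context, where $\term$ is copied into both branches carrying the same channels $(\vinch,\voutch)$, and this is exactly what the $\qif$ rule requires.
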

\begin{proof}
  By induction on the type derivation of $\termB$.
  The only interesting case is when
  $x$ is contained in the body of $\qif$.
  In this case, substitution creates more than one copy of $\term$.
\end{proof}

\begin{proof}[Proof of \Cref{thm:opsem:type-sefety}]
  From \Cref{app:lem:op-sem:unique-derivation,app:lem:op-sem:type-in-eval-ctxt,app:lem:op-sem:sub-derivation},
  it suffices to prove the case in which $\ctxt$ is $[\cdot]$.
  The claim is trivial for the rules regarding $(\outch\assign\inch); \term$.
  For $\beta$-reduction,
  the claim follows from
  \cref{app:lem:op-sem:subst}.
  The remaining non-trivial cases are the ones regarding
  $\F$ and $\qif$ which can also be checked easily.
\end{proof}

\subsection{Proof of Progress Theorem}

\begin{lemma}[Type of values]
  \label{app:lem:op-sem:type-of-value}
  Assume $\auxJQ{\EmpEnv}{\val \colon A}{(\vinch, \voutch)}$.
  \begin{itemize}
    \item If $A = \unit$, then $\val = \unitval$.
    \item If $A = \qbit$, then $\val = \inch$.
    \item If $A = B \otimes C$, then $\val = \valB \otimes \valB'$.
    \item If $A = B \li C$, then $\val = U$ or $\val = \lambda x. \term$.
          \thmend
  \end{itemize}
\end{lemma}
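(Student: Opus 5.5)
The plan is a canonical-forms argument: first invert the typing derivation of the value, then analyse each syntactic form of value in turn. Values are generated by $\inch$, $\unitval$, $\val\otimes\valB$, $\lambda x.\term$ and $U$. For each form, \cref{app:lem:op-sem:unique-derivation} tells us that exactly one typing rule of \cref{fig:procedural:ver2-ext:typing-rules} can be the last rule of the derivation. We therefore read off the possible type for each form.

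The input-channel rule gives $\inch\colon\qbit$. The unit rule gives $\unitval\colon\unit$. The tensor rule gives $\val\otimes\valB\colon B\otimes C$. The lambda rule gives $\lambda x^B.\term\colon B\li C$. The unitary rule gives $U\colon \qbit^{\otimes n}\li\qbit^{\otimes n}$.

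No other rule can produce a value in the empty linear context. The variable rule would need a nonempty $\Delta$, and the rules for application, let, $\qif$, sequencing, $(\outch\assign\term);\termB$ and $\F\langle\vinch,\voutch\rangle$ have conclusions whose terms are not values syntactically.

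Having fixed the possible types, we invert the case analysis. Since type constructors are distinct, the head constructor of $A$ determines which rule must have been used.

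If $A=\unit$, the only value rule concluding type $\unit$ is the unit rule, so $\val=\unitval$. If $A=\qbit$, only the input-channel rule applies, so $\val=\inch$. If $A=B\otimes C$, only the tensor rule applies. The unitary rule concludes a function type, and $\qbit^{\otimes n}$ is not a tensor-typed value here. Hence $\val=\valB\otimes\valB'$, and the two components are again values of types $B$ and $C$ by the premises. If $A=B\li C$, the candidates are the lambda rule and the unitary rule, giving $\val=\lambda x.\term$ or $\val=U$.

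The only point needing care is the interaction between the value grammar and the rule for closed terms. We must check that a value cannot be typed through a rule whose subject merely looks like a value. The constant $\F\langle\vinch,\voutch\rangle$ is the potential trap: it has arbitrary type $A$ but is not in the value grammar, so it is excluded. Likewise, $\inch$ is typed only by its own axiom, even though channels could otherwise appear at other types. Once this is checked, each case is immediate, and no induction is needed beyond the observation that the components of a tensor value are values.
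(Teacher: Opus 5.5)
Your proposal is correct. It is the standard canonical-forms argument by inversion on the syntax-directed typing rules of \QifUnitary{}, which is what the paper takes for granted: it states this lemma without giving a proof. The appeal to \cref{app:lem:op-sem:unique-derivation} is more than you need, because each value former ($\inch$, $\unitval$, $\val\otimes\valB$, $\lambda x.\term$, $U$) is the subject of exactly one typing rule, so plain inversion already determines the type.
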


\begin{proof}[Proof of \Cref{thm:opsem:progress}]
  By induction on the structure of term.
  \begin{description}
    \item[Case $\inch$.]
          This is already a value.
    \item[Case $x$.]
          This does not happen because the term is closed.
    \item[Case $\unitval$.]
          This is already a value.
    \item[Case $\term; \termB$.]
          By induction hypothesis,
          if $\term$ does not have a reduction,
          $\term \equiv (\outch \assign \inch); \term'$
          or $\term \equiv \val$.
          For the former case, we have a reduction
          $((\outch \assign \inch); \term'); \termB
            \step (\outch \assign \inch); (\term'; \termB)$.
          For the latter case, $\val$ has the type $\unit$.
          From \Cref{app:lem:op-sem:type-of-value}, $\val = \unitval$.
          Therefore, we have a reduction
          $\unitval; \termB \step \termB$.
    \item[Case $\term\,\termB$.]
          By induction hypothesis,
          if $\termB$ does not have a reduction,
          $\termB \equiv (\outch \assign \inch); \termB'$
          or $\termB \equiv \valB$.
          For the former case, we have an obvious reduction.
          For the latter case, we can also assume that
          $\term$ does not have a reduction and is of the form
          $\term \equiv (\outch \assign \inch); \term'$
          or $\term \equiv \val$.
          Again, the former case has an obvious reduction,
          so we only need to check the case
          $\term \equiv \val$.

          The remaining case is $\val\,\valB$.
          In this case, since $\val$ has a function type,
          $\val \equiv \lambda x. \term''$ or $\val \equiv U$
          from \Cref{app:lem:op-sem:type-of-value}.
          If $\val \equiv \lambda x. \term''$, then it has a $\beta$-reduction.
          If $\val \equiv U$, then $\valB$ should have qubit type,
          so $\valB \equiv \vinch$, and it has a reduction.
    \item[Case $\term\otimes\termB$.]
          We can first reduce $\term$ to
          $(\voutch \assign \vinch); \val$,
          and then to $\val$ by the commuting conversion of $\outch \assign \inch$.
          After reducing $\term$ to a value,
          we can reduce $\termB$ to
          $(\voutchB \assign \vinchB); \valB$,
          and then to $\valB$.
          Finally, we obtain $\val \otimes \valB$, which is a value.
    \item[Case $\letx{x}{\term}{\termB}$.]
          Similar to the other cases.
          We reduce $\letx{x}{\term}{\termB}$ to
          $\letx{x}{(\voutch \assign \vinch); \val}{\termB}$.
          Then we can reduce it to
          $(\voutch \assign \vinch); \letx{x}{\val}{\termB}$
          or $\termB[\val/x]$.
    \item[Case $\letx{x\otimes y}{\term}{\termB}$.]
          Unlike the previous normal $\Let$ case,
          we have to be careful that,
          after reducing it to
          $\letx{x\otimes y}{\val}{\termB}$,
          such a $\val$ needs to have the form $\valB \otimes \valB'$
          to reduce this $\Let \otimes$ statement.
          This claim follows because $\val$ has a tensor type and by
          \Cref{app:lem:op-sem:type-of-value}.
    \item[Case $\qifx{\term}{\termB_1}{\termB_2}$.]
          As in the other cases, we can assume $\term \equiv \val$.
          Since such $\val$ has the qubit type,
          from \Cref{app:lem:op-sem:type-of-value},
          $\val \equiv \inch$.
          Now, we can assume that the term is
          $\qifx{\inch}
            {((\voutchB_1 \assign \vinchB_1); \val)}
            {((\voutchB_2 \assign \vinchB_2); \valB)}$.
          Because the set of output channels used in each branch needs to match,
          we can assume $\voutchB_1 = \voutchB_2$.
          Note that we are allowing reordering here.
          Also, because the common type $A$ of $\val$ and $\valB$ is first-order,
          \Cref{app:lem:op-sem:type-of-value} implies that
          one can prove that $\val$ and $\valB$ have the same form
          up to the name of input channels.
          Then the term has the form
          $\qifx{\inch}
            {((\voutchB \assign \vinchB); \val[\vinchC])}
            {((\voutchB \assign \vinchB'); \val[\vinchC'])}$
          where
          $\vinchB'\vinchC'$ is a permutation of $\vinchB\vinchC$
          since the set of input channels also matches.
          Therefore, the reduction rule for $\qif$ can be applied.
    \item[Case $(\outch \assign \term); \termB$.]
          If $\term$ does not reduce,
          $(\outch \assign \term); \termB
            \equiv
            (\outch \assign ((\voutchB \assign \vinchB); \val)); \termB
            \step^*
            (\voutchB \assign \vinchB); ((\outch \assign \val); \termB)
          $.
          Since $\val$ has qubit type, $\val \equiv \inchC$
          by \Cref{app:lem:op-sem:type-of-value}.
          If $\termB$ does not reduce, the whole term is in normal form.
          \qedhere
  \end{description}
\end{proof}

\subsection{Proof of Termination}

We replicate the strong normalization theorem
for the simply typed lambda calculus.

\begin{definition}[Logical relation]
  We define
  the set of contexts $\snCtxt_A$
  and the set of terms $\sn{A}$
  for each type $A$ recursively as follows
  (where $X = \bool,\qbit,\unit$):
  \begin{gather*}
    \snCtxt_{X}
    \defeq \{ [\cdot] \}
    \hspace*{2.em}
    \snCtxt_{A \li B} \defeq
    \big\{\
    \snctxt[[\cdot]\,\val][B]
    \ \ \big|\ \
    \val \in \sn{A},\
    \snctxt[][B] \in \snCtxt_B
    \ \big\}
    \\[.5pt]
    \snCtxt_{A \otimes B} \defeq
    \Big\{\ \
    \letx{x \otimes y}{[\cdot]}{\snctxt[x][A] \otimes \snctxt[y][B]}
    \ \ \Big|\ \
    \snctxt[][A] \in \snCtxt_{A},\ \snctxt[][B] \in \snCtxt_{B}
    \  \ \Big\}
    \\[.5pt]
    \sn{A} \defeq \Big \{\ \
    \term \ \ \Big|\ \
    \auxJQ{\EmpEnv}{\term \colon A}{(\_,\_)}
    \quad\land\quad
    \forall \snctxt \in \snCtxt_{A}.\ \snctxt[\term] \terminate
    \ \ \Big\}
    .
  \end{gather*}
  For each term $\term$ with a type derivation
  $\auxJQ{x_1 \colon B_1, \dots, x_n \colon B_n}
    {\term \colon A}{(\_,\_)}$,
  we write $\sn{A}[\term]$
  if the term $\term[\val_i/x_i]$ is in $\sn{A}$
  for any values $\val_i \in \sn{B_i}$.
  \thmend
\end{definition}

\begin{lemma}
  For each $\snctxt \in \snCtxt_{A}$
  and a term $\term$ of type $A$,
  if $\term \step \termB$,
  then $\snctxt[\term] \step \snctxt[\termB]$.
  \thmend
\end{lemma}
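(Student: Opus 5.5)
The plan is a structural induction on the shape of the context $\snctxt \in \snCtxt_{A}$, following the recursive definition of $\snCtxt_A$. In each case the context places its hole inside an evaluation context of \QifUnitary{}, so it suffices to check that. Reduction in \QifUnitary{} is defined as closure of the basic rules under evaluation contexts $\ctxt$, and composing evaluation contexts yields an evaluation context. Once that is known, the conclusion follows by reading the derivation of $\term \step \termB$ as a basic rule applied inside some $\ctxt$, and observing that $\snctxt[\ctxt]$ is again an evaluation context. The basic rules are $\beta$, let, commuting conversion, unitary application, global call, and qif elimination. Their premises depend only on the redex and on the annotation $c$ of the hole, so they apply unchanged inside the larger context.

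The cases run as follows.
\begin{itemize}
\item[] \emph{Base types $X$.} Here $\snctxt = [\cdot]$ and there is nothing to prove.
\item[] \emph{Function type $A \li B$.} Write $\snctxt = \snctxt_B[[\cdot]\,\val]$. By the grammar of Fig.~\ref{fig:procedural:qif-unitary:ev-ctxt}, $\ctxt\,\val$ is an evaluation context whenever $\ctxt$ is, so $[\cdot]\,\val$ is an evaluation context. By the induction hypothesis, $\snctxt_B$ is one as well, and so is their composite.
\item[] \emph{Tensor type $A \otimes B$.} The context is $\letx{x\otimes y}{[\cdot]}{\snctxt_A[x]\otimes\snctxt_B[y]}$, which is an instance of the production $\letx{x \otimes y}{\ctxt}{\term}$ with $\ctxt = [\cdot]$.
\end{itemize}
I will state this as an auxiliary claim: every $\snctxt \in \snCtxt_A$ is an evaluation context of the form $\ctxt[][\emptyset]$. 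The annotation is empty because no production of $\snCtxt$ passes under a $\qif$, so the control assignment at the hole is empty.

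The one point needing care is that reductions inside evaluation contexts are annotated by the control assignment $c$. The unitary rule and the qif-elimination rule use $C_c$, and the state component $\ket\psi$ changes. I have to check that plugging into $\snctxt$ leaves the annotation unchanged. It does, since $\snctxt$ contributes only $\emptyset$-annotated productions, and composing $\ctxt[][\emptyset]$-productions with a $\ctxt[][c]$ gives $\ctxt[][c]$. A second, minor point is that evaluation contexts are stated for terms $\term\,\ctxt$ versus $\ctxt\,\val$. The function case uses $\ctxt\,\val$, which requires the argument to be a value, and this holds because $\val \in \sn{A}$ is a value by construction. I expect this bookkeeping of annotations, together with the well-typedness of $\snctxt[\termB]$ (which follows from type safety, Theorem~\ref{thm:opsem:type-sefety}, and Lemma~\ref{app:lem:op-sem:sub-derivation}), to be the only mildly nontrivial step. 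The rest is immediate from the closure of $\step$ under evaluation contexts.
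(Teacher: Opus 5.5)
Your proof is correct. The paper states this lemma without a written proof, and your argument is the evident one behind it. Every element of $\snCtxt_A$ is a $\qif$-free evaluation context with empty control annotation, and every reduction rule of \QifUnitary{} is stated under an arbitrary evaluation context. Composing with such a context therefore preserves both the redex and the control annotation that governs the state update.
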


\begin{corollary}
  \label{app:cor:sn-then-n}
  If $\term \in \sn{A}$, then $\term\terminate$.
  \thmend
\end{corollary}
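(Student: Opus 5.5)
The plan is a standard "lift an infinite reduction into a context" argument. Suppose $\term \in \sn{A}$ but $\term$ does not terminate. Then there is an infinite sequence $\term \step \term_1 \step \term_2 \step \cdots$. By the preceding lemma, for any $\snctxt \in \snCtxt_A$ this lifts to an infinite sequence $\snctxt[\term] \step \snctxt[\term_1] \step \snctxt[\term_2] \step \cdots$. This contradicts $\snctxt[\term]\terminate$, which holds by the definition of $\sn{A}$. The whole corollary therefore reduces to one fact: \emph{$\snCtxt_A$ is nonempty for every type $A$}. (This is needed because $\sn{A}$ is defined by quantification over $\snCtxt_A$, and if that set were empty the membership condition would be vacuous.)

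To show nonemptiness, I would prove by simultaneous induction on $A$ two claims: (i) $\snCtxt_A \neq \emptyset$, and (ii) $\sn{A}$ contains a closed value $\val_A$. Assume this for the immediate subtypes of $A$.
\begin{itemize}
\item Base types $X=\unit,\qbit$ (and $\bool$ if needed): $\snCtxt_X=\{[\cdot]\}$. For the values take $\unitval$ and an input channel $\inch$, which are normal forms and so trivially terminate.
\item Tensor $A\otimes B$: the context $\letx{x\otimes y}{[\cdot]}{\snctxt[x][A]\otimes\snctxt[y][B]}$ is available by the hypothesis for $A$ and $B$. For the value take $\val_A\otimes\val_B$ (after renaming channels apart). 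Plugged into such a context, it performs one let-reduction to $\snctxt[\val_A][A]\otimes\snctxt[\val_B][B]$. Both components terminate, and reduction of a tensor proceeds left-then-right through the evaluation contexts. Hence the whole term terminates.
\item Arrow $A\li B$: the context $\snctxt[[\cdot]\,\val_A][B]$ lies in $\snCtxt_{A\li B}$ by the hypothesis. This settles (i).
\end{itemize}

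For (ii) at an arrow type, the one genuinely new step is building a closed value of type $A\li B$. Linearity forbids simply ignoring the argument, so I would first construct, by induction on $A$, a closed "consumer" term $\mathit{del}_A\colon A\li\unit$ that uses only output/input channels:
\begin{itemize}
\item $\mathit{del}_\qbit = \lambda x.(\outch\assign x);\unitval$;
\item $\mathit{del}_{A\otimes B}$ destructs the pair and consumes both components;
\item $\mathit{del}_{A\li B} = \lambda f.\,\mathit{del}_B(f\,\val_A)$, using the value $\val_A$ from the induction.
\end{itemize}
Then set $\val_{A\li B}=\lambda x.\,(\mathit{del}_A\,x);\val_B$. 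I would then check $\val_{A\li B}\in\sn{A\li B}$: for any context $\snctxt[[\cdot]\,\valB][B]$ with $\valB\in\sn{A}$, $\beta$-reduction yields $\snctxt[(\mathit{del}_A\,\valB);\val_B][B]$. For this to terminate we need the auxiliary fact that $\mathit{del}_A\,\valB$ terminates whenever $\valB\in\sn{A}$, proved by the same induction on $A$ and unfolding $\valB$ against contexts built from $\mathit{del}$. Once the discarding assignments are pulled out by commuting conversions, what remains is $\snctxt[\val_B][B]$, which terminates because $\val_B\in\sn{B}$.

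I expect this arrow case to be the main obstacle: the mutual induction must carry the consumer $\mathit{del}_A$, the value $\val_A$ and their termination properties together. I would first try to simplify it by observing that only (i) is actually needed at the top level. So it suffices to produce \emph{some} value in $\sn{A}$ for argument positions, and the consumer construction may be replaceable by a direct appeal to the fundamental lemma, if that lemma is proved independently of this corollary.
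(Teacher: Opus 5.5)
Your proof is correct, and its core is the paper's argument. The paper states the corollary as an immediate consequence of the preceding lemma, because every reduction of $\term$ lifts to a reduction of $\snctxt[\term]$ for $\snctxt \in \snCtxt_A$. The paper leaves the required nonemptiness of $\snCtxt_A$ implicit, and your construction of a closed value in $\sn{A}$ via the channel-based consumer $\mathit{del}_A$ is a sound way to supply it. You are also right to be wary of the shortcut: going through the fundamental lemma would be circular here, because the paper's proof of termination itself invokes this corollary.
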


\begin{corollary}
  \label{app:cor:sn-refl}
  If $\term \step \termB$,
  then $\sn{A}[\term]$ if and only if $\sn{A}[\termB]$.
  \thmend
\end{corollary}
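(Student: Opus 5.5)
The plan is to reduce the statement to the closed case and then use the preceding lemma on contexts in $\snCtxt_A$. For open terms, $\sn{A}[\term]$ quantifies over closing substitutions of values $\val_i \in \sn{B_i}$. So the first step is a substitution lemma: if $\term \step \termB$ then $\term[\val_i/x_i] \step \termB[\val_i/x_i]$. This should go by inspecting the rules in \cref{fig:procedural:ver2-ext:operational-semantics}.
\begin{itemize}
  \item Substituting values for variables maps evaluation contexts to evaluation contexts, since values stay values and the constructors of \cref{fig:procedural:qif-unitary:ev-ctxt} are closed under substitution.
  \item It maps redexes to redexes of the same rule; for $\beta$ and $\Let$ this uses the usual commutation of substitutions.
  \item The quantum state is unaffected, because the substituted values contain no free channels beyond those already accounted for, by \cref{app:lem:op-sem:subst}.
\end{itemize}
Typing of the substituted terms is preserved by \cref{app:lem:op-sem:subst}, and typing of the reduct by \cref{thm:opsem:type-sefety}. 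It therefore suffices to prove, for closed $\term \step \termB$ of type $A$, that $\term \in \sn{A}$ iff $\termB \in \sn{A}$.

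For the closed case, fix $\snctxt \in \snCtxt_A$. By the preceding lemma, $\snctxt[\term] \step \snctxt[\termB]$. The direction ``$\snctxt[\term]\terminate$ implies $\snctxt[\termB]\terminate$'' is immediate, since every reduction sequence from $\snctxt[\termB]$ extends to one from $\snctxt[\term]$ by prefixing this step. For the converse, I would show that $\step$ is deterministic on closed configurations, modulo the identification of commuting assignments $(\outch \assign \inch);(\outchB \assign \inchB)$ already made in the $\qif$ rule. Then the only reduction sequence from $\snctxt[\term]$ starts with the step to $\snctxt[\termB]$, so termination of $\snctxt[\termB]$ gives termination of $\snctxt[\term]$. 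If $\terminate$ is read as weak normalisation, this direction needs no determinism at all.

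The main obstacle is this determinism, or unique decomposition, claim. The proof would go by induction on the term, showing that a closed non-normal term decomposes uniquely as $\ctxt[R]$ with $R$ a redex. Two points need checking.
\begin{itemize}
  \item The evaluation contexts fix a left-to-right order: $\term\,\ctxt$ before $\ctxt\,\val$, and $\val \otimes \ctxt$ only after the left component is a value.
  \item For $\qif$, evaluation enters the else-branch only once the then-branch is already in the form $(\voutch\assign\vinch);\val$, via $\ctxt[][\qctrl,\alpha\mapsto 0]$.
\end{itemize}
The delicate spots are the overlaps between commuting conversions and the $\beta$/$\Let$ rules. For example, in $\letx{x}{((\outch\assign\inch);\val)}{\term}$ only the conversion applies, because $(\outch\assign\inch);\val$ is not a value. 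I expect that enumerating the value forms (\cref{app:lem:op-sem:type-of-value}) against the left-hand sides of the rules shows no critical pairs. Should some genuine overlap remain, I would fall back to a local confluence argument: the divergent steps are independent commuting conversions that rejoin in one step each, which suffices to transfer termination.
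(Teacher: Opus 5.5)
Your reduction to closed terms and the forward direction (prefix the step $\snctxt[\term] \step \snctxt[\termB]$ to any reduction from $\snctxt[\termB]$) are fine. The paper offers nothing more: it records the statement as an immediate consequence of the context-lifting lemma. You are right that the converse needs more, and the converse is the direction the termination proof actually uses, e.g.\ in the case $\term;\termB$.

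\textbf{The determinism claim is false.} The property you propose for the converse, determinism of $\step$, fails even modulo reordering adjacent assignments. The commuting conversions are stated for arbitrary subterms, while the evaluation contexts include $\term\,\ctxt$ for arbitrary $\term$ and also $(\outch \assign \inch); \ctxt$. Take $(\lambda z.z)\,\bigl((\outch \assign \inch); ((\lambda y.y)\,\unitval)\bigr)$. It has two distinct reducts:
\begin{itemize}
\item by the root conversion, $(\outch \assign \inch); \bigl((\lambda z.z)\,((\lambda y.y)\,\unitval)\bigr)$;
\item by $\beta$ through the context $(\lambda z.z)\,((\outch \assign \inch); [\cdot])$, $(\lambda z.z)\,((\outch \assign \inch); \unitval)$.
\end{itemize}
The same overlap arises for every conversion whose relocated subterm is still reducible. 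Your $\mathtt{let}$ example avoids it only because $\val$ is irreducible. Two root conversions also overlap on $((\outch \assign \inch); \term)\,((\outchB \assign \inchB); \termB)$ and on the analogous tensor. So the unique-decomposition proof you sketch cannot go through.

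\textbf{The fallback needs sharpening.} Local confluence does not transfer strong normalisation backwards. The system $a \to b$, $a \to c$, $c \to a$ is locally confluent and $b$ is normal, but $a$ is not strongly normalising. What works, and what holds here on terms taken modulo the commuting-assignment identification you already invoke, is the one-step diamond property: if a configuration has distinct reducts $C_1, C_2$, then $C_1 \step C_3$ and $C_2 \step C_3$ for some $C_3$. It holds for two reasons. In every root-versus-inner overlap, the conversion only relocates a subterm, without copying or erasing it and without changing the control annotation $c$ of the inner redex, so the quantum-state update agrees on both paths. And two overlapping root conversions join after one more conversion each, up to swapping two assignments.

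Given the diamond, suppose $t \step u$ with $u$ strongly normalising, and argue by well-founded induction on $u$. Any other reduct $v$ of $t$ meets $u$ at some $w$ with $u \step w$ and $v \step w$. Then $w$ is a strongly normalising proper reduct of $u$, so $v$ is strongly normalising by induction, and hence so is $t$. Equivalently, the diamond makes all maximal reduction sequences from a term equally long, so weak and strong normalisation coincide.

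Your weak-normalisation aside does not avoid this work. Under that reading, the forward direction you proved by prefixing is the one that stops being immediate.
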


\begin{lemma}
  \label{app:lem:sn-channel}
  If $\inch$ and $\outch$ do not occur in $\term$,
  $\sn{A}[\term]$ if and only if $\sn{A}[(\outch \assign \inch); \term]$.
  \thmend
\end{lemma}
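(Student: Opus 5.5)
We unfold the definition of $\sn{A}[-]$ and reduce the claim to a statement about closed terms and the contexts in $\snCtxt_A$. Since $\inch$ and $\outch$ are channels rather than variables and do not occur in $\term$, substitution commutes with the prefix: $((\outch \assign \inch); \term)[\val_i/x_i] = (\outch \assign \inch); (\term[\val_i/x_i])$. Moreover $\inch$ and $\outch$ still do not occur in $\term[\val_i/x_i]$, after renaming channels of the $\val_i$ apart if necessary, which is harmless by the disjointness conventions of the typing rules. It therefore suffices to prove, for every closed $\term$ of type $A$ not mentioning $\inch,\outch$ and every $\snctxt \in \snCtxt_A$, that $\snctxt[\term]\terminate$ if and only if $\snctxt[(\outch \assign \inch);\term]\terminate$. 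Well-typedness of the prefixed term is immediate from the typing rule for $(\outch\assign\term);\termB$ together with the disjointness of channels.

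The first step is a lemma, proved by induction on $A$ following the recursive definition of $\snCtxt_A$:
\[ \snctxt[(\outch \assign \inch);\term] \step^* (\outch \assign \inch); \snctxt[\term] \]
using only commuting conversions. For $X$ base, $\snctxt = [\cdot]$ and there is nothing to do. For $A \li B$, we have $\snctxt = \snctxt_B[[\cdot]\,\val]$. The hole of $[\cdot]\,\val$ is in evaluation position, so the conversion $((\outch\assign\inch);\term)\,\val \step (\outch\assign\inch);(\term\,\val)$ applies, and the induction hypothesis for $B$ applied to the term $\term\,\val$ finishes the case. For $A\otimes B$, the conversion for $\letx{x\otimes y}{((\outch\assign\inch);\term)}{\cdots}$ pulls the prefix to the top in one step. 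Throughout, we use that the contexts in $\snCtxt_A$ are evaluation contexts with annotation $\emptyset$, i.e.\ they are not under a $\qif$, so these conversions are permitted.

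For the direction ($\Leftarrow$), suppose $\snctxt[(\outch\assign\inch);\term]$ terminates. Then so does its reduct $(\outch\assign\inch);\snctxt[\term]$. Since $(\outch\assign\inch);[\cdot]$ is an evaluation context, every reduction sequence of $\snctxt[\term]$, including its action on the quantum state, lifts step by step to one of $(\outch\assign\inch);\snctxt[\term]$. Hence $\snctxt[\term]$ has no infinite reduction, and a normal form $(\voutch\assign\vinch);\val$ of it yields the normal form $(\outch\assign\inch);(\voutch\assign\vinch);\val$.

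For the direction ($\Rightarrow$), which I expect to be the main obstacle, we must control arbitrary reduction sequences from $\snctxt[(\outch\assign\inch);\term]$, not only the canonical one above. I would define an erasure map that deletes the single occurrence of the prefix $(\outch\assign\inch);$, and show it is a simulation. Every step of the prefixed term that is not a commuting conversion moving this particular prefix maps to exactly one step of the erased term, with the same effect on the state: unitaries act on channels of $\term$ or $\snctxt$, never on $\inch$, and the control annotation $c$ is unaffected because the prefix never sits under a $\qif$. Steps that move the prefix map to zero steps. The delicate points are (i) checking, rule by rule, that the prefix can only occupy evaluation positions reachable from the hole of $\snctxt$, and hence never enters a $\qif$ branch, where the final $\qif$ rule would treat it specially; and (ii) bounding the number of consecutive prefix-moving steps, which is at most the nesting depth of the current evaluation context. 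With these two facts, an infinite reduction of $\snctxt[(\outch\assign\inch);\term]$ would project to an infinite reduction of $\snctxt[\term]$, contradicting its termination. Finally, normal forms correspond under erasure, which completes the equivalence and hence the proof of the lemma.
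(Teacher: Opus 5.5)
The paper states this lemma without proof, so there is no argument to compare against. Your proof is correct and fills that gap. It combines two ingredients, and this is the natural route:
\begin{itemize}
  \item for ($\Leftarrow$), the commutation $\snctxt[(\outch\assign\inch);\term]\step^*(\outch\assign\inch);\snctxt[\term]$ together with lifting through the evaluation frame $(\outch\assign\inch);[\cdot]$;
  \item for ($\Rightarrow$), an erasure simulation.
\end{itemize}

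The delicate points you anticipate in ($\Rightarrow$) are simpler than you fear. Every $\snctxt\in\snCtxt_A$ is a chain of frames $[\cdot]\,\val$, possibly topped by a single frame $\letx{x\otimes y}{[\cdot]}{\cdots}$. The sibling slots of these frames are a value or a let-body, respectively. Neither contains a redex, and no commuting conversion can fire there.

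Consequently, no step other than moving your prefix ever touches the frames above $(\outch\assign\inch)$. The prefix only moves outward. It is never under a $\lambda$ or inside a $\qif$ branch, so it is never substituted or duplicated. The total number of prefix-moving steps in any reduction is at most the depth of $\snctxt$. Every other step happens inside the body and erases to exactly one step with the same effect on the state. So you do not need a per-segment bound on consecutive prefix moves.

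Two further remarks:
\begin{itemize}
  \item Your argument establishes strong normalisation, which matches the Termination theorem. Your ($\Leftarrow$) relies on reducts of terminating terms terminating. Under a weak-normalisation reading, the same two tools still work with the directions swapped.
  \item The freshness convention you invoke for the substituted values $\val_i$ must also apply to the values occurring inside $\snctxt$ at function types. Otherwise $\snctxt[(\outch\assign\inch);\term]$ is ill-typed. The paper's definition of $\sn{A}[-]$ leaves this convention implicit.
\end{itemize}
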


\begin{proof}[Proof of \Cref{thm:opsem:termination}]
  Using the lemmas above, we prove $\sn{A}[\term]$
  for each term $\term$ of type $A$
  by induction on the length of $\Defs$
  and type derivation of $\term$.
  We often just write $\term$ to mean
  $\term[\val_i/x_i]$ when there is no confusion.
  We use \cref{app:cor:sn-then-n,app:cor:sn-refl,app:lem:sn-channel}.

  \begin{description}
    \item[Case $x$.] Trivial.
    \item[Case $\unitval$.] Trivial.
    \item[Case $\term; \termB$.]
          By induction hypothesis, $\term \terminate$.
          So $\term; \termB \step^* ((\voutch \assign \vinch); \unitval); \termB
            \step^* (\voutch \assign \vinch); (\unitval; \termB)
            \step (\voutch \assign \vinch); \termB$.
          Because $\sn{A}[\termB]$,
          we obtain $\sn{A}[\term;\termB]$.
    \item[Case $\lambda x. \term$.]
          For any $\val \in \sn{A}$,
          we have $(\lambda x. \term)\, \val
            \step \term[\val/x]$.
          By induction hypothesis, $\sn{B}[\term[\val/x]]$,
          therefore $\sn{A\li B}[\lambda x. \term]$.
    \item[Case $\term\, \termB$.]
          By induction hypothesis, it suffices to prove for the case $\val\,\valB$
          for some $\sn{A\li B}[\val]$ and $\sn{A}[\valB]$,
          and it follows from the definition of $\sn{A\li B}[\val]$.
    \item[Case $\term \otimes \termB$.]
          By induction hypothesis, it suffices to prove for the case $\val\otimes\valB$
          for some $\sn{A}[\val]$ and $\sn{B}[\valB]$.
          For each $\snctxt[][A] \in \snCtxt_A$ and $\snctxt[][B] \in \snCtxt_B$,
          $\letx{x \otimes y}{\val\otimes\valB}{\snctxt[x][A] \otimes \snctxt[y][B]}
            \step \snctxt[\val][A] \otimes \snctxt[\valB][B] \terminate$.
    \item[Case $\letx{x}{\term}{\termB}$.]
          This can be identified with the case $(\lambda x. \termB) \term$.
    \item[Case $\letx{x \otimes y}{\term}{\termB}$.]
          This is essentially the same as the previous $\Let$ case.
    \item[Case $\qifx{\term}{\termB_1}{\termB_2}$.]
          Since this term is first-order,
          it suffices to check the termination.
          By induction hypothesis,
          $\qifx{\term}{\termB_1}{\termB_2}
            \step^* \qifx{((\voutch \assign \vinch); \inchB)}{\termB_1}{\termB_2}
            \step^* (\voutch \assign \vinch); \qifx{\inchB}{\termB_1}{\termB_2}
          $.
          So we can assume $\term \equiv \inch$.
          Again, by induction hypothesis,
          $\qifx{\inch}{\termB_1}{\termB_2}
            \step^* \qifx{\inch}
            {((\voutch_1 \assign \vinchB_1); \val_1)}
            {((\voutch_2 \assign \vinchB_2); \val_2)}
            \step
            (\voutch_2 \assign \vinchB_2); \inch \otimes \val_2
          $.
    \item[Case $U$.]
          If $U\,\val$ is well-typed, $\val \equiv \vinch$.
          So $U \vinch \step \vinch$.
    \item[Case $\inch$.] Trivial.
    \item[Case $(\outch \assign \term); \termB$.]
          By induction hypothesis,
          $(\outch \assign \term); \termB
            \step^* (\outch \assign ((\voutchB \assign \vinchB); \inch)); \termB
            \step^* (\voutchB \assign \vinchB); (\outch \assign \inch); \termB$.
          Since $\sn{A}[\termB]$, $\sn{A}[(\outch \assign \term); \termB]$.
    \item[Case $\F\langle\vinch,\voutch\rangle$.]
          By induction hypothesis.
          \qedhere
  \end{description}
\end{proof}

\subsection{Proof of Preservation of Categorical Semantics}

\begin{lemma}
  For each term $\auxJQ{\Delta}{\term \colon A}{(\vinch, \voutch)}$
  and $\auxJQ{\Delta', x \colon A}{\termB \colon B}{(\vinchB,\voutchB)}$,
  \[
    \sem{\termB[\term/x]} =
    (\sem{\termB}\otimes \ident_{\voutch})
    \sigma (\ident_{\Delta'} \otimes \sem{\term} \otimes \ident_{\vinchB})
  \]
  with some appropriate swap $\sigma$.
\end{lemma}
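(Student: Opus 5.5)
I will prove the substitution lemma by induction on the type derivation of $\auxJQ{\Delta', x \colon A}{\termB \colon B}{(\vinchB,\voutchB)}$, using the fact that the semantics of \QifUnitary{} is a compositional interpretation in the compact closed category $\Hilb$. Here i/o channels are treated as extra tensor factors carried alongside the linear context: input channels sit on the domain side and output channels on the codomain side. The claim is then the usual statement that substitution corresponds to precomposition. It holds up to the symmetric monoidal coherence isomorphisms needed to bring the channel wires of $\term$ ($\vinch$ in, $\voutch$ out) to their positions next to those of $\termB$. The swap $\sigma$ is exactly this coherence isomorphism. By Mac Lane coherence for symmetric monoidal categories, any two such rearrangements agree, so I never need to track $\sigma$ explicitly: it suffices that both sides agree as string diagrams.

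The base cases are routine. For $\termB = x$, we have $\sem{x} = \ident_A$ and $\termB[\term/x] = \term$, so both sides equal $\sem{\term}$ up to $\sigma$. For all other closed axioms ($\inch$, $\unitval$, $U$, $\F\langle\vinch,\voutch\rangle$), $x$ cannot occur, because the context would then have to be empty.

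For the multiplicative rules (application, $\otimes$, sequencing $;$, $\letx{x}{-}{-}$, $\letx{x\otimes y}{-}{-}$, and $(\outch\assign\term_0);\termB_0$), linearity of the context puts $x$ in exactly one premise. I apply the induction hypothesis to that premise. The conclusion then follows by bifunctoriality of $\otimes$ and naturality of the symmetry, since the semantic clause has the form $h\circ(\sem{\termB_0}\otimes\sem{\termB_1})$ and precomposition with $\sem{\term}$ slides into the relevant tensor factor. The disjointness hypothesis $\vinch\cap\vinchB=\voutch\cap\voutchB=\emptyset$ is what makes the resulting channel sets well-formed. For $\lambda y.\termB_0$, the claim follows from the hypothesis on $\termB_0$ together with naturality of currying $\Lambda$ in its parameter: $\Lambda(f)\circ g = \Lambda(f\circ(g\otimes\ident))$.

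The main obstacle is the $\qif$ case when $x$ lies in the branch context $\Delta'$, because substitution duplicates $\term$ into both branches (cf.\ the proof of \cref{app:lem:op-sem:subst}). Here the typing rule forces both branches to use the \emph{same} channel sets, so both copies of $\term$ contribute the identical wires $\vinch,\voutch$. The semantics is $d^{-1}(\sem{\termB_2}\oplus\sem{\termB_1})\,d\,(\sem{\termB_0}\otimes\ident)$. After applying the induction hypothesis to each branch, I need
\[
  (f_2\circ(\ident\otimes g))\oplus(f_1\circ(\ident\otimes g)) = (f_2\oplus f_1)\circ(\ident_2\otimes\ident\otimes g),
\]
with $g=\sem{\term}$ acting on the shared wires. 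This holds by naturality of the distributivity isomorphism $d_{\oplus_\Hilb}$ together with the biproduct identity $(f_2\oplus f_1)\circ(g\oplus g) = (f_2 g)\oplus(f_1 g)$. In concrete terms, $\ketbra00\otimes f_2 g+\ketbra11\otimes f_1 g = (\ketbra00\otimes f_2+\ketbra11\otimes f_1)(\ident\otimes g)$, which is just linearity. When $x$ instead occurs in the condition, the case reduces to the multiplicative pattern above. This completes the induction.
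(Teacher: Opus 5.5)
Your proof is correct and follows the same route as the paper, which says only that the lemma holds by a straightforward induction on the structure of $\termB$. Your treatment of the $\qif$ case is the point that induction actually has to handle: $\term$ is duplicated into both branches with identical channels and factors out by bilinearity of the controlled sum. For that factoring you need the same $\sigma$ in both branches, so the induction hypothesis should fix $\sigma$ canonically by wire labels, as your coherence remark effectively does.
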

\begin{proof}
  By a straightforward induction on the structure of $\termB$.
\end{proof}

\begin{proof}[Proof of \Cref{thm:opsem:sem-pres}]
  The preservation of categorical semantics for the rules
  regarding $(\outch\assign\inch);\term$ and $\F$ are trivial.
  It is also easy for the rules from linear lambda calculus.
  For example, we can check the preservation of $\beta$-reduction
  using the previous lemma as follows.
  \begin{align*}
     &
    \sem{\auxJQ{\Delta,\Delta'}{(\lambda x. \term)\,\termB \colon B}
      {(\vinch\vinchB, \voutch\voutchB)}}
    \quad\colon\quad
    \sem{\Delta} \otimes \sem{\Delta'} \otimes (\CC^2)^{\vinch} \otimes (\CC^2)^{\vinchB}
    \longrightarrow
    \sem{B} \otimes (\CC^2)^{\voutch} \otimes (\CC^2)^{\voutchB}
    \\ & =
    \begin{tikzpicture}[yscale=.5,baseline={(0,0)}]
      \node(D) at (0,2) {$\Delta$};
      \node(a) at (0,1) {$\vinch$};
      \node(D') at (0,-1) {$\Delta'$};
      \node(b) at (0,-2) {$\vinchB$};
      \draw (D) to ($(D) + (1,0)$);
      \draw (a) to ($(a) + (1,0)$);
      \draw (D') to ($(D') + (1,0)$);
      \draw (b) to ($(b) + (1,0)$);
      \node(A*) at (2,2) {$\!A^*\!\!$};
      \node(B) at (2,1) {$B$};
      \node(xi) at (2,0) {$\voutch$};
      \node(A) at (2,-1) {$\!A\!$};
      \node(ze) at (2,-2) {$\voutchB$};
      \draw (A*) to ($(A*) - (1,0)$);
      \draw (B) to ($(B) - (1,0)$);
      \draw (xi) to ($(xi) - (1,0)$);
      \draw (A) to ($(A) - (1,0)$);
      \draw (ze) to ($(ze) - (1,0)$);
      \draw (B) to ($(B) + (1,0)$);
      \draw (xi) to ($(xi) + (1,0)$);
      \draw (ze) to ($(ze) + (1,0)$);
      \node (m) at (1,1.) [rectangle,draw,fill=white,minimum height=3em] {$\lambda x. \term$};
      \node (n) at (1,-1.5) [rectangle,draw,fill=white,minimum height=3em] {$\termB$};
      \draw (A*) to[out=-30,in=90] ($(A*)!0.5!(A) + (0.7,0)$) to[out=-90,in=30] (A);
    \end{tikzpicture}
    \quad = \quad
    \begin{tikzpicture}[yscale=.5,baseline={(0,0.2)}]
      \coordinate(x) at (0.2,3);
      \node(D) at (0,2) {$\Delta$};
      \node(a) at (0,1) {$\vinch$};
      \node(D') at (0,-1) {$\Delta'$};
      \node(b) at (0,-2) {$\vinchB$};
      \draw (x) to ($(x) + (1,0)$);
      \draw (D) to ($(D) + (1,0)$);
      \draw (a) to ($(a) + (1,0)$);
      \draw (D') to ($(D') + (1,0)$);
      \draw (b) to ($(b) + (1,0)$);
      \node(A*) at (2,4) {$\!\!A^*\!\!\!$};
      \node(B) at (2,2) {$B$};
      \node(xi) at (2,1) {$\voutch$};
      \node(A) at (2,-1) {$\!\!A\!\!$};
      \node(ze) at (2,-2) {$\voutchB$};
      \draw (B) to ($(B) - (1,0)$);
      \draw (xi) to ($(xi) - (1,0)$);
      \draw (A) to ($(A) - (1,0)$);
      \draw (ze) to ($(ze) - (1,0)$);
      \draw (B) to ($(B) + (1,0)$);
      \draw (xi) to ($(xi) + (1,0)$);
      \draw (ze) to ($(ze) + (1,0)$);
      \node (m) at (1,2.) [rectangle,draw,fill=white,minimum height=3em] {$\term$};
      \node (n) at (1,-1.5) [rectangle,draw,fill=white,minimum height=3em] {$\termB$};
      \coordinate(x') at ($ (x) + (-0,1) $);
      \draw (x) to[out=180,in=180] (x') to (A*)
      to[out=-45,in=90] ($(A*)!0.5!(A) + (0.7,0)$) to[out=-90,in=45] (A);
    \end{tikzpicture}
    \quad = \quad
    \begin{tikzpicture}[yscale=.5,baseline={(0,-0.5)}]
      \node(D) at (0,1) {$\Delta$};
      \coordinate(x) at (0,2);
      \node(a) at (0,0) {$\vinch$};
      \node(D') at (0,-1) {$\Delta'$};
      \node(b) at (0,-2) {$\vinchB$};
      \draw (D) to ($(D) + (2.5,0)$);
      \draw (a) to ($(a) + (2.5,0)$);
      \draw (D') to ($(D') + (1,0)$);
      \draw (b) to ($(b) + (1,0)$);
      \node(B) at (3.5,1) {$B$};
      \node(xi) at (3.5,0) {$\voutch$};
      \node(ze) at (2,-2) {$\voutchB$};
      \draw (B) to ($(B) - (1,0)$);
      \draw (xi) to ($(xi) - (1,0)$);
      \draw (ze) to ($(ze) - (1,0)$);
      \coordinate (m) at (2.5,1.);
      \coordinate (n) at (1,-1.5);
      \draw ($(n) + (0,.9)$) to[out=0,in=180] ($(m) + (0,.9)$);
      \node at (m) [rectangle,draw,fill=white,minimum height=3em] {$\term$};
      \node at (n) [rectangle,draw,fill=white,minimum height=3em] {$\termB$};
    \end{tikzpicture}
    \\ & =
    \sem{\auxJQ{\Delta,\Delta'}{\term[\termB/x] \colon B}{(\vinch\vinchB,\voutch\voutchB)}}.
  \end{align*}

  We prove preservation for the remaining rules regarding unitaries and $\qif$.
  For each case, we prove by induction on
  the definition of the evaluation context $\ctxt$.
  \begin{description}
    \item[Case {$[\cdot]$}.]
            The unitary case is easy: $\sem{U \vinch}\ket{\phi}
          = (U_\vinch)\ket{\phi} = \sem{\vinch}(U_\vinch\ket{\phi})$.
            We check the $\qif$ case.
            From \Cref{app:lem:op-sem:type-of-value}, a first-order value is
          either $\inch$, $\unitval$, or a tensor product of first-order values.
            Thus, for a first-order value $\val$,
            we can assume that input channels $\inch_1,\dots,\inch_n$
            occur in $\val$ in this order, and the semantics of
          $\auxJQ{\EmpEnv}{\val \colon A}{(\vinch, \cdot)}$
            is the identity on
          $(\CC^2)_{\inch_1}\otimes\cdots\otimes(\CC^2)_{\inch_n}$
          by ignoring the unit type.
          Then
          \begin{align*}
             &
            \left\llbracket\begin{matrix*}[l]
                             \qif\ {\inch}\ \ifthen
                             \\ \quad
                             (\outch_1 \assign \inchB_{\sigma(1)}); \dots;
                             (\outch_n \assign \inchB_{\sigma(n)});
                             \\ \quad
                             \val[\inchB_{\sigma(n+1)}, \dots, \inchB_{\sigma(m)}]
                             \\ \ifelse
                             \\ \quad
                             (\outch_1 \assign \inchB_1); \dots;
                             (\outch_n \assign \inchB_n);
                             \\ \quad
                             \val[\inchB_{n+1}, \dots, \inchB_{m}]
                           \end{matrix*}\right\rrbracket
            ({\ket0}_\inch\ket{\psi_0}_\vinchB + {\ket1}_\inch\ket{\psi_1}_\vinchB)
            \\ & =
            \left(
            \ketbra{0}{0}_{\inch} \otimes
            \left\llbracket\begin{matrix*}[l]
                             (\outch_1 \assign \inchB_1); \dots;
                             (\outch_n \assign \inchB_n);
                             \\
                             \val[\inchB_{n+1}, \dots, \inchB_{m}]
                           \end{matrix*}\right\rrbracket
            \ +\
            \ketbra{1}{1}_{\inch} \otimes
            \left\llbracket\begin{matrix*}[l]
                             (\outch_1 \assign \inchB_{\sigma(1)}); \dots;
                             (\outch_n \assign \inchB_{\sigma(n)});
                             \\
                             \val[\inchB_{\sigma(n+1)}, \dots, \inchB_{\sigma(m)}]
                           \end{matrix*}\right\rrbracket
            \right)
            \\ & \hspace{9cm}
            \bigl({\ket0}_\inch\ket{\psi_0}_\vinchB + {\ket1}_\inch\ket{\psi_1}_\vinchB\bigr)
            \\ & =
            \ket{0} \otimes
            \left\llbracket\begin{matrix*}[l]
                             (\outch_1 \assign \inchB_1); \dots;
                             (\outch_n \assign \inchB_n);
                             \\
                             \val[\inchB_{n+1}, \dots, \inchB_{m}]
                           \end{matrix*}\right\rrbracket
            \ket{\psi_0}_\vinchB
            \ +\
            \ket{1} \otimes
            \left\llbracket\begin{matrix*}[l]
                             (\outch_1 \assign \inchB_{\sigma(1)}); \dots;
                             (\outch_n \assign \inchB_{\sigma(n)});
                             \\
                             \val[\inchB_{\sigma(n+1)}, \dots, \inchB_{\sigma(m)}]
                           \end{matrix*}\right\rrbracket
            \ket{\psi_1}_\vinchB
            \\ & =
            \ket{0} \ket{\psi_0} + \ket{1} \bigl(\lceil\sigma\rceil\ket{\psi_1}\bigr)
            \\ & =
            \bigl(C_{\inch}\ceil{\sigma}\bigr)
            ({\ket0}_\inch\ket{\psi_0}_\vinchB + {\ket1}_\inch\ket{\psi_1}_\vinchB)
            .
          \end{align*}
    \item[Case {$\qifx{\inch}{\ctxt}{\termB}$}.]
          Let us assume $(\ket\psi, \ctxt[\term]) \step (U\ket\psi, \term')$.
          By induction hypothesis,
          $\sem{\ctxt[\term]}\ket\psi = \sem{\term'} (U\ket\psi)$.
          Then
          \begin{align*}
             &
            \sem{\qifx{\inch}{\ctxt[\term]}{\termB}} (\ket0\ket{\phi} + \ket1\ket{\psi})
            \\ & =
            \bigl(
            \ketbra{0}{0} \otimes \sem{\termB}
            + \ketbra{1}{1} \otimes \sem{\ctxt[\term]}
            \bigr)\
            (\ket0\ket{\phi} + \ket1\ket{\psi})
            \\ & =
            \ket{0} (\sem{\termB} \ket{\phi})
            + \ket{1}  (\sem{\ctxt[\term]}\ket{\psi})
            \\ & =
            \ket{0} (\sem{\termB} \ket{\phi})
            + \ket{1}  (\sem{\term'} (U\ket{\psi}))
            \\ & =
            \bigl(
            \ketbra{0}{0} \otimes \sem{\termB}
            + \ketbra{1}{1} \otimes \sem{\term'}
            \bigr)\
            (\ket0\ket{\phi} + \ket1(U\ket{\psi}))
            \\ & =
            \sem{\qifx{\inch}{\term'}{\termB}}
            \ (C_{\inch\mapsto1} U) (\ket0\ket{\phi} + \ket1\ket{\psi}).
          \end{align*}
    \item[Case {$\qifx{\inch}{\dots}{\ctxt}$}.]
          The same as the previous case.
    \item[Case others.]
          For the other cases,
          $\ctxt$ can be written as $C[\ctxt']$ for some context $C$
          and evaluation context $\ctxt'$
          such that $\auxJQ{x \colon A}{C[x]}{(\_,\_)}$.
          Therefore,
          if $(\ket\psi, \ctxt'[\term]) \step (\ket\phi, \ctxt'[\termB])$,
          by induction hypothesis,
          \begin{align*}
             &
            \sem{C[\ctxt'[\term]]}\ \ket\psi
            = \sem{C[y][\ctxt'[\term]/y]}\ \ket\psi
            = \sem{C[y]} \circ \sem{\ctxt'[\term]}
            \ \ket\psi
            \\ &
            = \sem{C[y]} \circ \sem{\ctxt'[\termB]}
            \ \ket\phi
            = \sem{C[y][\ctxt'[\termB]/y]}\ \ket\phi
            = \sem{C[\ctxt'[\termB]]}\ \ket\phi.
            \qedhere
          \end{align*}
  \end{description}
\end{proof}

\subsection{Proofs for Language with Divergence}

\subsubsection{Syntactic dilation}

The syntactic dilation of $\abort$ is defined as
$\abort \SynDil (\testch\assign\inch); \unitval$
where the right-hand side has the typing derivation
\[
  \auxJQ{\EmpEnv}{
    (\testch\assign\inch); \unitval
    \colon \unit
  }{(\inch, \cdot, \testch)},
\]
which proves \cref{thm:syn-dil:type-safety}
in the extended syntactic dilation.

We define the semantics of $\abort$ in $\Dil$ by
\[
  \sem{\abort} = (0,0) \quad\in\quad \Dil((1),(1)).
\]
We can calculate that the second projection ($= 0$)
is equal to $\csem{ (\testch\assign\inch); \unitval }$
as
\begin{align*}
  \csem{\,
    (\testch\assign\inch); \unitval
    \,}
  \ =\
  \EmbeddingFunctor
  \Bigl(
  \bra{0}
  \circ
  \sem{\,
      (\testch\assign\inch); \unitval
      \,}
  \circ
  \ket{1}
  \Bigr)
  \ =\
  \EmbeddingFunctor
  \langle 0 | 1 \rangle
  \ =\
  0
  ,
\end{align*}
which proves \cref{thm:syn-dil:sem-pres}
in the extended syntactic dilation.

\subsubsection{QifUnitary with output test channels}

Type safety (\cref{thm:opsem:type-sefety}),
progress theorem (\cref{thm:opsem:progress}),
termination (\cref{thm:opsem:termination}),
and
semantic preservation (\cref{thm:opsem:sem-pres})
trivially hold in the extended language
since there is no difference between output channels
and output test channels in the operational semantics.

We restate the adequacy theorem (\cref{cor:adequacy})
in these extended languages.
\begin{corollary}[Adequacy for language with divergence]
  In the language \Qif{} extended with divergence,
  for any collection of function definitions $\Defs\colon\Gamma$ and any term
  $\termJC{\EmpEnv}{\term:A}$ such that
  $\FirstOrderType{A}$,
  the term $\term$ evaluates to $\sem{\term}_{\sem{\Defs}}$.
  That is,
  \begin{itemize}
    \item there exist \QifUnitary{}
          definitions $\widehat\Defs$,
          non-linear context $\widehat\Gamma$,
          and a term $\term'$
          such that
          $\Defs \SynDil \widehat\Defs$,
          $\Gamma \SynDil \widehat\Gamma$, and
          $\term \SynDil \term'$,
    \item they admit type derivations
          $\widehat\Defs \colon \widehat\Gamma$
          and
          $\auxJQ{\EmpEnv}{\term'\colon A}{(\vinch, \voutch, \vtestch)}$,
    \item the term $\term'$ evaluates to $\ket\varphi$ as
          $(\vinch \mapsto \ket{1 \dots 1}, \term) \step^*_{\widehat\Defs}
            \bigl(
            \vinch \mapsto \ket\varphi
            ,
            (\outch_1\assign\inch_{\sigma(1)});
            \dots
            (\outch_m\assign\inch_{\sigma(m)});
            (\testch_1\assign\inch_{\sigma(m+1)});
            \dots
            (\testch_k\assign\inch_{\sigma(m+k)});
            \val[\inch_{\sigma(m+k+1)}, \dots, \inch_{\sigma(n)}]
            \bigr)
          $, and
    \item by reordering the qubits,
          projecting the $(m+1)$st through $(m+k)$th qubits,
          and tracing out the first $m$ qubits,
          we obtain the state:
          \begin{align*}
            \trace_{{(\CC^2)}^{\otimes m}}(
            \
            \bra{0}^{\otimes k}
            \ceil{\sigma}\ketbra\varphi\varphi\ceil{\sigma^{-1}}
            \ket{0}^{\otimes k}
            \
            )
            \quad=\quad
            s_A \circ \sem{\term}_{\sem{\Defs}}.
            \tag*{\thmend}
          \end{align*}
  \end{itemize}
\end{corollary}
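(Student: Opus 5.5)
The corollary follows by chaining three earlier results: type safety of syntactic dilation (\cref{thm:syn-dil:type-safety}), termination together with progress (\cref{thm:opsem:termination,thm:opsem:progress}), and the two semantic-preservation theorems (\cref{thm:opsem:sem-pres} for reduction, \cref{thm:syn-dil:sem-pres} for dilation). Throughout, \cref{thm:extended-lang} lets us treat $\abort$ and output test channels exactly like ordinary constants and output channels, except that $\csem{-}$ applies $\bra{0}$ on test channels instead of discarding them.

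\textbf{Constructing and reducing the dilated term.}
\begin{enumerate}
\item Translate $\Defs$, $\Gamma$ and $\term$ by the structural rules, with $\abort \SynDil (\testch\assign\inch);\unitval$. This gives $\widehat\Defs$, $\widehat\Gamma$ and $\term'$.
\item The extended version of \cref{thm:syn-dil:type-safety} (already checked for the new $\abort$ case) gives the typing derivations $\widehat\Defs\colon\widehat\Gamma$ and $\auxJQ{\EmpEnv}{\term'\colon A}{(\vinch,\voutch,\vtestch)}$.
\item Start from the configuration $(\vinch\mapsto\ket{1\dots1},\term')$. By type safety and termination it reduces in finitely many steps to a normal form. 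By progress, that normal form is a sequence of assignments followed by a value.
\item Normal forms list every output and test channel exactly once, and each is fed by a distinct input channel. Since $A$ is first order, the closing value is a tensor of input channels and units. Hence the normal form has the displayed shape for some permutation $\sigma$ and final state $\ket\varphi$.
\end{enumerate}

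\textbf{Comparing the semantics.} Iterating \cref{thm:opsem:sem-pres} along the reduction sequence gives
\[
\sem{\term'}_{\eval}\ket{1\dots1}=\sem{\text{normal form}}_{\eval}\ket\varphi .
\]
The right-hand side is just the permutation $\ceil{\sigma}$ followed by relabelling: the first $m$ qubits go to output channels, the next $k$ to test channels, and the rest to the value. Applying $\EmbeddingFunctor$, discarding the output channels and applying $\bra{0}$ to the test channels, this says
\[
\csem{\term'}=\trace_{(\CC^2)^{\otimes m}}\bigl(\bra{0}^{\otimes k}\ceil{\sigma}\ketbra\varphi\varphi\ceil{\sigma^{-1}}\ket{0}^{\otimes k}\bigr).
\]
Two facts are needed here: $\EmbeddingFunctor$ preserves composition and tensor, and $\EmbeddingFunctor(\bra0)$ acts on a density matrix as $\bra0\cdot\ket0$.

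Next, the proof of \cref{thm:syn-dil:sem-pres} establishes $\sem{\term}_\Dil=\csem{\term'}$, with the $\abort$ case verified separately. The term is closed, so there is no input side. For first-order $A$, the description of $c_A$ then gives
\[
\sem{\term}_\Dil=s_A\circ\sem{\term}_{\CPM}.
\]
This is because the $\Dil$-semantics of a closed term is a point $(x,x')\in c_A=\{(x,s_A x)\}$, and the first projection of that point is $\sem{\term}_{\sem{\Defs}}$. Combining the two displayed equations gives the claim.

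\textbf{Main obstacle.} The step I expect to need the most care is the normal-form description. The difficulty is showing that every test channel actually appears as an assignment $(\testch\assign\inch)$ in the normal form and is not absorbed elsewhere. The $\qif$ rule could in principle permute test-channel assignments between branches, so the reduction rules must be checked to treat test channels identically to output channels. I would settle this by inspecting the typing rule for $(\testch\assign\term);\termB$ and the balancing lemma extended with $\vtestch$ counted among the outputs: $|\Delta|+|\vinch|=|A|+|\voutch|+|\vtestch|$.
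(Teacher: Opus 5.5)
Your proposal is correct and follows the paper's intended route. The paper gives no separate argument for this corollary: it is obtained by chaining type safety of the syntactic dilation, the type safety, progress, termination and soundness of the reduction relation, and semantic preservation of the dilation, all of which the paper asserts carry over verbatim to the extension with $\abort$ and output test channels. You also correctly use the stronger equality $\sem{\term}_{T(\mathcal{A})}=\csem{\term'}$ established inside that last proof, together with the description of $c_A$ for first-order $A$; this is needed because the stated form $\sem{\term}=r_A\circ\csem{\term'}$ alone only yields $s_A\circ\sem{\term}=s_A\circ r_A\circ\csem{\term'}$, not equality with $\csem{\term'}$ itself.
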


\subsection{Full Abstraction for Qif with Divergence}

\begin{lemma}
  \label{app:lem:(co)unit-definable}
  For each type $A$ in \Qif{},
  there exist terms $H_n$ and $E_n$
  whose types are
  \begin{align*}
    \termJC{\EmpEnv}{H_n \colon \qbit^{\otimes n} \otimes \qbit^{\otimes n}}
    \qquad\quad
    \termJC{x_n \colon \qbit^{\otimes n} \otimes \qbit^{\otimes n}}
    {E_n \colon \unit},
  \end{align*}
  and whose semantics are the unit and counit of the compact closed structure
  up to some non-zero reals $a_n, b_n > 0$.
  \begin{align*}
    \sem{H_n} = a_n \cdot \eta_{(2^n)}
    \qquad\quad
    \sem{E_n} = b_n \cdot \varepsilon_{(2^n)}.
  \end{align*}
\end{lemma}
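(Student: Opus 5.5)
The plan is to construct $H_n$ and $E_n$ explicitly from $\ket1$, unitaries, $\meas$, $\cif$ and $\abort$, and then compute their semantics compositionally using \cref{fig:procedural:categorical-semantics}. Recall that $\eta_{(2^n)}$ in $\CPM$ is the map $1 \mapsto \ketbra{\Phi}{\Phi}$ with $\ket\Phi = \sum_{i} \ket{i}\ket{i}$ unnormalised. Likewise $\varepsilon_{(2^n)}$ is $\rho \mapsto \bra\Phi \rho \ket\Phi$. We only need to realise these up to positive scalars. This freedom is essential: every definable term is trace non-increasing, so the exact (trace-$2^n$) unit cannot be defined.

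\emph{Construction of $H_n$.} Prepare $2n$ qubits as $\ket1$ each and apply the unitary $X^{\otimes 2n}$ to obtain $\ket{0\cdots0}$. Then apply $H^{\otimes n}$ to the first $n$ qubits, followed by the $n$ gates $\gCX$ from qubit $i$ to qubit $n+i$. Each of these unitaries is an $\Hilb$-term, usable in $\JC$ via the embedding rule. The resulting pure state is $2^{-n/2}\ket\Phi$, so $\sem{H_n} = \EmbeddingFunctor(W)\circ \ketbra{1\cdots1}{1\cdots1} = 2^{-n}\ketbra\Phi\Phi$, where $W$ is the composite unitary. This gives $a_n = 2^{-n}$. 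The verification only uses that $\sem{U}=U$, that $\EmbeddingFunctor$ is a strict monoidal functor, and that $\sem{\ket1}_\CPM = \ketbra11$.

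\emph{Construction of $E_n$.} Apply the inverse circuit $W^\dagger$ (the CNOTs, then $H^{\otimes n}$) to $x_n$, destructure the result into its $2n$ qubits, and for each qubit $q$ run $\cifx{\meas\,q}{\abort}{\unitval}$, sequencing the $2n$ resulting unit-typed terms with $;$. Since $\sem{\btrue}=\mathrm{inr}$ and $\sem{\meas}=m$, the true branch corresponds to outcome $\ketbra11$. By the semantics of $\cif$ (discard of the boolean after applying the branch) and $\sem{\abort}=0$, each such term denotes $\rho\mapsto\bra0\rho\ket0$. The composite is therefore $\rho \mapsto \bra{0\cdots0}W^\dagger\rho W\ket{0\cdots0} = 2^{-n}\bra\Phi\rho\ket\Phi$, giving $b_n = 2^{-n}$. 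Linearity is respected, since every qubit is used exactly once, and $\cif$ has first-order result type $\unit$.

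The main obstacle, though modest, is the bookkeeping in $E_n$. One must confirm the distributivity isomorphism and the discard in the $\cif$ clause so that $\cifx{\meas\,q}{\abort}{\unitval}$ indeed projects onto $\ket0$ rather than $\ket1$. If the boolean convention turns out to be reversed, the fix is to swap the branches or insert an $X$ gate, with the same $b_n$. Beyond that, the argument reduces to compositionality together with the on-the-nose preservation of the compact closed structure by $\EmbeddingFunctor$, which lets the Hilbert-space computation of $W$ be transported directly to $\CPM$.
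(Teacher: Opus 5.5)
Your construction is correct and essentially the paper's. The paper takes $H_1 = \gCX(\gH\ket0 \otimes \ket0)$ and builds $E_1$ from $\gCX$, $\gH$ and $\bra0 \defeq \lambda x.\,\cifx{(\meas x)}{\abort}{\unitval}$, then obtains $H_{n+m}, E_{n+m}$ by induction (tensoring and swapping the middle factors); you instead write the $n$-fold circuit down directly, with the CNOTs from qubit $i$ to $n+i$ giving the $\qbit^{\otimes n}\otimes\qbit^{\otimes n}$ grouping at once, and both routes give $a_n = b_n = 2^{-n}$. One notational slip: in your $E_n$ computation $W$ must mean only the Hadamard-then-CNOT part, not the composite unitary from $H_n$ that includes the initial $X^{\otimes 2n}$.
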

\begin{proof}
  By induction on $n$.
  \begin{description}
    \item[Case Unit.] Trivial.
    \item[Case Qbit.]
          We define $H_{1}$ and $E_{1}$ as
          \begin{align*}
             &
            \termJC{\EmpEnv}
            {
              \gCX (\gH \ket{0} \otimes \ket{0})
              \colon \qbit \otimes \qbit}
            \\ &
            \termJC{x_{\qbit} \colon \qbit \otimes \qbit}{
              \letx{y \otimes z}
              {\gCX\ x_{\qbit}}
              {\bra 0 (\gH y); \bra0 z\,}
              \colon \unit }.
          \end{align*}
          Here,
          $\gH$ is the Hadamard gate,
          $\gCX$ is the CNOT gate,
          and $\ket0$ is an abbreviation of $\gX\ket1$.
          In the second line, $\bra0$
          is the term defined by
          \begin{align*}
            \bra{0}
             & \defeq
            \lambda x.\, \cifx{(\meas x)}{\abort}{\unitval}
            \colon \bool \li \unit.
          \end{align*}
          The semantics of $H_A$ gives the maximally entangled state,
          which is $\frac{1}{2}\eta_{(2)}$.
          The semantics of $E_A$ defines
          $\frac{1}{2}\varepsilon_{(2)}$.
    \item[Case $n + m$.]
          We define $H_{n + m}$ and $E_{n + m}$ as
          \begin{align*}
             &
            \termJC{\EmpEnv}
            {
              \letx{x_n \otimes y_n}{H_n}{
                \letx{x_m \otimes y_m}{H_m}{
                  (x_n \otimes x_m) \otimes (y_n \otimes y_m)
                }
              }
            }
            \\ & \hspace{15em}
            \colon (\qbit^n \otimes \qbit^m)\otimes (\qbit^n \otimes \qbit^m)
            \\ &
            \termJC{x_{n \otimes m} \colon (n \otimes m) \otimes (n \otimes m)}{
              \Let {(y_n \otimes y_m) \otimes (z_n \otimes z_m)}
                = {x_{n \otimes m}}
              \In
            \\ & \hspace{14em}
              {E_{n}[y_n \otimes z_n/x_n]; E_{m}[y_m \otimes z_m/x_m] }
              \colon \unit }.
          \end{align*}
          The semantics of $H_{n + m}$ is
          $a_n a_m \cdot \sigma_{n,m} (\eta_n \otimes \eta_m)$,
          where $\sigma$ swaps the middle factors.
          Thus, this yields $a_n a_m \cdot \eta_{n \otimes m}$.
          Similarly, $E_{n \otimes m}$ defines
          $b_n b_m \cdot \varepsilon_{n \otimes m}$.
          \qedhere
  \end{description}
\end{proof}

\begin{lemma}
  \label{app:lem:mono-epi-tensor}
  In $\CPM$, monomorphisms and epimorphisms are closed under tensor products.
\end{lemma}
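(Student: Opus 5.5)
The plan is to reduce the statement to a one-sided claim and prove that claim using only the compact closed structure of $\CPM$ recalled earlier. No concrete description of monomorphisms in terms of completely positive maps should be needed. First, factor a tensor product as
\[
  f \otimes g = (f \otimes \ident_{D}) \circ (\ident_{A} \otimes g)
\]
for $f \colon A \to B$ and $g \colon C \to D$. Monomorphisms (resp.\ epimorphisms) are closed under composition in any category. So it suffices to show that $f$ mono implies $f \otimes \ident_{D}$ mono, for an arbitrary object $D$. The case $\ident_{A} \otimes g$ then follows by conjugating with the symmetry isomorphisms $\sigma$, because composing with an isomorphism preserves monicity. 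The statement for epimorphisms is the dual argument.

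For the key step, use the compact closed transpose. For any objects $X$, $A$, $D$ there is a bijection
\[
  \Phi_{X,A} \colon \CPM(X, A \otimes D) \cong \CPM(X \otimes D^*, A),
  \qquad
  \Phi(h) = (\ident_A \otimes \varepsilon_D)\circ \sigma \circ (h \otimes \ident_{D^*}),
\]
up to the appropriate braiding. Its inverse is built from $\eta_D$, and the zigzag identities show that the two maps are mutually inverse. This bijection is natural in $A$: for $f \colon A \to B$ we have $\Phi_{X,B}((f \otimes \ident_D)\circ h) = f \circ \Phi_{X,A}(h)$. This is immediate from the bifunctoriality of $\otimes$, since $f$ acts on a tensor factor disjoint from the one on which $\varepsilon_D$ acts.

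Now suppose $f$ is mono and $(f \otimes \ident_D)\circ h = (f \otimes \ident_D)\circ k$ for $h, k \colon X \to A \otimes D$. Applying $\Phi$ and using naturality gives $f \circ \Phi(h) = f \circ \Phi(k)$. Monicity of $f$ gives $\Phi(h) = \Phi(k)$, and injectivity of $\Phi$ gives $h = k$. For epimorphisms, suppose $g$ is epi and $h \circ (g \otimes \ident_D) = k \circ (g \otimes \ident_D)$. Use the bijection $\CPM(B \otimes D, Y) \cong \CPM(B, Y \otimes D^*)$, which is natural in $B$ by the same argument. It turns the hypothesis into $\Psi(h)\circ g = \Psi(k)\circ g$, so $\Psi(h) = \Psi(k)$ and hence $h = k$.

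The only point that needs care is the naturality of the transpose bijection, together with checking that the duals $\eta$, $\varepsilon$ of $\CPM$ described earlier genuinely satisfy the zigzag identities. The latter is part of the standard fact that $\CPM$ is compact closed, which the paper assumes. Beyond that, the argument is a routine diagram chase.
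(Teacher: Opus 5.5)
Your proof is correct, but it follows a different route from the paper's.

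The paper uses compact closedness only to reduce the epimorphism case to the monomorphism case by duality, and then argues concretely about completely positive maps:
\begin{itemize}
\item It shows that $f=(f_{ij})$ is monic in \CPM{} iff the associated linear map $\lceil f\rceil$ on $\bigoplus_i \Mat_{n_i}(\CC)$ is injective. The reason is that states $(1)\to\vec n$ are exactly tuples of positive semidefinite matrices, and every matrix is a complex combination of four positive semidefinite ones.
\item It then uses $\lceil f\otimes g\rceil\cong\lceil f\rceil\otimes\lceil g\rceil$, together with the fact that tensor products of injective linear maps are injective.
\end{itemize}

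You never look inside \CPM{}. Your key observation is that in a compact closed category the functor $-\otimes D$ has $-\otimes D^*$ as both a left and a right adjoint. Being a right adjoint, it preserves monomorphisms; being a left adjoint, it preserves epimorphisms. Your transpose bijections $\Phi$ and $\Psi$, with their naturality, are exactly the unfolding of these adjunctions. You then finish by factoring $f\otimes g$ through $f\otimes\ident_D$ and $\ident_A\otimes g$ and using closure under composition.

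What each approach buys:
\begin{itemize}
\item Your argument is purely formal and holds verbatim in any symmetric compact closed category, including \Hilb{}.
\item The paper's proof is specific to \CPM{}, but it yields as a by-product a concrete characterisation: a morphism of \CPM{} is monic exactly when its underlying linear map is injective.
\end{itemize}
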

\begin{proof}
  Since it is compact closed, by duality,
  it suffices to prove that tensor products of monics are monic.

  We first prove that $(f_{ij}) \colon \vec n \longrightarrow \vec m$ in $\CPM$ 
  is monic if and only if $\ceil{f}$ is injective,
  where $\ceil{f} \colon\allowbreak
  \bigoplus_{i} \Mat_{n_i}(\CC) \longrightarrow \bigoplus_{j} \Mat_{m_j}(\CC)$
  is the $\CC$-linear map that corresponds to
  the matrix of $\CC$-linear maps $(f_{ij})$
  such that $f_{ij} \colon \Mat_{n_i}(\CC) \longrightarrow \Mat_{m_j}(\CC)$.
  The implication from injectivity of \(\ceil f\) to monicity of \(f\) is immediate.
  Conversely, suppose that $f$ is monic. To prove that $\ceil{f}$ is injective,
  it suffices to show that
  $\ceil{f}^+$ is injective,
  where $\ceil{f}^+$ is a function
  restricted to positive cones:
  \begin{align*}
    &
    \textstyle
    \ceil{f}^+
    \colon
    \{(\rho_i)_i \mid \rho_i \in \Mat_{n_i}(\CC)_+\}
    \longrightarrow
    \{(\rho_j)_j \mid \rho_j \in \Mat_{m_j}(\CC)_+\}
    \\
    &
    \text{where} \qquad
    \Mat^+_{n}(\CC) \defeq 
    \{
      \rho \in \Mat_{n}(\CC)
      \mid
      \rho \colon \text{positive semidefinite} 
    \}.
  \end{align*}
  Indeed, every matrix $A$ can be decomposed as
  $A = B_1 + i B_2 - B_3 - iB_4$ with some
  positive semidefinite matrices, if $A$ is in the kernel of $\ceil{f}$,
  \begin{align*}
    O =
    \ceil{f}(B_1 + i B_2 - B_3 - iB_4)
    &=
    \ceil{f}(B_1) + i \ceil{f}(B_2) - \ceil{f}(B_3) - i\ceil{f}(B_4)
    \\
    &=
    \ceil{f}(B_1) - \ceil{f}(B_3) + i (\ceil{f}(B_2) - \ceil{f}(B_4)).
  \end{align*}
  From the uniqueness of decomposition of a matrix into
  a sum of Hermitian and skew-Hermitian,
  \begin{align*}
    \ceil{f}(B_1) - \ceil{f}(B_3) = 0,
    \qquad
    \ceil{f}(B_2) - \ceil{f}(B_4) = 0.
  \end{align*}
  Since $\ceil{f}^+$ is injective, $B_1 = B_3$ and $B_2 = B_4$,
  thus $A = O$.
  The injectivity of $\ceil{f}^+$ follows readily:
  since each sequence of positive semidefinite matrices $(\rho_i \in \Mat^+_{n_i}(\CC))$
  defines a matrix of completely positive maps $(1) \longrightarrow \vec n$,
  the map $(f_{ij})$ being monic means $\ceil{f}^+$ is injective.

  Now, one can check that
  $\ceil{f \otimes g} \cong \ceil{f} \otimes \ceil{g}$.
  Therefore, since the tensor product of injections is an injection,
  the tensor product of monos is mono.
\end{proof}

\begin{lemma}
  \label{app:lem:type-inj-surj}
  For each type $A$,
  there exists some terms $\term_A$ and $\termB_A$
  \begin{align*}
    \termJC{x_A \colon \qbit^{\otimes m_A}}{\term_A \colon A}
    \qquad\quad
    \termJC{y_A \colon A}{\termB_A \colon \qbit^{\otimes n_A}}
  \end{align*}
  such that $\sem{\term_A}$ defines an epimorphism
  and $\sem{\termB_A}$ defines a monomorphism.
\end{lemma}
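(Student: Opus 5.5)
We proceed by induction on the structure of the type \(A\), constructing \(\term_A\) and \(\termB_A\) simultaneously. We also record a fixed number \(m_A\) (resp.\ \(n_A\)) of qubits for each. For the monomorphism direction we use the criterion established in the proof of \cref{app:lem:mono-epi-tensor}: a morphism \(f\) of \(\CPM\) is monic iff the associated \(\CC\)-linear map \(\ceil f\) is injective. By compact-closed duality, \(f\) is epic iff \(\ceil f\) is surjective.

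\emph{Base cases.} For \(\unit\), take \(m=n=0\) and let both terms be the identity (\(\unitval\), or the variable). For \(\qbit\), take \(m=n=1\) and let both terms be the variable; its semantics is the identity. For \(\bool\), set \(\termB_\bool \defeq \cifx{y}{\ket1}{\gX\ket1}\) (that is, \(\ket0\) in the else branch). Its semantics sends \((p,q)\) to \(\mathrm{diag}(q,p)\) up to ordering, so it is essentially the section \(s_{(1,1)}\), which is injective and hence monic. Set \(\term_\bool \defeq \meas\,x\), whose semantics is the retraction \(m = r_{(1,1)}\); it is surjective and hence epic.

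\emph{Tensor case.} For \(A\otimes B\), put \(\term_{A\otimes B} \defeq \Let x_A\otimes x_B = x \In \term_A\otimes\term_B\), and define \(\termB_{A\otimes B}\) analogously. The semantics is \(\sem{\term_A}\otimes\sem{\term_B}\), so \cref{app:lem:mono-epi-tensor} gives mono/epi directly.

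\emph{Function case.} For \(A\li B\) we have \(\sem{A\li B}=\sem A^*\otimes\sem B\), and we aim to realize \(\sem{\termB_A}^*\otimes\sem{\term_B}\)-style maps with terms.

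\begin{itemize}
\item \textbf{Epimorphism.} Define
\[
\term_{A\li B} \defeq \Let u\otimes w = x \In \lambda a.\,\term_B[\ \cdots\ ],
\]
where \(u\colon\qbit^{\otimes n_A}\) and \(w\colon\qbit^{\otimes m_B}\). Inside the body, feed \(\termB_A[a/y_A]\otimes u\) into the counit term \(E_{n_A}\) of \cref{app:lem:(co)unit-definable}, and return \(\term_B[w/x_B]\). Semantically this is a nonzero scalar times \((\sem{\termB_A})^*\otimes\sem{\term_B}\), taken under the identification \((\qbit^{n_A})^*\cong\qbit^{n_A}\). Since the transpose of a monic is epic by compact closure, and the tensor of epis is epic (\cref{app:lem:mono-epi-tensor}), this map is epic. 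Here \(m_{A\li B}=n_A+m_B\).
\item \textbf{Monomorphism.} Dually, take \(y\colon A\li B\). Use \(H_{m_A}\) to create a pair \(h\otimes h'\), apply \(y\) to \(\term_A[h/x_A]\), and return \(h'\otimes\termB_B[y(\ldots)/y_B]\). This yields a positive scalar times \(\sem{\term_A}^*\otimes\sem{\termB_B}\), which is monic by the same reasoning. Here \(n_{A\li B}=m_A+n_B\).
\end{itemize}

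\emph{Main obstacle.} The step we expect to require the most care is verifying, via string diagrams, that these definitions compute exactly (up to positive scalars \(a_n,b_n\), which do not affect mono/epi) the transposes claimed. Two points need attention. First, we must track the swaps and the identification \(\sem{\qbit^{\otimes n}}^*=\sem{\qbit^{\otimes n}}\) in \(\CPM\). Second, we must confirm that the transpose of a monic is epic; this follows since \((-)^*\) is a contravariant equivalence. Linearity of the terms also has to be checked, since every variable is used exactly once; this holds because \(H\) and \(E\) consume their arguments.
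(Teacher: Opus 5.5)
Your proposal is correct and follows essentially the same route as the paper's proof. Both argue by induction on types, using identities for $\unit$ and $\qbit$, $\meas$ and $\cifx{y}{\ket1}{\gX\ket1}$ for $\bool$, and \cref{app:lem:mono-epi-tensor} for tensors; for $A \li B$ both use the same terms built from $E_{n_A}$ and $H_{m_A}$, whose semantics are positive multiples of $\sem{\termB_A}^*\otimes\sem{\term_B}$ and $\sem{\term_A}^*\otimes\sem{\termB_B}$. Your extra remarks, namely the injective/surjective criterion and the fact that duality exchanges monos and epis, only make explicit steps the paper leaves implicit.
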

\begin{proof}
  By induction on the definition of types.
  \begin{description}
    \item[Case Unit.]
          We can define the terms as
          $\term_{\unit} \defeq x_{\unit}$ and
          $\termB_{\unit} \defeq y_{\unit}$
          where $m_{\unit} = n_{\unit} = 0$.
    \item[Case Qbit.]
          Similarly, we can define the terms as
          $\term_{\qbit} \defeq x_{\qbit}$ and
          $\termB_{\qbit} \defeq y_{\qbit}$
          where $m_{\qbit} = n_{\qbit} = 1$.
    \item[Case Bool.]
          We define $\term_{\bool} \defeq \meas\,x_{\qbit}$
          and $\termB_{\bool} \defeq \cifx{y_{\bool}}{\ket1}{\gX\ket1}$
          where $m_{\bool} = n_{\bool} = 1$.
    \item[Case $A \otimes B$.]
          We define $\term_{A \otimes B} \defeq \term_A \otimes \term_B$
          and
          $\termB_{A \otimes B}
            \defeq \letx{y_A \otimes y_B}{y_{A \otimes B}}
            {\termB_A \otimes \termB_B}$.
          From \cref{app:lem:mono-epi-tensor},
          we prove $\sem{\term_{A \otimes B}}$ is epic
          and $\sem{\termB_{A \otimes B}}$ is monic.
    \item[Case $A \li B$.]
          We define $\term_{A \li B}$ and $\termB_{A \li B}$ as follows:
          \begin{align*}
             &
            \termJC{z_A \colon \qbit^{\otimes n_A}, x_B \colon \qbit^{\otimes m_A}}
            {
            \lambda y_A.\,
            (\letx{x_{n_A}}{\termB_A \otimes z_A}{E_{n_A}};\, \term_B)
            \colon A \li B
            }
            \\ &
            \termJC{y_{A\li B} \colon A \li B}
            {
              \letx{x_A \otimes z_A}{H_{m_A}}
              {
                \bigl(
                \letx{y_B}{
                  y_{A \li B}\,\term_A
                }{\termB_B}
                \bigr)
                \otimes z_A}
              \colon \qbit^{\otimes m_A + n_B}
            }.
          \end{align*}
          The semantics of $\term_{A \li B}$ is given by
          $\sem{\termB_A}^* \otimes \sem{\term_B}$
          up to some positive real,
          where $\sem{\termB_A}^*$ is the dual of $\sem{\termB_A}$
          obtained by the compact closed structure.
          Thus it is epic.
          Similarly, the semantics of $\termB_{A \li B}$ is
          given by
          $\sem{\term_A}^* \otimes \sem{\termB_B}$
          up to some positive real,
          so it is monic.
          \qedhere
  \end{description}
\end{proof}

\begin{proof}[Proof of \cref{thm:CPM-full-abst}]
  Since our semantics is compositional,
  proving $\sem{C[\term]} = \sem{C[\termB]}$ from $\sem{\term} = \sem{\termB}$
  is trivial.
  We prove the other way around.

  We first consider the case where $\JG = \JC$.
  Without loss of generality, we can assume that the terms are closed,
  \ie
  \begin{align*}
    \termJC{\EmpEnv}{\term \colon A}
    \qquad\quad
    \termJC{\EmpEnv}{\termB \colon A}.
  \end{align*}
  If their semantics differ, then
  the semantics of
  \begin{align*}
    \termJC{\EmpEnv}{\letx{y_A}{\term}{\termB_A} \colon \qbit^{\otimes n}}
    \qquad\quad
    \termJC{\EmpEnv}{\letx{y_A}{\termB}{\termB_A} \colon \qbit^{\otimes n}}.
  \end{align*}
  are also different since the semantics of $\termB_A$ is monic.
  Let $\rho_\term \defeq \sem{\letx{y_A}{\term}{\termB_A}}$
  and
  $\rho_\termB \defeq \sem{\letx{y_A}{\termB}{\termB_A}}$.
  Then there exists some $\ket{\varphi} \in {(\CC^2)}^{\otimes n}$
  such that
  \[
    \bra{\varphi} \rho_\term \ket{\varphi}
    \neq
    \bra{\varphi} \rho_\termB \ket{\varphi}.
  \]
  Take some unitary $U$ such that $U \ket{1}^{\otimes n} = \ket\varphi$,
  then
  \[
    C[\cdot]
    \quad\defeq\quad
    \letx{y_A}{[\cdot]}{
      \letx{x_n}{\termB_A \otimes U (\ket{1}^{\otimes n})}{
        E_{n}
      }}
  \]
  satisfies the property.

  We prove the case when $\JG = \JQ$.
  If the previous $C[\cdot]$ cannot distinguish $\term$ and $\termB$,
  then their semantics only differ in the global phase,
  \ie, there exists some $\theta \in (0,2\pi)$ such that
  $\sem{\termB}_\Hilb = e^{i\theta} \sem{\term}_\Hilb$.
  We consider the following terms:
  \begin{align*}
    \termJQ{x \colon \qbit,\, f \colon A \li \unit}
    {\qifx{x}{f \term}{f \term}\ \colon \qbit}
    \\
    \termJQ{x \colon \qbit,\, f \colon A \li \unit}
    {\qifx{x}{f \termB}{f \term}\ \colon \qbit}.
  \end{align*}
  The semantics of the first line is
  \begin{align*}
    \ident_{\qbit} \otimes (f \longmapsto f \sem{M}),
  \end{align*}
  while the semantics of the second line is
  \begin{align*}
    \begin{pmatrix}
      1 & 0           \\
      0 & e^{i\theta}
    \end{pmatrix}
    \otimes (f \longmapsto f \sem{M}).
  \end{align*}
  Their semantics are different not only up to global phase,
  which can be distinguished in $\CPM$.
\end{proof}

\subsection{Full Abstraction for QifUnitary with Output Test Channels}

\begin{proof}[Proof of \cref{thm:Hilb-full-abst}]
  By taking the syntactic dilation of the terms constructed in
  \cref{app:lem:(co)unit-definable}
  we construct the following terms for each $n$
  (omitting the channel names):
  \begin{align*}
     &
    \auxJQ{\EmpEnv}{H'_n \colon \qbit^{\otimes n} \otimes \qbit^{\otimes n}}{(\_, \_, \_)}
    \\ &
    \auxJQ{x_n \colon \qbit^{\otimes n} \otimes \qbit^{\otimes n}}{E'_A \colon \unit}{(\_, \_, \_)}
  \end{align*}
  such that
  $\csem{H'_n} = a_n \cdot \eta_{(2^n)}$
  and
  $\csem{E'_n} = b_n \cdot \varepsilon_{(2^n)}$
  for some $a_n, b_n > 0$.
  Also, from \cref{app:lem:type-inj-surj},
  we construct the following terms for each type $A$:
  \begin{align*}
     &
    \auxJQ{x_A \colon \qbit^{\otimes m_A}}{\term'_A \colon A}{(\_, \_, \_)}
    \\ &
    \auxJQ{y_A \colon A}{\termB'_A \colon \qbit^{\otimes n_A}}{(\_, \_, \_)}
  \end{align*}
  such that
  $\csem{\term'_A}$ is epic
  and
  $\csem{\termB'_A}$ is monic.

  For each unitary $U$, let the context $C'[\cdot]$ be
  \begin{align*}
    C'[\cdot]
    \quad\defeq\quad
    \letx{y_A}{[\cdot]}{
    \letx{x_n}{(\termB'_A \otimes U \vinchB)}{
    E'_{n}
    }}
  \end{align*}
  Then, as in the \Qif{} case, for each $\term$ and $\termB$ such that
  \begin{align*}
    \auxJQ{\EmpEnv}{\term \colon A}{(\vinch, \voutch, \vtestch)}
    \quad\qquad
    \auxJQ{\EmpEnv}{\termB \colon A}{(\vinch, \voutch, \vtestch)},
  \end{align*}
  there exists some unitary $U$ such that
  $\csem{\,C'[\term]\,} \neq \csem{\,C'[\termB]\,}$
  if $\csem{\term} \neq \csem{\termB}$.
\end{proof}

\subsection{Calculation of the Semantics of SWITCH}

\begin{proposition}\label{prop:procedural:kraus-semantics}
  Assume
  \( \Gamma, f \colon A \multimap A' \vdash_{\JQ} M \colon B \) and $x\colon A \vdash_{\CPM} F \colon A'$.
  Let the semantics \( \sem{F} := \sem{F}_{\CPM} \) of \( F \) be given by a completely positive map that can be written as
  \begin{equation*}
    \sem{F}(\rho)
    \quad=\quad
    \sum_{i \in I} V_i \rho V_i^*
    \qquad\left(\mbox{or equivalently, }
    \sem{F} = \sum_{i \in I} \EmbeddingFunctor(V_i)
    \right)
  \end{equation*}
  in the Kraus decomposition representation.
  Then
  \begin{equation*}
    \sem{\Gamma \vdash_{\JC} \letx{f\,x}{ F }{ M } \colon B}
    \quad=\quad
    \sum_{i \in I} \EmbeddingFunctor
    \Bigl(
    \sem{\Gamma, f \colon A \multimap A' \vdash_{\JQ} M \colon B}
    \circ
      (\ident_\Gamma \otimes \Lambda(V_i))
    \Bigr)
  \end{equation*}
  holds in \( \CPM \) (where \( \Lambda \colon \Hilb(X \otimes Y, Z) \cong \Hilb(X, Y \multimap Z) \)).
\end{proposition}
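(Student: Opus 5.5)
The plan is to read $\letx{f\,x}{F}{M}$ as $\letx{f}{\lambda x.F}{M}$. We then unfold the categorical semantics of Fig.~\ref{fig:procedural:categorical-semantics} and use two facts about $\EmbeddingFunctor$: it is a strict compact closed functor, and composition and tensor in $\CPM$ are bilinear with respect to sums of completely positive maps. By the let-rule, the left-hand side equals $\sem{M}_\CPM\circ(\ident_\Gamma\otimes\sem{\lambda x.F}_\CPM)$ up to the braiding. Since $M$ is an $\Hilb$-term, the embedding rule gives $\sem{M}_\CPM=\EmbeddingFunctor\sem{M}_\Hilb$, and this value does not depend on which derivation is chosen (the remark on uniqueness of semantics). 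It therefore remains to compute $\sem{\lambda x.F}_\CPM=\Lambda_A\sem{F}_\CPM$, which is a state $(1)\to\sem{A}^*\otimes\sem{A'}$ in $\CPM$.

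The key step is to show that $\Lambda$ in $\CPM$ commutes with both $\EmbeddingFunctor$ and finite sums. First, $\Lambda$ is built from the unit $\eta$ together with composition and tensor: concretely, $\Lambda g=(\ident_{A^*}\otimes g)\circ(\eta_A\otimes\ident)$ after the appropriate braiding. Composition and tensor are bilinear in $\CPM$, so $\Lambda(\sum_i g_i)=\sum_i\Lambda g_i$. Second, $\EmbeddingFunctor$ preserves $\eta$, braidings and tensors on the nose, so $\Lambda_\CPM(\EmbeddingFunctor V_i)=\EmbeddingFunctor(\Lambda_\Hilb V_i)$. Combining these, $\sem{\lambda x.F}_\CPM=\sum_i\EmbeddingFunctor(\Lambda V_i)$.

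To finish, substitute this into the let-composite. Bilinearity of $\circ$ and $\otimes$ in $\CPM$ pulls the sum outside, and functoriality and strict monoidality of $\EmbeddingFunctor$ give $\EmbeddingFunctor\sem{M}\circ(\ident\otimes\EmbeddingFunctor\Lambda V_i)=\EmbeddingFunctor(\sem{M}\circ(\ident_\Gamma\otimes\Lambda V_i))$, which is the claimed formula.

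The main obstacle is justifying bilinearity. Hom-sets in $\CPM$ are closed under addition, and we must check that the structural maps ($\eta$, the braiding, and the identities inside the tensor) are fixed morphisms, so that sums pass through both composition and tensor. This is immediate from the concrete description of $\CPM$ as matrices of linear maps, where composition is matrix multiplication and $\otimes$ is the Kronecker-style product, both of which are bilinear. A secondary point is the valuation $\eval$: if $F$ mentions global constants, its Kraus decomposition is taken for the fixed $\eval$, and the argument is unchanged.
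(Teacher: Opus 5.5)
Your proposal is correct and follows essentially the same route as the paper: write $\sem{F}=\sum_i\EmbeddingFunctor(V_i)$, use strict compact closedness of $\EmbeddingFunctor$ to get $\Lambda(\EmbeddingFunctor V_i)=\EmbeddingFunctor(\Lambda V_i)$, and use that $\CPM$ is commutative-monoid-enriched to move the sum through $\Lambda$, composition and tensor, never needing $\EmbeddingFunctor$ itself to preserve sums. The only difference is that you spell out more explicitly why $\Lambda$ is additive (it is built from $\eta$, composition and tensor), which the paper's appeal to the enrichment uses implicitly.
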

\begin{proof}
  The Kraus decomposition implies that \( \sem{F} = \sum_{i} \EmbeddingFunctor(V_i) \) holds in \( \CPM \).
  Then
  \begin{align*}
     & \textstyle
    \sum_{i = 1,\dots,k} \EmbeddingFunctor
    \bigl(
    \sem{\Gamma, f \colon A \multimap A' \vdash_{\JQ} M \colon B}
    \circ
      (\ident_\Gamma \otimes \Lambda(V_i))
    \bigr)
    \\ & \textstyle
    =
    \sum_{i = 1,\dots,k}
    \EmbeddingFunctor
    \sem{\Gamma, f \colon A \multimap A' \vdash_{\JQ} M \colon B}
    \circ
    (\ident_\Gamma \otimes \Lambda(\EmbeddingFunctor (V_i)))
    \\ & \textstyle
    =
    \EmbeddingFunctor
    \sem{\Gamma, f \colon A \multimap A' \vdash_{\JQ} M \colon B}
    \circ
    \Bigl(\ident_\Gamma \otimes \Lambda \Bigl(
      \sum_{i = 1,\dots,k}
      \EmbeddingFunctor (V_i) \Bigr)\Bigr)
    \\ &
    =
    \EmbeddingFunctor
    \sem{\Gamma, f \colon A \multimap A' \vdash_{\JQ} M \colon B}
    \circ
    (\ident_\Gamma \otimes \Lambda \sem{F})
    \\ &
    =
    \sem{\Gamma \vdash_{\JC} \letx{f\,x}{F}{M} \colon B}
  \end{align*}
  since \( \EmbeddingFunctor \) is a strict compact closed functor and \( \CPM \) is commutative-monoid-enriched.
  (Care is required in the order in which the equations are transformed because \( \EmbeddingFunctor \) is not a commutative-monoid-enriched functor, \ie~it does not preserve sums.)
\end{proof}

\begin{proof}[Proof of \cref{app:thm:cat:switch-semantics}]
  Let \( \mathit{SWITCH} \) be the term in \cref{eg:procedural:switch}.

  We describe \( \sem{\letx{f\,x}{F}{\letx{g\,x}{G}{\mathit{SWITCH}}}} \), using Kraus decompositions \( \sem{F} = \sum_{i \in I} \EmbeddingFunctor(V_i) \) and \( \sem{G} = \sum_{j \in J} \EmbeddingFunctor(W_j) \).
  First, we compute the value of \( \sem{\mathit{SWITCH}} \circ (\ident \otimes \Lambda(V) \otimes \Lambda(W)) \) when \( V, W \in \Hilb(\sem{A}, \sem{A}) \).
  Instead of directly calculating its value, we give an intuitive explanation.
  The morphism in question is the interpretation of \( (\letx{f}{V}{\letx{g}{W}{\mathit{SWITCH}}}) \)%
  \footnote{To be precise, the maps $V$ and $W$ are not unitaries in general,
    so this program is not defined in \( \Qif \).
    Here, we are rather considering an extension of \( \Qif \)
    with all morphisms in \( \Hilb \) added as constant operations,
    whose semantics is canonically defined.
  }%
  , and since the expression involves only terms from the $\Hilb$ layer, the basic \( \beta \)-rule for \( \keyword{let} \) applies.
  Therefore,
  \begin{align*}
    \sem{\mathit{SWITCH}} \circ (\ident \otimes \Lambda(V) \otimes \Lambda(W))
     & \quad=\quad
    \ketbra{1}{1} \otimes (V \circ W) + \ketbra{0}{0} \otimes (W \circ V).
  \end{align*}
  By applying \cref{prop:procedural:kraus-semantics} twice,
  \begin{align*}
    \sem{\letx{f\,x}{F}{\letx{g\,x}{G}{\mathit{SWITCH}}}}
     & \textstyle
    =
    \sum_{i \in I} \sum_{j \in J} \EmbeddingFunctor\Bigl(\sem{\mathit{SWITCH}} \circ (\ident \otimes \Lambda(V_i) \otimes \Lambda(W_j)) \Bigr)
    \\ & \textstyle
    =
    \sum_{i \in I} \sum_{j \in J} \iota\Bigl(\ketbra{1}{1} \otimes (V_i \circ W_j) + \ketbra{0}{0} \otimes (W_j \circ V_i)\Bigr).
  \end{align*}
  This is the same as the definition of the quantum SWITCH~\cite{Chiribella2013}, as expected.
\end{proof}
\section{Vacuum Extensions and the Ill-Definedness Issue}
\label{app:sec:vacuum}

\citet{Barsse2026} also defined a programming language
in which arbitrary quantum channels may appear under quantum control.
Their language has classical recursion, but in order to compare it with other languages, we focus on the fragment without recursion.

In order to define a semantics of general quantum-controlled
channels in their language,
they use the idea of \emph{vacuum extension}~\cite{ChiribellaK19-quantum-shannon,Kristjánsson_2020}.
Roughly speaking, rather than taking the semantics
to be only a CPTP map from the input space $\mathcal H_\text{in}$
to the output space $\mathcal H_\text{out}$,
they consider a map
$\mathcal H_\text{in} \oplus \CC
  \longrightarrow
  \mathcal H_\text{out} \oplus \CC$,
namely, some interaction with the \emph{vacuum space} $\CC$.

In fact, we show that incorporating interactions with the vacuum space into the semantics is not fundamentally different from considering interactions with auxiliary qubits.
Moreover, our observations reveal that their semantics corresponds to the most trivial case considered in the literature~\cite{Dong2019,Abbott2020}.
Specifically, in this section, we define a translation from their language into \QifUnitary{} and compare it with other languages.

\newcommand{\vacctrl}{\mathrm{vacuum}\text{-}\mathrm{control}}

\subsection{Syntax}

We define a language \Vac{}
in \cref{fig:vacuum-syntax},
which is essentially the same as
the fragment of the language defined in \citet{Barsse2026} that excludes recursion.
\begin{mathfig}
  \begin{align*}
    \begin{array}{llrl}
      \textit{Commands}
       & S & \Coloneqq &
      \noop
      \sor \clet{q}{\ket 0}
      \sor \bdiscard q
      \sor U[\bar{q}]
      \sor S_1; S_2
      \sor \qifx{x}{S_1}{S_2}
    \end{array}
  \end{align*}
  \begin{proofrules}
    \infer*{}{
      \Gamma \vdash \noop \after \Gamma
    }

    \infer*{}{
      \Gamma \vdash \clet{q}{\ket 0} \after \Gamma, q
    }

    \infer*{}{
      \Gamma, q \vdash \bdiscard q \after \Gamma
    }

    \infer*{}{
      \Gamma, \vec q \vdash U[\vec q] \after \Gamma, \vec q
    }

    \infer*{
      \Gamma \vdash S_1 \after \Gamma'
      \\
      \Gamma' \vdash S_2 \after \Gamma''
    }{
      \Gamma \vdash S_1; S_2 \after \Gamma''
    }

    \infer*{
      \Gamma \vdash S_1 \after \Gamma'
      \\
      \Gamma \vdash S_2 \after \Gamma'
      \\
      q \text{ does not appear in } S_1, S_2
    }{
      \Gamma, q \vdash
      \qifx{q}{S_1}{S_2}
      \after \Gamma', q
    }
  \end{proofrules}
  \caption{Syntax and derivation rules of \Vac}
  \label{fig:vacuum-syntax}
  \Description{}
\end{mathfig}

Their language includes $\noop$ (the skip command), $\clet{q}{\ket0}$ (qubit initialisation), $\bdiscard q$ (qubit discarding), $U[\bar q]$ (unitary application), $S_1;S_2$ (sequencing), and $\qifx{x}{S_1}{S_2}$ (a quantum conditional statement).

In their derivation rules, the context $\Gamma$
simply represents a set of qubit names.
A derivation $\Gamma \vdash S \after \Gamma'$ states that,
if the qubits in $\Gamma$ are available before executing $S$,
then the qubits in $\Gamma'$ are available afterwards.

Measurements with classical conditional branching can be defined as syntactic sugar for
the following:
\[
  \meas q \ (0 \rightarrow S_0, 1 \rightarrow S_1)
  \quad\defeq\quad
  \clet{q'}{\ket0}; \gCX[q,q']; \bigl(\qifx{q'}{S_1}{S_0}\bigr); \bdiscard q'.
\]

\subsection{Denotational Semantics}

As noted above, their semantics considers interactions with
the vacuum space.
Their categorical semantics is defined in
the following category $\CQC$.

\begin{definition}
  The category $\CQC$ has natural numbers as objects
  and morphisms $(F, K) \colon\allowbreak n \longrightarrow m$ where
  $F \colon \Mat_{n}(\CC) \longrightarrow \Mat_{m}(\CC)$
  is a completely positive trace preserving map,
  and
  $K \colon \CC^n \longrightarrow \CC^m$
  is a linear map such that
  the following map
  $\Mat_{n + 1}(\CC) \longrightarrow \Mat_{m + 1}(\CC)$
  \begin{align*}
    \left(
    \begin{array}{c|c}
      \rho      & v      \\ \hline
      w^\dagger & \alpha
    \end{array}
    \right)
    \quad \longmapsto \quad
    \left(
    \begin{array}{c|c}
      F(\rho)             & K v    \\ \hline
      w^\dagger K^\dagger & \alpha
    \end{array}
    \right)
  \end{align*}
  defines another completely positive trace preserving map.
  The composition is defined componentwise,
  \ie,
  $(G, L) \circ (F, K) = (GF, LK)$.
  \thmend
\end{definition}

The semantics of $\qif$ is defined via the following
$\vacctrl$ operation,
which makes use of the linear maps $K$ and $L$.

\begin{definition}
  Let $(F, K), (G, L) \in \CQC(n,m)$.
  We define a map
  $\vacctrl((F, K), (G, L))
    \in \CQC(2 n, 2 m)$ by the pair
  $
    (
    H,
    \ketbra{0}{0} \otimes K
    + \ketbra{1}{1} \otimes L
    ),
  $
  where $H$ is defined by
  \begin{align*}
     &
    \ketbra00_q \cdot \ketbra00_q
    \otimes F
    \ +\
    \ketbra11_q \cdot \ketbra11_q
    \otimes G
    \\ &
    \ +\
    \ketbra00_q \cdot \ketbra11_q
    \otimes K (\cdot) L^\dagger
    \ +\
    \ketbra11_q \cdot \ketbra00_q
    \otimes L (\cdot) K^\dagger
    .
    \tag*{\thmend}
  \end{align*}
\end{definition}

The categorical semantics of \Vac{} is defined in
\cref{fig:vacuum-cat-sem}.
\begin{mathfig}
  \begin{align*}
    \sem{\noop}_\Gamma
     & = (\ident, \ident)
    \\
    \sem{\clet{q}{\ket0}}_\Gamma
     & = (\ketbra00_q, \ket0_q)
    \\
    \sem{\bdiscard q}_{\Gamma, q}
     & = (\trace_q, \bra0_q)
    \\
    \sem{U[\vec q]}_{\Gamma, \vec q}
     & = (U_{\vec q} (\cdot) U^\dagger_{\vec q}, U_{\vec q})
    \\
    \sem{S_1; S_2}_{\Gamma}
     & = \sem{S_2}_\Gamma \circ \sem{S_1}_\Gamma
    \\
    \sem{\qifx{q}{S_1}{S_2}}_{\Gamma, q}
     & = \vacctrl(\sem{S_2}, \sem{S_1})
  \end{align*}
  \caption{Categorical semantics of \Vac{} in $\CQC$}
  \label{fig:vacuum-cat-sem}
  \Description{}
\end{mathfig}

The semantics is canonical except for the discarding operation.
The semantics $\sem{\bdiscard q}$ is defined by $(\trace_q, \bra0_q)$,
but in fact one may instead choose any unit vector $\ket\varphi$
and define the semantics as $(\trace_q, \bra\varphi_q)$.
This is the non-canonical semantic choice
that we examine in this paper.
This choice is respected in the semantics of $\qif$.
For example,
$\bdiscard q$ and $(\gH[q]; \bdiscard q)$
are observably distinguishable
as their semantics differ in the second component.

We define another denotational semantics
of \Vac{} based on Kraus operators
to make its difference from other languages clear.
In \cref{fig:vacuum-Kraus-sem},
we define \emph{Kraus semantics} $\ksem{S}$
by a multiset of linear operators
with a \emph{specified 0-th element}.
That is,
the semantics $\ksem{S}$
will be defined as a list of Kraus operators
$\{K_0,K_1,\dots, K_n\}$,
where the 0-th map $K_0$ is fixed,
but the order of the rest does not matter.
\begin{mathfig}
  \begin{align*}
    \ksem{\noop\,}_\Gamma
     & = \{ \ident \}
    \\
    \ksem{\clet{q}{\ket0}\,}_\Gamma
     & = \{ \ket0_q \}
    \\
    \ksem{\bdiscard q\,}_{\Gamma, q}
     & = \{ \bra0_q, \bra1_q  \}
    \\
    \ksem{\,U[\vec q\,]}_{\Gamma, \vec q}
     & = \{ U_{\vec q} \}
  \end{align*}
  \begin{proofrules}
    \infer*{
      \ksem{S_1}_{\Gamma}
      = \{ K_0, K_1, \dots, K_n  \}
      \\
      \ksem{S_2}_{\Gamma}
      = \{ L_0, L_1, \dots, L_m  \}
    }{
      \ksem{\,S_1; S_2\,}_\Gamma
      = \{ L_0K_0, L_1K_0, \dots L_mK_n \}
    }

    \infer*{
      \ksem{S_1}_{\Gamma}
      = \{ K_0, K_1, \dots, K_n  \}
      \\
      \ksem{S_2}_{\Gamma}
      = \{ L_0, L_1, \dots, L_m  \}
    }{
      {
          \begin{array}{ll}
            \ksem{\,\qifx{q}{S_1}{S_2}\,}_\Gamma
            = \{ \  &
            \ketbra00\otimes L_0
            + \ketbra11\otimes K_0,
            \hspace{7em}
            \\ &
            \ketbra00\otimes L_1,
            \dots,
            \ketbra00\otimes L_m,
            \\ &
            \ketbra11\otimes K_1,
            \dots,
            \ketbra11\otimes K_n
            \ \ \}
          \end{array}
        }
    }.
  \end{proofrules}
  \caption{Kraus semantics of \Vac{}}
  \label{fig:vacuum-Kraus-sem}
  \Description{}
\end{mathfig}

In the Kraus semantics of $\qif$, we can clearly see that
the 0-th Kraus operator is treated differently:
$K_0$ and $L_0$ are coherently superposed,
but the other combinations $K_i$ and $L_j$ are not.

We can show that our Kraus semantics actually ``implements'' the original semantics.
\begin{theorem}
  For each well-formed command $S$ in \Vac,
  let $\sem{S} = (F, K)$
  and $\ksem{S} = \{K_0, K_1, \dots, K_n\}$.
  Then, $F = \sum_{i=0}^n K_i (\cdot) K_i^\dagger$
  and $K = K_0$.
\end{theorem}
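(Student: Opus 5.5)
The proof goes by structural induction on the derivation $\Gamma \vdash S \after \Gamma'$. We prove both claims together: the first component of $\sem{S}$ equals $\sum_i K_i(\cdot)K_i^\dagger$, and the second component equals the distinguished $0$-th Kraus operator $K_0$ of $\ksem{S}$.

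For the base cases we compare the two tables directly.
\begin{itemize}
\item For $\noop$, $\clet{q}{\ket0}$ and $U[\vec q]$ there is a single Kraus operator $K_0$. It is $\ident$, $\ket0_q$ and $U_{\vec q}$ respectively, and $K_0(\cdot)K_0^\dagger$ gives $\ident$, $\ketbra00_q$ and $U_{\vec q}(\cdot)U_{\vec q}^\dagger$ as required.
\item For $\bdiscard q$ we have $\bra0_q(\cdot)\ket0_q + \bra1_q(\cdot)\ket1_q = \trace_q$, and the $0$-th operator is $\bra0_q$, which matches the second component.
\end{itemize}

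Sequencing is handled componentwise. In $\CQC$ we have $(G,L)\circ(F,K) = (GF, LK)$. By the induction hypothesis,
\[
GF(\rho) = \textstyle\sum_{j}\sum_{i} L_jK_i\,\rho\,K_i^\dagger L_j^\dagger ,
\]
which is exactly the channel of $\ksem{S_1;S_2} = \{L_jK_i\}$. The distinguished element of that list is $L_0K_0$, and it equals $LK$.

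The main step is $\qif$. The semantics is $\vacctrl(\sem{S_2},\sem{S_1})$. By the induction hypothesis, $\sem{S_2} = (G_2, L_0)$ with $G_2 = \sum_j L_j(\cdot)L_j^\dagger$, and $\sem{S_1} = (G_1, K_0)$ with $G_1 = \sum_i K_i(\cdot)K_i^\dagger$. The second component is then $\ketbra00\otimes L_0 + \ketbra11\otimes K_0$, which is the $0$-th element of $\ksem{\qifx{q}{S_1}{S_2}}$.

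For the first component, write $\rho = \sum_{a,b}\ketbra ab\otimes\rho_{ab}$ and expand the channel of the Kraus list blockwise.
\begin{itemize}
\item The operator $M_0 = \ketbra00\otimes L_0 + \ketbra11\otimes K_0$ contributes $L_0\rho_{00}L_0^\dagger$ to the $00$ block, $K_0\rho_{11}K_0^\dagger$ to the $11$ block, $L_0\rho_{01}K_0^\dagger$ to the $01$ block and $K_0\rho_{10}L_0^\dagger$ to the $10$ block.
\item The operators $\ketbra00\otimes L_j$ with $j\ge1$ contribute $L_j\rho_{00}L_j^\dagger$, and only to the $00$ block.
\item The operators $\ketbra11\otimes K_i$ with $i\ge1$ contribute $K_i\rho_{11}K_i^\dagger$, and only to the $11$ block.
\end{itemize}
Summing the contributions block by block gives
\[
\ketbra00\otimes G_2(\rho_{00}) + \ketbra11\otimes G_1(\rho_{11}) + \ketbra01\otimes L_0\rho_{01}K_0^\dagger + \ketbra10\otimes K_0\rho_{10}L_0^\dagger .
\]
This is the map $H$ in the definition of $\vacctrl$, once we read $\ketbra00_q\cdot\ketbra00_q\otimes F$ as acting on the $00$ block, and similarly for the other three terms.

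The only real obstacle is bookkeeping. First, we need to check the argument order: the semantics is $\vacctrl(\sem{S_2},\sem{S_1})$, so the $\ket0$-branch is $S_2$ and must pair with the $L$'s. Second, we need to read the notation $\ketbra00_q\cdot\ketbra11_q\otimes K(\cdot)L^\dagger$ as placing $K\rho_{01}L^\dagger$ in the $01$ block. Both of these are definitional unfoldings.
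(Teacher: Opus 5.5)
Your proof is correct and follows the same approach as the paper, which proves the statement by a structural induction on $S$ and calls it straightforward. Your case analysis supplies exactly the details the paper leaves out: the base cases, the componentwise composition for sequencing, and the blockwise expansion for $\mathtt{qif}$ with the argument order (else-branch $S_2$ paired with the $\ket{0}$ block).
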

\begin{proof}
  By a straightforward induction on $S$.
\end{proof}

\subsection{Syntactic Dilation}

Every channel admits a purification.
That is, by Stinespring dilation, a channel can be represented by a unitary circuit with auxiliary input qubits initialised to $\ket1$ and auxiliary output qubits discarded at the end.
In particular, a channel represented by Kraus operators $\{K_i\}_{i \in I}$ admits such a purification in which, if the auxiliary qubits are measured in the computational basis rather than discarded, the possible outcomes are $\{ o_i \}_{i \in I}$ and the branch with outcome $o_i$ implements $K_i$.
Since the Kraus semantics of \Vac{} selects one Kraus operator $K_0$ from the Kraus decomposition $\{ K_i \}_{i = 0}^n$, it can be regarded as a purification of the channel that singles out the case in which all auxiliary-qubit measurement outcomes are $1$.
The following circuit represents this situation,
where $\Gamma \vdash S \after \Gamma'$ and $\sem{S} = (F,K)$:
\[
  F =
  \begin{quantikz}[yscale=0.5]
    \lstick{$\Gamma$} & \gate[2]{U_S} & \rstick{$\Gamma'$} \\
    \ket1 & &  \ground{}
  \end{quantikz}
  \qquad
  K =
  \begin{quantikz}[yscale=0.5]
    \lstick{$\Gamma$} & \gate[2]{U_S} & \rstick{$\Gamma'$} \\
    \ket1 & & \bra1
  \end{quantikz},
\]

In the interpretation of $\qif$ under the Kraus semantics,
superposition with the control qubit occurs only for the $K_0$--$L_0$ pair;
in all other cases, the control qubit is projected.
This corresponds to using auxiliary qubits that are completely separated
between the then- and else-branches.
If all auxiliary qubits are measured as $1$ at the end,
then the branch taken cannot be determined,
so superposition with the control qubit is retained.
Otherwise, for example, if at least one auxiliary qubit used in the then-branch
is measured as $0$, then the control qubit is known to be $\ket1$,
and hence it is projected.
Therefore, this semantics corresponds to a semantics
without the index correspondence described in our overview.

Using this idea, we define a semantics-preserving translation from $\Vac$
to $\QifUnitary$ in \cref{fig:vacuum-qifunitary}.
The syntactic dilation is defined as the relation
$\Gamma \vdash S \after \Gamma' \SynDil \term \withaug (\vinch, \voutch)$
so that the resulting term has a type derivation
$\auxJQ[\cdot]{q_1 \colon \qbit, \dots, q_n \colon \qbit}{\term \colon \qbit^{\otimes |\Gamma'|}}{(\vinch,\voutch)}$
where $\Gamma = q_1,\dots, q_n$.
\begin{mathfig}
  \begin{proofrules}
    \infer*{}{
      \Gamma \vdash \noop \after \Gamma
      \SynDil
      \textstyle\bigotimes_{q \in \Gamma} q
      \withaug (\cdot, \cdot)
    }

    \infer*{}{
      \Gamma \vdash \clet{p}{\ket0} \after \Gamma, p
      \SynDil
      \bigl(\textstyle\bigotimes_{q \in \Gamma} q\bigr) \otimes \gX\inch
      \withaug (\inch, \cdot)
    }

    \infer*{}{
      \Gamma, p \vdash \bdiscard p \after \Gamma
      \SynDil
      (\outch \assign \gX p);
      \textstyle\bigotimes_{q \in \Gamma} q
      \withaug (\cdot, \outch)
    }

    \infer*{}{
      \Gamma, \vec p \vdash U[\vec q] \after \Gamma, \vec p
      \SynDil
      \bigl(\textstyle\bigotimes_{q \in \Gamma} q\bigr)
      \otimes U (\bigotimes \vec p)
      \withaug (\cdot, \cdot)
    }

    \infer*{
      \Gamma \vdash S_1 \after \Gamma'
      \SynDil \term
      \withaug (\vinch, \voutch)
      \\
      \Gamma' \vdash S_2 \after \Gamma''
      \SynDil \termB
      \withaug (\vinchB, \voutchB)
    }{
      \Gamma \vdash S_1; S_2 \after \Gamma''
      \SynDil
      \textstyle
      \letx{\bigotimes_{q \in \Gamma'}q}{\term}{\termB}
      \withaug (\vinch\vinchB, \voutch\voutchB)
    }

    \infer*{
      \Gamma \vdash S_1 \after \Gamma'
      \SynDil \term_1 \withaug (\vinch_1, \voutch_1)
      \\
      \Gamma \vdash S_2 \after \Gamma'
      \SynDil \term_2 \withaug (\vinch_2, \voutch_2)
      \\\\
      |\vinch_1| - |\voutch_1|
      =
      |\vinch_2| - |\voutch_2|
      =
      |\voutchB| - |\vinchB|
    }{
      {
          \begin{aligned}
             &
            \Gamma, q \vdash \qifx{q}{S_1}{S_2} \after \Gamma', q
            \\ &
            \SynDil
            \textstyle
            \qifx{q}
            {(\voutch_2\voutchB \assign \vinch_2\vinchB); \term_1}
            {(\voutch_1\voutchB \assign \vinch_1\vinchB); \term_2}
            \withaug
            (\vinch_1\vinch_2\vinchB, \voutch_1\voutch_2\voutchB)
          \end{aligned}
        }
    }
  \end{proofrules}
  \caption{Syntactic dilation for \Vac{}}
  \label{fig:vacuum-qifunitary}
  \Description{}
\end{mathfig}

The difference between the syntactic dilations of \Qif{} and \Vac{}
appears in the treatment of $\qif$.
In \Qif, we require the branches to use the same i/o channels,
whereas in \Vac, we require them to use distinct i/o channels.
Each input channel not used in the term $\term_i$ is immediately
assigned to output channels to be discarded with the default value $\ket1$.
Therefore, if $0$ is measured in any of the auxiliary outputs,
the corresponding branch has been selected classically
and the control qubit is projected.

\begin{theorem}
  The syntactic dilation preserves the semantics.
  That is, if $\sem{S} = (F, K)$ and $S \SynDil \term$,
  then $\csem{\term} = F$.
  Moreover, $K$ can be recovered as follows:
  \[
    K =
    (\ident_{\Gamma'} \otimes \bra1_{\voutch})
    \circ \sem{\term}
    \circ (\ident_\Gamma \otimes \ket1_{\vinch}).
  \]
\end{theorem}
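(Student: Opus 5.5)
We argue by induction on the derivation of $\Gamma \vdash S \after \Gamma'$, proving a single invariant stronger than the theorem. Write $\Phi(\term) \defeq \sem{\term}\circ(\ident_\Gamma\otimes\ket1_{\vinch})\colon \CC^{2^{|\Gamma|}} \to \CC^{2^{|\Gamma'|}}\otimes \CC^{2^{|\voutch|}}$. We show that if $\ksem{S}=\{K_0,\dots,K_n\}$, then there is an injective labelling $i\mapsto o_i$ of Kraus indices by bit strings on $\voutch$ with $o_0 = 1\cdots1$ and
\begin{equation*}
  \Phi(\term) = \textstyle\sum_{i=0}^n K_i\otimes\ket{o_i}_{\voutch}.
\end{equation*}
Since the $\ket{o_i}$ are orthonormal, discarding $\voutch$ in $\CPM$ gives $\csem{\term}=\sum_i K_i(\cdot)K_i^\dagger$, which equals $F$ by the preceding theorem relating $\ksem{S}$ to $\sem{S}$. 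Projecting onto $\bra1_{\voutch}$ keeps only the $o_0$ summand, giving $K_0=K$. So both claims reduce to the invariant.

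The base cases are direct computations. For $\noop$ and $U[\vec q]$ there are no channels and the single Kraus operator is $\ident$ or $U$, with $o_0$ empty. For $\clet{p}{\ket0}$ we get $\Phi = \ident\otimes \gX\ket1 = \ident\otimes\ket0_p$, the only Kraus operator. For $\bdiscard p$ we get $\Phi = \ident_\Gamma\otimes(\ket{1}_\outch\bra0_p + \ket0_\outch\bra1_p)$, so $\bra0_p$ is labelled $1$ (matching $K_0=\bra0$) and $\bra1_p$ is labelled $0$. For sequencing, the let-binding composes $\Phi(\termB)$ after $\Phi(\term)$ on disjoint channel sets, giving $\sum_{i,j} L_jK_i\otimes\ket{o_i}\ket{o'_j}$. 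This matches the product rule of $\ksem{-}$, the labelling stays injective on pairs, and the all-ones label is attached to $L_0K_0$.

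The main step is $\qif$. Let $\term_1,\term_2$ have invariants with labels $o_i$ on $\voutch_1$ and $o'_j$ on $\voutch_2$. In the then-branch, the assignments $\voutch_2\voutchB\assign\vinch_2\vinchB$ with inputs initialised to $\ket1$ contribute the constant $\ket{1\cdots1}$ on $\voutch_2\voutchB$. We must check that the fresh balancing channels $\vinchB,\voutchB$ really yield all-ones, which means choosing them so their count works out; by \cref{app:lem:syndil:balancing} the channel counts agree. So the branch equals $\sum_i K_i\otimes\ket{o_i}_{\voutch_1}\ket{1\cdots1}_{\voutch_2\voutchB}$, and symmetrically for the else-branch. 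By the $\qif$ clause of the semantics,
\begin{equation*}
  \Phi = \ketbra11\otimes\textstyle\sum_i K_i\otimes\ket{o_i}\ket{\vec1} + \ketbra00\otimes\sum_j L_j\otimes\ket{\vec1}\ket{o'_j}.
\end{equation*}
Grouping by output label, the label $\vec1\vec1\vec1$ collects $\ketbra11\otimes K_0+\ketbra00\otimes L_0$. Every other label collects exactly one of $\ketbra11\otimes K_i$ ($i\ge1$) or $\ketbra00\otimes L_j$ ($j\ge1$), and these labels are pairwise distinct because each $o_i,o'_j$ with $i,j\ge1$ differs from all-ones. This is precisely the Kraus list of $\ksem{\qifx{q}{S_1}{S_2}}$, with the distinguished element at the all-ones label.

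The main bookkeeping obstacle is confirming that the channel ordering and the swaps hidden in the \QifUnitary{} semantics do not disturb the tensor factorisation, and that $\ket{o_i}$ for $i\ge1$ is indeed never all-ones. The latter is exactly why the invariant demands injectivity with $o_0=\vec1$.
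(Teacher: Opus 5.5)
Your proof is correct and follows essentially the same route as the paper. The paper also inducts on $S$ with the $\qif$ case as the only substantive one. There it projects the output channels onto basis strings $b_1,b_2$ and reads off the Kraus list of $\ksem{S}$: the all-ones string gives $\ketbra{0}{0}\otimes L_0+\ketbra{1}{1}\otimes K_0$, a single non-all-ones block gives one branch's Kraus operator, and anything else gives $0$. It then concludes via the preceding theorem relating $\ksem{S}$ to $\sem{S}$. Your labelled-Kraus invariant, with $o_0=1\cdots1$ and injective labels, just states explicitly the induction hypothesis the paper uses tacitly when it refers to ``some corresponding Kraus operators'' $K_{\sigma(b_1)}$ and $L_{\rho(b_2)}$.
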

\begin{proof}
  By induction on the structure of $S$.
  The only non-trivial case is $\qif$.
  Let $\ksem{S_1} = \{K_0,\dots, K_n\}$,
  $\ksem{S_2} = \{L_0,\dots, L_n\}$
  and let $\bigl(\qifx{q}{S_1}{S_2}\bigr) \SynDil M \withaug
    (\vinch_1\vinch_2\vinchB, \voutch_1\voutch_2\voutchB)
  $.
  Then the semantics is:
  \begin{align*}
     &
    (\ident_{\Gamma',q} \otimes \discardDiag_{\voutch_1\voutch_2\voutchB})
    \circ \iota\sem{\term}
    \circ (\ident_{\Gamma,q} \otimes \iota\ket1_{\vinch_1\vinch_2\vinchB})
    \\ &
    =
    \textstyle
    (\ident \otimes \discardDiag_{\voutchB})
    \circ
    \sum_{b_1 \in \{0,1\}^{\voutch_1}}
    \sum_{b_2 \in \{0,1\}^{\voutch_2}}
    \iota
    \bigl(
    (\ident_{\Gamma',q} \otimes
    \bra{b_1}_{\voutch_1} \otimes \bra{b_2}_{\voutch_2}
    \otimes \ident_{\voutchB})
    \circ \sem{\term}
    \circ (\ident_{\Gamma,q} \otimes \ket1_{\vinch_1\vinch_2\vinchB})
    \bigr).
  \end{align*}
  We calculate the term inside the parenthesis as:
  \begin{align*}
     &
    (\ident_{\Gamma',q} \otimes
    \bra{b_1}_{\voutch_1} \otimes \bra{b_2}_{\voutch_2}
    \otimes \ident_{\voutchB})
    \circ \sem{\term}
    \circ (\ident_{\Gamma,q} \otimes \ket1_{\vinch_1\vinch_2\vinchB})
    \\ & \textstyle
    =
    (\ident_{\Gamma',q} \otimes
    \bra{b_1}_{\voutch_1} \otimes \bra{b_2}_{\voutch_2}
    \otimes \ident_{\voutchB})
    \\ & \textstyle \qquad
    \circ (\ketbra00_1 \otimes \sem{\term_2} \otimes \ident_{\vinch_1\vinchB}
    + \ketbra11_q \otimes \sem{\term_1} \otimes \ident_{\vinch_2\vinchB})
    \\ & \textstyle \qquad
    \circ (\ident_{\Gamma,q} \otimes \ket1_{\vinch_1\vinch_2\vinchB})
    \\ & \textstyle
    =
    (
    \ketbra00_q
    \otimes
    \bra{b_2}_{\voutch_2}
    \sem{\term_2}
    \ket{1 \dots 1}_{\vinch_2}
    \otimes
    \bra{b_1}_{\voutch_1}
    \ket{1 \dots 1}_{\vinch_1\vinchB}
    )
    \\ & \textstyle \qquad
    +
    (
    \ketbra11_q
    \otimes
    \bra{b_1}_{\voutch_1}
    \sem{\term_2}
    \ket{1 \dots 1}_{\vinch_1}
    \otimes
    \bra{b_2}_{\voutch_2}
    \ket{1 \dots 1}_{\vinch_2\vinchB}
    )
    \\ & \textstyle
    =
    \begin{cases}
      (\ketbra00 \otimes L_0
      + \ketbra11 \otimes K_0)
      \otimes \ket1_{\voutchB}
       & \text{if }\ b_1 = 1 \dots 1 \land b_2 = 1 \dots 1
      \\
      \ketbra00 \otimes L_{\rho(b_2)}
      \otimes \ket1_{\voutchB}
       & \text{if }\ b_1 = 1 \dots 1 \land b_2 \neq 1 \dots 1
      \\
      \ketbra11 \otimes K_{\sigma(b_1)}
      \otimes \ket1_{\voutchB}
       & \text{if }\ b_1 \neq 1 \dots 1 \land b_2 = 1 \dots 1
      \\
      0
       & \text{if }\ b_1 \neq 1 \dots 1 \land b_2 \neq 1 \dots 1
    \end{cases}
  \end{align*}
  where $K_{\sigma(b_1)}$ and $L_{\rho(b_2)}$
  are some corresponding Kraus operators.
  Therefore, the whole CPTP map admits a Kraus decomposition
  \[
    \ksem{S}
    \ =\
    \{ \ \
    \ketbra00\otimes L_0
    + \ketbra11\otimes K_0,
    \ketbra00\otimes L_1,
    \dots,
    \ketbra00\otimes L_m,
    \ketbra11\otimes K_1,
    \dots,
    \ketbra11\otimes K_n
    \ \ \}.
    \tag*{\qedhere}
  \]
\end{proof}

\subsection{Discussion}

\paragraph{Algebraic laws}
As with the quantum SWITCH,
we can easily see that the program
$\qifx{q}{S_1; S_2}{S_2; S_1}$
in \Vac{}
does not represent the quantum SWITCH in general.
This immediately follows from the fact that
its semantics satisfies the equation
\[
  \sem{\qifx{q}{S_1; S_1'}{S_2; S_2'}}
  \ =\
  \sem{\qifx{q}{S_1}{S_2}; \qifx{q}{S_1'}{S_2'}}
\]
but does not satisfy
\[
  \sem{
    \qifx{q}{S_1}{S_1}
  }
  \ =\
  \sem{S_1}.
\]
The former is not expected for the quantum SWITCH,
which is obtained when $S_1 = S_2'$ and $S_2 = S_1'$,
whereas the latter is expected because it is
a special case of the quantum SWITCH
where $S_2 = \noop$.

\paragraph{Operational semantics}
One notable feature of our syntactic dilation is that it provides an algorithmic
translation from \Vac{} to \QifUnitary{}, whose target language has an
operational semantics defined as a quantum circuit.
This kind of operational semantics, defined as a sequence of
unitary gate applications, was not made explicit in
the original paper~\cite{Barsse2026}:
they defined an operational semantics
as a transition system rather than as a quantum circuit,
and its circuit implementation was left open.

\paragraph{Classical recursion}
For the semantics of classical recursion, they choose as the $0$th Kraus operator the execution path that exits the loop in the $0$th iteration (\ie, when the condition is not met from the outset).
This means that, when classical recursion appears within a quantum conditional, the quantum branch degenerates into a classical branch if the loop is executed at least once.

Our language does not have recursion, but recursion can be added as a $\CPM$ construct.
However, if recursion is added to \Qif{}, we cannot define a simple syntactic dilation because there is no evident purified $\Hilb$-term corresponding to recursion.
Here, we outline an idea for defining an operational semantics for \Qif{} with recursion, leaving the details for future work.
This suggests the following operational strategy.
Instead of taking the syntactic dilation of a while loop itself, we can consider the syntactic dilations of its approximations:
\begin{align*}
  \while_1 \defeq \abort \qquad
  \while_{n+1} \defeq \cifx{x}{\term}{\while_{n}},
\end{align*}
each of which can be purified.
Simply taking the syntactic dilation of these approximations may still result in a nonterminating execution, since we do not know when to stop. Our idea is instead to \emph{dynamically apply} the syntactic dilation during execution.
That is, when evaluating a while loop, we unroll one iteration, take its syntactic dilation, and execute the resulting unitary.
After executing the unitary, we measure some discarded qubits.
The measurement outcome of the discarded qubits encodes the classical condition of the while loop: if it is $\btrue$, we repeat this procedure; otherwise, we mark the recursion as resolved.

\section{Ying's QuGCL and the Ill-Definedness Issue}\label{appx:sec:ying}

\emph{QuGCL}, introduced by \citet{YingYF12-first-Ying},
refined in \citet{YingYF14-alternation},
and later documented by \citet{Ying16-the-book},
is a programming language with both $\qif$ and measurement constructs.
QuGCL has two semantics:
the \emph{canonical semantics} and the \emph{generalized
semantics}.
Their semantics uses coefficients in certain linear combinations: the canonical semantics chooses fixed coefficients, whereas the generalized semantics leaves them arbitrary and includes the canonical semantics as a special case.

In this section, we give a translation from \Ying{},
which is expressively equivalent to QuGCL
(with quantum variables restricted to have qubit type
rather than general qu$d$its),
into \QifUnitary{} with divergence\footnote{
  QuGCL has $\abort$, so we also add it to the language.
} as defined in \cref{app:fig:rules:qifunitary-bra}.
Our translation is non-deterministic in the sense that it chooses an arbitrary unitary $U$.
We show that this freedom of choice coincides with the freedom of choice in the generalized semantics.
More concretely, we establish a two-way correspondence between these choices: for every choice of coefficients in \Ying{}, a suitable choice of $U$ yields a semantics-preserving translation into \QifUnitary{}; conversely, every choice of $U$ determines coefficients for which the source and target semantics coincide.
We also show that this freedom of choice corresponds to the \emph{initial state of the environment} discussed in \citet{Abbott2020}.

\subsection{Syntax of \Ying{}}
\begin{figure}
  \[
    \begin{array}{llrl}
      \textit{Qubits}               & q, p \dots
      \\
      \textit{Measurement outcomes} & x, y \dots
      \\
      \textit{Commands}             & S          & \Coloneqq &
      \abort \sor \noop \sor S_1; S_2
      \\ &   & \sor      &
      \qifx{q}{S_1}{S_2}
      \\ &   & \sor      &
      \cifx{(x \leftarrow \meas(q))}{S_1}{S_2}
    \end{array}
  \]
  \caption{Syntax of \Ying.}
  \label{fig:ying:syntax}
  \Description{}
\end{figure}
The syntax of \Ying{} is given in \Cref{fig:ying:syntax}.
Programs are imperative and act only on the current quantum store;
no command allocates or deletes qubits.
The command $U[\vec{q}]$
simply applies the unitary $U$ to the qubits $\vec{q}$
without consuming or renaming them.
The quantum conditional $\qifx{q}{S_1}{S_2}$ executes the controlled versions of $S_1$ and $S_2$, leaving the control qubit $q$ alive.
The classical conditional $\cifx{(x \leftarrow \meas(q))}{S_1}{S_2}$ first measures $q$, stores the classical result $\ket{0}$ or $\ket{1}$ back into $q$, and then branches.
The outcome variable $x$ records which branch was chosen.
We assume each outcome variable appears at most once in a program.
Thus, these variables cannot be accessed by subsequent commands.
Linearity of qubits is therefore enforced syntactically.
Finally, $\abort$ denotes the command which makes the execution diverge.

\subsection{Denotational Semantics of \Ying{}}
\Ying{} enjoys a two-layer semantics.
The first layer, \emph{semiclassical semantics},
interprets a program as an operator-valued function,
while the second layer, the \emph{purely quantum semantics},
interprets a program as a completely positive map.

Intuitively, the semiclassical semantics provides a Kraus decomposition: it records which linear map corresponds to each possible measurement outcome.
The semantics of a program $S$ is given as a map
$\ysem{S} \colon \Delta{S} \to \Hilb(n,n)$
where $\Delta{S}$ represents the set of possible outcomes of a program $S$,
\ie, assignments of $\{0,1\}$ to measurement outcomes,
and $n$ is the number of qubits used in $S$.
The set $\Delta S$ is defined by
\begin{align*}
  \Delta\abort = \Delta\noop = \Delta(U[\vec{q}])
   & \ =\ \{ \emptyset \}                                          \\
  \Delta(S_1; S_2)
   & \ =\ \{ f \sqcup g  \mid f\in \Delta S_1, g \in \Delta S_2 \} \\
  \Delta(\qifx{q}{S_1}{S_2})
   & \ =\ \{ f \sqcup g  \mid f\in \Delta S_1, g \in \Delta S_2 \} \\
  \Delta(\cif (x \leftarrow \meas(q)) \ifthen S_1 \ifelse S_2)
   & \ =\ \{ (x \mapsto 1) \sqcup f \mid f\in \Delta S_1 \}
  \\ & \qquad\quad
  \cup\ \{ (x \mapsto 0) \sqcup g \mid g \in \Delta S_2 \}.
\end{align*}
Each classical outcome variable $x$ is assumed to appear only once, so it uniquely marks an occurrence of measurement.
The following defines the semiclassical semantics of \Ying{}.
\begin{align*}
  \ysem{\noop}_\emptyset
   & = \mathrm{id},
  \\
  \ysem{S_1;S_2}_{f \sqcup g}
   & = \ysem{S_2}_{g} \circ \ysem{S_1}_{f},
  \\
  \ysem{\abort}_\emptyset
   & = 0,
  \\
  \ysem{U[\vec{q}]}_\emptyset
   & = U_{\vec{q}} \otimes \mathrm{id},
  \\
  \ysem{\qifx{q}{S_1}{S_2}}_{f \sqcup g}
   & = \alpha_{S_1,f} \ketbra{0}{0} \otimes \ysem{S_2}_{g}
  + \alpha_{S_2,g} \ketbra{1}{1} \otimes \ysem{S_1}_{f},
  \\
  \ysem{\cif (x \leftarrow \meas(q)) \ifthen S_1 \ifelse S_2}_{(x\mapsto i) \sqcup f}
   & = \ketbra{i}{i} \otimes \ysem{S_i}_{f}
\end{align*}
where the coefficients $\alpha_{S_i,f}$ are any complex numbers satisfying
$\sum_{f \in \Delta S_1} |\alpha_{S_1,f}|^2
  = \sum_{g \in \Delta S_2} |\alpha_{S_2,g}|^2 = 1
$.
In the canonical semantics of \citet{Ying16-the-book} these coefficients are instantiated as
\[
  \alpha_{S, f} \defeq
  \sqrt
  \frac{
    \trace( \ysem{S}_{f}^{\dagger}\ysem{S}_{f} )
  }{
    \sum_{g\in\Delta S} \trace( \ysem{S}_{g}^{\dagger}\ysem{S}_{g} )
  }.
\]

The purely quantum semantics of \Ying{} is a completely positive map on density operators, obtained from the Kraus decomposition of the semiclassical semantics.
\[
  \sem{S}(\rho) = \sum_{f\in\Delta(S)} \ysem{S}_{f}\ \rho\ \ysem{S}_{f}^{\dagger}.
\]

\subsection{Syntactic Dilation}

Let $Q = \{q_1,\dots,q_n\}$ be the set of qubits in the program and
define the context $\Gamma$ to be
$q_1 \colon \qbit, \dots,\allowbreak q_n \colon \qbit$.

In \cref{fig:app:ying-syndil}, we define a translation from
\Ying{} to \QifUnitary{}.
In the rule for $\qif$, we use an arbitrary unitary $U_i$.
The following theorem shows that we can choose $U_i$
so that the semantics matches the original semantics of \Ying{}.
\begin{mathfig}
  \begin{proofrules*}
    \textstyle
    \noop \SynDil \bigotimes_{k = 1,\dots,n} q_k \withaug (\cdot,\cdot,\cdot)

    \abort \SynDil \testch \assign \inch \withaug (\inch,\cdot,\testch)

    U[\vec q] \SynDil
    (U \vec q) \otimes \bigotimes_{k \not\in \vec q} q_k
    \withaug (\cdot,\cdot,\cdot)

    \infer*{
      S_1 \SynDil \term_1
      \withaug (\vinch_1,\vtestch_1,\voutch_1)
      \\
      S_2 \SynDil \term_2
      \withaug (\vinch_2,\vtestch_2,\voutch_2)
    }{
      \textstyle
      S_1;S_2 \SynDil
      \letx{\bigotimes_{k=1,\dots,n} q_k}{\term_1}{\term_2}
      \withaug (\vinch_1\vinch_2,\vtestch_1\vtestch_2,\voutch_1\voutch_2)
    }

    \infer*{
      S_i \SynDil \term_i \withaug (\vinch_i, \voutch_i, \vtestch_i)
      \\
      \vinch_i = \vinchB_i\vinchC_i
    }{
      {
          \begin{aligned}
             &
            \qifx{q}{S_1}{S_2} \SynDil
            \\
             &
            \qifx{
              q
            }{
              (\voutch_2 \assign U_2 \vinchB_2);
              (\vtestch_2 \assign \gX \vinchC_2);
              \term_1
            }{
              (\voutch_1 \assign U_1 \vinchB_1);
              (\vtestch_1 \assign \gX \vinchC_1);
              \term_2
            }
            \withaug (\vinch_1\vinch_2, \voutch_1\voutch_2, \vtestch_1\vtestch_2)
          \end{aligned}
        }
    }

    \infer*{
      S_i \SynDil \term_i \withaug (\vinch_i, \voutch_i, \vtestch_i)
      \\
      \vinch_i = \vinchC_i\vinchC'_i
    }{
      {
          \begin{aligned}
            \cifx{(x \leftarrow \meas(q))}{S_1}{S_2} \SynDil
            \hspace{2em}
             &
            \\
            \Let {q \otimes x} = {\gCX (q \otimes \inchB)} \In
            (\outchB \assign x);
            \qif {q}
             & \ifthen
            (\voutch_2 \assign \vinchB_2);
            (\vtestch_2 \assign \gX \vinchC_2);
            \term_1
            \\
             & \ifelse
            (\voutch_1 \assign \vinchB_1);
            (\vtestch_1 \assign \gX \vinchC_1);
            \term_2
            \\ & \qquad
            \withaug (\vinch_1\vinch_2\inchB,
            \voutch_1\voutch_2\outchB,
            \vtestch_1\vtestch_2)
          \end{aligned}
        }
    }
  \end{proofrules*}
  \caption{Syntactic dilation from \Ying{} to \QifUnitary{}.}
  \label{fig:app:ying-syndil}
  \Description{}
\end{mathfig}

\begin{theorem}
  Let $S$ be a command in \Ying{}.
  Then there exists a term $\term$
  in \QifUnitary{} extended with divergence,
  as defined in \cref{app:sec:theorems-list},
  such that $S \SynDil \term$ and the following properties hold.
  First, the purely quantum semantics is preserved, \ie,
  \[
    \sem{S} = \csem{\term},
  \]
  Second, there exists an injective map $f \mapsto b_f$
  from $\Delta S$ to Boolean sequences such that, for every $f \in \Delta S$,
  \[
    \ysem{S}_f =
    (\ident_\Gamma \otimes \bra{b_f}_{\voutch} \otimes \bra{0}_{\vtestch})
    \circ \sem{\term}
    \circ (\ident_\Gamma \otimes \ket1_{\vinch}).
  \]
  The conditions can be represented diagrammatically as follows:
  \[
    \ysem{S}_f =
    \begin{quantikz}[row sep=0.5em]
      \lstick{$\Gamma$} & \gate[3]{\sem{\term}} & \rstick{$\Gamma$} \\
      \setwiretype{n} & & \bra{b_f}_{\voutch} \setwiretype{q}\\
      \ket1 & & \bra0_{\vtestch}\ ,
    \end{quantikz}
    \qquad
    \sem{S} =
    \begin{quantikz}[row sep=0.5em]
      \lstick{$\Gamma$} & \gate[3]{\sem{\term}} & \rstick{$\Gamma$} \\
      \setwiretype{n} & & \ground{} \setwiretype{q}\\
      \ket1 & & \bra0_{\vtestch}\ .
    \end{quantikz}
  \]
\end{theorem}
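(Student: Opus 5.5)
The plan is to prove the second (Kraus-level) property by structural induction on $S$ and then obtain the first property as a corollary. The key design choice is how the unitaries $U_i$ in the $\qif$ rule are picked. For every $f \in \Delta S$ we shall construct a bit string $b_f$ over the ordinary output channels $\voutch$ of $\term$ satisfying the displayed equation. We also maintain two auxiliary facts. First, for every bit string $b \notin \{b_f\}$, the projection $(\ident \otimes \bra{b}_{\voutch} \otimes \bra{0}_{\vtestch}) \sem{\term} (\ident \otimes \ket1_{\vinch})$ vanishes. Second, the ordinary input channels of $\term$ have the same count as its ordinary output channels, while the remaining inputs are matched with test channels; this is the balancing needed for the $\qif$ rule. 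Granting these, the first property follows: we expand $\discardDiag_{\voutch}$ as $\sum_b \EmbeddingFunctor(\bra{b})$, and then $\csem{\term} = \sum_f \EmbeddingFunctor(\ysem{S}_f) = \sem{S}$.

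The base cases are direct. For $\noop$ and $U[\vec q]$ there are no channels, and we take $b_\emptyset$ to be empty. For $\abort$ we get $\bra0\ket1 = 0 = \ysem{\abort}_\emptyset$. For sequencing, we set $b_{f\sqcup g} = b_f b_g$. Using the semantics of $\letx{}{}{}$ and the disjointness of channels, the projected map factors as the composite of the two projected maps, and the vanishing property is inherited. For the measurement conditional, the ancilla $\inchB$ is initialised to $\ket1$ and copied by $\gCX$ to give $\ket{\bar i}$, where $\bar i$ encodes the outcome on $\outchB$. The $\qif$ on $q$ then selects $\term_1$ or $\term_2$. The unused branch's channels are routed identically: its ordinary channels go $\vinchB \mapsto \voutch$, so they contribute $\ket{1\dots1}$, and its test channels go through $\gX$, so $\bra0\gX\ket1 = 1$. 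We therefore take $b_{(x\mapsto i)\sqcup f}$ to be the outcome bit followed by $b_f$, padded with $1\dots1$ on the other branch's outputs. Injectivity holds because the outcome bit and $b_f$ are determined by the string.

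The main obstacle is the $\qif$ case, where we have to produce the coefficients $\alpha_{S_1,f}$. In the then-branch, the else-branch's ordinary ancillas $\vinchB_2$ are prepared as $U_2\ket{1\dots1}$ and sent to $\voutch_2$. Projecting with $\bra{b_g}$ on $\voutch_2$ and $\bra{b_f}$ on $\voutch_1$ gives $\ketbra11 \otimes \bra{b_g}U_2\ket{1\dots1}\,\ysem{S_1}_f$, and symmetrically the else-branch gives $\ketbra00 \otimes \bra{b_f}U_1\ket{1\dots1}\,\ysem{S_2}_g$. This matches Ying's formula exactly when $\bra{b_f}U_1\ket{1\dots1} = \alpha_{S_1,f}$ and $\bra{b_g}U_2\ket{1\dots1} = \alpha_{S_2,g}$. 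So we choose $U_1$ to be any unitary sending $\ket{1\dots1}$ to the unit vector $\sum_f \alpha_{S_1,f}\ket{b_f}$. This vector is normalised because $\sum_f|\alpha_{S_1,f}|^2 = 1$, and its support lies in $\{b_f\}$ by injectivity. Two checks remain. The vanishing property must hold for strings outside $\{b_f b_g\}$: this follows because any off-support $b$ in one component kills both summands, using the induction hypothesis on one side and the choice of $U_i$ on the other. The construction also needs enough ordinary channels to carry these amplitudes, which is exactly why the test channels (for $\abort$) are kept separate and closed with $\bra0$ rather than counted among the ordinary outputs. We set $b_{f\sqcup g} = b_f b_g$. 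Finally, running this construction in reverse shows that any $U_i$ determines coefficients $\alpha_{S_i,f} = \bra{b_f}U_i\ket{1\dots1}$, which gives the converse correspondence mentioned in the text.
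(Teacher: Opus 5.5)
Your proposal is correct and follows the paper's proof: the same structural induction, the same strings ($b_f b_g$ for sequencing and $\qif$, and for the measurement conditional the outcome bit followed by $b_f$ with $1\cdots1$ padding on the other branch), and the same choice of $U_i$ with $U_i\ket{1\cdots1} = \sum_f \alpha_{S_i,f}\ket{b_f}$. Your extra invariant that every string $b \notin \{b_f\}$ gives the zero projection is exactly what is needed to obtain $\sem{S} = \csem{\term}$ from the Kraus-level equation, a step the paper only calls immediate; also, your placement of the then-branch under $\ketbra{1}{1}$ matches Ying's formula, whereas the paper's written computation swaps the two projectors.
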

\begin{proof}
  By induction on the structure of $S$.
  \begin{description}
    \item[Case Noop.] Trivial.
    \item[Case Abort.] Trivial.
    \item[Case Unitary.] Trivial.
    \item[Case $S_1; S_2$.] If $S_i \SynDil \term_i$ and
          \begin{align*}
            \ysem{S_1}_f =
            (\ident_\Gamma
            \otimes \bra{b_f}_{\voutch_1}
            \otimes \bra{0}_{\vtestch_1})
            \circ \sem{\term_1}
            \circ (\ident_\Gamma \otimes \ket1_{\vinch_1}),
            \\
            \ysem{S_2}_g =
            (\ident_\Gamma
            \otimes \bra{b_g}_{\voutch_2}
            \otimes \bra{0}_{\vtestch_2})
            \circ \sem{\term_2}
            \circ (\ident_\Gamma \otimes \ket1_{\vinch_2}),
          \end{align*}
          then, since the semantics of $S_1;S_2$ is just a composition
          $\ysem{S_1;S_2}_{f \sqcup g} = \ysem{S_2}_g \circ \ysem{S_1}_f$,
          \begin{align*}
            \ysem{S_1;S_2}_{f \sqcup g} =
            (\ident_\Gamma
            \otimes \bra{b_f,b_g}_{\voutch_1\voutch_2}
            \otimes \bra{0}_{\vtestch_1\vtestch_2})
            \circ \sem{\term_2} \circ \sem{\term_1}
            \circ (\ident_\Gamma \otimes \ket1_{\vinch_1\vinch_2}),
          \end{align*}
          where we omit obvious permutation maps.
          The equation for the purely quantum semantics follows
          immediately from here.
    \item[Case $\qifx{q}{S_1}{S_2}$.] Assume $S_i \SynDil \term_i$ and
          the following:
          \begin{align*}
            \ysem{S_1}_f =
            (\ident_\Gamma
            \otimes \bra{b_f}_{\voutch_1}
            \otimes \bra{0}_{\vtestch_1})
            \circ \sem{\term_1}
            \circ (\ident_\Gamma \otimes \ket1_{\vinch_1}),
            \\
            \ysem{S_2}_g =
            (\ident_\Gamma
            \otimes \bra{b_g}_{\voutch_2}
            \otimes \bra{0}_{\vtestch_2})
            \circ \sem{\term_2}
            \circ (\ident_\Gamma \otimes \ket1_{\vinch_2}).
          \end{align*}
          Let $\term \defeq
            \qifx{
              q
            }{
              (\voutch_2 \assign U_2 \vinchB_2);
              (\vtestch_2 \assign \gX \vinchC_2);
              \term_1
            }{
              (\voutch_1 \assign U_1 \vinchB_1);
              (\vtestch_1 \assign \gX \vinchC_1);
              \term_2
            }
          $.
          Then
          \begin{align*}
             &
            (\ident_\Gamma
            \otimes \bra{b_f,b_g}_{\voutch_1\voutch_2}
            \otimes \bra{0}_{\vtestch_1\vtestch_2})
            \circ \sem{\term}
            \circ (\ident_\Gamma \otimes \ket1_{\vinch_1\vinch_2})
            \\ & =
            \ketbra00_q \otimes
            (\ident_\Gamma
            \otimes \bra{b_f,b_g}_{\voutch_1\voutch_2}
            \otimes \bra{0}_{\vtestch_1\vtestch_2})
            \circ \sem{
              (\voutch_2 \assign U_2 \vinchB_2);
              (\vtestch_2 \assign \gX \vinchC_2);
              \term_1
            }
            \circ (\ident_\Gamma \otimes \ket1_{\vinch_1\vinch_2})
            \\ &
            \quad +
            \ketbra11_q \otimes
            (\ident_\Gamma
            \otimes \bra{b_f,b_g}_{\voutch_1\voutch_2}
            \otimes \bra{0}_{\vtestch_1\vtestch_2})
            \circ \sem{
              (\voutch_1 \assign U_1 \vinchB_1);
              (\vtestch_1 \assign \gX \vinchC_1);
              \term_2
            }
            \circ (\ident_\Gamma \otimes \ket1_{\vinch_1\vinch_2})
            \\ & =
            \bra{b_g} U_2 \ket{1}
            \cdot
            \ketbra00_q \otimes
            (\ident_\Gamma
            \otimes \bra{b_f}_{\voutch_1}
            \otimes \bra{0}_{\vtestch_1})
            \circ \sem{
              \term_1
            }
            \circ (\ident_\Gamma \otimes \ket1_{\vinch_1})
            \\ &
            \quad +
            \bra{b_f} U_1 \ket{1}
            \cdot
            \ketbra11_q \otimes
            (\ident_\Gamma
            \otimes \bra{b_g}_{\voutch_2}
            \otimes \bra{0}_{\vtestch_2})
            \circ \sem{
              \term_2
            }
            \circ (\ident_\Gamma \otimes \ket1_{\vinch_2})
            \\ & =
            \bra{b_g} U_2 \ket{1} \cdot
            \ketbra00_q \otimes \ysem{S_1}_f
            +
            \bra{b_f} U_1 \ket{1} \cdot
            \ketbra11_q \otimes \ysem{S_2}_g
          \end{align*}
          Here, we choose $U_i$ to satisfy
          $U_i \ket{1 \cdots 1} = \sum_{f \in \Delta S_i} \alpha_{S,f} \ket{b_f}$,
          such a unitary exists by the normalisation condition
          $\sum_{f \in \Delta S} |\alpha_{S,f}|^2
            = 1$,
          and the fact that the correspondence $f \mapsto b_f$ is injective.
          Thus, the expression above is equal to
          \[
            \alpha_{S_2,g} \cdot
            \ketbra00_q \otimes \ysem{S_1}_f
            +
            \alpha_{S_1,f} \cdot
            \ketbra11_q \otimes \ysem{S_2}_g
            = \ysem{\qifx{q}{S_1}{S_2}}_{f \sqcup g}.
          \]
    \item[Case $\cif (x \leftarrow \meas(q)) \ifthen S_1 \ifelse S_2$.]
          Let us assume $S_i \SynDil \term_i$ and
          the following:
          \begin{align*}
            \ysem{S_1}_f =
            (\ident_\Gamma
            \otimes \bra{b_f}_{\voutch_1}
            \otimes \bra{0}_{\vtestch_1})
            \circ \sem{\term_1}
            \circ (\ident_\Gamma \otimes \ket1_{\vinch_1}),
            \\
            \ysem{S_2}_g =
            (\ident_\Gamma
            \otimes \bra{b_g}_{\voutch_2}
            \otimes \bra{0}_{\vtestch_2})
            \circ \sem{\term_2}
            \circ (\ident_\Gamma \otimes \ket1_{\vinch_2}).
          \end{align*}
          Let $S \defeq (\cif\,(x \leftarrow \meas(q)) \ifthen S_1 \ifelse S_2)$
          and $S \SynDil \term$.
          Then
          \begin{align*}
             &
            (\ident_\Gamma
            \otimes \bra{b_f}_{\voutch_1}
            \otimes \bra{1 \cdots 1}_{\voutch_2}
            \otimes \bra{0}_{\outchB}
            \otimes \bra{0}_{\vtestch_1\vtestch_2})
            \circ \sem{\term}
            \circ (\ident_\Gamma \otimes \ket1_{\vinch_1\vinch_2\inchB})
            \\ & =
            \bigl((\ident_\Gamma \otimes \bra0_\outchB)
            \circ \sem{\gCX(q,\inchB)}
            \circ (\ident_\Gamma \otimes \ket1_\inchB)\bigr)
            \\ & \quad
            \circ
            \bigl(
            (\ident_\Gamma
            \otimes \bra{b_f}_{\voutch_1}
            \otimes \bra{1 \cdots 1}_{\voutch_2}
            \otimes \bra{0}_{\vtestch_1\vtestch_2})
            \circ \sem{\qif\,q\,\ifthen\dots}
            \circ (\ident_\Gamma \otimes \ket1_{\vinch_1\vinch_2})
            \bigr)
            \\ & =
            \ketbra11_q
            \circ
            \bigl(
            (\ident_\Gamma
            \otimes \bra{b_f}_{\voutch_1}
            \otimes \bra{1 \cdots 1}_{\voutch_2}
            \otimes \bra{0}_{\vtestch_1\vtestch_2})
            \circ \sem{\qif\,q\,\ifthen\dots}
            \circ (\ident_\Gamma \otimes \ket1_{\vinch_1\vinch_2})
            \bigr)
            \\ & =
            \ketbra11_q
            \otimes
            \bigl(
            (\ident_\Gamma
            \otimes \bra{b_f}_{\voutch_1}
            \otimes \bra{1 \cdots 1}_{\voutch_2}
            \otimes \bra{0}_{\vtestch_1\vtestch_2})
            \\ & \quad
            \circ \sem{
              (\voutch_2 \assign \vinchB_2);
              (\vtestch_2 \assign \gX \vinchC_2);
              \term_1
            }
            \circ (\ident_\Gamma \otimes \ket1_{\vinch_1\vinch_2})
            \bigr)
            \\ & =
            \ketbra11_q
            \otimes
            \bigl(
            (\ident_\Gamma
            \otimes \bra{b_f}_{\voutch_1}
            \otimes \bra{0}_{\vtestch_1})
            \circ \sem{
              \term_1
            }
            \circ (\ident_\Gamma \otimes \ket1_{\vinch_1})
            \bigr)
            \\ & =
            \ketbra11_q \otimes \ysem{S_1}_f
            = \ysem{S}_{x\mapsto 1 \sqcup f}.
            \tag*{\qedhere}
          \end{align*}
  \end{description}
\end{proof}

\subsection{Discussion}

\paragraph{Degree of freedom}

In the proof above,
we choose a unitary $U$ satisfying
$U\ket1 =
  \sum_{f \in \Delta S}\allowbreak \alpha_{S,f} \ket{b_f}$.
Conversely, given any unitary $U$ whose action on $\ket1$ lies in
$\langle\ \ket{b_f}\ \rangle_{f \in \Delta S}$,
we can choose coefficients
$\{ \alpha_{S,f} \}_{f \in \Delta S}$
so that the two semantics coincide.
Thus, the freedom to choose the coefficients, discussed by \citet{Ying16-the-book}, can be identified with the freedom to choose the initial environment state
$\sum_{f \in \Delta S} \alpha_{S,f} \ket{b_f}$ discussed by \citet{Abbott2020}.
Our observations suggest that the arbitrariness identified by \citeauthor{Abbott2020} in 2020 was, in a sense, a rediscovery---from a different perspective---of the arbitrariness identified by \citeauthor{YingYF14-alternation} in 2014.

\paragraph{Canonical choice of coefficients}

When the \emph{canonical coefficients} \cite{YingYF12-first-Ying} are chosen in the semantics,
unlike the semantics of \Vac{} defined in \cref{app:sec:vacuum}, the semantics satisfies the following:
\begin{align}
  \sem{\qifx{q}{S_1; S_1'}{S_2; S_2'}}
  \ &\neq\
  \sem{\qifx{q}{S_1}{S_2}; \qifx{q}{S_1'}{S_2'}}
  \label{eq:app:algebraic-low-of-qif}
  \\
  \sem{\qifx{q}{S}{S}}
  \ &=\
  \sem{S}.
  \nonumber
\end{align}
A counterexample to the first line is obtained when
$S_1 = S_1' = \meas(p)$ and $S_2 = S_2' = \noop$,
where
$\meas(p) \defeq \cifx{(x \leftarrow (p))}{\noop}{\noop}$.
The semantics of these terms is calculated as follows:
\begin{align*}
  &
  \textstyle
  \sem{\qifx{q}{(\meas(p);\meas(p))}{(\noop;\noop)}}
  \\
  & =
  \sem{\qifx{q}{\meas(p)}{\noop}}
  \\
  & =
  \textstyle
  \iota (\tfrac{1}{\sqrt2} \ketbra00_q + \ketbra{10}{10}_{qp})
  \ +\
  \iota (\tfrac{1}{\sqrt2} \ketbra00_q + \ketbra{11}{11}_{qp}),
\end{align*}
\begin{align*}
  &
  \textstyle
  \sem{
    (\qifx{q}{\meas(p)}{\noop});
    (\qifx{q}{\meas(p)}{\noop})
  }
  \\
  & =
  \sem{\qifx{q}{\meas(p)}{\noop}}
  \circ
  \sem{\qifx{q}{\meas(p)}{\noop}}
  \\
  & =
  \textstyle
  \iota (\tfrac{1}{\sqrt2} \ketbra00_q)
  \ +\
  \iota (\tfrac{1}{2} \ketbra00_q + \ketbra{10}{10}_{qp})
  \ +\
  \iota (\tfrac{1}{2} \ketbra00_q + \ketbra{11}{11}_{qp}).
\end{align*}

One may regard this phenomenon as a negative result, but we argue that it is a perfectly logical and desirable property.
In the semantics of $\meas(p); \meas(p)$,
the program produces four Kraus operators $\{ \ketbra00, 0, 0, \ketbra11 \}$
corresponding to four different classical outputs of the measurement:
00, 01, 10, and 11.
For a term $\qifx{q}{(\meas(p);\meas(p))}{S}$,
the semantics has the following Kraus decomposition:
\begin{align}
    \begin{aligned}
      \{ \quad
      &
      \beta_1 \ketbra00_q \otimes \ketbra00_p + \alpha_{00} \ketbra11_q \otimes K_1, \,\dots,\,
      \beta_n \ketbra00_q \otimes \ketbra00_p + \alpha_{00} \ketbra11_q \otimes K_n
      ,
      & \\ &
      \beta_1 \ketbra00_q \otimes \mghost{\ketbra00_p}{0} + \alpha_{01} \ketbra11_q \otimes K_1, \,\dots,\,
      \beta_n \ketbra00_q \otimes \mghost{\ketbra00_p}{0} + \alpha_{01} \ketbra11_q \otimes K_n
      ,
      & \\ &
      \beta_1 \ketbra00_q \otimes \mghost{\ketbra00_p}{0} + \alpha_{10} \ketbra11_q \otimes K_1, \,\dots,\,
      \beta_n \ketbra00_q \otimes \mghost{\ketbra00_p}{0} + \alpha_{10} \ketbra11_q \otimes K_n
      ,
      & \\ &
      \beta_1 \ketbra00_q \otimes \ketbra11_p + \alpha_{11} \ketbra11_q \otimes K_1, \,\dots,\,
      \beta_n \ketbra00_q \otimes \ketbra11_p + \alpha_{11} \ketbra11_q \otimes K_n
      &
      \ \}
    \end{aligned}
    \label{eq:app:Kraus-measmeas}
\end{align}
where $\{ K_1, \dots, K_n \}$ is the Kraus decomposition of $S$,
\ie, $\{ \ysem{S}_f \}_{f \in \Delta S}$.
On the other hand, for the term $\qifx{q}{\meas(p)}{S}$,
which measures the qubit $p$ only once,
the semantics is described as:
\begin{align*}
    \begin{aligned}
      \{ \quad
      &
      \beta_1 \ketbra00_q \otimes \ketbra00_p + \alpha_{0}' \ketbra11_q \otimes K_1, \,\dots,\,
      \beta_n \ketbra00_q \otimes \ketbra00_p + \alpha_{0}' \ketbra11_q \otimes K_n
      ,
      & \\ &
      \beta_1 \ketbra00_q \otimes \ketbra11_p + \alpha_{1}' \ketbra11_q \otimes K_1, \,\dots,\,
      \beta_n \ketbra00_q \otimes \ketbra11_p + \alpha_{1}' \ketbra11_q \otimes K_n
      &
      \ \}.
    \end{aligned}
\end{align*}
Since measuring a qubit is an idempotent operation,
$\meas(p);\meas(p)$ and $\meas(p)$ should behave identically.
In fact, in $\meas(p);\meas(p)$ the classical outputs 01 and 10 never happen,
so the corresponding Kraus operator is 0.
That means, in the semantics \eqref{eq:app:Kraus-measmeas},
it is reasonable to ask the coefficients $\alpha_{01}$ and $\alpha_{10}$ to be 0.
This is the case in the canonical semantics.
Coming back to the definition of the canonical coefficients,
we observe that the coefficients respect the norm of the Kraus operators:
\[
  \alpha_{S, f} \defeq
  \sqrt
  \frac{
    \trace( \ysem{S}_{f}^{\dagger}\ysem{S}_{f} )
  }{
    \sum_{g\in\Delta S} \trace( \ysem{S}_{g}^{\dagger}\ysem{S}_{g} )
  }.
\]
Therefore, the coefficients $\alpha_{01}$ and $\alpha_{10}$
corresponding to the two zero operators in the Kraus decomposition
are actually $0$.
Let us now return to the original equation \eqref{eq:app:algebraic-low-of-qif} with $S_1 = S_1' = \meas(p)$ and $S_2 = S_2' = \noop$.
We have observed how the semantics respects the interaction between two measurements $S_1; S_1'$ on the right-hand side.
On the other hand, $S_1$ and $S_1'$ are placed separately on the right-hand side, so they interact less.
Thus, it makes sense for the two terms to have different semantics.

In our view, the core feature of their canonical semantics is that its coefficients respect the norms of the Kraus operators.
In particular, when a Kraus operator is zero, it can be ignored.
This is a feature we cannot see in other languages.
In the language defined by \cite{Badescu2015},
it simply makes a uniformly distributed superposition,
\ie, $\alpha_{00} = \alpha_{01} = \alpha_{10} = \alpha_{11}$.
The same holds for the language defined by \cite{Barsse2026}.
For example, when we have $\meas(p); \gX(p); \meas(p)$ in a branch of $\qif$,
the 0-th Kraus operator becomes $0$,
resulting in no superposition being created by the $\qif$ statement.

\end{document}